\providecommand{\U}[1]{\protect\rule{.1in}{.1in}}
\newtheorem{theorem}{Theorem}
\newtheorem{condition}{Condition}
\newtheorem{corollary}{Corollary}
\newtheorem{example}{Example}
\newtheorem{lemma}{Lemma}
\newtheorem{remark}{Remark}
\newenvironment{proof}[1][Proof]{\textbf{#1.} }{\  \rule{0.5em}{0.5em}}
\begin{document}

\title{Central Limit Theory for Combined Cross-Section and Time Series with an
Application to Aggregate Productivity Shocks\thanks{We thank Peter Phillips
and referees in previous rounds for many helpful suggestions. }}
\author{Jinyong Hahn\thanks{UCLA, Department of Economics, 8283 Bunche Hall, Mail
Stop: 147703, Los Angeles, CA 90095, hahn@econ.ucla.edu}\\UCLA
\and Guido Kuersteiner\thanks{University of Maryland, Department of Economics,
Tydings Hall 3145, College Park, MD, 20742, kuersteiner@econ.umd.edu}\\University of Maryland
\and Maurizio Mazzocco\thanks{UCLA, Department of Economics, 8283 Bunche Hall, Mail
Stop: 147703, Los Angeles, CA 90095, mmazzocc@econ.ucla.edu}\\UCLA}
\maketitle

\begin{abstract}
Combining cross-section and time series data is a long and well established
practice in empirical economics. We develop a central limit theory that
explicitly accounts for possible dependence between the two data sets. We
focus on common factors as the mechanism behind this dependence. Using our
central limit theorem (CLT) we establish the asymptotic properties of
parameter estimates of a general class of models based on a combination of
cross-sectional and time series data, recognizing the interdependence between
the two data sources in the presence of aggregate shocks. Despite the
complicated nature of the analysis required to formulate the joint CLT, it is
straightforward to implement the resulting parameter limiting distributions
due to a formal similarity of our approximations with the standard Murphy and
Topel's (1985) formula.

\end{abstract}

\section{Introduction}

There is a long tradition in empirical economics of relying on information
from a variety of data sources to estimate model parameters. In this paper we
focus on a situation where cross-sectional and time-series data are combined.
This may be done for a variety of reasons. Some parameters may not be
identified in the cross section or time series alone. Alternatively,
parameters estimated from one data source may be used as first-step inputs in
the estimation of a second set of parameters based on a different data source.
This may be done to reduce the dimension of the parameter space or more
generally for computational reasons.

Data combination generates theoretical challenges, even when only
cross-sectional data sets or time-series data sets are combined. See Ridder
and Moffitt (2007) for example. We focus on dependence between cross-sectional
and time-series data produced by common aggregate factors. Andrews (2005)
demonstrates that even randomly sampled cross-sections lead to independently
distributed samples only conditionally on common factors, since the factors
introduce a possible correlation. This correlation extends inevitably to a
time series sample that depends on the same underlying factors. Andrews (2005)
only considers cross-sections and panels with fixed $T$ with sufficient
regularity conditions for consistent estimation. Our work in this and in our
companion papers Hahn, Kuersteiner and Mazzocco (2020, 2021) extends the
analysis in Andrews (2005) to situations where consistent estimation is only
possible by utilizing an additional time series data set. Examples in our
companion papers draw on literatures for the structural estimation of rational
expectations models as well as the program evaluation literature. Examples for
the former include cross-sectional studies of the consumption based asset
pricing model in Runkle (1991), Shea (1995) and Vissing-Jorgensen (2002) and a
recent example of the latter is Rosenzweig and Udry (2019).

The first contribution of this paper is to develop a central limit theory that
explicitly accounts for the dependence between the cross-sectional and
time-series data by using the notion of stable convergence. The second
contribution is to use the corresponding central limit theorem to derive the
asymptotic distribution of parameter estimators obtained from combining the
two data sources. Compared to our companion paper Hahn, Kuersteiner and
Mazzocco (2021, henceforth HKM21), we impose a martingale difference structure
as our fundamental moment condition on which our estimators are based in this
paper. This implies that the theory in this paper is suitable for correctly
specified maximum likelihood and conditional moment based estimators. Our
companion paper HKM21 does not impose a martingale structure for the time
series model but rather allows for mixingale processes. Thus, certain forms of
model misspecification can be handled by the theory of that paper. On the
other hand, in this paper we do not assume independence between time series
and cross-section data conditional on common factors, independence in the
cross-section conditional on common factors or stationarity or homogeneity in
the temporal direction of panel or time series data, as is done in HKM21. Due
to the strong assumptions in HKM21, the proof strategy for joint convergence
is relatively simple in that paper. Conditionally on common factors, a CLT for
independent cross-sections combined with a stable CLT for stationary
mixingales is sufficient to establish joint convergence of the time-series and
cross-sectional components. In this paper where conditional independence of
the cross-sectional sample is not assumed, a more complicated joint
convergence argument establishing the joint limiting behavior of time series
and cross-sectional averages is required.

Our analysis is inspired by a number of applied papers and in particular by
the discussion in Lee and Wolpin (2006, 2010). Econometric estimation based on
the combination of cross-sectional and time-series data is an idea that dates
back at least to Tobin (1950). More recently, Heckman and Sedlacek (1985) and
Heckman, Lochner, and Taber (1998) proposed to deal with the estimation of
equilibrium models by exploiting such data combination. It is, however, Lee
and Wolpin (2006, 2010) who develop the most extensive equilibrium model and
estimate it using similar intuition and panel data.

To derive the new central limit theorem and the asymptotic distribution of
parameter estimates, we extend the model developed in Lee and Wolpin (2016,
2010) to a general setting that involves two submodels. The first submodel
includes all the cross-sectional features, whereas the second submodel is
composed of all the time-series aspects. The two submodels are linked by a
vector of aggregate shocks and by the parameters that govern their dynamics.
Given the interplay between the two submodels, the aggregate shocks have
complicated effects on the estimation of the parameters of interest.

Another important literature that often uses micro data to calibrate model
parameters is the literature on dynamic stochastic general equilibrium (DSGE)
models in macro economics. The calibration approach was prominently advocated
by Kydland and Prescott (1982) who also emphasize the importance of aggregate
shocks to understand aggregate business cycle fluctuations. Other important
contributions to the DSGE literature emphasizing the persistence of aggregate
shocks are Long and Plosser (1983) and Smet and Wouters (2007), to name only
two in a large literature. More recently, Schorfheide (2000) proposed a formal
Bayesian approach that uses Bayesian priors to aid in the estimation of DSGE
model parameters, and An and Schorfheide (2007) use micro data to inform the
selection of prior distributions. To illustrate the contributions of this
paper we use the production function of Olley and Pakes (1996) as a micro
foundation to estimate production function parameters from cross-sectional
data. We then use the parameters estimated from the cross-section to compute
aggregate productivity shocks using time series data. The estimated
productivity shocks now form the basis for time series estimates of the
persistency parameter of aggregate shocks. The asymptotic theory developed in
this paper provides the theoretical foundation to quantify the additional
sampling uncertainty introduced by estimated, rather than observed aggregate
shocks. The challenge specifically arises from the combination of two
distinct, yet not independent data sources.

With the objective of creating a framework to perform inference in our general
model, we first derive a joint functional stable central limit theorem for
cross-sectional and time-series data. The central limit theorem explicitly
accounts for the factor-induced dependence between the two samples even when
the cross-sectional sample is obtained by random sampling, a special case
covered by our theory. We derive the central limit theorem under the condition
that the dimension of the cross-sectional data $n$ as well as the dimension of
the time series data $\tau$ go to infinity. Using our central limit theorem we
then derive the asymptotic distribution of the parameter estimators that
characterize our general model. To our knowledge, this is the first paper that
derives an asymptotic theory that combines cross sectional and time series
data. In order to deal with parameters estimated using two data sets of
completely different nature, we adopt the notion of stable convergence. Stable
convergence dates back to R\'{e}nyi (1963) and was recently used in
Kuersteiner and Prucha (2013) in a panel data context to establish joint
limiting distributions. Using this concept, we show that the asymptotic
distributions of the parameter estimators are a combination of asymptotic
distributions from cross-sectional analysis and time-series analysis.

Stable convergence has found wide applications in many areas of statistics.
Stable convergence was introduced into the econometrics literature by Phillips
and Ouliaris (1990) who discuss the concept in detail and draw the connection
to central limit theorems by McLeish (1975b) and Hall and Heyde (1980). The
insight that panel limit theory often involves invariant sigma algebras
generated by time series first appears in Phillips and Sul (2003) and is the
basis for work by Andrews (2005) and Kuersteiner and Prucha (2013).

Another area in econometrics where stable convergence plays a prominent role
are high frequency financial models\footnote{We thank one of the referees for
bringing this literature to our attention.}. Important references to the high
frequency literature include Barndorff-Nielsen, Hansen, Lunde and Shephard
(2008), Jacod, Podolskij and Vetter (2010) and Li and Xiu (2016). Monographs
treating the probability theoretic foundation for this line of research are
Jacod and Shiryaev (2003, henceforth JS), and Jacod and Protter (2012). In
line with the literature on high frequency financial econometrics, we base our
definition of stable convergence in function spaces on JS. In the introduction
to their book, JS outline two different strategies of proving central limit
theorems. One is what they term the `martingale method' which dates back at
least to Stroock and Varadhan (1979). The other is the approach followed by
Billingsley (1968) which consists of establishing tightness, finite
dimensional convergence and identification of the finite dimensional limiting
distribution. In this paper we follow the second approach while JS and the
cited papers in the high frequency literature rely on the martingale method.
The reason for the difference in approach lies in the fact that the primitive
parameters central to the martingale method, essentially a set of conditional
moments called the triplets of characteristics of the approximating and
limiting process in the language of JS, have no obvious analog in the models
we study. The data generating processes (DGP) producing our samples are not
embedded in limiting diffusion processes. Even if our data generating process
could be represented by or approximated with such diffusions, the parameters
of interest in our applications are not directly related to such approximations.

Our limit theory has a second connection to the high frequency literature.
Barndorff-Nielsen et. al. (2008, Proposition 5) and Li and Xiu (2016, Lemma
A3), develop methods to deduce joint stable convergence of two random
sequences from stable convergence of the first random sequence and conditional
convergence in law of the second random sequence. The assumptions made in
these papers, namely that the noise terms are mean zero conditional on the
entire time series are stronger than the assumptions we make in our paper. It
is often unnatural to condition on the entire time series of common shocks,
which we avoid in this paper.

While the formal derivation of the asymptotic distribution may appear
complicated, the asymptotic formulae that we produce are straightforward to
implement in some special cases and very similar to the standard Murphy and
Topel's (1985) formula.

We also derive a novel result related to the unit root literature. We show
that, when the time-series data are characterized by unit roots, the
asymptotic distribution is a combination of a normal distribution and the
distribution found in the unit root literature. Therefore, the asymptotic
distribution exhibits mathematical similarities to the inferential problem in
predictive regressions, as discussed by Campbell and Yogo (2006). However, the
similarity is superficial in that Campbell and Yogo's (2006) result is about
an estimator based on a single data source. But, similarly to Campbell and
Yogo's analysis, we need to address problems of uniform inference. Phillips
(2014) proposes a method of uniform inference for predictive regressions,
which we adopt and modify to our own estimation problem in the unit root case.

Our results should be of interest to both applied microeconomists and
macroeconomists. Data combination is common practice in the macro calibration
literature where typically a subset of parameters is determined based on
cross-sectional studies. It is also common in structural microeconomics where
the focus is more directly on identification issues that cannot be resolved in
the cross-section alone. In a companion paper, Hahn, Kuersteiner, and Mazzocco
(2020, HKM20 henceforth), we discuss in detail specific examples from the
micro literature. In that paper, we focus on identification and provide an
intuitive explanation of inference with combined cross-sectional and
time-series data when aggregate factors are present. The purpose of this paper
is to prove the asymptotic theory needed for inference rigorously.

The remainder of the paper is organized as follows. In Section 2, we present
the Olley and Pakes model and illustrate how cross-section data can be used to
help estimate the persistence of aggregate productivity shocks from
time-series data. In Section \ref{section-models}, we introduce the general
statistical model. Our main central limit theorem is presented in Section
\ref{Section_JointCLT}. In Section \ref{section-intuition} we discuss
inference. Section \ref{sec-unit-root-time-series-models} contains the
analysis of the unit root case.

\section{Aggregate Productivity Shocks\label{OP}}

In Hahn, Kuersteiner and Mazzocco (2020), we considered Olley and Pakes'
(1996) method of estimating production functions, and argued that a subset of
parameters may be consistently estimable from the cross-section alone even
under the presence of aggregate shocks. In this section, we argue that such
a\ result is limited to only a subset of parameters and we should in general
have access to both time series and cross sectional data sets to identify the
full set of parameters, especially when the interest is in parameters that
characterize the aggregate shock process.

The general model introduced in Section \ref{section-models} is flexible
enough to both cover cases where the aggregate shocks are treated as nuisance
quantities and cases where the aggregate shocks are treated as parameters or
variables to be estimated. The example in this section can be understood to be
the latter case.\textbf{ }In our companion papers HKM20 and HKM21 the
interested reader can find additional fully worked examples, some of them
simpler and others more complex than the model discussed in this paper.

The essence of the example we discuss in this paper is as follows. The
parameter of interest is the first order autoregressive parameter
$\alpha^{\left(  A\right)  }$ of an aggregate shock process $\nu_{t}.$
Aggregate shocks $\nu_{t}$ are unobserved but can be recovered from aggregate
output $Y_{t}^{\ast}$ and aggregate capital $K_{t}^{\ast}$ through the
relationship $\nu_{t}=Y_{t}^{\ast}-\beta_{k}K_{t}^{\ast}.$ The parameter
$\beta$ is only identified from cross-sectional data. This approach is
justified by the theory of Olley and Pakes (1996) and explained in more detail
below. Inference for $\alpha^{\left(  A\right)  }$ then is complicated by two
aspects: the estimator $\hat{\alpha}^{\left(  A\right)  }$ is based on
estimated data $\hat{\nu}_{t}=Y_{t}^{\ast}-\hat{\beta}_{k}K_{t}^{\ast}$ and
$\hat{\beta}$ is estimated on a cross-sectional data set that is different
from the aggregate data $\left(  Y_{t}^{\ast},K_{t}^{\ast}\right)  .$ Our
paper provides the rigorous foundation for inference in this setting.\textbf{
}

We now present a simplified version of Olley and Pakes' (1996) model. A
profit-maximizing firm $j$ produces a product $Y_{j,t}$ in period $t,$
employing a production function that depends on the logarithm of labor
$l_{j,t}$, the logarithm of capital $k_{j,t}$, and a productivity shock
$\omega_{j,t}$. By denoting the logarithm of $Y_{j,t}$ by $\mathfrak{y}_{j,t}%
$, the production function takes the following form:
\begin{equation}
\mathfrak{y}_{j,t}=\beta_{l}l_{j,t}+\beta_{k}k_{j,t}+\omega_{j,t}+\eta_{j,t},
\label{production function}%
\end{equation}
where $\omega_{j,t}$ is a productivity shock and $\eta_{j,t}$ is a zero mean
measurement error with finite variance and iid over $j$ and $t.$\textbf{ }The
intercept term is normalized to be zero.\textbf{ }In each period capital
accumulates according to the equation $k_{j,t+1}=\left(  1-\delta\right)
k_{j,t}+i_{j,t}$, where $\delta$ is the rate at which capital depreciates. We
abstract from age heterogeneity and exit decision. As in Olley and Pakes
(1996), we assume that the optimal investment decision in period $t$ is a
function of the current stock of capital and of the productivity shock, i.e.
\begin{equation}
i_{j,t}=i_{t}\left(  \omega_{j,t},k_{j,t}\right)  .
\label{investment decision}%
\end{equation}
Olley and Pakes (1996) use the result that the investment decision
(\ref{investment decision}) is strictly increasing in the productivity shock
for every value of capital to invert (\ref{investment decision}), solve for
the productivity shock, and obtain
\begin{equation}
\omega_{j,t}=h_{t}\left(  i_{j,t},k_{j,t}\right)  .
\label{productivity shock production function}%
\end{equation}
One can then replace the productivity shock in the production function using
equation (\ref{productivity shock production function}) to obtain
\begin{equation}
\mathfrak{y}_{j,t}=\beta_{l}l_{j,t}+\phi_{t}\left(  i_{j,t},k_{j,t}\right)
+\eta_{j,t}, \label{production function no productity shock}%
\end{equation}
where
\begin{equation}
\phi_{t}\left(  i_{j,t},k_{j,t}\right)  =\beta_{k}k_{j,t}+h_{t}\left(
i_{j,t},k_{j,t}\right)  . \label{definition of phi}%
\end{equation}
For simplicity, we will assume that $\beta_{l}$ and $\phi_{t}\left(
i_{j,t},k_{j,t}\right)  $ are known\footnote{Olley and Pakes (1996) identifies
the parameter $\beta_{l}$ by
\[
\beta_{l}=\frac{E\left[  \left(  l_{j,t}-E\left[  \left.  l_{j,t}\right\vert
i_{j,t},k_{j,t}\right]  \right)  \left(  \mathfrak{y}_{j,t}-E\left[  \left.
\mathfrak{y}_{j,t}\right\vert i_{j,t},k_{j,t}\right]  \right)  \right]
}{E\left[  \left(  l_{j,t}-E\left[  \left.  l_{j,t}\right\vert i_{j,t}%
,k_{j,t}\right]  \right)  ^{2}\right]  },
\]
which can be consistently estimated by cross sectional variation. In Hahn,
Kuersteiner and Mazzocco (2020), it was shown that the cross sectional
variation identifies the $\beta_{l}$ even under the presence of aggregate
shocks. We abstract away from the estimation of $\beta_{l}$ because the above
method of identification was critiqued for substantive economic reasons, for
example, by Ackerberg, Caves and Frazer (2015), and as such, researchers may
prefer other methods of estimation.}, and work with $\mathfrak{y}_{j,t}^{\ast
}\equiv\mathfrak{y}_{j,t}-\beta_{l}l_{j,t}$.

We now introduce an aggregate shock\footnote{See Appendix \ref{OP_detail} for
intuition of the moment condition for the simple case without any aggregate
shock.}, and assume that the productivity shock at $t$ is the sum of an
aggregate shock $\nu_{t}$ drawn from a distribution $F\left(  \nu
|\alpha\right)  $ and of an idiosyncratic shock $\varepsilon_{j,t}$
independent of $\nu_{t}$, i.e.,
\begin{equation}
\omega_{j,t}=\nu_{t}+\varepsilon_{j,t}. \label{composit_shock}%
\end{equation}
Unlike HKM20, we assume that the firm observes $\omega_{j,t}$ but not $\nu
_{t}$ and $\varepsilon_{j,t}$ separately\footnote{In Hahn, Kuersteiner and
Mazzocco (2020), we assumed that $\nu_{t}$ and $\varepsilon_{j,t}$ are both
Markov processes and that the firm observes the realization of the aggregate
shock and, separately, of the idiosyncratic shock. This is an assumption of
convenience to be consistent with Olley Pakes' (1996) assumption that the
problem solved by the firm is Markovian. To understand why, consider a case in
which $\nu_{t}$ and $\varepsilon_{j,t}$ are both AR(1) processes. If we only
use their sum as a state variable, the Markovian assumption is generally
violated, because the sum of AR(1) processes is in general not an AR(1) but an
ARMA(2,1) process. However, if we include $\nu_{t}$ and $\varepsilon_{j,t}$ as
separate state variables -- both observed by the firm -- the Markovian
structure is preserved.}. We assume that $\nu_{t}$ and $\varepsilon_{j,t}$ are
both Markov processes and in particular, we assume that both $\nu_{t}$ and
$\varepsilon_{j,t}$ are AR(1):%
\begin{align*}
\nu_{t}  &  =\alpha^{\left(  A\right)  }\nu_{t-1}+e_{t}^{\left(  A\right)
},\\
\varepsilon_{j,t}  &  =\alpha^{\left(  C\right)  }\varepsilon_{j,t-1}%
+e_{j,t}^{\left(  C\right)  },
\end{align*}
where we assumed that the intercepts are zero for notational simplicity such
that $\nu_{t}$ and $\varepsilon_{j,t}$ have mean zero.

It can be shown\footnote{See Appendix \ref{OP_detail} for details that lead to
(\ref{C-GMM}).} that some of the parameter can be identified by using the
cross section GMM estimator based on the moments
\begin{equation}
0=E\left[  z_{j,t}\left(  \mathfrak{y}_{j,t+1}^{\ast}-\left(  \beta
_{0,t+1}^{\ast}+\beta_{k}k_{j,t+1}+\alpha^{\left(  C\right)  }\left(  \phi
_{t}\left(  i_{j,t},k_{j,t}\right)  -\beta_{k}k_{j,t}\right)  \right)
\right)  \right]  , \label{C-GMM}%
\end{equation}
where $\beta_{0,t+1}^{\ast}\equiv\nu_{t+1}-\alpha^{\left(  C\right)  }\nu_{t}$
and the $z_{j,t}$ is an instrument uncorrelated with the error $e_{j,t+1}%
^{\left(  C\right)  }+\eta_{j,t+1}$. Note that identification of the
parameters $\left(  \beta_{0,t+1}^{\ast},\beta_{k},\alpha^{\left(  C\right)
}\right)  $ requires that the $z_{j,t}$ should contain at least three
components. The key cross-sectional parameter of interest for the application
we have in mind is $\beta_{k}$ while the remaining cross-sectional parameters
$\left(  \beta_{0,t+1}^{\ast},\alpha^{\left(  C\right)  }\right)  $ are
incidental to our final goal of estimating the degree of persistence of the
aggregate shock.

The parameter $\alpha^{\left(  A\right)  }$ is not identified by the above
procedure based on cross sectional variation. On the other hand,
$\alpha^{\left(  A\right)  }$ can be estimated consistently, possibly with the
help of production function parameters estimated in the cross-section, if
aggregate time series data with information about $\nu_{t}$ is available. For
example, if an econometrician observes $\left\{  \left(  Y_{t}^{\ast}%
,K_{t}^{\ast}\right)  ,t=1,\ldots,\tau\right\}  $,\textbf{ }where $\tau$ is
the time series sample size and $Y_{t}^{\ast}\equiv\operatorname*{plim}%
_{n\rightarrow\infty}n^{-1}\sum_{j=1}^{n}\mathfrak{y}_{j,t}^{\ast}$ and
$K_{t}^{\ast}\equiv\operatorname*{plim}_{n\rightarrow\infty}n^{-1}\sum
_{j=1}^{n}k_{j,t}^{\ast}$, then equation (\ref{production function}) combined
with (\ref{composit_shock}) implies an aggregate relationship between output
and capital of the form $Y_{t}^{\ast}-\beta_{k}K_{t}^{\ast}=\nu_{t}$. Implicit
in this formulation is the assumption that firm-specific shocks $\varepsilon
_{j,t}$ and $\eta_{j,t}$ average out in the aggregate. If $\nu_{t}$ is
estimated by $\hat{\nu}_{t}=Y_{t}^{\ast}-\hat{\beta}_{k}K_{t}^{\ast}$, then
the parameter $\alpha^{\left(  A\right)  }$ can be consistently estimated by
an AR(1) regression of $Y_{t}^{\ast}-\hat{\beta}_{k}K_{t}^{\ast}$ on a
constant and its own lagged value (using $\hat{\beta}_{k}$ estimated from the
cross sectional data) as long as $\tau$ is sufficiently large. The example
illustrates how cross-sectional data and micro parameters recovered from it
can be used to understand dynamic aggregate parameters.

The fact that $\alpha^{\left(  A\right)  }$ is estimated based on $\hat{\nu
}_{t}$ rather than $\nu_{t}$ creates an estimated regressor problem that
affects the limiting distribution of the estimator for $\alpha^{\left(
A\right)  }.$ Unlike in classical estimated regressor problems, this paper
considers the case where two samples that are not necessarily independent of
each other are used to construct $\hat{\nu}_{t}$ and estimate $\alpha^{\left(
A\right)  }.$

\section{Model and Probability Space\label{section-models}}

In this section we present a general modeling framework that includes the
example in Section \ref{OP} as well as models considered in HKM20 and HKM21 as
special cases. We assume that our cross-sectional data consist of $\left\{
y_{i,t},\;i=1,\ldots,n,t=1,...,T\right\}  $,\footnote{We do not consider
models with estimated fixed effects in this paper because we assume that the
parameter space is finite dimensional\textbf{.}} where the start time of the
cross-section or panel, $t=1,$ is an arbitrary normalization of time. Pure
cross-sections are handled by allowing for $T=1.$ Note that $T$ is fixed and
finite throughout our discussion while our asymptotic approximations are based
on $n$ tending to infinity. The need to keep $T$ fixed is motivated by short
cross-sectional panels and is critical for our theoretical development.
Without this assumption, more complicated asymptotic approximations allowing
for expanding parameter spaces as well as different limit theorems are
required. The extensions are left for future work.

Our time series data consist of $\left\{  z_{s},\;s=\tau_{0}+1,\ldots,\tau
_{0}+\tau\right\}  $ where the time series sample size $\tau$ tends to
infinity jointly with $n$. The start point of the time series sample is either
fixed at an arbitrary time $\tau_{0}$ such that $-\infty<-K\leq\tau_{0}\leq
K<\infty$ for some bounded $K$ and $\tau\rightarrow\infty$ or $\tau_{0}%
=\tau_{0}\left(  \tau\right)  $ depends on $\tau$ such that $\tau_{0}\left(
\tau\right)  =-\upsilon\tau+\tau_{0,f}+T$ for $\upsilon\in\left[  0,1\right]
$ and $\tau_{0,f}$ a fixed constant. In the latter case we use the short hand
notation $\tau_{0}$ when no confusion arises. The fixed $\tau_{0}$ scenario
corresponds to a situation where a (hypothetical) time series sample is
observed into the infinite future. The specification $\tau_{0}\left(
\tau\right)  $ covers the case $\tau_{0}=-\tau+T$. The case where $\tau_{0}$
varies with the time series sample size in the prescribed way can be used to
model situations where the asymptotics are carried out `backwards' in time
(when $\upsilon=1$) or where the panel data are located at a fixed fraction of
the time series sample as the time series sample size tends to infinity
$\left(  0<\upsilon<1\right)  .$When $\upsilon=1$ such that $\tau
_{0}\rightarrow-\infty$ the end of the time series sample is fixed at
$\tau_{0,f}+T.$ A hybrid case arises when $\upsilon\in\left(  0,1\right)  $
such that the start point $\tau_{0}$ of the time series sample extends back in
time simultaneously with the last observation in the sample $\tau_{0}+\tau$
tending to infinity. For simplicity we refer to both scenarios $\upsilon
\in\left(  0,1\right)  $ and $\upsilon=1$ as backwards asymptotics. Backwards
asymptotics may be more realistic in cases where recent panel data is
augmented with long historical records of time series data. We show that under
a mild additional regularity condition both forward and backward asymptotics
lead to the same limiting distribution. Conventional asymptotics are covered
by setting $\upsilon=0$. The vector $y_{i,t}$ includes all information related
to the cross-sectional submodel, where $i$ is an index for individuals,
households or firms, and $t$ denotes the time period when the cross-sectional
unit is observed. The second vector $z_{s}$ contains aggregate data.

The technical assumptions for our CLT, detailed in Section
\ref{Section_JointCLT}, do not directly restrict the data, nor do they impose
restrictions on how the data were sampled. For example, we do not assume that
the cross-sectional sample was obtained by randomized sampling, although this
is a special case that is covered by our assumptions. Rather than imposing
restrictions directly on the data we postulate that there are two parametrized
models that implicitly restrict the data. The function $f\left(  \left.
y_{i,t}\right\vert \beta,\nu_{t},\rho\right)  $ is used to model $y_{i,t}$ as
a function of cross-sectional parameters $\beta$ and common shocks $\nu
\equiv\left(  \nu_{1},...,\nu_{T}\right)  $\textbf{ }which are treated as
parameters to be estimated, and time series parameters $\rho$. In the same way
the function $g\left(  \left.  z_{s}\right\vert \beta,\rho\right)  $ restricts
the behavior of some time series variables $z_{s}$.\footnote{The function $g$
may naturally arise if the $\nu_{t}$ is an unobserved component that can be
estimated from the aggregate time series once the parameters $\beta$ and
$\rho$ are known, i.e., if $\nu_{t}\equiv\nu_{t}\left(  \beta,\rho\right)  $
is a function of $\left(  z_{t},\beta,\rho\right)  $ and the behavior of
$\nu_{t}$ is expressed in terms of $\rho$. Later, we allow for the possibility
that $g$ in fact is derived from the conditional density of $\nu_{t}$ given
$\nu_{t-1}$, i.e., the possibility that $g$ may depend on both the current and
lagged values of $z_{t}$. For notational simplicity, we simply write $g\left(
\left.  z_{s}\right\vert \beta,\rho\right)  $ here for now.}

Depending on the exact form of the underlying economic model, the functions
$f$ and $g$ may have different interpretations. They could be the
log-likelihoods of $y_{i,t},$ conditional on $\nu_{t},$ and $z_{s}$
respectively. In a likelihood setting, $f$ and $g$ impose restrictions on
$y_{i,t}$ and $z_{s}$ because of the implied martingale properties of the
score process evaluated at the true parameter values. More generally, the
functions $f$ and $g$ may be the basis for method of moments (the exactly
identified case) or GMM (the overidentified case) estimation. In these
situations parameters are identified from the conditions $E_{C}\left[
f\left(  \left.  y_{i,t}\right\vert \beta,\nu_{t},\rho\right)  \right]  =0$
given the shock $\nu_{t}$ and $E_{\tau}\left[  g\left(  z_{s}|\beta
,\rho\right)  \right]  =0.$ The first expectation, $E_{C},$ is understood as
being over the cross-section population distribution holding $\nu=\left(
\nu_{1},...,\nu_{T}\right)  $ fixed, while the second, $E_{\tau},$ is over the
distribution of the time-series data generating process. The moment conditions
follow from martingale assumptions we directly impose on $f$ and $g.$ In our
companion paper HKM20 we discuss examples of economic models that rationalize
these assumptions.

Whether we are dealing with likelihoods or moment functions, the central limit
theorem is directly formulated for the estimating functions that define the
parameters. We use the notation $F_{n}\left(  \beta,\nu,\rho\right)  $ and
$G_{\tau}\left(  \beta,\rho\right)  $ to denote the criterion function based
on the cross-section and time series respectively. When the model specifies a
log-likelihood these functions are defined as $F_{n}\left(  \beta,\nu
,\rho\right)  =\frac{1}{n}\sum_{t=1}^{T}\sum_{i=1}^{n}f\left(  \left.
y_{i,t}\right\vert \beta,\nu_{t},\rho\right)  $ and $G_{\tau}\left(
\beta,\rho\right)  =\frac{1}{\tau}\sum_{s=\tau_{0}+1}^{\tau_{0}+\tau}g\left(
\left.  z_{s}\right\vert \beta,\rho\right)  .$ When the model specifies moment
conditions we let $h_{n}\left(  \beta,\nu,\rho\right)  $ $=$ $\frac{1}{n}%
\sum_{t=1}^{T}\sum_{i=1}^{n}$ $f\left(  \left.  y_{i,t}\right\vert \beta
,\nu_{t},\rho\right)  $ and $k_{\tau}\left(  \beta,\rho\right)  $ $=$
$\frac{1}{\tau}\sum_{s=\tau_{0}+1}^{\tau_{0}+\tau}g\left(  \left.
z_{s}\right\vert \beta,\rho\right)  $. The GMM\ or moment based criterion
functions are then given by $F_{n}\left(  \beta,\nu,\rho\right)
=-h_{n}\left(  \beta,\nu,\rho\right)  ^{\prime}W_{n}^{C}h_{n}\left(  \beta
,\nu,\rho\right)  $ and $G_{\tau}\left(  \beta,\rho\right)  =-k_{\tau}\left(
\beta,\rho\right)  ^{\prime}W_{\tau}^{\tau}k_{\tau}\left(  \beta,\rho\right)
$ with $W_{n}^{C}$ and $W_{\tau}^{\tau}$ two almost surely positive definite
weight matrices. The use of two separate objective functions is helpful in our
context because it enables us to discuss which issues arise if only
cross-sectional variables or only time-series variables are used in the
estimation.\footnote{Note that our framework covers the case where the joint
distribution of $\left(  y_{it},z_{t}\right)  $ is modeled. Considering the
two components separately adds flexibility in that data is not required for
all variables in the same period.}

We formally justify the use of two data sets by imposing restrictions on the
identifiability of parameters through the cross-section and time series
criterion functions alone. Let $\Theta$ be a compact set constructed from the
product $\Theta=\Theta_{\beta}\times\Theta_{\nu}\times\Theta_{\rho}$ where
$\Theta_{\beta}$ is a compact set that contains the true parameter value
$\beta_{0},$ $\Theta_{\nu}$ is a compact set that contains the true parameter
value $\nu_{0}$ and $\Theta_{\rho}$ is a compact set that contains the true
parameter value $\rho_{0}.$ We denote the probability limit of the objective
functions by $F\left(  \beta,\nu_{t},\rho\right)  $ and $G\left(  \beta
,\rho\right)  ,$ in other words,
\begin{align*}
F\left(  \beta,\nu,\rho\right)   &  =\operatorname*{plim}_{n\rightarrow\infty
}F_{n}\left(  \beta,\nu,\rho\right)  ,\\
G\left(  \beta,\rho\right)   &  =\operatorname*{plim}_{\tau\rightarrow\infty
}G_{\tau}\left(  \beta,\rho\right)  .
\end{align*}
The true or pseudo true parameters are defined as the maximizers of these
probability limits%
\begin{align}
\left(  \beta\left(  \rho\right)  ,\nu\left(  \rho\right)  \right)   &
\equiv\operatorname*{argmax}_{\beta,\nu\in\Theta_{\beta}\times\Theta_{\nu}%
}F\left(  \beta,\nu,\rho\right)  ,\label{Cross-Par}\\
\rho\left(  \beta\right)   &  \equiv\operatorname*{argmax}_{\rho\in
\Theta_{\rho}}G\left(  \beta,\rho\right)  , \label{Time-Par}%
\end{align}
and we denote with $\beta_{0}$ and $\rho_{0}$ the solutions to
(\ref{Cross-Par}) and (\ref{Time-Par}). The idea that neither $F$ nor $G$
alone are sufficient to identify both parameters is formalized as follows. If
the function $F$ is constant in $\rho$ at the parameter values $\beta$ and
$\nu$ that maximize it then $\rho$ is not identified by the criterion $F$
alone. Formally we state that%
\begin{equation}
\max_{\beta,\nu\in\Theta_{\beta}\times\Theta_{\nu}}F\left(  \beta,\nu
,\rho\right)  =\max_{\beta,\nu\in\Theta_{\beta}\times\Theta_{\nu}}F\left(
\beta,\nu,\rho_{0}\right)  \quad\text{for all }\rho\in\Theta_{\rho}
\label{no cross-sectional identification1}%
\end{equation}
It is easy to see that (\ref{no cross-sectional identification1}) is not a
sufficient condition to restrict identification in a desirable way. For
example (\ref{no cross-sectional identification1}) is satisfied in a setting
where $F$ does not depend at all on $\rho$. In that case the maximizers in
(\ref{Cross-Par}) also do not depend on $\rho$ and by definition coincide with
$\beta_{0}$ and $\nu_{0}.$ To rule out this case we require that $\rho_{0}$ is
needed to identify $\beta_{0}$ and $\nu_{0}$. Formally, we impose the
condition that
\begin{equation}
\left(  \beta\left(  \rho\right)  ,\nu\left(  \rho\right)  \right)
\neq\left(  \beta_{0},\nu_{0}\right)  \quad\text{for all }\rho\neq\rho_{0}.
\label{no cross-sectional identification2}%
\end{equation}
Similarly, we impose restrictions on the time series criterion functions that
insure that the parameters $\beta$ and $\rho$ cannot be identified solely as
the maximizers of $G.$ Formally, we require that
\begin{align}
\max_{\rho\in\Theta_{\rho}}G\left(  \beta,\rho\right)   &  =\max_{\rho
\in\Theta_{\rho}}G\left(  \beta_{0},\rho\right)  \quad\text{for all }\beta
\in\Theta_{\beta},\label{no time series identification}\\
\rho\left(  \beta\right)   &  \neq\rho_{0}\quad\text{for all }\beta\neq
\beta_{0}.\nonumber
\end{align}
To insure that the parameters can be identified from a combined
cross-sectional and time-series data set we impose the following condition.
Define $\theta\equiv\left(  \beta^{\prime},\nu^{\prime}\right)  ^{\prime}$ and
assume that (i) there exists a unique solution to the system of equations:
\begin{equation}
\left[  \frac{\partial F\left(  \beta,\nu,\rho\right)  }{\partial
\theta^{\prime}},\;\frac{\partial G\left(  \beta,\rho\right)  }{\partial
\rho^{\prime}}\right]  =0, \label{moment-of-the-estimator}%
\end{equation}
and (ii) the solution is given by the true value of the parameters. In
summary, our model is characterized by the high level assumptions in
(\ref{no cross-sectional identification1}),
(\ref{no cross-sectional identification2}),
(\ref{no time series identification}), and by the assumption that
(\ref{moment-of-the-estimator}) only has one solution at the true parameter
values.\footnote{Througout this paper, we assume that the parameters are not
identified using cross section and time series datasets alone. This seems to
be the main reason for data combination. However, the proposed method of
inference in this paper is effectively based on the moments in
(\ref{moment-of-the-estimator}). This implies that our method of inference
still works even if the parameters are identified from just one data set.}

In order to accurately describe the theory that follows, we start with a
precise definition of the probability space used for our theory. Let $\left(
\Omega^{\prime},\mathcal{G}^{\prime},P^{\prime}\right)  $ be a probability
space with random sequences $\left\{  z_{t}\right\}  _{t=-\infty}^{\infty}%
$\textbf{ }and\textbf{ }$\left\{  y_{it}\right\}  _{i=1,t=-\infty}%
^{\infty,\infty}$.\textbf{ }The observed sample $\left\{  z_{t}\right\}
_{t=\tau_{0}+1}^{\tau_{0}+\tau}$ and $\left\{  y_{i1},\ldots,y_{iT}\right\}
_{i=1}^{n}$ is a subset of these random sequences. The process we analyze
consists of a triangular array of panel data $\psi_{n,it}^{y}$ where
$\psi_{n,it}^{y}$ typically is a function of $y_{it}$ and parameters, observed
for $i=1,...,n$ and $t=1,\ldots,T.$ We let $n\rightarrow\infty$ while $T$ is
fixed and $t=1$ is an arbitrary normalization of time at the beginning of the
cross-sectional sample. It also consists of a separate triangular array of
time series $\psi_{\tau,s}^{\nu},$ where $\psi_{\tau,s}^{\nu}$ is a function
of $z_{s}$ and parameters for $s=\tau_{0}+1,\ldots,\tau_{0}+\tau.$ In typical
applications, $\psi_{n,it}^{y}$ and $\psi_{\tau,s}^{\nu}$ are the influence
functions of the cross-section and time series estimators.

We now form the triangular array of filtrations similarly to Kuersteiner and
Prucha (2013). The filtrations are a theoretical construct defined on the
probability space in such a way that the observed sample is a strict subset of
the random variables that generate the filtrations. We proceed by first
collecting information about all of the common shocks, $\left(  \nu
_{1},...,\nu_{T}\right)  $, then we pick the initial time-series realization
$z_{\min\left\{  1,\tau_{0}\right\}  }$ in such a way that it predates or
coincides with the initial period of the panel data set. We then sequentially
add the cross-sectional units for the same time period, starting from $i=1$ to
$i=n$.\footnote{The filtrations are constructed based on the ordering of the
cross-sectional sample. However, at the cost of slightly stronger moment
conditions, the $\sigma$-fields can be constructed in a way that is invariant
to reordering of the cross-sectional sample. We refer the interested reader to
Kuersteiner and Prucha (2013), in particular Definition 1 and the related
discussion for details. Note that the stronger moment conditions needed for
the invariance property hold in situations where $y_{it}$ is cross-sectionally
independent conditional on $\left\{  z_{t},z_{t-1},...\right\}  \vee
\mathcal{C}$. The latter is a leading case in our examples.} Subsequently, the
same procedure is repeated by shifting the time index ahead by one period. The
process ends once the time index reaches $\max(T,\tau).$ If $\tau>T,$ which
eventually happens in forward asymptotics, but may also arise in backward
asymptotics, enlarge the filtration by adding random variables $y_{it}$ even
when $t>T$. These random variables are generated from the same distribution
that produces the observed cross-sectional sample but are not actually
included in the cross-sectional sample. This enlargement is used mostly for
notational convenience.

Formally, the filtrations are defined as follows. We use the binary operator
$\vee$ to denote the smallest $\sigma$-field that contains the union of two
$\sigma$-fields. Setting $\mathcal{C}=\sigma\left(  \nu_{1},...,\nu
_{T}\right)  $ we define
\begin{align}
\mathcal{G}_{\tau n,0}  &  =\mathcal{C}\label{Information Sets}\\
\mathcal{G}_{\tau n,i}  &  =\sigma\left(  z_{\min\left(  1,\tau_{0}\right)
},\left\{  y_{j,\min\left(  1,\tau_{0}\right)  }\right\}  _{j=1}^{i}\right)
\vee\mathcal{C}\nonumber\\
&  \vdots\nonumber\\
\mathcal{G}_{\tau n,n+i}  &  =\sigma\left(  \left\{  y_{j,\min\left(
1,\tau_{0}\right)  }\right\}  _{j=1}^{n},\left\{  z_{\min\left(  1,\tau
_{0}\right)  +1},z_{\min\left(  1,\tau_{0}\right)  }\right\}  ,\left\{
y_{j,\min\left(  1,\tau_{0}\right)  +1}\right\}  _{j=1}^{i}\right)
\vee\mathcal{C}\nonumber\\
&  \vdots\nonumber\\
\mathcal{G}_{\tau n,\left(  t-\min\left(  1,\tau_{0}\right)  \right)  n+i}  &
=\sigma\left(  \left\{  y_{j,t-1},y_{j,t-2},\ldots,y_{j,\min\left(  1,\tau
_{0}\right)  }\right\}  _{j=1}^{n},\left\{  z_{t},z_{t-1},\ldots
,z_{\min\left(  1,\tau_{0}\right)  }\right\}  ,\left\{  y_{j,t}\right\}
_{j=1}^{i}\right)  \vee\mathcal{C}\text{.}\nonumber
\end{align}
As noted before, the filtration $\mathcal{G}_{\tau n,tn+i}$ is generated by a
set of random variables that contain the observed sample as a strict subset.
More specifically, we note that the cross-section $y_{j,t}$ is only observed
in the sample for a finite number of time periods while the filtrations range
over the entire expanding time series sample period.\textbf{ }We use the
convention that $\mathcal{G}_{\tau n,\left(  t-\min\left(  1,\tau_{0}\right)
\right)  n}=\mathcal{G}_{\tau n,\left(  t-\min\left(  1,\tau_{0}\right)
-1\right)  n+n}.$ This implies that $z_{t}$ and $y_{1t}$ are added
simultaneously to the filtration $\mathcal{G}_{\tau n,\left(  t-\min\left(
1,\tau_{0}\right)  \right)  n+1}$. Also note that $\mathcal{G}_{\tau n,i}$
predates the time series sample by at least one period. To simplify notation
define the function $q_{n}\left(  t,i\right)  =(t-\min(1,\tau_{0}))n+i$ that
maps the two-dimensional index $\left(  t,i\right)  $ into the integers and
note that for $q=q_{n}\left(  t,i\right)  $ it follows that $q\in\left\{
0,\ldots,\max(T,\tau)n\right\}  $. The index $q$ orders the filtrations from
smallest to largest. The filtrations $\mathcal{G}_{\tau n,q}$ are increasing
in the sense that $\mathcal{G}_{\tau n,q}\subset\mathcal{G}_{\tau n,q+1}$ for
all $q,$ $\tau$ and $n$. However, they are not nested in the sense of Hall and
Heyde's (1980) Condition (3.21) for two reasons. One is the fact that we are
considering what essentially amounts to a panel structure generating the
filtration. Kuersteiner and Prucha (2013) provide a detailed discussion of
this aspect. A second reason has to do with the possible `backwards'
asymptotics adopted in this paper. Even in a pure time series setting, i.e.
omitting $y_{i,t}$ from the definitions (\ref{Information Sets}), backwards
asymptotics lead to a violation of Hall and Heyde's Condition (3.21) because
the definition of $\mathcal{G}_{\tau n,q}$ changes as $q$ is held fixed but
$\tau$ increases. The consequence of this is that unlike in Hall and Heyde
(1980) stable convergence cannot be established for the entire probability
space, but rather is limited to the invariant $\sigma$-field $\mathcal{C}$.
This limitation is the same as in Kuersteiner and Prucha (2013) and also
appears in Eagleson (1975), albeit due to very different technical reasons.
The fact that the definition of $\mathcal{G}_{\tau n,q}$ changes for fixed $q$
does not pose any problems for the proofs that follow, because the underlying
proof strategy explicitly accounts for triangular arrays and does not use Hall
and Heyde's Condition (3.21).

To better understand the construction of $\mathcal{G}_{\tau n,q}$ we refer the
reader to Kuersteiner and Prucha (2013, pp.112-114) for a discussion of why
the cross-sectional sample needs to be added to the filtration one at a time,
why the ordering of the cross-sectional sample is irrelevant under certain
regularity conditions and why the nesting condition (3.21)\ of Hall and Heyde
(1980) must fail in a panel context. Kuersteiner and Prucha (2013, Section
2.3) also provide a number of worked examples. The construction of
$\mathcal{G}_{\tau n,q}$ proposed in this paper extends Kuersteiner and Prucha
(2013) in two directions. On the one hand, an additional time series component
$z_{s}$ is part of the generating mechanism for $\mathcal{G}_{\tau n,q}.$ On
the other hand, the filtration expands because two indices, $\tau$ and $n$,
rather than just $n$ in the case of Kuersteiner and Prucha (2013), tend to
infinity. The construction of $\mathcal{G}_{\tau n,q}$ is specific to the type
of central limit theorem we prove and the fact that the joint process of
$\psi_{n,it}^{y}$ and $\psi_{\tau,t}^{\nu}$ needs to satisfy a martingale
difference property relative to the filtration $\mathcal{G}_{\tau n,q}$ for
our proof to be valid. The fact that $z_{t}$ and $y_{1t}$ are added
simultaneously to the filtration before the cross-section observations
$y_{2t},...,y_{nt}$ is necessitated by the possibility that $z_{t}$ and
$y_{jt}$ are not independent. To understand this point, consider a
hypothetical situation where $z_{t}$ were added at the end of the
cross-section sample together with $y_{nt}$. In such a scenario, it would no
longer be credible to impose the moment condition $E\left[  \psi_{\tau,t}%
^{\nu}|\mathcal{G}_{\tau n,q-1}\right]  =0$ for $q=q_{n}\left(  t,n\right)  $
with $t>T$ because $\mathcal{G}_{\tau n,q-1}$ now would depend on
$y_{1t},...,y_{n-1t}.$\textbf{ }These variables in turn may predict
$\psi_{\tau,t}^{\nu}$. Similarly, the need to develop partial sums over the
index $i$ for the component $\psi_{n,it}^{y}$ requires that $y_{it}$ be added
one at a time to the filtration $\mathcal{G}_{\tau n,q}$, a point also
explained in Kuersteiner and Prucha (2013).

\section{Joint Panel-Time Series Limit Theory\label{Section_JointCLT}}

In this section we first establish a generic joint limiting result for a
combined panel-time series process and then specialize it to the limiting
distributions of parameter estimates under stationarity and, in a later
section, non-stationarity.

We develop asymptotic theory for the sums of some generic random vectors
$\psi_{n,it}^{y}$ and $\psi_{\tau,t}^{\nu}$. Typically, $\psi_{n,it}^{y}$and
$\psi_{\tau,t}^{\nu}$ are the scores or moment functions of a cross-section
and time series criterion function based on observed data $y_{it}$ and $z_{t}%
$.\ Below we introduce general regularity conditions for these generic random
vectors $\psi_{n,it}^{y}$ and $\psi_{\tau,t}^{\nu}$. Let $k_{\theta}$ be the
dimension of the parameter $\theta$ and $k_{\rho}$ be the dimension of the
parameter $\rho.$ With some abuse of notation we also denote by $k_{\theta}$
the number of moment conditions used to identify $\theta$ when GMM estimators
are used, with a similar convention applying to $k_{\rho}.$ With this
notation, $\psi_{n,it}^{y}$ takes values in $R^{k_{\theta}}$ and $\psi
_{\tau,t}^{\nu}$ takes values in $R^{k_{\rho}}.$ We assume that $T\leq\tau
_{0}+\tau$. \textbf{ }Throughout we assume that $\left(  \psi_{n,it}^{y}%
,\psi_{\tau,t}^{\nu}\right)  $ is a type of a vector mixingale sequence
relative to a filtration $\mathcal{G}_{\tau n,q}$. The concept of mixingales
was introduced by Gordin (1969, 1973) and McLeish (1975a). We derive the joint
limiting distribution and a related functional central limit theorem for
$\frac{1}{\sqrt{n}}\sum_{t=1}^{T}\sum_{i=1}^{n}\psi_{n,it}^{y}$ and $\frac
{1}{\sqrt{\tau}}\sum_{t=\tau_{0}+1}^{\tau_{0}+\tau}\psi_{\tau,t}^{\nu}$.

The central limit theorem we develop needs to establish joint convergence for
terms involving both $\psi_{n,it}^{y}$ and $\psi_{\tau,t}^{\nu}$ with both the
time series and the cross-sectional dimension becoming large simultaneously.
Let $\left[  a\right]  $ be the largest integer less than or equal $a$. Joint
convergence is achieved by stacking both moment vectors into a single sum that
extends over both $t$ and $i.$ Let $r\in\left[  0,1\right]  $ and define
\begin{equation}
\tilde{\psi}_{it}^{\nu}\left(  r\right)  \equiv\frac{\psi_{\tau,t}^{\nu}%
}{\sqrt{\tau}}1\left\{  \tau_{0}+1\leq t\leq\tau_{0}+\left[  \tau r\right]
\right\}  1\left\{  i=1\right\}  , \label{psi_tilde_z}%
\end{equation}
which depends on $r$ in a non-trivial way. This dependence will be of
particular interest when we specialize our models to the near unit root case.
For cross-sectional data define%
\begin{equation}
\tilde{\psi}_{it}^{y}\left(  r\right)  \equiv\tilde{\psi}_{it}^{y}\equiv
\frac{\psi_{n,it}^{y}}{\sqrt{n}}1\left\{  1\leq t\leq T\right\}
\label{psi_tilde_y}%
\end{equation}
where $\tilde{\psi}_{it}^{y}\left(  r\right)  =\tilde{\psi}_{it}^{y}$ is
constant as a function of $r\in\left[  0,1\right]  .$ In turn, this implies
that functional convergence of the component (\ref{psi_tilde_y}) is the same
as the finite dimensional limit. It also means that the limiting process is
degenerate (i.e. constant) when viewed as a function of $r$. However, this
does not matter in our applications as we are only interested in the sample
averages
\[
\frac{1}{\sqrt{n}}\sum_{t=1}^{T}\sum_{i=1}^{n}\psi_{n,it}^{y}=\sum
_{t=\min(1,\tau_{0}+1)}^{\max(T,\tau_{0}+\tau)}\sum_{i=1}^{n}\tilde{\psi}%
_{it}^{y}\equiv X_{n\tau}^{y}.
\]
Define the stacked vector $\tilde{\psi}_{it}\left(  r\right)  =\left(
\tilde{\psi}_{it}^{y}\left(  r\right)  ^{\prime},\tilde{\psi}_{it}^{\nu
}\left(  r\right)  ^{\prime}\right)  ^{\prime}\in\mathbb{R}^{k_{\phi}}$ where
$\phi=\left(  \theta,\rho\right)  $ and $k_{\phi}$ is the dimension of $\phi
$\textbf{. }Consider the stochastic process%
\begin{equation}
X_{n\tau}\left(  r\right)  =\sum_{t=\min(1,\tau_{0}+1)}^{\max(T,\tau_{0}%
+\tau)}\sum_{i=1}^{n}\tilde{\psi}_{it}\left(  r\right)  ,\text{\quad}X_{n\tau
}\left(  0\right)  =\left(  X_{n\tau}^{y\prime},0\right)  ^{\prime}.
\label{CLT_Process}%
\end{equation}

We derive a functional central limit theorem which establishes joint
convergence between the panel and time series portions of the process
$X_{n\tau}\left(  r\right)  $. The result is useful in analyzing both trend
stationary and unit root settings. In the latter, we specialize the model to a
linear time series setting. The functional CLT is then used to establish
proper joint convergence between stochastic integrals and the cross-sectional
component of our model.

For the stationary case we are mostly interested in $X_{n\tau}\left(
1\right)  $ where in particular
\[
\frac{1}{\sqrt{\tau}}\sum_{t=\tau_{0}+1}^{\tau_{0}+\tau}\psi_{\tau,t}^{\nu
}=\sum_{t=\min(1,\tau_{0}+1)}^{\max(T,\tau_{0}+\tau)}\sum_{i=1}^{n}\tilde
{\psi}_{it}^{\nu}\left(  1\right)
\]
and we used the fact that $\sum_{i=1}^{n}\tilde{\psi}_{it}^{\nu}\left(
1\right)  =\tilde{\psi}_{1t}^{\nu}\left(  1\right)  =\frac{\psi_{\tau,t}^{\nu
}}{\sqrt{\tau}}1\left\{  \tau_{0}+1\leq t\leq\tau_{0}+\tau\right\}  $ by
(\ref{psi_tilde_z}). The limiting distribution of $X_{n\tau}\left(  1\right)
$ is a simple corollary of the functional CLT for $X_{n\tau}\left(  r\right)
.$ We note that our treatment differs from Phillips and Moon (1999), who
develop functional CLT's for the time series dimension of the panel data set.
In our case, since $T$ is fixed and finite, a similar treatment is not applicable.

We introduce the following general regularity conditions for generic random
vectors $\psi_{n,it}^{y}$ and $\psi_{\tau,t}^{\nu}.$ Similarly, the central
limit theorem established in this section is for generic random vectors and
empirical processes satisfying the regularity conditions. In later sections,
these conditions will be specialized to the particular models considered
there. To apply the general theory in this section to specific models we will
evaluate the score or moment function at the true parameter value. In those
instances $\psi_{n,it}^{y}$ and $\psi_{\tau,t}^{\nu}$ will be interpreted as
the score or moment function evaluated at the true parameter value. We use
$\left\Vert .\right\Vert $ to denote the Euclidean norm.

\begin{condition}
\label{Diag_CLT_Cond}Assume that \newline i) $\psi_{n,it}^{y}$ is measurable
with respect to $\mathcal{G}_{\tau n,\left(  t-\min\left(  1,\tau_{0}\right)
\right)  n+i}.$\newline ii) $\psi_{\tau,t}^{\nu}$ is measurable with respect
to $\mathcal{G}_{\tau n,\left(  t-\min\left(  1,\tau_{0}\right)  \right)
n+i}$ for all $i=1,...,n.$\newline iii) for some $\delta>0$ and $C<\infty
,$\ $\sup_{it}E\left[  \left\Vert \psi_{n,it}^{y}\right\Vert ^{2+\delta
}\right]  \leq C$ for all $n\geq1.$\newline iv) for some $\delta>0$ and
$C<\infty,$\ $\sup_{t}E\left[  \left\Vert \psi_{\tau,t}^{\nu}\right\Vert
^{2+\delta}\right]  \leq C$ for all $\tau\geq1.$\newline v) $E\left[  \left.
\psi_{n,it}^{y}\right\vert \mathcal{G}_{\tau n,\left(  t-\min\left(
1,\tau_{0}\right)  \right)  n+i-1}\right]  =0.$ \newline vi) $E\left[  \left.
\psi_{\tau,t}^{\nu}\right\vert \mathcal{G}_{\tau n,\left(  t-\min\left(
1,\tau_{0}\right)  -1\right)  n+i}\right]  =0$ for $t>T$ and all
$i=1,...,n.$\newline vii) $\left\Vert E\left[  \left.  \psi_{\tau,t}^{\nu
}\right\vert \mathcal{G}_{\tau n,\left(  t-\min\left(  1,\tau_{0}\right)
-1\right)  n+i}\right]  \right\Vert _{2}\leq\vartheta_{t}$\textbf{ }for $t<0$
and all $i=1,...,n$ where
\[
\vartheta_{t}\leq C\left(  \left\vert t\right\vert ^{1+\delta}\right)  ^{-1/2}%
\]
for the same $\delta$ as in (iv) and some bounded constant $C$ and where for a
vector of random variables $x=\left(  x_{1},..,x_{d}\right)  ,$ $\left\Vert
x\right\Vert _{2}=\left(  \sum_{j=1}^{d}E\left[  \left\vert x_{j}\right\vert
^{2}\right]  \right)  ^{1/2}$ is the $L_{2}$ norm.
\end{condition}

\begin{remark}
Conditions \ref{Diag_CLT_Cond}(i), (iii) and (v) can be justified in a variety
of ways. One is the subordinated process theory employed in Andrews (2005)
which arises when $y_{it}$ are random draws from a population of outcomes $y$.
A sufficient condition for Conditions \ref{Diag_CLT_Cond}(v) to hold is that
$E\left[  \left.  \psi\left(  y|\theta,\rho,\nu_{t}\right)  \right\vert
\mathcal{C}\right]  =0$ holds in the population. This would be the case, for
example, if $\psi$ were the correctly specified score for the population
distribution of $y_{i,t}$ given $\nu_{t}$. See Andrews (2005, pp. 1573-1574).
\end{remark}

Conditions \ref{Diag_CLT_Cond}(i), (iii) and (v) impose a martingale
difference property for $\psi_{n,it}^{y}$ both in the time as well as in the
cross-section dimension. More specifically,\textbf{ }$E\left[  \psi_{n,it}%
^{y}|\psi_{n,i_{1}t}^{y},\psi_{n,i_{2}t_{2}}^{y},...,\psi_{n,i_{k}t_{k}}%
^{y},\mathcal{C}\right]  =0$\textbf{ }for any collection $\left(
i_{1},t\right)  ,...,\left(  i_{k},t_{k}\right)  $ with $i_{1}<i,$
$t_{2},...,t_{k}<t$ and $i_{r}\in\{1,...,n\}$ for $2\leq r\leq k.$ The central
limit theorem is established by letting the sums over the sample increase
sequentially in line with the information contained in\textbf{ }%
$\mathcal{G}_{\tau n,q}$\textbf{ }and thus mapping the sums over $t$ and $i$
into a single sum over an index set on the real line. This approach preserves
the information structure in\textbf{ }$\mathcal{G}_{\tau n,q}$ and in
particular takes into account that in general $E\left[  \psi_{n,it}^{y}%
|\psi_{n,js}^{y},\mathcal{C}\right]  \neq0$ for $s>t$ and all $j\in
\{1,...,n\}.$ The score of a correctly specified likelihood is a leading
example where the information structure takes the form implied by our conditions.

The construction in this paper is in contrast to some of the joint limit
theory for panels that is based on mds assumptions in the time direction only
and weak convergence of time series averages of cross-sectional aggregates
$\breve{\psi}_{n,t}^{y}=$ $n^{-1/2}\sum_{i=1}^{n}\psi_{n,it}^{y},$ driven by
$T$ going to infinity. Our theory on the other hand depends on the
cross-section sample size $n$ tending to infinity, while $T$ is kept fixed.
The pure cross-section case is covered by allowing for $T=1$. Other examples
include cases where the cross-section is sampled at random conditional
on\textbf{ }$\mathcal{C}$\textbf{ }and $\psi_{n,it}^{y}$ is a conditional
moment function in a rational agent model.

Conditions \ref{Diag_CLT_Cond}(ii), (iv) and (vi) impose a martingale property
for $\psi_{\tau,t}^{\nu}$ in the time dimension. In addition the condition
also implies that\textbf{ }$E\left[  \psi_{\tau,t}^{\nu}|\psi_{n,is}%
^{y},\mathcal{C}\right]  =0$\textbf{ }for any $i\in\{1,...,n\}$ and $s<t$ and
$t>T$. We note that this condition is weaker than assuming independence
between $\psi_{\tau,t}^{\nu}$ and $\psi_{n,is}^{y}$, even conditionally on
$\mathcal{C}$.\textbf{ }While the examples in HKM20 do satisfy such a
conditional independence restriction, it is not required for the CLT developed
in this paper.

We note that $E\left[  \left.  \psi_{\tau,t}^{\nu}\right\vert \mathcal{G}%
_{\tau n,q_{n}\left(  t-1,i\right)  }\right]  =0$ for all $i$ only holds for
$t>T$ because we condition not only on $z_{t-1},z_{t-2}...$ but also on
$\nu_{1},...,\nu_{T}$, where the latter may have non-trivial overlap with the
former. When $\tau_{0}$ is fixed the number of time periods $t$ where
$E\left[  \left.  \psi_{\tau,t}^{\nu}\right\vert \mathcal{G}_{\tau
n,q_{n}\left(  t-1,i\right)  }\right]  \neq0$ is finite and thus can be
neglected asymptotically. On the other hand, when $\tau_{0}$ varies with
$\tau$ there is an asymptotically non-negligible number of time periods where
the condition may not hold. To handle this latter case we impose an additional
mixingale type condition that is satisfied for typical time series models.

Condition \ref{Diag_CLT_Cond}(vii) is an additional restriction needed to
handle situations where $\tau_{0}$, the starting point of the time series
sample, is allowed to diverge to $-\infty.$ We call this situation backward
asymptotics. Since it generally is the case that $E\left[  \psi_{\tau,t}^{\nu
}|\mathcal{C}\right]  \neq0$\textbf{ }for $t\leq T$ because $\mathcal{C}$
contains information about future realizations of $\psi_{\tau,t}^{\nu}$ we
need a condition that limits this dependence as $t\rightarrow-\infty.$ The
following example illustrates that the condition naturally holds in linear
time series models\textbf{.}

\begin{example}
\label{Example_AR_Mixing}Assume $u_{s}$ is iid $N\left(  0,1\right)  $ and
$z_{s}=\sum_{j=0}^{\infty}\rho^{j}u_{s-j}$ with $\left\vert \rho\right\vert
<1$ is the stationary solution to $z_{s+1}=\rho z_{s}+u_{s+1}.$ Use the
convention that $\nu_{s}=z_{s}.$ Then the score of the Gaussian likelihood is
$\psi_{\tau,s}^{\nu}=z_{s}u_{s+1}.$ Assuming that $\tau_{0}=-\tau+1,$ $T=1$
and that $z_{s}$ is independent of $y_{i,t}$ conditional on $\mathcal{C}%
=\sigma\left(  \nu_{1}\right)  $, it is sufficient for this example to define
$\mathcal{G}_{\tau n,\left(  t-\min\left(  1,\tau_{0}\right)  \right)
n+i}=\sigma\left(  \left\{  z_{t},z_{t-1},\ldots,z_{\tau_{0}}\right\}
\right)  \vee\sigma\left(  \nu_{1}\right)  $. Then\footnote{Detailed
derivations are in Section \ref{Example_AR_Mxing_Calc}.}
\[
\left\Vert E\left[  \left.  \psi_{\tau,s}^{\nu}\right\vert \mathcal{G}_{\tau
n,\left(  s-\min\left(  1,\tau_{0}\right)  -1\right)  n+i}\right]  \right\Vert
_{2}=O\left(  \left\vert \rho\right\vert ^{\left\vert s\right\vert /2}\right)
=o\left(  \left\vert s\right\vert ^{-\left(  1+\delta\right)  /2}\right)  .
\]

\end{example}

The following conditions, Condition \ref{Omega_z} for the time series sample
and Condition \ref{Omega_y} below for the cross-section sample are put in
place to ensure that, in combination with Condition \ref{Diag_CLT_Cond}, the
variance of $X_{n\tau}\left(  r\right)  $ converges as $n$ and $\tau$ tend to
$\infty$ jointly. Conditions \ref{Omega_z} and \ref{Omega_y} correspond to
Condition 3.19 in Hall and Heyde (1980, Theorem 3.2). We note in particular
that the martingale structure in conjunction with uniform moment bounds
imposed in Condition \ref{Diag_CLT_Cond} are sufficient to guarantee that
cross-covariance terms over different time periods converge to zero.

\begin{condition}
\label{Omega_z}Assume that: \textbf{\newline}i) for any $r\in\left[
0,1\right]  ,$%
\[
\frac{1}{\tau}\sum_{t=\tau_{0}+1}^{\tau_{0}+\left[  \tau r\right]  }\psi
_{\tau,t}^{\nu}\psi_{\tau,t}^{\nu\prime}\overset{p}{\rightarrow}\Omega_{\nu
}\left(  r\right)  \text{ as }\tau\rightarrow\infty
\]
where $\Omega_{\nu}\left(  r\right)  $ is positive definite a.s. and
measurable with respect to $\sigma\left(  \nu_{1},...,\nu_{T}\right)  $ for
all $r\in(0,1]$. \newline ii) The elements of $\Omega_{\nu}\left(  r\right)  $
are bounded continuously differentiable functions of $r>s\in\lbrack0,1]$. The
derivatives $\dot{\Omega}_{\nu}\left(  r\right)  =\partial\Omega_{\nu}\left(
r\right)  /\partial r$ are positive definite almost surely.\newline iii) There
is a fixed constant $M<\infty$ such that $\sup_{\left\Vert \lambda_{\nu
}\right\Vert =1,\lambda_{\nu}\in\mathbb{R}^{k_{\rho}}}\sup_{t}\lambda_{\nu
}^{\prime}\dot{\Omega}_{\nu}\left(  t\right)  \lambda_{\nu}\leq M$ a.s.
\end{condition}

\begin{remark}
Note that by construction $\Omega_{\nu}\left(  0\right)  =0.$
\end{remark}

Condition \ref{Omega_z} is weaker than the conditions of Billingsley's (1968,
Theorem 23.1) functional CLT for strictly stationary mds\textbf{ }because we
neither assume strict stationarity nor homoskedasticity.\textbf{ }We do not
assume that $E\left[  \psi_{\tau,t}^{\nu}\psi_{\tau,t}^{\nu\prime}\right]  $
is constant. Brown (1971) allows for time varying variances, but uses stopping
times to achieve a standard Brownian limit$.$ Even more general treatments
with random stopping times are possible - see Gaenssler and Haeussler (1979).
On the other hand, here convergence to a Gaussian process (not a standard
Wiener process) with the same methodology (i.e. establishing convergence of
finite dimensional distributions and tightness) as in Billingsley, but without
assuming homoskedasticity is pursued. Related results with heteroskedastic
errors in the high frequency and stochastic process literature can be found
for example in Jacod and Shiryaev (2003, Theorem IX 7.28).

Heteroskedastic errors are explicitly used in Section
\ref{sec-unit-root-time-series-models} where $\psi_{\tau,t}^{\nu}=\exp\left(
\left(  t-s\right)  \gamma/\tau\right)  \eta_{s}$. Even if $\eta_{s}$ is
iid$\left(  0,\sigma^{2}\right)  $ it follows that $\psi_{\tau,t}^{\nu}$ is a
heteroskedastic triangular array that depends on $\tau$. It can be shown that
the variance kernel $\Omega_{\nu}\left(  r\right)  $ is $\Omega_{\nu}\left(
r\right)  =\left.  \sigma^{2}\left(  1-\exp\left(  -2r\gamma\right)  \right)
\right/  2\gamma$ in this case. See equation (\ref{CW2.3}).

\begin{condition}
\label{Omega_y}Assume that
\[
\frac{1}{n}\sum_{i=1}^{n}\psi_{n,it}^{y}\psi_{n,it}^{y\prime}%
\overset{p}{\rightarrow}\Omega_{ty}%
\]
where $\Omega_{ty}$ is positive definite a.s. and measurable with respect to
$\sigma\left(  \nu_{1},...,\nu_{T}\right)  .$
\end{condition}

Condition \ref{Omega_z} holds under a variety of conditions that imply some
form of weak dependence of the process $\psi_{\tau,t}^{\nu}$. These include,
in addition to Condition \ref{Diag_CLT_Cond}(ii) and (iv), mixing or near
epoch dependence assumptions on the temporal dependence properties of the
process $\psi_{\tau,t}^{\nu}.$ Condition \ref{Omega_y} holds under appropriate
moment bounds and random sampling in the cross-section even if the underlying
population distribution is not independent (see Andrews, 2005, for a detailed treatment).

\subsection{Stable Functional CLT}

This section details the probabilistic setting we use to accommodate the
results that Jacod and Shiryaev (2002) (shorthand notation JS) develop for
general Polish spaces. Let $\left(  \Omega^{\prime},\mathcal{G}^{\prime
},P^{\prime}\right)  $ be a probability space with increasing filtrations
$\mathcal{G}_{k_{n},q}\subset\mathcal{G}^{\prime}$ and $\mathcal{G}_{k_{n}%
,q}\subset\mathcal{G}_{k_{n},q+1}$ for any $q=1,...,k_{n}$ and an increasing
sequence $k_{n}\rightarrow\infty$ as $n\rightarrow\infty$.\footnote{In our
case $k_{n}=\max(T,\tau)n$ where both $n\rightarrow\infty$ and $\tau
\rightarrow\infty$ such that clearly $k_{n}\rightarrow\infty.$} Let
$D_{\mathbb{R}^{k_{\theta}}\times\mathbb{R}^{k_{\rho}}}\left[  0,1\right]  $
be the space of functions $\left[  0,1\right]  \rightarrow\mathbb{R}%
^{k_{\theta}}\times\mathbb{R}^{k_{\rho}}$ that are right continuous and have
left limits (see Billingsley (1968, p.109)) with discontinuities synchronized
across all elements in the vectors. Let $\mathcal{C}$ be a sub-sigma field of
$\mathcal{G}^{\prime}$. Let $\left(  \zeta,Z^{n}\left(  \omega,t\right)
\right)  :\Omega^{\prime}\times\left[  0,1\right]  \rightarrow\mathbb{R\times
R}^{k_{\theta}}\times\mathbb{R}^{k_{\rho}}$ be random variables or random
elements in $\mathbb{R}$ and $D_{\mathbb{R}^{k_{\theta}}\times\mathbb{R}%
^{k_{\rho}}}\left[  0,1\right]  $, respectively defined on the common
probability space $\left(  \Omega^{\prime},\mathcal{G}^{\prime},P^{\prime
}\right)  $ and assume that $\zeta$ is bounded and measurable with respect to
$\mathcal{C}$.

As in JS, p.512, let $Z\left(  \omega^{\prime},x\right)  =x$ be the canonical
element on $D_{\mathbb{R}^{k_{\theta}}\times\mathbb{R}^{k_{\rho}}}\left[
0,1\right]  $ and let $Q\left(  \omega^{\prime},dx\right)  $ be a version of
the distribution of $Z$ conditional on $\mathcal{C}$. Similarly, let
$Q_{n}\left(  \omega^{\prime},dx\right)  $ be a version of the conditional (on
$\mathcal{C}$) distribution of $Z^{n}.$ Following JS (Definition VI1.1 and
Theorem VI1.14) we define the $\sigma$-field generated by all coordinate
projections on $D_{\mathbb{R}^{k_{\theta}}\times\mathbb{R}^{k_{\rho}}}\left[
0,1\right]  $ as $\mathcal{D}_{\mathbb{R}^{k_{\theta}}\times\mathbb{R}%
^{k_{\rho}}}$. Then define the joint probability space $\left(  \Omega
,\mathcal{G},P\right)  $ with $\Omega=\Omega^{\prime}\times D_{\mathbb{R}%
^{k_{\theta}}\times\mathbb{R}^{k_{\rho}}}\left[  0,1\right]  ,$ $\mathcal{G}%
=\mathcal{G}^{\prime}\otimes\mathcal{D}_{\mathbb{R}^{k_{\theta}}%
\times\mathbb{R}^{k_{\rho}}}$ and%
\begin{equation}
P\left(  d\omega^{\prime},dx\right)  =P^{\prime}\left(  d\omega^{\prime
}\right)  Q\left(  \omega^{\prime},dx\right)  . \label{P-Measure}%
\end{equation}
Following JS (p.512, Definition 5.28) we say that $Z^{n}$ converges
$\mathcal{C}$-stably to $Z$ if for all bounded, $\mathcal{C}$-measurable
$\zeta$ and any continuous bounded functional $f:D_{\mathbb{R}^{k_{\theta}%
}\times\mathbb{R}^{k_{\rho}}}\left[  0,1\right]  \rightarrow\mathbb{R}$
\begin{equation}
E\left[  \zeta f\left(  Z^{n}\right)  \right]  \rightarrow E\left[  \zeta
Q\left[  f\left(  Z\right)  \right]  \right]
\label{Def Functional C Stable Convergence}%
\end{equation}
where $Q\left[  f\left(  Z\right)  \right]  $ is the expectation of $f\left(
Z\right)  $ conditional on $\mathcal{C}$. More specifically, if $W\left(
r\right)  $ is standard Brownian motion, we say that $Z^{n}\Rightarrow
W\left(  r\right)  $ $\mathcal{C}$-stably where the notation means that
(\ref{Def Functional C Stable Convergence}) holds when $Q$ is Wiener measure
(for a definition and existence proof see Billingsley (1968, Chapter 2,
Section 9)). Our proof strategy is based on JS Proposition VIII, 5.33 which
shows that $Z^{n}$ converges $\mathcal{C}$-stably iff $Z^{n}$ is tight and for
all $A\in\mathcal{C}$, $E\left[  1_{A}f\left(  Z^{n}\right)  \right]  $ converges.

The concept of stable convergence was introduced by R\'{e}nyi (1963) and has
found wide application in probability and statistics. Most relevant to the
discussion here are the stable central limit theorem of Hall and Heyde (1980,
Theorem 3.2) and Kuersteiner and Prucha (2013) who extend the result in Hall
and Heyde (1980, Theorem 3.2) to panel data with fixed $T$. Related to our
work, stable functional limit theorems were obtained previously for different
settings by Rootzen (1983), Feigin (1985) and Dedecker and Merlevede
(2002).\textbf{ }For example, Dedecker and Merlevede (2002) established a
related stable functional CLT for strictly stationary martingale differences
while we allow for heterogeneity and non-stationarity.

\begin{theorem}
\label{FCLT}Assume that Conditions \ref{Diag_CLT_Cond}, \ref{Omega_z} and
\ref{Omega_y} hold. Then it follows that for $\tilde{\psi}_{it}$ defined in
(\ref{CLT_Process}), and as $\tau,n\rightarrow\infty,$ and $T$ fixed,
\[
X_{n\tau}\left(  r\right)  \Rightarrow\left[
\begin{array}
[c]{c}%
B_{y}\left(  1\right) \\
B_{\nu}\left(  r\right)
\end{array}
\right]  \text{ (}\mathcal{C}\text{-stably)}%
\]
where $B_{y}\left(  r\right)  =\Omega_{y}{}^{1/2}W_{y}(r),$ $B_{\nu}\left(
r\right)  =\int_{0}^{r}\dot{\Omega}_{\nu}\left(  s\right)  ^{1/2}dW_{\nu}(s)$
and $\Omega\left(  r\right)  =\operatorname*{diag}\left(  \Omega_{y}%
,\Omega_{\nu}\left(  r\right)  \right)  $ is $\mathcal{C}$-measurable,
$\dot{\Omega}_{\nu}\left(  s\right)  =\partial\Omega_{\nu}\left(  s\right)
/\partial s$ and $\left(  W_{y}\left(  r\right)  ,W_{\nu}\left(  r\right)
\right)  $ is a vector of standard $k_{\phi}$-dimensional, mutually
independent, Brownian processes independent of $\Omega$.
\end{theorem}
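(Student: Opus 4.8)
The plan is to establish $\mathcal{C}$-stable convergence of the process $X_{n\tau}(r)$ by proving (a) convergence of the finite dimensional distributions in the $\mathcal{C}$-stable sense, and (b) tightness in $D_{\mathbb{R}^{k_\theta}\times\mathbb{R}^{k_\rho}}[0,1]$, and then invoking the characterization recorded just before the theorem (via JS VIII.5.36 and the proof of JS Theorems 5.7 and 5.14) that these two ingredients together yield functional $\mathcal{C}$-stable convergence. For the finite dimensional distributions, fix $0\le r_1<\cdots<r_k\le 1$ and consider an arbitrary linear combination $\sum_{j}\lambda_j'\bigl(X_{n\tau}(r_j)-X_{n\tau}(r_{j-1})\bigr)$. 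Because of the stacking in (\ref{CLT_Process}), this is a single sum over the index $q=q_n(t,i)$ of an array that is, up to the finitely many (or, under backwards asymptotics, asymptotically negligible under Condition \ref{Diag_CLT_Cond}(vii)) boundary terms flagged in the text, a martingale difference array with respect to the filtration $\mathcal{G}_{\tau n,q}$. I would therefore appeal to a multivariate stable martingale CLT of the Hall--Heyde (1980) / Kuersteiner--Prucha (2013) type, which applies on the invariant $\sigma$-field $\mathcal{C}$ precisely because the filtrations are not nested in the sense of Hall and Heyde's Condition (3.21) (triangular array, backwards asymptotics).

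The two hypotheses that stable martingale CLT needs are a conditional Lindeberg condition and convergence of the conditional variance. The Lindeberg condition follows from the uniform $L^{2+\delta}$ bounds in Condition \ref{Diag_CLT_Cond}(iii)--(iv) together with the normalizations $1/\sqrt n$ and $1/\sqrt\tau$ in (\ref{psi_tilde_y})--(\ref{psi_tilde_z}): each individual increment is $O_p(n^{-1/2})$ or $O_p(\tau^{-1/2})$, so a standard truncation argument kills the large-jump part. For the conditional (equivalently, by the weak-dependence structure, the unconditional in probability) quadratic variation I would compute $\sum_q E[\tilde\psi_{it}\tilde\psi_{it}'\mid \mathcal{G}_{\tau n,q-1}]$ block by block: the $y$-block contributes $\sum_{t=1}^T \frac1n\sum_i \psi_{n,it}^y\psi_{n,it}^{y\prime}\to \sum_t \Omega_{ty}=:\Omega_y$ by Condition \ref{Omega_y} (this limit is $\mathcal{C}$-measurable, hence "random" in the limit — exactly the point), the $\nu$-block over $[\,\tau_0+[\tau r_{j-1}]+1,\tau_0+[\tau r_j]\,]$ contributes $\Omega_\nu(r_j)-\Omega_\nu(r_{j-1})$ by Condition \ref{Omega_z}(i), and the cross $y$--$\nu$ block vanishes because $T$ is fixed so only finitely many indices $q$ carry both a $y$ and a $\nu$ component. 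Assembling, the limiting conditional covariance of the increment vector is block-diagonal with blocks $\Omega_y$ and $\Omega_\nu(r_j)-\Omega_\nu(r_{j-1})$, which identifies the stable limit as a Gaussian vector that, conditionally on $\mathcal{C}$, is $\bigl(B_y(1),\,B_\nu(r_1),\ldots,B_\nu(r_k)\bigr)$ with the stated mixed-Gaussian law; the independence of the driving Brownian motions $(W_y,W_\nu)$ from $\Omega$ is part of what stable convergence delivers.

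For tightness I would treat the two coordinates separately. The $y$-coordinate is constant in $r$ by (\ref{psi_tilde_y}), so tightness there is trivial. For $B_\nu$, since Condition \ref{Omega_z}(ii)--(iii) make $\Omega_\nu(\cdot)$ absolutely continuous with a.s. bounded derivative, the limit process has continuous sample paths, so by Billingsley (1968, Theorem 15.5) it suffices to verify a moment increment bound $E\bigl[\|X_{n\tau}^\nu(r)-X_{n\tau}^\nu(s)\|^{2}\mid\mathcal{C}\bigr]\le C|r-s|$ (or the usual four-point / $L^{2+\delta}$ version over $[s,t]\times[t,u]$) using the mds structure and the $L^{2+\delta}$ bound of Condition \ref{Diag_CLT_Cond}(iv); again one must dispose of the boundary terms for $t\le T$ and for $t<0$ under backwards asymptotics, where Condition \ref{Diag_CLT_Cond}(vi)--(vii) guarantee they are $o_p(1)$ uniformly. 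I expect the main obstacle to be the bookkeeping around these non-martingale boundary increments — making rigorous that, jointly with the $\mathcal{C}$-stable limit of the martingale part and for \emph{every} $A\in\mathcal{C}$, the quantity $E[1_A f(Z^n)]$ still converges to the right limit despite the filtration $\mathcal{G}_{\tau n,q}$ changing with $\tau$ for fixed $q$ — rather than any single estimate; this is precisely the place where restricting stable convergence to the invariant field $\mathcal{C}$, as in Kuersteiner and Prucha (2013), becomes essential.
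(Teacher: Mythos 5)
Your proposal follows essentially the same route as the paper's proof: center the stacked summands into a martingale-difference array with respect to $\mathcal{G}_{\tau n,q}$, establish stable finite-dimensional convergence with a Kuersteiner--Prucha-type martingale CLT while showing the non-martingale boundary terms (for $t\le T$ and, under backwards asymptotics, $t<0$) are negligible via Condition \ref{Diag_CLT_Cond}(vii), prove tightness only for the $\nu$-coordinate, and conclude from the fidi-plus-tightness characterization of $\mathcal{C}$-stable functional convergence. The paper differs only in execution: it handles tightness through a triangular-array extension of Hall--Heyde's maximal inequality combined with a truncation argument rather than a moment-increment bound (note that your conditional second-moment bound $E[\|\cdot\|^{2}\mid\mathcal{C}]\le C|r-s|$ alone would not suffice for tightness, so the $L^{2+\delta}$ variant you mention parenthetically is the one to use), and it identifies the representation $B_\nu(r)=\int_0^r\dot{\Omega}_\nu(s)^{1/2}dW_\nu(s)$ by an explicit Duffie--Pan--Singleton characteristic-function/ODE computation (using Condition \ref{Omega_z}(iii) and a strong-solution argument) rather than by directly asserting conditional Gaussianity of the Wiener integral, though your shortcut is correct in substance since $\dot{\Omega}_\nu$ is deterministic given $\mathcal{C}$.
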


\begin{proof}
In Appendix \ref{proof-section-jointCLT}.
\end{proof}

\begin{remark}
Note that $W_{y}\left(  r\right)  =W_{y}\left(  1\right)  $ for each
$r\in\left[  0,1\right]  $ by construction. Thus, $W_{y}\left(  1\right)  $ is
simply a vector of standard Gaussian random variables, independent both of
$W_{\nu}\left(  r\right)  $ and any random variable measurable with respect to
$\mathcal{C}$.
\end{remark}

The limiting random variables $B_{y}\left(  r\right)  $ and $B_{\nu}\left(
r\right)  $ both depend on $\mathcal{C}$ and thus are mutually dependent.
However, conditional on $\mathcal{C}$, the limiting random variables are
independent because of the mutual independence of $W_{y}\left(  r\right)  $
and $W_{\nu}\left(  r\right)  .$ The representation $B_{y}\left(  1\right)
=\Omega_{y}{}^{1/2}W_{y}(1),$ where a stable limit is represented as the
product of an independent Gaussian random variable and a scale factor that
depends on $\mathcal{C}$, is common in the literature on stable convergence.
Results similar to the one for $B_{\nu}\left(  r\right)  $ were obtained by
Phillips (1987, 1988) for cases where $\dot{\Omega}_{\nu}\left(  s\right)  $
is non-stochastic and has an explicitly functional form, notably for near unit
root processes and when convergence is marginal rather than stable. Rootzen
(1983) establishes stable convergence but gives a representation of the
limiting process in terms of standard Brownian motion obtained by a stopping
time transformation. The representation of $B_{\nu}\left(  r\right)  $ in
terms of a stochastic integral over the random scale process $\dot{\Omega
}_{\nu}\left(  s\right)  $ is obtained by utilizing a technique mentioned in
Rootzen (1983, p. 10) but not utilized there, namely establishing finite
dimensional convergence using a stable martingale CLT. This technique combined
with a tightness argument establishes the characteristic function of the
limiting process. The representation for $B_{\nu}\left(  r\right)  $ is then
obtained by utilizing isometry properties of the stochastic integral. Rootzen
(1983, p.13) similarly utilizes characteristic functions to identify the
limiting distribution in the case of standard Brownian motion. Similar
representations have been obtained in the high frequency time series
literature, see Jacod et al. (2010), Jacod and Protter (2012). Finally, the
results of Dedecker and Merlevede (2002) differ from ours in that they only
consider asymptotically homoskedastic and strictly stationary processes. In
our case, heteroskedasticity is explicitly allowed because of $\dot{\Omega
}_{\nu}\left(  s\right)  .$ An important special case of Theorem \ref{FCLT} is
the near unit root model discussed in more detail in Section
\ref{sec-unit-root-time-series-models}.

More importantly, our results innovate over the literature by establishing
joint convergence between cross-sectional and time series averages that are
generally not independent and whose limiting distributions are not
independent. This result is obtained by a novel construction that embeds both
data sets in a random field. A careful construction of information filtrations
$\mathcal{G}_{\tau n,n+i}$ allows to map the field into a martingale array.
Similar techniques were used in Kuersteiner and Prucha (2013) for panels with
fixed $T.$ In this paper we extend their approach to handle an additional and
distinct time series data-set and by allowing for both $n$ and $\tau$ to tend
to infinity jointly. In addition to the more complicated data-structure, we
extend Kuersteiner and Prucha (2013) by considering functional central limit theorems.

The following corollary is useful for possibly non-linear but trend stationary models.

\begin{corollary}
\label{Diag_CLT}Assume that Conditions \ref{Diag_CLT_Cond}, \ref{Omega_z} and
\ref{Omega_y} hold. Then it follows that for $\tilde{\psi}_{it}$ defined in
(\ref{CLT_Process}), and as $\tau,n\rightarrow\infty$ and $T$ fixed,
\[
X_{n\tau}\left(  1\right)  \overset{d}{\rightarrow}B\equiv\Omega^{1/2}W\text{
(}\mathcal{C}\text{-stably)}%
\]
where $\Omega=\operatorname*{diag}\left(  \Omega_{y},\Omega_{\nu}\left(
1\right)  \right)  $ is $\mathcal{C}$-measurable and $W=\left(  W_{y}\left(
1\right)  ,W_{\nu}\left(  1\right)  \right)  $ is a vector of standard
$d$-dimensional Gaussian random variables independent of $\Omega$. The
variables $\Omega_{y},\Omega_{\nu}\left(  .\right)  ,W_{y}\left(  .\right)  $
and $W_{\nu}\left(  .\right)  $ are as defined in Theorem \ref{FCLT}.
\end{corollary}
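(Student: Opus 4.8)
The plan is to deduce Corollary \ref{Diag_CLT} directly from Theorem \ref{FCLT} by evaluating the functional limit at the single point $r=1$. The key observation is that weak convergence in the Skorohod topology on $D_{\mathbb{R}^{k_\theta}\times\mathbb{R}^{k_\rho}}[0,1]$, combined with a.s.\ continuity of the limit process at a fixed point, implies convergence of the marginal at that point; and the same implication holds for $\mathcal{C}$-stable convergence, since the $\mathcal{C}$-conditional characterization in \eqref{Functional C Stable Convergence} is inherited by coordinate projections. So the proof is essentially an application of the continuous mapping theorem to the evaluation map $x\mapsto x(1)$, together with a check that $B_\nu(\cdot)$ is continuous at $r=1$ almost surely.

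First I would invoke Theorem \ref{FCLT}, which under Conditions \ref{Diag_CLT_Cond}, \ref{Omega_z}, \ref{Omega_y} gives $X_{n\tau}(r)\Rightarrow (B_y(1)',B_\nu(r)')'$ $\mathcal{C}$-stably in $D_{\mathbb{R}^{k_\theta}\times\mathbb{R}^{k_\rho}}[0,1]$. Next I would note that $B_\nu(r)=\int_0^r \dot\Omega_\nu(s)^{1/2}\,dW_\nu(s)$ has continuous sample paths — this follows from Condition \ref{Omega_z}(ii)--(iii), which make $\dot\Omega_\nu(\cdot)$ bounded (a.s.), so the stochastic integral is a continuous (Gaussian, conditionally on $\mathcal{C}$) process; in particular $r=1$ is a.s.\ a continuity point of the limit. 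Since the evaluation functional $\pi_1:D_{\mathbb{R}^{k_\theta}\times\mathbb{R}^{k_\rho}}[0,1]\to\mathbb{R}^{k_\theta}\times\mathbb{R}^{k_\rho}$, $x\mapsto x(1)$, is continuous at every $x$ that is continuous at $1$ (see Billingsley (1968, p.\ 121)), the set of discontinuities of $\pi_1$ has limit-measure zero, and the continuous mapping theorem for stable convergence applies.

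To make the stable-convergence version of the continuous mapping step precise, I would use the finite-dimensional characterization recorded just before Theorem \ref{FCLT}: $\mathcal{C}$-stable convergence of $X_{n\tau}(\cdot)$ entails that for every bounded $\mathcal{C}$-measurable $\zeta$ and every bounded continuous $f$ on $\mathbb{R}^{k_\theta}\times\mathbb{R}^{k_\rho}$, $E[\zeta f(\pi_1 X_{n\tau})]\to E[\zeta f(\pi_1 Z)]$, where $\pi_1 Z=(B_y(1)',B_\nu(1)')'$; composing $f$ with the a.s.-continuous map $\pi_1$ and appealing to the portmanteau-type argument in JS (Theorem 5.7 / 5.14, p.\ 509, as cited) yields exactly the defining relation \eqref{Functional C Stable Convergence} for the marginal. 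This identifies the $\mathcal{C}$-stable limit as $B:=\Omega^{1/2}W$ with $\Omega=\operatorname{diag}(\Omega_y,\Omega_\nu(1))$ and $W=(W_y(1),W_\nu(1))$; the product representation $\Omega^{1/2}W$ and the independence of $W$ from $\Omega$ (hence from $\mathcal{C}$) are read off from the corresponding structure in Theorem \ref{FCLT}, using that $B_\nu(1)=\int_0^1\dot\Omega_\nu(s)^{1/2}dW_\nu(s)$ is, conditionally on $\mathcal{C}$, Gaussian with variance $\int_0^1\dot\Omega_\nu(s)\,ds=\Omega_\nu(1)$ by Condition \ref{Omega_z}, so it has the same conditional law as $\Omega_\nu(1)^{1/2}W_\nu(1)$.

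The only mild subtlety — and the place I would be most careful — is the passage from \emph{functional} $\mathcal{C}$-stable convergence to \emph{marginal} $\mathcal{C}$-stable convergence through a mapping that is merely continuous off a null set rather than everywhere; one must confirm that the exceptional set (paths discontinuous at $r=1$) is not only $P$-null but remains negligible under the conditional measures $Q(\omega',\cdot)$, which holds here because $B_\nu$ is a.s.\ continuous conditionally on $\mathcal{C}$ (its conditional law being Gaussian with the bounded covariance kernel supplied by Condition \ref{Omega_z}). Everything else is a routine specialization of Theorem \ref{FCLT}, so the corollary follows.
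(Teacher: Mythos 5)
Your argument is correct, but it takes a somewhat different route from the paper's. You treat Theorem \ref{FCLT} as a black box and project the functional $\mathcal{C}$-stable limit at $r=1$ via a continuous-mapping argument (noting that evaluation at the endpoint is in fact continuous in the Skorohod topology, and that $B_{\nu}$ is a.s.\ continuous anyway by Condition \ref{Omega_z}(ii)--(iii)), and you then identify the law of $B_{\nu}(1)=\int_{0}^{1}\dot{\Omega}_{\nu}(s)^{1/2}dW_{\nu}(s)$ as that of $\Omega_{\nu}(1)^{1/2}W_{\nu}(1)$ by computing the $\mathcal{C}$-conditional Gaussian variance $\int_{0}^{1}\dot{\Omega}_{\nu}(s)ds=\Omega_{\nu}(1)$. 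The paper instead never passes through the functional statement: it invokes directly the finite-dimensional (weighted) characteristic-function convergence already established inside the proof of Theorem \ref{FCLT} at $r=1$, i.e.\ (\ref{DCLT41}), and combines it with the characteristic-function identity (\ref{DCLT43}) to conclude that $\int_{0}^{1}(\dot{\Omega}_{\nu}(t))^{1/2}dW_{\nu}(t)$ has the same distribution as $\Omega_{\nu}(1)^{1/2}W_{\nu}(1)$. The two identifications of the limit law are equivalent in content (yours is the conditional It\^{o}-isometry shortcut, the paper's reuses the Duffie--Pan--Singleton computation it had already done), and the continuous-mapping step for stable convergence that you rely on is exactly the device the paper itself uses in the proof of Theorem \ref{FCLT_SI}, so nothing in your proposal is problematic; your version is more modular, while the paper's is more immediate since it sidesteps any topological considerations about the projection by reusing the fidi convergence from the FCLT proof.
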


\begin{proof}
In Appendix \ref{proof-section-jointCLT}.
\end{proof}

The result of Corollary \ref{Diag_CLT} is equivalent to the statement that
$X_{n\tau}\left(  1\right)  \overset{d}{\rightarrow}N\left(  0,\Omega\right)
$ conditional on positive probability events in $\mathcal{C}$. As noted
earlier, no simplification of the technical arguments are possible by
conditioning on $\mathcal{C}$ except in the trivial case where $\Omega$ is a
fixed constant. Eagleson (1975, Corollary 3), see also Hall and Heyde (1980,
p. 59), establishes a simpler result where $X_{n\tau}\left(  1\right)
\overset{d}{\rightarrow}B$ weakly but not ($\mathcal{C}$-stably). Such results
could in principle be obtained here as well, but they would not be useful for
the analysis in Sections \ref{sec-trend-stationary-models} and
\ref{sec-unit-root-time-series-models} because the limiting distributions of
our estimators not only depend on $B$ but also on other $\mathcal{C}%
$-measurable scaling matrices. Since the continuous mapping theorem requires
joint convergence, a weak limit for $B$ alone is not sufficient to establish
the results we obtain below.\textbf{ }

Theorem \ref{Diag_CLT} establishes what Phillips and Moon (1999) call diagonal
convergence, a special form of joint convergence.\footnote{The discussion
assumes that $0<\kappa<\infty.$ The cases where $\kappa=0$ or $\kappa=\infty$
allow for a simpler treatment where either the time series or cross-section
sample can be ignored. In those situations considerations of joint convergence
play only a minor role.\textbf{ }} To see that sequential convergence where
first $n$ or $\tau$ go to infinity, followed by the other index, is generally
not useful in our set up, consider the following example. Assume that
$d=k_{\phi}$ is the dimension of the vector $\tilde{\psi}_{it}$. This would
hold for just identified moment estimators and likelihood based procedures.
Consider the double indexed process
\begin{equation}
X_{n\tau}\left(  1\right)  =\sum_{t=\min(1,\tau_{0}+1)}^{\max(T,\tau_{0}%
+\tau)}\sum_{i=1}^{n}\tilde{\psi}_{it}\left(  1\right)  .
\label{Double_Index_Proc}%
\end{equation}
For each $\tau$ fixed, convergence in distribution of $X_{n\tau}$ as
$n\rightarrow\infty$ follows from the central limit theorem in Kuersteiner and
Prucha (2013). Let $X_{\tau}$ denote the \textquotedblleft large $n$, fixed
$\tau$\textquotedblright\ limit. For each $n$ fixed, convergence in
distribution of $X_{n\tau}$ as $\tau\rightarrow\infty$ follows from a standard
martingale central limit theorem for Markov processes. Let $X_{n}$ be the
\textquotedblleft large $\tau$, fixed $n$\textquotedblright\ limit. It is
worth pointing out that the distributions of both $X_{n}$ and $X_{\tau}$ are
unknown because the limits are trivial in one direction. For example, when
$\tau$ is fixed and $n$ tends to infinity, the component $\tau^{-1/2}%
\sum_{t=\tau_{0}+1}^{\tau_{0}+\tau}\psi_{\tau,t}^{\nu}$ trivially converges in
distribution (it does not change with $n$) but the distribution of
$\tau^{-1/2}\sum_{t=\tau_{0}+1}^{\tau_{0}+\tau}\psi_{\tau,t}^{\nu}$ is
generally unknown. More importantly, application of a conventional CLT for the
cross-section alone will fail to account for the dependence between the time
series and cross-sectional components. Sequential convergence arguments thus
are not recommended even as heuristic justifications of limiting distributions
in our setting.

\subsection{Trend Stationary Models\label{sec-trend-stationary-models}}

This section provides the theoretical foundation for the inference methods
proposed in Section 6 of HKM20. Let $\theta=\left(  \beta,\nu_{1},...,\nu
_{T}\right)  $ and define the shorthand notation $f_{it}\left(  \theta
,\rho\right)  =f\left(  y_{it}|\theta,\rho\right)  $, $g_{t}\left(  \beta
,\rho\right)  =g\left(  \nu_{t}|\nu_{t-1},\beta,\rho\right)  ,$ $f_{\theta
,it}\left(  \theta,\rho\right)  =\partial f_{it}\left(  \theta,\rho\right)
/\partial\theta$ and $g_{\rho,t}\left(  \beta,\rho\right)  =\partial
g_{t}\left(  \beta,\rho\right)  /\partial\rho.$ Also let $f_{it}=f_{it}\left(
\theta_{0},\rho_{0}\right)  ,$ $f_{\theta,it}=f_{\theta,it}\left(  \theta
_{0},\rho_{0}\right)  ,$ $g_{t}=g_{t}\left(  \beta_{0},\rho_{0}\right)  $ and
$g_{\rho,t}=g_{\rho,t}\left(  \beta_{0},\rho_{0}\right)  .$ Depending on
whether the estimator under consideration is maximum likelihood or moment
based we assume that either $\left(  f_{\theta,it},g_{\rho,t}\right)  $ or
$\left(  f_{it},g_{t}\right)  $ satisfy the same Assumptions as $\left(
\psi_{it}^{y},\psi_{\tau,t}^{\nu}\right)  $ in Condition \ref{Diag_CLT_Cond}.
We recall that $\nu_{t}\left(  \beta,\rho\right)  $ is a function of $\left(
z_{t},\beta,\rho\right)  $, where $z_{t}$ are observable macro variables. For
the CLT, the process $\nu_{t}=\nu_{t}\left(  \beta_{0},\rho_{0}\right)  $ is
evaluated at the true parameter values and treated as observed. In
applications, $\nu_{t}$ will be replaced by an estimate which potentially
affects the limiting distribution of $\rho.$ This dependence is analyzed in a
step separate from the CLT.

The next step is to use Corollary \ref{Diag_CLT} to derive the joint limiting
distribution of estimators for $\phi=\left(  \theta^{\prime},\rho^{\prime
}\right)  ^{\prime}$. Define $s_{ML}^{\nu}\left(  \beta,\rho\right)
=\tau^{-1/2}\sum_{t=\tau_{0}+1}^{\tau_{0}+\tau}\partial g\left(  \nu
_{t}\left(  \beta,\rho\right)  |\nu_{t-1}\left(  \beta,\rho\right)
,\beta,\rho\right)  /\partial\rho$ and $s_{ML}^{y}\left(  \theta,\rho\right)
=n^{-1/2}\sum_{t=1}^{T}\sum_{i=1}^{n}\partial f\left(  y_{it}|\theta
,\rho\right)  /\partial\theta$ for maximum likelihood, and
\[
s_{M}^{\nu}\left(  \beta,\rho\right)  =-\left(  \partial k_{\tau}\left(
\beta,\rho\right)  /\partial\rho\right)  ^{\prime}W_{\tau}^{\tau}\tau
^{-1/2}\sum_{t=\tau_{0}+1}^{\tau_{0}+\tau}g\left(  \nu_{t}\left(  \beta
,\rho\right)  |\nu_{t-1}\left(  \beta,\rho\right)  ,\beta,\rho\right)
\]
and $s_{M}^{y}\left(  \theta,\rho\right)  =-\left(  \partial h_{n}\left(
\theta,\rho\right)  /\partial\theta\right)  ^{\prime}W_{n}^{C}n^{-1/2}%
\sum_{t=1}^{T}\sum_{i=1}^{n}f\left(  y_{it}|\theta,\rho\right)  $ for moment
based estimators. We use $s^{\nu}\left(  \beta,\rho\right)  $ and
$s^{y}\left(  \theta,\rho\right)  $ generically for arguments that apply to
both maximum likelihood and moment based estimators. The estimator $\hat{\phi
}$ jointly satisfies the moment restrictions using time series data
\begin{equation}
s^{\nu}\left(  \hat{\beta},\hat{\rho}\right)  =0. \label{Moment Cond z}%
\end{equation}
and cross-sectional data
\begin{equation}
s^{y}\left(  \hat{\theta},\hat{\rho}\right)  =0. \label{Moment Cond y}%
\end{equation}
\textbf{ }Defining $s\left(  \phi\right)  =\left(  s^{y}\left(  \phi\right)
^{\prime},s^{\nu}\left(  \phi\right)  ^{\prime}\right)  ^{\prime}$ the
estimator $\hat{\phi}$ satisfies $s\left(  \hat{\phi}\right)  =0$. A first
order Taylor series expansion around $\phi_{0}$ is used to obtain the limiting
distribution for $\hat{\phi}$. We impose the following additional assumption.

\begin{condition}
\label{Hessian}Let $\phi=\left(  \theta^{\prime},\rho^{\prime}\right)
^{\prime}\in\mathbb{R}^{k_{\phi}},$ $\theta\in\mathbb{R}^{k_{\theta}},$ and
$\rho\in\mathbb{R}^{k_{\rho}}.$ Define $D_{n\tau}=\operatorname*{diag}\left(
n^{-1/2}I_{y},\tau^{-1/2}I_{\nu}\right)  ,$where $I_{y}$ is an identity matrix
of dimension $k_{\theta}$ and $I_{\nu}$ is an identity matrix of dimension
$k_{\rho}.$ Assume that for some $\varepsilon>0,$ \textbf{\newline}%
\[
\sup_{\phi:\left\Vert \phi-\phi_{0}\right\Vert \leq\varepsilon}\left\Vert
\frac{\partial s\left(  \phi\right)  }{\partial\phi^{\prime}}D_{n\tau
}-A\left(  \phi\right)  \right\Vert =o_{p}\left(  1\right)
\]
where $A\left(  \phi\right)  $ is $\mathcal{C}$-measurable and $A=A\left(
\phi_{0}\right)  $ is full rank almost surely. Let $\kappa=\lim n/\tau,$
\[
A=\left[
\begin{array}
[c]{cc}%
A_{y,\theta} & \sqrt{\kappa}A_{y,\rho}\\
\frac{1}{\sqrt{\kappa}}A_{\nu,\theta} & A_{\nu,\rho}%
\end{array}
\right]
\]
with $A_{y,\theta}=\operatorname*{plim}n^{-1/2}\partial s^{y}\left(  \phi
_{0}\right)  /\partial\theta^{\prime}$, $A_{y,\rho}=\operatorname*{plim}%
n^{-1/2}\partial s^{y}\left(  \phi_{0}\right)  /\partial\rho^{\prime}$,
$A_{\nu,\theta}=\operatorname*{plim}\tau^{-1/2}\partial s^{\nu}\left(
\phi_{0}\right)  /\partial\theta^{\prime}$ and $A_{\nu,\rho}%
=\operatorname*{plim}\tau^{-1/2}\partial s^{\nu}\left(  \phi_{0}\right)
/\partial\rho^{\prime}$.
\end{condition}

\begin{condition}
\label{s-var-ML}For maximum likelihood criteria the following holds: \newline
i)\ for any $r\in\left[  0,1\right]  $, $\frac{1}{\tau}\sum_{t=\tau_{0}%
+1}^{\tau_{0}+\left[  \tau r\right]  }g_{\rho,t}g_{\rho,t}^{\prime
}\overset{p}{\rightarrow}\Omega_{\nu}\left(  r\right)  $ as $\tau
\rightarrow\infty$ and where $\Omega_{\nu}\left(  r\right)  $ satisfies the
same regularity conditions as in Condition \ref{Omega_z}(ii). \newline ii)
$\frac{1}{n}\sum_{i=1}^{n}f_{\theta,it}f_{\theta,it}^{\prime}%
\overset{p}{\rightarrow}\Omega_{ty}$ for all $t\in\left[  1,...,T\right]  $
and where $\Omega_{ty}$ is positive definite a.s. and measurable with respect
to $\sigma\left(  \nu_{1},...,\nu_{T}\right)  .$ Let $\Omega_{y}=\sum
_{t=1}^{T}\Omega_{ty}$.
\end{condition}

\begin{condition}
\label{s-var-GMM}Let $W^{C}=\operatorname*{plim}_{n}W_{n}^{C}$ and $W^{\tau
}=\operatorname*{plim}_{\tau}W_{\tau}^{\tau}$ and assume the limits to be
positive definite and $\mathcal{C}$-measurable. Define $h\left(  \theta
,\rho\right)  =\operatorname*{plim}_{n}h_{n}\left(  \beta,\nu_{t},\rho\right)
$ and $k\left(  \beta,\rho\right)  =\operatorname*{plim}_{\tau}k_{\tau}\left(
\beta,\rho\right)  .$ For moment based criteria the following holds: \newline
i)\ $\frac{1}{\tau}\sum_{t=\tau_{0}+1}^{\tau_{0}+\left[  \tau r\right]  }%
g_{t}g_{t}^{\prime}\overset{p}{\rightarrow}\Omega_{g}\left(  r\right)  $ as
$\tau\rightarrow\infty$ \textbf{ }and where $\Omega_{g}\left(  r\right)  $
satisfies the same regularity conditions as in Condition \ref{Omega_z}(ii).
\newline ii) $\frac{1}{n}\sum_{i=1}^{n}f_{it}f_{it}^{\prime}%
\overset{p}{\rightarrow}\Omega_{t,f}$ for all $t\in\left[  1,...,T\right]  .$
Let $\Omega_{f}=\sum_{t=1}^{T}\Omega_{t,f}.$ Assume that $\Omega_{f}$ is
positive definite a.s. and measurable with respect to $\sigma\left(  \nu
_{1},...,\nu_{T}\right)  $.\newline Assume that for some $\varepsilon>0,$
\textbf{\newline}iii) $\sup_{\phi:\left\Vert \phi-\phi_{0}\right\Vert
\leq\varepsilon}\left\Vert \left(  \partial k_{\tau}\left(  \beta,\rho\right)
/\partial\rho\right)  ^{\prime}W_{\tau}^{\tau}-\partial k\left(  \beta
,\rho\right)  ^{\prime}/\partial\rho W^{\tau}\right\Vert =o_{p}\left(
1\right)  ,$ \newline iv) $\sup_{\phi:\left\Vert \phi-\phi_{0}\right\Vert
\leq\varepsilon}\left\Vert \left(  \partial h_{n}\left(  \theta,\rho\right)
/\partial\theta\right)  ^{\prime}W_{n}^{C}-\left(  \partial h\left(
\theta,\rho\right)  /\partial\theta\right)  ^{\prime}W^{C}\right\Vert
=o_{p}\left(  1\right)  .$
\end{condition}

It is easy to see that the regularity conditions laid out in Conditions
\ref{Diag_CLT_Cond}, \ref{Hessian}, and \ref{s-var-ML} are satisfied if the
requirements in Footnote 32 of HKM20 are imposed on the estimating functions
$f$ and $g$ defined in that paper. The following result establishes the joint
limiting distribution of $\hat{\phi}.$

\begin{theorem}
\label{CLT_MLE}In the case of likelihood based estimators assume that
Conditions \ref{Diag_CLT_Cond}, \ref{Hessian}, and \ref{s-var-ML} hold with
$\left(  \psi_{it}^{y},\psi_{\tau,t}^{\nu}\right)  =\left(  f_{\theta
,it},g_{\rho,t}\right)  $. In the case of moment based estimators, assume that
Conditions \ref{Diag_CLT_Cond}, \ref{Hessian}, and \ref{s-var-GMM} hold with
$\left(  \psi_{it}^{y},\psi_{\tau,t}^{\nu}\right)  =\left(  \frac{\partial
h\left(  \theta_{0},\rho_{0}\right)  ^{\prime}}{\partial\theta}W^{C}%
f_{it},\frac{\partial k\left(  \beta_{0},\rho_{0}\right)  ^{\prime}}%
{\partial\rho}W^{\tau}g_{t}\right)  $. Assume that $\hat{\phi}-\phi_{0}%
=o_{p}\left(  1\right)  $ and that (\ref{Moment Cond z}) and
(\ref{Moment Cond y}) hold. Then,
\[
D_{n\tau}^{-1}\left(  \hat{\phi}-\phi_{0}\right)  \overset{d}{\rightarrow
}-A^{-1}\Omega^{1/2}W\text{ (}\mathcal{C}\text{-stably)}%
\]
where $A$ is full rank almost surely, $\mathcal{C}$-measurable and is defined
in Condition \ref{Hessian}.
\end{theorem}

\begin{proof}
In Appendix \ref{proof-section-jointCLT}.
\end{proof}

\begin{remark}
The distribution of $\Omega^{1/2}W$ is given in Corollary \ref{Diag_CLT}. In
particular, $\Omega=$ $\operatorname*{diag}\left(  \Omega_{y},\Omega_{\nu
}\left(  1\right)  \right)  $. When the criterion is maximum likelihood,
$\Omega_{y}$ and $\Omega_{\nu}\left(  1\right)  $ are given in Condition
\ref{s-var-ML}. When the criterion is moment based, $\Omega_{y}=\frac{\partial
h\left(  \theta_{0},\rho_{0}\right)  ^{\prime}}{\partial\theta}W^{C}\Omega
_{f}W^{C\prime}\frac{\partial h\left(  \theta_{0},\rho_{0}\right)  }%
{\partial\theta}$ and $\Omega_{\nu}\left(  1\right)  =\frac{\partial k\left(
\beta_{0},\rho_{0}\right)  ^{\prime}}{\partial\rho}W^{\tau}\Omega_{g}\left(
1\right)  W^{\tau\prime}\frac{\partial k\left(  \beta_{0},\rho_{0}\right)
}{\partial\rho}$ with $\Omega_{f}$ and $\Omega_{g}\left(  1\right)  $ defined
in Condition \ref{s-var-GMM}.
\end{remark}

The theorem provides formulas for the joint asymptotic variance covariance
matrix of $\hat{\phi}$ in two scenarios. The first scenario obtains when $f$
and $g$ are either the scores of a likelihood function, or if they are
estimating functions in a just identified set of moment conditions. The second
scenario covers GMM\ estimators in a scenario where $f$ and $g$ are moment
functions in an overidentified set of moment conditions. The methods reported
in Section 6 of HKM20 use an exactly identified moment based approach. There
may be cases where one wants to estimate the cross-section model using a
likelihood approach and the time series model using a moment approach, or vice
versa. These cases can be handled as a special case of the second scenario,
where $f$ or $g$ is an exactly identified moment condition while the other one
may be an overidentified moment condition.

\section{Asymptotic Inference \label{section-intuition}}

Our asymptotic framework is such that standard textbook level analysis
suffices for the discussion of consistency of the estimators. In standard
analysis with a single data source, one typically restricts the moment
equation to ensure identification, and imposes further restrictions such that
the sample analog of the moment function converges uniformly to the population
counterpart. Because these arguments are well known we simply impose as a
high-level assumption that our estimators are consistent. In this section we
illustrate how the rigorous technical results of Section
\ref{Section_JointCLT} can be applied to statistical inference problems for
specific examples.

\subsection{Intuition \label{Intuition}}

For expositional purposes, suppose that the time series $z_{s}$ is such that
the log of its conditional probability density function given $z_{s-1}$ is
$g\left(  \left.  z_{s}\right\vert \beta,\rho\right)  $. To simplify the
exposition in this section we assume that the cross-section model does not
depend on the macro parameter $\rho$.\textbf{ }We denote the consistent first
stage estimator of $\theta=\left(  \beta,\nu_{1},\ldots,\nu_{T}\right)  $ by
$\widetilde{\theta}$.\footnote{In order to emphasize the fact that $\theta$ is
estimated using only the cross-section data, we use the symbol
$\widetilde{\theta}$. In more complicated models, $\theta$ needs to be
estimated using both cross-section and time series data, and we reserve the
notation $\hat{\theta}$ for the general joint estimator.}

We assume that the dimension of the cross-section data is $n$. Implicit in
this representation is the idea that we are given a short panel for estimation
of $\theta=\left(  \beta,\nu_{1},\ldots,\nu_{T}\right)  $, where $T$ denotes
the time series dimension of the panel data. In order to emphasize that $T$ is
small, we use the term 'cross-section' for the short panel data set, and adopt
asymptotics where $T$ is fixed. Then, assume that $\widetilde{\theta}$ is a
regular estimator with\textbf{ }influence function $\varphi_{it}$ such that
\begin{equation}
\sqrt{n}\left(  \widetilde{\theta}-\theta\right)  =\frac{1}{\sqrt{n}}%
\sum_{i=1}^{n}\sum_{t=1}^{T}\varphi_{it}+o_{p}\left(  1\right)
\label{rho-influence}%
\end{equation}
with $E\left[  \varphi_{it}\right]  =0$. Using $\widetilde{\theta}$ from the
cross-section data, we can then consider maximizing the criterion $G_{\tau
}\left(  \theta,\rho\right)  =\frac{1}{\tau}\sum_{s=\tau_{0}+1}^{\tau_{0}%
+\tau}g\left(  \left.  z_{s}\right\vert \theta,\rho\right)  $ with respect to
$\rho$. Here, $\tau_{0}+1$ denotes the beginning of the time series data,
which is allowed to differ from the beginning of the panel data. The moment
equation then is%
\[
\frac{\partial G_{\tau}\left(  \tilde{\theta},\hat{\rho}\right)  }%
{\partial\rho}=0
\]
and the asymptotic distribution of $\widehat{\rho}$ is characterized by
\[
\sqrt{\tau}\left(  \widehat{\rho}-\rho\right)  =-\left(  \frac{\partial
^{2}G\left(  \tilde{\theta},\rho\right)  }{\partial\rho\partial\rho^{\prime}%
}\right)  ^{-1}\left(  \sqrt{\tau}\frac{\partial G_{\tau}\left(  \tilde
{\theta},\rho\right)  }{\partial\rho}\right)  +o_{p}\left(  1\right)  .
\]
Because $\sqrt{\tau}\left(  \partial G_{\tau}\left(  \tilde{\theta}%
,\rho\right)  /\partial\rho-\partial G_{\tau}\left(  \theta,\rho\right)
/\partial\rho\right)  \approx\left(  \partial^{2}G\left(  \theta,\rho\right)
/\partial\theta\partial\rho^{\prime}\right)  \frac{\sqrt{\tau}}{\sqrt{n}}%
\sqrt{n}\left(  \widetilde{\theta}-\theta\right)  $ \textbf{ }we obtain
\begin{equation}
\sqrt{\tau}\left(  \widehat{\rho}-\rho\right)  =-A_{\nu,\rho}^{-1}\sqrt{\tau
}\frac{\partial G_{\tau}\left(  \theta,\rho\right)  }{\partial\rho}%
-A_{\nu,\rho}^{-1}A_{\nu,\theta}\frac{\sqrt{\tau}}{\sqrt{n}}\left(  \frac
{1}{\sqrt{n}}\sum_{i=1}^{n}\sum_{t=1}^{T}\varphi_{it}\right)  +o_{p}\left(
1\right)  \label{expansion-theta-hat-reflecting-tilde-rho}%
\end{equation}
with
\[
A_{\nu,\rho}^{-1}\equiv\frac{\partial^{2}G\left(  \theta,\rho\right)
}{\partial\rho\partial\rho^{\prime}},\quad A_{\nu,\theta}\equiv\frac
{\partial^{2}G\left(  \theta,\rho\right)  }{\partial\rho\partial\theta
^{\prime}}.
\]

Because both $A_{\nu,\rho}$ and $A_{\nu,\theta}$ are $\mathcal{C}$-measurable
random variables in the limit the continuous mapping theorem can only be
applied if joint convergence of $\sqrt{\tau}\partial G_{\tau}\left(
\theta,\rho\right)  /\partial\theta,n^{-1/2}\sum_{i=1}^{n}\sum_{t=1}%
^{T}\varphi_{it}$ and any $\mathcal{C}$-measurable random variable is
established. Joint stable convergence of both components delivers exactly
that. We also point out that it is perfectly possible to consistently estimate
parameters, in our case $\left(  \nu_{1},\ldots,\nu_{T}\right)  $, that remain
random in the limit. For related results, see the recent work of Kuersteiner
and Prucha (2020).

Assume that the unconditional distribution is such that
\[
\frac{1}{\sqrt{n}}\sum_{i=1}^{n}\sum_{t=1}^{T}\varphi_{it}%
\overset{d}{\rightarrow}MN\left(  0,\Omega_{y}\right)
\]
where $\Omega_{y}$ generally does depend on $\left(  \nu_{1},\ldots,\nu
_{T}\right)  $ through the parameter $\theta$ and as a result the distribution
is mixed normal in general. Let's also assume that
\[
\sqrt{\tau}\frac{\partial G_{\tau}\left(  \theta,\rho\right)  }{\partial\rho
}\overset{d}{\rightarrow}N\left(  0,\Omega_{\nu}\right)
\]
where we assume that $\Omega_{\nu}$ is a fixed constant that does not depend
on $\left(  \nu_{1},\ldots,\nu_{T}\right)  $.

We note that in general $\varphi_{it}$ is a function of $\left(  \nu
_{1},\ldots,\nu_{T}\right)  $.\textbf{ }If there is overlap between $\left(
1,\ldots,T\right)  $ and $\left(  \tau_{0}+1,\ldots,\tau_{0}+\tau\right)  $,
we need to worry about the asymptotic distribution of $\sqrt{\tau}\partial
G_{\tau}\left(  \theta,\rho\right)  /\partial\rho$ conditional on $\left(
\nu_{1},\ldots,\nu_{T}\right)  $. However, because in this example the only
connection between $y$ and $\varphi$ is assumed to be through $\theta$ and
because $T$ is assumed fixed, the two terms $\sqrt{\tau}\partial G_{\tau
}\left(  \theta,\rho\right)  /\partial\rho$ and $\frac{1}{\sqrt{n}}\sum
_{i=1}^{n}\sum_{t=1}^{T}\varphi_{it}$ are expected to be asymptotically
independent in the trend stationary case and when $\Omega_{\nu}$ does not
depend on $\left(  \nu_{1},\ldots,\nu_{T}\right)  $. Even in this simple
setting, independence between the two samples does not hold, and asymptotic
conditional or unconditional independence as well as joint convergence with
$\mathcal{C}$-measurable random variables needs to be established formally.
This follows from $\mathcal{C}$-stable convergence established in Section
\ref{sec-trend-stationary-models} and is summarized in the following Corollary.

\begin{corollary}
\label{CLT_Time_Series}Under the same conditions as in Theorem \ref{CLT_MLE}
it follows that%
\[
\sqrt{\tau}\left(  \widehat{\rho}-\rho\right)  \overset{d}{\rightarrow}%
-A_{\nu,\rho}^{-1}\Omega_{\nu}^{1/2}\left(  1\right)  W_{\nu}\left(  1\right)
-\frac{1}{\sqrt{\kappa}}A_{\nu,\rho}^{-1}A_{y,\theta}\Omega_{y}^{1/2}%
W_{y}\left(  1\right)  \text{ (}\mathcal{C}\text{-stably)}.
\]
For
\[
\Omega_{\rho}=A_{\nu,\rho}^{-1}\Omega_{\nu}A_{\nu,\rho}^{-1}+\frac{1}{\kappa
}A_{\nu,\rho}^{-1}A_{\nu,\theta}\Omega_{y}A_{\nu,\theta}^{\prime}A_{\nu,\rho
}^{-1}%
\]
it follows that
\[
\sqrt{\tau}\Omega_{\rho}^{-1/2}\left(  \widehat{\rho}-\rho\right)
\overset{d}{\rightarrow}N\left(  0,I\right)  \text{ (}\mathcal{C}%
\text{-stably).}%
\]

\end{corollary}

Corollary \ref{CLT_Time_Series} follows directly from Theorem \ref{CLT_MLE}.
It implies that
\begin{equation}
\sqrt{\tau}\left(  \widehat{\rho}-\rho\right)  \overset{d}{\rightarrow
}MN\left(  0,A_{\nu,\rho}^{-1}\Omega_{\nu}A_{\nu,\rho}^{-1}+\frac{1}{\kappa
}A_{\nu,\rho}^{-1}A_{\nu,\theta}\Omega_{y}A_{\nu,\theta}^{\prime}A_{\nu,\rho
}^{-1}\right)  , \label{stationary-result}%
\end{equation}
where $0<\kappa\equiv\lim n/\tau<\infty$ and where the limiting distribution
on the RHS of (\ref{stationary-result}) is a mixed Gaussian distribution. This
means that a practitioner would use the square root of%
\[
\frac{1}{\tau}\left(  A_{\nu,\rho}^{-1}\Omega_{\nu}A_{\nu,\rho}^{-1}+\frac
{1}{\kappa}A_{\nu,\rho}^{-1}A_{\nu,\theta}\Omega_{y}A_{\nu,\theta}^{\prime
}A_{\nu,\rho}^{-1}\right)  \approx\frac{1}{\tau}A_{\nu,\rho}^{-1}\Omega_{\nu
}A_{\nu,\rho}^{-1}+\frac{1}{n}A_{\nu,\rho}^{-1}A_{\nu,\theta}\Omega_{y}%
A_{\nu,\theta}^{\prime}A_{\nu,\rho}^{-1}%
\]
as the standard error when formulating a $t$-ratio. This result looks similar
to Murphy and Topel's (1985) formula, except that we need to make an
adjustment to the second component to address the differences in sample sizes.

The assumption that $0<\kappa<\infty$ is used as a technical device to obtain
an asymptotic approximation that accounts for estimation errors stemming both
from the cross-section and time series samples. Simulation results in Hahn
et.al (2016, 2020) for data and sample sizes calibrated to actual macro data
show that our approximation provides good control for estimator bias and test
size. The knife edge case $\kappa=\infty$ corresponds to situations where the
estimation of cross-section parameters can be neglected for inference about
$\widehat{\rho},$ and where now $\sqrt{\tau}\left(  \widehat{\rho}%
-\rho\right)  \overset{d}{\rightarrow}MN\left(  0,A_{\nu,\rho}^{-1}\Omega
_{\nu}A_{\nu,\rho}^{-1}\right)  .$ The expansion in
(\ref{expansion-theta-hat-reflecting-tilde-rho}) also shows that the case
$\kappa=0$ leads to a scenario where uncertainty from the cross-section
dominates such that the rate of convergence of $\hat{\rho}$ now is $\sqrt{n}$
rather than $\sqrt{\tau}$ and where $\sqrt{n}\left(  \widehat{\rho}%
-\rho\right)  \overset{d}{\rightarrow}MN\left(  0,A_{\nu,\rho}^{-1}%
A_{\nu,\theta}\Omega_{y}A_{\nu,\theta}^{\prime}A_{\nu,\rho}^{-1}\right)  .$
However, in what follows we focus on the case most relevant in practice where
$0<\kappa<\infty.$

The asymptotic variance formula is such that the noise of the cross-section
estimator $\widetilde{\theta}$ can make quite a difference if $\kappa$ is
small, i.e., if the cross-section size $n$ is small relative to the time
series size $\tau$. Obviously, this calls for larger cross-sections for
accurate estimation of the time series parameter $\rho$. We also note that
cross-section estimation asymptotically has no impact on macro estimation if
$A_{\nu,\theta}=0$. One scenario where $A_{\nu,\theta}=0$ is the case where
the model is additively separable in $\theta$ and $\rho$ such that $G\left(
\theta,\rho\right)  =G_{1}\left(  \theta\right)  +G_{2}\left(  \rho\right)
.$\textbf{ }

For completeness, we also present a result that focuses on the limiting
distribution of the subset of parameters that are associated with the
cross-sectional model. Here we no longer impose the restriction that the
cross-sectional model does not depend on time series parameters.
Cross-sectional parameters are the main object of interest in HKM20 while in
Section \ref{OP} of this paper we consider a model where the main parameter of
interest is a time series parameter.

\begin{corollary}
\label{Corollary_CLT_MLE}Under the same conditions as in Theorem \ref{CLT_MLE}
it follows that
\begin{equation}
\sqrt{n}\left(  \hat{\theta}-\theta_{0}\right)  \overset{d}{\rightarrow
}-A^{y,\theta}\Omega_{y}^{1/2}W_{y}\left(  1\right)  -\sqrt{\kappa}A^{y,\rho
}\Omega_{\nu}^{1/2}\left(  1\right)  W_{\nu}\left(  1\right)  \text{
(}\mathcal{C}\text{-stably)}. \label{Corollary_CLT_MLE_1}%
\end{equation}
where
\begin{align*}
A^{y,\theta}  &  =A_{y,\theta}^{-1}+A_{y,\theta}^{-1}A_{y,\rho}\left(
A_{\nu,\rho}-A_{\nu,\theta}A_{y,\theta}^{-1}A_{y,\rho}\right)  ^{-1}%
A_{\nu,\theta}A_{y,\theta}^{-1}\\
A^{y,\rho}  &  =-A_{y,\theta}^{-1}A_{y,\rho}\left(  A_{\nu,\rho}-A_{\nu
,\theta}A_{y,\theta}^{-1}A_{y,\rho}\right)  ^{-1}.
\end{align*}
For
\begin{equation}
\Omega_{\theta}=A^{y,\theta}\Omega_{y}A^{y,\theta\prime}+\kappa A^{y,\rho
}\Omega_{\nu}\left(  1\right)  A^{y,\rho\prime} \label{Omega_theta}%
\end{equation}
it follows that
\begin{equation}
\sqrt{n}\Omega_{\theta}^{-1/2}\left(  \hat{\theta}-\theta_{0}\right)
\overset{d}{\rightarrow}N\left(  0,I\right)  \text{ (}\mathcal{C}%
\text{-stably).} \label{Corollary_CLT_MLE_2}%
\end{equation}

\end{corollary}

The corollary develops the asymptotic distribution of the estimators for the
general case where neither the time series nor the cross-section parameters
are identified separately. We note that the exposition in Section 6 of
HKM20\ does impose the additional restriction that $A_{v,\theta}=0$ which
significantly simplifies (\ref{Omega_theta}). When $A_{v,\theta}=0$ the
distributional approximation reported in HKM20, Section 6, Eq (35) corresponds
to the result obtained in (\ref{Corollary_CLT_MLE_2}).\footnote{We also note
that the online appendix of HKM20\ contains explicit formulas for
$\Omega_{\theta}$ for the general equilibrium model considered in that paper.
Section \ref{WE} of this paper contains similar explicit formulas for a
version of the Olley and Pakes' model considered here.}\textbf{. }

Note that $\Omega_{\theta},$ the asymptotic variance of $\sqrt{n}\left(
\hat{\theta}-\theta_{0}\right)  $ conditional on $\mathcal{C}$, in general is
a random variable, and the asymptotic distribution of $\hat{\theta}$ is mixed
normal. However, as in Andrews (2005), the result in
(\ref{Corollary_CLT_MLE_2}) can be used to construct an asymptotically pivotal
test statistic. For a consistent estimator $\hat{\Omega}_{\theta}$ the
statistic $\sqrt{n}\hat{\Omega}_{\theta}^{-1/2}\left(  R\hat{\theta}-r\right)
$ is asymptotically distribution free under the null hypothesis $R\theta-r=0$
where $R$ is a conforming matrix of dimension $q\times k_{\theta}$ and and $r$
a $q\times1$ vector. These insights form the basis for the standard errors
proposed in Section 6 of HKM20.

We note that when two datasets are combined, the variance estimate of the
estimators have to reflect the estimation error of the nuisance parameters.
The formula above boils down to the usual variance formula of the two-step
estimator. In fact, the formula is somewhat simpler than the usual two step
formula, at least in the stationary scenario. The reason is that the
covariance of the moments of the two steps is zero when the moments are based
on cross-section and time series data respectively. This is generally not the
case for two step procedures based only on one sample.

An additional theoretical difficulty that arises in this paper are common
factors that remain random in the limit and affect the limiting variance
$\Omega_{\theta}.$ Theoretically, we handle this difficulty by relying on the
concept of stable convergence to establish the limit distribution of our
estimators. While inference based on pivotal statistics such as the $t$-ratio
is not affected by stable limits, caution needs to be exercised when
interpreting standard errors and confidence intervals for $\hat{\theta}.$ The
reason is that these quantities remain data-dependent through their dependence
on common shocks even in the limit and may not be comparable across different
empirical studies. This point is emphasized in HKM20, p. 1390.

\subsection{A Worked Example}

We now discuss how the model in Section \ref{OP} fits into our theory and
discuss how to obtain valid standard errors. The estimator introduced in
Section \ref{OP} is defined in terms of moment (not likelihood) based
criterion functions. Using the Taylor series expansion based intuition, we
discuss how the asymptotic distribution can be understood. Our discussion in
this section parallels and complements the material in Section 6 of HKM20.

Unlike the model in Section \ref{section-models}, the moment (\ref{C-GMM}) in
Section \ref{OP} does not identify all the $\nu_{1},...,\nu_{T}$, and it only
identifies $\left(  \beta_{0,t+1}^{\ast},\beta_{k},\alpha^{\left(  C\right)
}\right)  $, where we define $\beta_{0,t+1}^{\ast}\equiv\nu_{t+1}%
-\alpha^{\left(  C\right)  }\nu_{t}$. Therefore, it is convenient to define a
finite dimensional parameter that is identified from the cross-section as
$\theta\left(  \nu\right)  $, which may depend on the aggregate shocks
$\nu=\left(  \nu_{1},...,\nu_{T}\right)  $ instead of working with $\left(
\beta,\nu\right)  $. For the model in Section \ref{OP}, $\theta\left(
\nu\right)  $ is equal to $\left(  \nu_{t+1}-\alpha^{\left(  C\right)  }%
\nu_{t},\beta_{k},\alpha^{\left(  C\right)  }\right)  $. The parameter $\beta$
in Section \ref{Intuition} denotes the collection of cross section parameters
that do not depend on $\nu$. Since $\theta\left(  \nu\right)  =\left(
\nu_{t+1}-\alpha^{\left(  C\right)  }\nu_{t},\beta_{k},\alpha^{\left(
C\right)  }\right)  $ in Section \ref{OP}, only the parameters $\left(
\beta_{k},\alpha^{\left(  C\right)  }\right)  $ do not depend on $\nu$. Thus,
the $\left(  \beta_{k},\alpha^{\left(  C\right)  }\right)  $ in Section
\ref{OP} plays the role of $\beta$ in Section \ref{Intuition}.

We consider the following GMM\ estimation functions in the cross-section and
time series samples. Following the notational convention in Section
\ref{section-models}, we define $h_{n}\left(  \theta\right)  $ $=$ $\frac
{1}{n}\sum_{t=1}^{T}\sum_{j=1}^{n}$ $f\left(  \left.  y_{j,t}\right\vert
\theta\right)  $ with $\theta=\theta\left(  \nu\right)  $ and $k_{\tau}\left(
\beta,\rho\right)  $ $=$ $\frac{1}{\tau}\sum_{s=\tau_{0}+1}^{\tau_{0}+\tau
}g\left(  \left.  z_{s}\right\vert \beta,\rho\right)  $, where in Section
\ref{OP} the parameters are $\theta=\left(  \nu_{t+1}-\alpha^{\left(
C\right)  }\nu_{t},\beta_{k},\alpha^{\left(  C\right)  }\right)  $ and
$\rho=\alpha^{\left(  A\right)  }$\textbf{$.$} The reason why it is sufficient
to focus on $\left(  \beta_{k},\alpha^{\left(  C\right)  }\right)  $ is that
the main interest lies in $\alpha^{\left(  A\right)  }$ which can be
identified in the time series sample with knowledge of $\beta_{k}$ alone. Also
note that for this model, $y_{j,t}$ $=$ $\left(  i_{j,t},k_{j,t}%
,l_{j,t},\mathfrak{y}_{j,t}^{\ast},i_{j,t-1},k_{j,t-1},l_{j,t-1}%
,\mathfrak{y}_{j,t-1}^{\ast}\right)  $ and $z_{s}=\left(  Y_{s}^{\ast}%
,K_{s}^{\ast}\right)  $ is the vector of aggregate observed data.

The cross-sectional moment function $f\left(  y_{i,t}|\theta\right)  $ can be
specified as
\[
f\left(  y_{j,t}|\theta\right)  =\left(  \mathfrak{y}_{j,t}^{\ast}-\left(
\beta_{0,t}^{\ast}+\beta_{k}k_{j,t}+\alpha^{\left(  C\right)  }\left(
\phi_{t}\left(  i_{j,t-1},k_{j,t-1}\right)  -\beta_{k}k_{j,t-1}\right)
\right)  \right)  z_{j,t},
\]
where $z_{j,t}$ can be chosen as the vector $z_{j,t}=\left(  1,k_{j,t-1}%
,i_{j,t-1}\right)  ^{\prime}$, for example.

Similarly, specialize the generically defined function $g\left(  \left.
z_{s}\right\vert \beta,\rho\right)  $ for the aggregate time series model to
the score of the conditional pseudo-likelihood for the aggregate shock
process, denoted by $g\left(  \nu_{s}\left(  \beta\right)  |\nu_{s-1}\left(
\beta\right)  ,\beta,\rho\right)  \equiv g\left(  \left.  z_{s}\right\vert
\beta,\rho\right)  ,$ and where the aggregate shock $\nu_{s}\left(  \beta
_{k}\right)  =Y_{s}^{\ast}-\beta_{k}K_{s}^{\ast}$ depends on $z_{s}=\left(
Y_{s}^{\ast},K_{s}^{\ast}\right)  $ through the parameter $\beta_{k}.$ When
$\beta_{k}$ is evaluated at the true parameter value $\beta_{k,0},$ we use the
shorthand notation $\nu_{s}\equiv\nu_{s}\left(  \beta_{k,0}\right)  $. For the
AR(1) model we postulate for $\nu_{s},$ the function $g\left(  \nu_{s}\left(
\beta\right)  |\nu_{s-1}\left(  \beta\right)  ,\beta,\rho\right)  $ can be
written explicitly as $g\left(  \nu_{s}\left(  \beta\right)  |\nu_{s-1}\left(
\beta\right)  ,\beta,\rho\right)  =\left(  \nu_{s}\left(  \beta\right)
-\alpha^{\left(  A\right)  }\nu_{s-1}\left(  \beta\right)  \right)  \nu
_{s-1}\left(  \beta\right)  $.

Differentiating the counterparts of $F_{n}$ and $G\tau$ discussed in Section
\ref{section-models}, we can see that the GMM estimator for $\phi$ solves the
two moment conditions
\begin{align}
s_{M}^{y}\left(  \theta\right)   &  =-\left(  \partial h_{n}\left(
\theta,\rho\right)  /\partial\theta\right)  ^{\prime}W_{n}^{C}n^{-1/2}%
\sum_{t=1}^{T}\sum_{i=1}^{n}f\left(  y_{it}|\theta,\rho\right)
=0\label{Empirical_Joint_GMM}\\
s_{M}^{\nu}\left(  \beta,\rho\right)   &  =-\left(  \partial k_{\tau}\left(
\beta,\rho\right)  /\partial\rho\right)  ^{\prime}W_{\tau}^{\tau}\tau
^{-1/2}\sum_{t=\tau_{0}+1}^{\tau_{0}+\tau}g\left(  \nu_{t}\left(  \beta
,\rho\right)  |\nu_{t-1}\left(  \beta,\rho\right)  ,\beta,\rho\right)
=0.\nonumber
\end{align}
Proceeding as in HKM20, let $J_{n\tau}\left(  \phi\right)  =\left[
n^{-1/2}s_{M}^{y}\left(  \theta\right)  ,\tau^{-1/2}s_{M}^{\nu}\left(
\beta,\rho\right)  \right]  $ and $D_{n\tau}$ $=$ $\operatorname*{diag}\left(
n^{-1/2}I_{f},\tau^{-1/2}\right)  $, where $I_{f}$ is an identity matrix equal
to the dimension of $\theta.$ A Taylor series expansion of $J_{n\tau}\left(
\phi\right)  $ around $\phi_{0}$ leads to
\begin{equation}
0=D_{n\tau}^{-1}J_{n\tau}\left(  \phi_{0}\right)  +AD_{n\tau}^{-1}\left(
\hat{\phi}-\phi_{0}\right)  +o_{p}\left(  1\right)  , \label{phi-Expansion}%
\end{equation}
where $A=\operatorname*{plim}\left(  D_{n\tau}^{-1}\partial J_{n\tau}\left(
\phi\right)  /\partial\phi^{\prime}D_{n\tau}\right)  .$ The elements of the
matrix $A$ for the example are obtained as
\[
A=\left[
\begin{array}
[c]{cc}%
A_{y,\theta} & 0\\
\frac{1}{\sqrt{\kappa}}A_{\nu,\theta} & A_{\nu,\rho}%
\end{array}
\right]  ,
\]
where the upper right corner of $A$ is zero because the cross-sectional model
does not depend on the time series parameters $\rho$. This feature of the
model implies that a plug in estimator using the first step cross-sectional
estimate $\tilde{\beta}_{k}$ for the time series problem estimating
$\alpha^{\left(  A\right)  }$ is equivalent to an estimator $\hat{\phi}$
obtained jointly on the two samples.\footnote{In HKM20 and in Section
\ref{sec-trend-stationary-models} we show that these simplifications are not
generic features of the problem we study and that joint estimation is needed
except in special cases.\textbf{ }} The non-zero elements of the matrix $A$
are defined as
\[
A_{y,\theta}=-\operatorname*{plim}n^{-1/2}\frac{\partial s_{M}^{y}\left(
\theta\right)  }{\partial\theta^{\prime}},\text{ }A_{\nu,\rho}%
=-\operatorname*{plim}\tau^{-1/2}\frac{\partial s_{M}^{\nu}\left(
\theta\right)  }{\partial\rho^{\prime}},\text{ }A_{\nu,\theta}%
=-\operatorname*{plim}\tau^{-1/2}\frac{\partial s_{M}^{\nu}\left(
\theta\right)  }{\partial\theta^{\prime}}.
\]

Let $\mathcal{C=\sigma}\left(  \nu_{1},...,\nu_{T}\right)  $ be the sigma
field generated by the aggregate shocks of the cross-section sample. It can be
shown\footnote{See Appendix \ref{WE}.}
\[
D_{n\tau}^{-1}J_{n\tau}\left(  \phi_{0}\right)  \rightarrow_{d}N\left(
0,\Omega\right)  \text{ }\mathcal{C}\text{-stably}%
\]
where $\Omega=\operatorname*{diag}\left(  \Omega_{y},\Omega_{\nu}\right)  $ is
the asymptotic variance covariance matrix of the moment functions defined in
(\ref{Empirical_Joint_GMM}).

By the continuous mapping theorem and (\ref{phi-Expansion}), the limiting
distribution of $\hat{\phi}$ then is characterized by $D_{n\tau}^{-1}\left(
\hat{\phi}-\phi_{0}\right)  \rightarrow_{d}N\left(  0,V\right)  $
$\mathcal{C}$-stably where $V=A^{-1}\Omega\left(  A^{\prime}\right)  ^{-1}$.
Suppose that the \textquotedblleft conventional\textquotedblright\ weight
matrices are chosen so that $\operatorname*{plim}W_{n}^{C}=\Omega_{f}^{-1}$
and $\operatorname*{plim}W_{\tau}^{\tau}=\Omega_{g}^{-1}$, where $\Omega_{f}$
and $\Omega_{g}$ denote the asymptotic variance of $\frac{1}{\sqrt{n}}%
\sum_{t=1}^{T}\sum_{j=1}^{n}f\left(  \left.  y_{j,t}\right\vert \theta
_{0}\right)  $ and $\frac{1}{\sqrt{\tau}}\sum_{s=\tau_{0}+1}^{\tau_{0}+\tau
}g\left(  \left.  z_{s}\right\vert \beta_{0},\rho_{0}\right)  $.\footnote{See
(\ref{Def_Omega_f}) and (\ref{Def_Omega_g}) in Appendix \ref{WE}.} We would
then have $\Omega=\operatorname*{diag}\left(  A_{y,\theta},A_{\nu,\rho
}\right)  $. With straightforward algebra, it can be shown that\textbf{ }%
\[
V=\left[
\begin{array}
[c]{cc}%
A_{y,\theta}^{-1} & 0\\
0 & A_{\nu,\rho}^{-1}+\frac{1}{\kappa}A_{\nu,\rho}^{-1}A_{\nu,\theta
}A_{y,\theta}^{-1}A_{\nu,\theta}^{\prime}A_{\nu,\rho}^{-1}%
\end{array}
\right]  ,
\]
which shows that the two sets of estimators are asymptotically independent in
our example. The form of the limiting variance for $\rho$ confirms the
intuitive derivation in Section \ref{Intuition}. Note in particular that
$A_{y,\theta}^{-1}=\Omega_{y}$ when GMM with the optimal weight matrix is
used. In general, $V$ is a random variable measurable with respect to
$\mathcal{C}$. A further application of the continuous mapping theorem shows
that $V^{-1/2}D_{n\tau}^{-1}\left(  \hat{\phi}-\phi_{0}\right)  \ $converges
to a standard Gaussian random vector.

Standard errors can now be computed based on this distributional
approximation. To this end use the following estimator $\hat{V}$ for the
asymptotic variance-covariance matrix $V$. Let $\hat{\phi}=\left(  \hat
{\theta},\hat{\rho}\right)  $ be the joint solution to the moment conditions
(\ref{Empirical_Joint_GMM}). Note that $\hat{\phi}=\left(  \hat{\beta}%
_{0,1}^{\ast},...,\hat{\beta}_{0,T}^{\ast},\hat{\beta}_{k},\hat{\alpha
}^{\left(  C\right)  },\hat{\alpha}^{\left(  A\right)  }\right)  .$ Obtain the
residuals $\hat{u}_{j,t}=\mathfrak{y}_{j,t}^{\ast}-\left(  \hat{\beta}%
_{0,t}^{\ast}+\hat{\beta}_{k}k_{j,t}+\hat{\alpha}^{\left(  C\right)  }\left(
\phi_{t}\left(  i_{j,t-1},k_{j,t-1}\right)  -\hat{\beta}_{k}k_{j,t-1}\right)
\right)  $ as well as $\hat{\nu}_{s}=Y_{s}^{\ast}-\hat{\beta}_{k}K_{s}^{\ast}$
and $\hat{e}_{s}^{\left(  A\right)  }=\hat{\nu}_{s}-\hat{\alpha}^{\left(
A\right)  }\hat{\nu}_{s-1}$ and form the matrices
\begin{equation}
\hat{\Omega}_{f}=\frac{1}{n}\sum_{t=1}^{T}\sum_{j=1}^{n}\hat{u}_{j,t}%
^{2}z_{j,t}z_{j,t}^{\prime},\text{ }\hat{\Omega}_{g}=\frac{1}{\tau}%
\sum_{s=\tau_{0}+1}^{\tau_{0}+\tau}\left(  \hat{e}_{s}^{\left(  A\right)
}\right)  ^{2}\hat{\nu}_{s-1}^{2}. \label{def-weight-matrix}%
\end{equation}
Similarly, obtain
\begin{align*}
\frac{\partial\hat{k}\left(  \beta,\rho\right)  }{\partial\theta}  &
=\frac{1}{\tau}\sum_{s=\tau_{0}+1}^{\tau_{0}+\tau}\frac{\partial g\left(
\left.  z_{s}\right\vert \hat{\beta},\hat{\rho}\right)  }{\partial\theta
},\text{ }\frac{\partial\hat{k}\left(  \beta,\rho\right)  }{\partial\rho
}=\frac{1}{\tau}\sum_{s=\tau_{0}+1}^{\tau_{0}+\tau}\frac{\partial g\left(
\left.  z_{s}\right\vert \hat{\beta},\hat{\rho}\right)  }{\partial\rho}\\
\frac{\partial\hat{h}\left(  \theta\right)  }{\partial\theta}  &  =\frac{1}%
{n}\sum_{t=1}^{T}\sum_{j=1}^{n}\frac{\partial f\left(  \left.  y_{j,t}%
\right\vert \hat{\theta}\right)  }{\partial\theta}%
\end{align*}
and
\[
\hat{A}_{y,\theta}=\frac{\partial\hat{h}\left(  \theta\right)  ^{\prime}%
}{\partial\theta}\hat{\Omega}_{f}^{-1}\frac{\partial\hat{h}\left(
\theta\right)  }{\partial\theta^{\prime}},\text{ }\hat{A}_{\nu,\rho}%
=\frac{\partial\hat{k}\left(  \beta,\rho\right)  ^{\prime}}{\partial\rho}%
\hat{\Omega}_{g}^{-1}\frac{\partial\hat{k}\left(  \beta,\rho\right)
}{\partial\rho^{\prime}},\text{ }\hat{A}_{\nu,\theta}=\frac{\partial\hat
{k}\left(  \beta,\rho\right)  ^{\prime}}{\partial\rho}\hat{\Omega}_{g}%
^{-1}\frac{\partial\hat{h}\left(  \theta\right)  }{\partial\theta^{\prime}}.
\]
The asymptotic variance-covariance matrix then can be estimated as
\[
\hat{V}=\left[
\begin{array}
[c]{cc}%
\hat{A}_{y,\theta}^{-1} & 0\\
0 & \hat{A}_{\nu,\rho}^{-1}+\frac{1}{\kappa}\hat{A}_{\nu,\rho}^{-1}\hat
{A}_{\nu,\theta}\hat{A}_{y,\theta}^{-1}\hat{A}_{\nu,\theta}^{\prime}\hat
{A}_{\nu,\rho}^{-1}%
\end{array}
\right]  .
\]
Now let $\phi_{j}$ be the $j$-th element of $\phi$ with estimator $\hat{\phi
}_{j}$. Then, a t-ratio for $\hat{\phi}_{j}$ based on the asymptotic
approximation for $\hat{\phi}$ can be constructed as $\hat{\phi}_{j}\left/
\left(  d_{j,j}\sqrt{\hat{V}_{j,j}}\right)  \right.  $ where $\hat{V}_{j,j}$
is the $j$-th diagonal element of $\hat{V}$ and $d_{j,j}$ is the $j$-th
diagonal element of $D_{n\tau}^{-1}$\textbf{.}

Focusing on the time-series parameter $\alpha^{\left(  A\right)  }$ one
obtains the following t-ratio%
\[
\hat{\alpha}^{\left(  A\right)  }\left/  \sqrt{\frac{1}{\tau}\hat{A}_{\nu
,\rho}^{-1}+\frac{1}{n}\hat{A}_{\nu,\rho}^{-1}\hat{A}_{\nu,\theta}\hat
{A}_{y,\theta}^{-1}\hat{A}_{\nu,\theta}^{\prime}\hat{A}_{\nu,\rho}^{-1}%
}\right.  ,
\]
which corresponds to the asymptotic variance formula obtained in
(\ref{stationary-result}).

\section{Unit Root Time Series Models\label{sec-unit-root-time-series-models}}

\subsection{Unit Root Problems\label{sec-unit-root}}

When the simple trend stationary paradigm does not apply, the limiting
distribution of our estimators may be more complicated. A general treatment is
beyond the scope of this paper and likely requires a case by case analysis. In
this subsection we consider a simple unit root model where initial conditions
can be neglected. We use it to exemplify additional inferential difficulties
that arise even in this relatively simple setting. In Section
\ref{Unit_Root_Limit_Theory} we consider a slightly more complex version of
the unit root model where initial conditions cannot be ignored. We show that
more complicated dependencies between the asymptotic distributions of the
cross-section and time series samples manifest. The result is a cautionary
tale of the difficulties that may present themselves when nonstationary time
series data are combined with cross-sections. We leave the development of
inferential methods for this case to future work.

We again consider the model in the previous section, except with the twist
that (i) $\rho$ is the AR(1) coefficient in the time series regression of
$z_{t}$ on $z_{t-1}$ with independent error; and (ii) $\rho$ is at
unity.\textbf{ }In the same way that led to
(\ref{expansion-theta-hat-reflecting-tilde-rho}), we obtain
\[
\sqrt{n}\left(  \widehat{\theta}-\theta\right)  \approx-A^{-1}\sqrt{n}%
\frac{\partial F_{n}\left(  \theta,\rho\right)  }{\partial\theta}-A^{-1}%
B\frac{\sqrt{n}}{\tau}\tau\left(  \widetilde{\rho}-\rho\right)
\]
For simplicity, again assume that the two terms on the right are
asymptotically independent.\textbf{ }The first term converges in distribution
to a normal distribution $N\left(  0,A^{-1}\Omega_{y}A^{-1}\right)  $, but
with $\rho=1$ and i.i.d. AR(1) errors the second term converges to
\[
\xi A^{-1}B\frac{W\left(  1\right)  ^{2}-1}{2\int_{0}^{1}W\left(  r\right)
^{2}dr},
\]
where $\xi\equiv\lim\left.  \sqrt{n}\right/  \tau$ and $W\left(  \cdot\right)
$ is the standard Wiener process. In contrast to the result in
(\ref{stationary-result}) when $\rho$ is away from unity,\textbf{ }$\sqrt
{n}/\tau$ rather than $n/\tau$ is assumed to converge to a constant. Because
$\tilde{\rho}$ is superconsistent under the unit root scenario, from a
theoretical point of view, the correction term is relevant only in cases where
$n$ is much larger than $\tau$ such that $\xi>0$ in the limit.\textbf{ }The
result is formalized in Section \ref{Unit_Root_Limit_Theory}.

The fact that the limiting distribution of $\hat{\theta}$ is no longer
Gaussian complicates inference. This discontinuity is mathematically similar
to Campbell and Yogo's (2006) observation, which leads to a question of how
uniform inference could be conducted. In principle, the problem here can be
analyzed by modifying the proposal in Phillips (2014, Section 4.3).\footnote{A
rigorous proof of the validity of the proposed uniform inference procedure is
beyond the scope of this paper and left for future research.} First, construct
the $1-\alpha_{1}$ confidence interval for $\rho$ using Mikusheva (2007). Call
it $\left[  \rho_{L},\rho_{U}\right]  $. Second, compute $\widehat{\theta
}\left(  \rho\right)  \equiv\operatorname*{argmax}_{\theta}F_{n}\left(
\theta,\rho\right)  $ for $\rho\in\left[  \rho_{L},\rho_{U}\right]  $.
Assuming that $\rho$ is fixed, characterize the asymptotic variance
$\Sigma\left(  \rho\right)  $, say, of $\sqrt{n}\left(  \widehat{\theta
}\left(  \rho\right)  -\theta\left(  \rho\right)  \right)  $, which is
asymptotically normal in general. Third, construct the $1-\alpha_{2}$
confidence region, say $CI\left(  \alpha_{2};\rho\right)  $, using asymptotic
normality and $\Sigma\left(  \rho\right)  $. Our confidence interval for
$\theta_{1}$ is then given by $\bigcup_{\rho\in\left[  \rho_{L},\rho
_{U}\right]  }CI\left(  \alpha_{2};\rho\right)  $. By Bonferroni, its
asymptotic coverage rate is expected to be at least $1-\alpha_{1}-\alpha_{2}$.

There are some cases where standard asymptotics obtain for certain parameters
in nonstationary scenarios, see for example Inoue and Kilian (2020). We expect
that such results will carry over to the case of joint cross-section and time
series inference, in which case the results in Section
\ref{sec-trend-stationary-models} could be applied. We leave the detailed
technical analysis of these cases for future research.\textbf{\ }

\subsection{Unit Root Limit Theory\label{Unit_Root_Limit_Theory}}

In this section we consider the special case where $\nu_{t}$ follows an
autoregressive process of the form $\nu_{t+1}=\rho\nu_{t}+\eta_{t}$. As in
Hansen (1992), Phillips (1987, 1988, 2014) we allow for nearly integrated
processes where $\rho=\exp\left(  \left.  \gamma\right/  \tau\right)  $ is a
scalar parameter localized to unity such that%
\begin{equation}
\nu_{\tau,t+1}=\exp\left(  \left.  \gamma\right/  \tau\right)  \nu_{\tau
,t}+\eta_{t+1} \label{Unit Root Gen Mech}%
\end{equation}
and the notation $\nu_{\tau,t}$ emphasizes that $\nu_{\tau,t}$ is a sequence
of processes indexed by $\tau$. We assume that $\tau_{0}=0$ is fixed
and\textbf{ }
\[
\tau^{-1/2}\nu_{\tau,\min\left(  1,\tau_{0}\right)  }=V\left(  0\right)
\text{ a.s.}%
\]
where $V\left(  0\right)  $ is a potentially nondegenerate random variable. In
other words, the initial condition for (\ref{Unit Root Gen Mech}) is
$\nu_{\tau,\min\left(  1,\tau_{0}\right)  }=\tau^{1/2}V\left(  0\right)  $. We
explicitly allow for the case where $V\left(  0\right)  =0$, to model a
situation where the initial condition can be ignored. This assumption is
similar, although more parametric than, the specification considered in Kurtz
and Protter (1991). We limit our analysis to the case of maximum likelihood
criterion functions. Results for moment based estimators can be developed
along the same lines as in Section \ref{sec-trend-stationary-models} but for
ease of exposition we omit the details. For the unit root version of our model
we assume that $\nu_{t}$ is observed in the data and that the only parameter
to be estimated from the time series data is $\rho.$ Further assuming a
Gaussian quasi-likelihood function we note that the score function now is
\begin{equation}
g_{\rho,t}\left(  \beta,\rho\right)  =\nu_{\tau,t-1}\left(  \nu_{\tau,t}%
-\nu_{\tau,t-1}\rho\right)  . \label{log_lik_unit_root}%
\end{equation}
The estimator $\hat{\rho}$ solving sample moment conditions based on
(\ref{log_lik_unit_root}) is the conventional OLS estimator given by
\[
\hat{\rho}=\frac{\sum_{t=\tau_{0}+1}^{\tau}\nu_{\tau,t-1}\nu_{\tau,t}}%
{\sum_{t=\tau_{0}+1}^{\tau}\nu_{\tau,t-1}^{2}}.
\]
We continue to use the definition for $f_{\theta,it}\left(  \theta
,\rho\right)  $ in Section \ref{sec-trend-stationary-models} but now consider
the simplified case where $\theta_{0}=\left(  \beta,V\left(  0\right)
\right)  $. We note that in this section, $V\left(  0\right)  $ rather than
$\nu_{\tau,\min\left(  1,\tau_{0}\right)  }$ is the common shock used in the
cross-sectional model. The implicit scaling of $\nu_{\tau,\min\left(
1,\tau_{0}\right)  }$ by $\tau^{-1/2}$ is necessary in the cross-sectional
specification to maintain a well defined model even as $\tau\rightarrow\infty$.

Consider the joint process $\left(  V_{\tau n}\left(  r\right)  ,s_{ML}%
^{y}\right)  $ where $V_{\tau n}\left(  r\right)  \equiv\tau^{-1/2}\nu
_{\tau\left[  \tau r\right]  }$, and
\[
s_{ML}^{y}\equiv s_{ML}\left(  \theta_{0},\rho_{0}\right)  \equiv\sum
_{t=1}^{T}\sum_{i=1}^{n}\frac{f_{\theta,it}}{\sqrt{n}}.
\]
\textbf{ }Note that
\[
\int_{0}^{r}V_{\tau n}\left(  u\right)  dW_{\tau n}\left(  u\right)
=\tau^{-1}\sum_{t=\tau_{0}+1}^{\tau_{0}+\left[  \tau r\right]  }\nu_{\tau
,t-1}\eta_{t}%
\]
with $W_{\tau n}\left(  r\right)  \equiv\tau^{-1/2}\sum_{t=\tau_{0}+1}%
^{\tau_{0}+\left[  \tau r\right]  }\eta_{t}$. We define the limiting process
for $V_{\tau n}\left(  r\right)  $ as%
\begin{equation}
V_{\gamma,V\left(  0\right)  }\left(  r\right)  =e^{\gamma r}V\left(
0\right)  +\int_{0}^{r}\sigma e^{\gamma\left(  r-s\right)  }dW_{\nu}\left(
s\right)  \label{Limit_V}%
\end{equation}
where $W_{\nu}$ is defined in Theorem \ref{FCLT}. When $V\left(  0\right)
=0,$ Theorem \ref{FCLT} directly implies that $e^{-\gamma\left[  r\tau\right]
/\tau}V_{\tau n}\left(  r\right)  \Rightarrow\int_{0}^{r}\sigma e^{-s\gamma
}dW_{\nu}\left(  s\right)  $ $\mathcal{C}$-stably noting that in this case
$\Omega_{\nu}\left(  s\right)  =\sigma^{2}\left(  1-\exp\left(  -2s\gamma
\right)  \right)  /2\gamma$ and $\dot{\Omega}_{\nu}\left(  s\right)
^{1/2}=\sigma e^{-s\gamma}$. The familiar result (cf. Phillips 1987) that
$V_{\tau n}\left(  r\right)  \Rightarrow\int_{0}^{r}\sigma e^{\gamma\left(
r-s\right)  }dW_{\nu}\left(  s\right)  $ then is a consequence of the
continuous mapping theorem. The case in (\ref{Limit_V}) where $V\left(
0\right)  $ is a $\mathcal{C}$-measurable random variable now follows from
$\mathcal{C}$-stable convergence of $V_{\tau n}\left(  r\right)  $. In this
section we establish joint $\mathcal{C}$-stable convergence of the triple
$\left(  V_{\tau n}\left(  r\right)  ,s_{ML}^{y},\int_{0}^{r}V_{\tau n}\left(
u\right)  dW_{\tau n}\left(  u\right)  \right)  .$

Let $\phi=\left(  \theta^{\prime},\rho\right)  ^{\prime}\in\mathbb{R}%
^{k_{\phi}},$ $\theta\in\mathbb{R}^{k_{\theta}},$ and $\rho\in\mathbb{R}.$ The
true parameters are denoted by $\theta_{0}$ and $\rho_{\tau0}=\exp\left(
\gamma_{0}/\tau\right)  $ with $\gamma_{0}\in\mathbb{R}$ and both $\theta_{0}$
and $\gamma_{0}$ bounded. We impose the following modified assumptions to
account for the the specific features of the unit root model.

\begin{condition}
\label{Diag_Unit Root CLT_Cond}Define $\mathcal{C}=\sigma\left(  V\left(
0\right)  \right)  $. Define the $\sigma$-fields $\mathcal{G}_{n,\left(
t-\min\left(  1,\tau_{0}\right)  \right)  n+i}$ in the same way as in
(\ref{Information Sets}) except that here $\tau=\kappa n$ such that dependence
on $\tau$ is suppressed and that $\nu_{t}$ is replaced with $\eta_{t}$ as in
\[
\mathcal{G}_{n,\left(  t-\min\left(  1,\tau_{0}\right)  \right)  n+i}%
=\sigma\left(  \left\{  y_{jt-1},y_{jt-2},\ldots,y_{j\min\left(  1,\tau
_{0}\right)  }\right\}  _{j=1}^{n},\left\{  \eta_{t},\eta_{t-1},\ldots
,\eta_{\min\left(  1,\tau_{0}\right)  }\right\}  ,\left(  y_{j,t}\right)
_{j=1}^{i}\right)  \vee\mathcal{C}\text{.}%
\]
Assume that \newline i) $f_{\theta,it}$ is measurable with respect to
$\mathcal{G}_{n,\left(  t-\min\left(  1,\tau_{0}\right)  \right)  n+i}%
.$\newline ii) $\eta_{t}$ is measurable with respect to $\mathcal{G}%
_{n,\left(  t-\min\left(  1,\tau_{0}\right)  \right)  n+i}$ for all
$i=1,...,n$\newline iii) for some $\delta>0$ and $C<\infty,$\ $\sup
_{it}E\left[  \left\Vert f_{\theta,it}\right\Vert ^{2+\delta}\right]  \leq
C$\newline iv) for some $\delta>0$ and $C<\infty,$\ $\sup_{t}E\left[
\left\Vert \eta_{t}\right\Vert ^{2+\delta}\right]  \leq C$\newline v)
$E\left[  f_{\theta,it}|\mathcal{G}_{n,\left(  t-\min\left(  1,\tau
_{0}\right)  \right)  n+i-1}\right]  =0$\newline vi) $E\left[  \eta
_{t}|\mathcal{G}_{n,\left(  t-\min\left(  1,\tau_{0}\right)  -1\right)
n+i}\right]  =0$ for $t>T$ and all $i=\{1,...,n\}.$\newline vii) For any
$1>r>s\geq0$ fixed let $\Omega_{\tau,\eta}^{r,s}=\tau^{-1}\sum_{t=\min\left(
1,\tau_{0}\right)  +\left[  \tau s\right]  +1}^{\tau_{0}+\left[  \tau
r\right]  }E\left[  \eta_{t}^{2}|\mathcal{G}_{n,\left(  t-\min\left(
1,\tau_{0}\right)  \right)  n}\right]  .$ Then, $\Omega_{\tau,\eta}%
^{r,s}\rightarrow_{p}\left(  r-s\right)  \sigma^{2}.$\newline viii) Assume
that $\frac{1}{n}\sum_{i=1}^{n}f_{\theta,it}f_{\theta,it}^{\prime
}\overset{p}{\rightarrow}\Omega_{ty}$ where $\Omega_{ty}$ is positive definite
a.s. and measurable with respect to $\mathcal{C}$. Let $\Omega_{y}=\sum
_{t=1}^{T}\Omega_{ty}$.
\end{condition}

Conditions \ref{Diag_Unit Root CLT_Cond}(i)-(vi) are the same as Conditions
\ref{Diag_CLT_Cond} (i)-(vi) adapted to the unit root model. Condition
\ref{Diag_Unit Root CLT_Cond}(vii) replaces Condition \ref{Omega_z}. It is
slightly more primitive in the sense that if $\eta_{t}^{2}$ is homoskedastic,
Condition \ref{Diag_Unit Root CLT_Cond}(vii) holds automatically and
convergence of $\tau^{-1}\sum_{t=\min\left(  1,\tau_{0}\right)  +\left[  \tau
s\right]  +1}^{\tau_{0}+\left[  \tau r\right]  }\eta_{t}^{2}\rightarrow\left(
r-s\right)  \sigma^{2}$ follows from an argument given in the proofs rather
than being assumed. On the other hand, Condition \ref{Diag_Unit Root CLT_Cond}%
(vii) is somewhat more restrictive than Condition \ref{Omega_z} in the sense
that it limits heteroskedasticity to be of a form that does not affect the
limiting distribution. In other words, we essentially assume $\tau^{-1}%
\sum_{t=\min\left(  1,\tau_{0}\right)  +\left[  \tau s\right]  +1}^{\tau
_{0}+\left[  \tau r\right]  }\eta_{t}^{2}$ to be proportional to $r-s$
asymptotically. This assumption is stronger than needed but helps to compare
the results with the existing unit root literature.

For Condition \ref{Diag_Unit Root CLT_Cond}(viii) we note that typically
$\Omega_{ty}\left(  \phi\right)  =E\left[  f_{\theta,it}f_{\theta,it}^{\prime
}\right]  $ and $\Omega_{ty}=\Omega_{ty}\left(  \phi_{0}\right)  $ where
$\phi_{0}=\left(  \beta_{0}^{\prime},V_{0}\left(  0\right)  ,\rho_{\tau
0}\right)  $. Thus, even if $\Omega_{ty}\left(  .\right)  $ is non-stochastic,
it follows that $\Omega_{ty}$ is random and measurable with respect to
$\mathcal{C}$ because it depends on $V\left(  0\right)  $ which is a random
variable measurable with respect to $\mathcal{C}$.

The following results are established by modifying arguments in Phillips
(1987) and Chan and Wei (1987) to account for $\mathcal{C}$-stable convergence
and by applying Theorem \ref{FCLT}.

\begin{theorem}
\label{FCLT_SI}Assume that Conditions \ref{Diag_Unit Root CLT_Cond} hold. With
$\tau_{0}=0$ and as $\tau,n\rightarrow\infty$ and $T$ fixed with $\tau=\kappa
n$ for some $\kappa\in\left(  0,\infty\right)  $ it follows that%
\[
\left(  V_{\tau n}\left(  r\right)  ,s_{ML}^{y},\int_{0}^{s}V_{\tau n}\left(
u\right)  dW_{\tau n}\left(  u\right)  \right)  \Rightarrow\left(
V_{\gamma,V\left(  0\right)  }\left(  r\right)  ,\Omega_{y}^{1/2}W_{y}\left(
1\right)  ,\int_{0}^{s}\sigma V_{\gamma,V\left(  0\right)  }\left(  u\right)
dW_{\nu}\left(  u\right)  \right)  \text{ (}\mathcal{C}\text{-stably)}%
\]
in the Skorohod topology on $D_{R^{d}}\left[  0,1\right]  .$
\end{theorem}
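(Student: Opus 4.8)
The plan is to deduce the result from the functional CLT of Theorem \ref{FCLT}, applied to $\left(\psi_{n,it}^{y},\psi_{\tau,t}^{\nu}\right)=\left(f_{\theta,it},\eta_{t}\right)$, and then to realize $V_{\tau n}$ and $\int_{0}^{\cdot}V_{\tau n}dW_{\tau n}$ as images of that limit under maps that are continuous at the (continuous) limit point. Two facts make this work: $\mathcal{C}$-stable convergence is preserved under such maps, and any $\mathcal{C}$-measurable random variable --- here the initial value $\nu_{0}=V\left(0\right)$ --- is automatically carried along in the joint limit. Throughout I follow the device of Phillips (1987) and Chan and Wei (1987), but tracking $\mathcal{C}$-stable rather than marginal convergence.

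\emph{Step 1: reduction to Theorem \ref{FCLT}.} I would first check that Condition \ref{Diag_Unit Root CLT_Cond}, with $\psi_{n,it}^{y}=f_{\theta,it}$ and $\psi_{\tau,t}^{\nu}=\eta_{t}$, implies Conditions \ref{Diag_CLT_Cond}, \ref{Omega_z} and \ref{Omega_y}. Parts (i)--(vi) are the counterparts of Conditions \ref{Diag_CLT_Cond}(i)--(vi); since $\tau_{0}=0$ the negative-index requirement \ref{Diag_CLT_Cond}(vii) is vacuous, and the finitely many periods $t\le T$ with a nonzero conditional mean are asymptotically negligible. Condition \ref{Diag_Unit Root CLT_Cond}(vii) together with a martingale law of large numbers based on the $L^{2+\delta}$ bound in (iv) yields $\tau^{-1}\sum_{t=\left[\tau s\right]+1}^{\left[\tau r\right]}\eta_{t}^{2}\overset{p}{\rightarrow}\left(r-s\right)\sigma^{2}$, so Condition \ref{Omega_z} holds with $\Omega_{\nu}\left(r\right)=\sigma^{2}r$, $\dot{\Omega}_{\nu}\left(r\right)\equiv\sigma^{2}$; Condition \ref{Diag_Unit Root CLT_Cond}(viii) is Condition \ref{Omega_y}. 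Identifying $X_{n\tau}\left(r\right)=\left(Y_{\tau n},W_{\tau n}\left(r\right)\right)$ and noting $\dot{\Omega}_{\nu}\left(s\right)^{1/2}=\sigma$, Theorem \ref{FCLT} gives, jointly and $\mathcal{C}$-stably in $D\left[0,1\right]$ and together with $\nu_{0}$ held fixed,
\[
\left(Y_{\tau n},\;W_{\tau n}\left(\cdot\right)\right)\;\Rightarrow\;\left(\Omega_{y}^{1/2}W_{y}\left(1\right),\;\sigma W_{\nu}\left(\cdot\right)\right).
\]

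\emph{Step 2: the level process.} Iterating (\ref{Unit Root Gen Mech}) from $\nu_{\tau,0}=\tau^{1/2}\nu_{0}$ gives $V_{\tau n}\left(r\right)=e^{\gamma\left[\tau r\right]/\tau}\nu_{0}+\tau^{-1/2}\sum_{j=1}^{\left[\tau r\right]}e^{\gamma\left(\left[\tau r\right]-j\right)/\tau}\eta_{j}$. Replacing $e^{\gamma\left(\left[\tau r\right]-j\right)/\tau}$ by $e^{\gamma\left(r-j/\tau\right)}$ changes this by at most $C\tau^{-1}\bigl(\tau^{-1/2}\sum_{j\le\tau}\left\vert\eta_{j}\right\vert\bigr)=o_{p}\left(1\right)$ uniformly in $r$ (by the $L^{2+\delta}$ bound on $\eta_{t}$), hence $V_{\tau n}\left(\cdot\right)=e^{\gamma\left[\tau\cdot\right]/\tau}\nu_{0}+h_{\gamma}\left(W_{\tau n}\right)\left(\cdot\right)+o_{p}\left(1\right)$ uniformly, where $h_{\gamma}\left(w\right)\left(r\right):=w\left(r\right)+\gamma\int_{0}^{r}e^{\gamma\left(r-u\right)}w\left(u\right)du=\int_{0}^{r}e^{\gamma\left(r-u\right)}dw\left(u\right)$ (integration by parts, $w\left(0\right)=0$) is a map on $D\left[0,1\right]$ continuous at continuous limit points. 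Continuous mapping applied to the limit of Step 1 (carrying $\nu_{0}$ along) gives
\[
V_{\tau n}\left(\cdot\right)\;\Rightarrow\;e^{\gamma\cdot}\nu_{0}+\sigma\int_{0}^{\cdot}e^{\gamma\left(\cdot-u\right)}dW_{\nu}\left(u\right)=V_{\gamma,V\left(0\right)}\left(\cdot\right)\quad(\mathcal{C}\text{-stably}),
\]
jointly with $Y_{\tau n}\Rightarrow\Omega_{y}^{1/2}W_{y}\left(1\right)$; likewise $V_{\tau n}\left(\cdot\right)^{2}$ and $\int_{0}^{\cdot}V_{\tau n}\left(u\right)^{2}du$ converge jointly and $\mathcal{C}$-stably to the corresponding functionals of $V_{\gamma,V\left(0\right)}$.

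\emph{Step 3: the stochastic integral and conclusion.} The only genuinely delicate term is $\int_{0}^{\cdot}V_{\tau n}dW_{\tau n}$, since $\left(v,w\right)\mapsto\int v\,dw$ is not continuous on $D\left[0,1\right]^{2}$; I would avoid it with an algebraic It\^{o}-type identity. Squaring (\ref{Unit Root Gen Mech}) with $\rho=e^{\gamma/\tau}$, telescoping over $t=1,\dots,\left[\tau s\right]$ and dividing by $\tau$,
\[
\int_{0}^{s}V_{\tau n}dW_{\tau n}=\frac{1}{2\rho}\Bigl(V_{\tau n}\left(s\right)^{2}-V_{\tau n}\left(0\right)^{2}-\tau\left(\rho^{2}-1\right)\int_{0}^{s}V_{\tau n}\left(u\right)^{2}du-\tau^{-1}\sum_{t=1}^{\left[\tau s\right]}\eta_{t}^{2}\Bigr)+o_{p}\left(1\right),
\]
the $o_{p}\left(1\right)$ absorbing the Riemann-sum gap between $\tau^{-2}\sum\nu_{\tau,t-1}^{2}$ and $\int_{0}^{s}V_{\tau n}^{2}$. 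With $\rho\rightarrow1$, $\tau\left(\rho^{2}-1\right)\rightarrow2\gamma$, $\tau^{-1}\sum_{t=1}^{\left[\tau s\right]}\eta_{t}^{2}\overset{p}{\rightarrow}s\sigma^{2}$ (Step 1), and the joint $\mathcal{C}$-stable convergence of $V_{\tau n}\left(s\right)^{2}$ and $\int_{0}^{s}V_{\tau n}^{2}$ (Step 2), the right side converges jointly and $\mathcal{C}$-stably to $\tfrac{1}{2}\bigl(V_{\gamma,V\left(0\right)}\left(s\right)^{2}-V\left(0\right)^{2}-2\gamma\int_{0}^{s}V_{\gamma,V\left(0\right)}\left(u\right)^{2}du-\sigma^{2}s\bigr)$, which by It\^{o}'s formula for $V_{\gamma,V\left(0\right)}\left(r\right)^{2}$ (using $dV_{\gamma,V\left(0\right)}=\gamma V_{\gamma,V\left(0\right)}dr+\sigma dW_{\nu}$) equals $\int_{0}^{s}\sigma V_{\gamma,V\left(0\right)}dW_{\nu}$. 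Finally, each component of the triple is tight ($Y_{\tau n}$ trivially; $V_{\tau n}$ and $\int_{0}^{\cdot}V_{\tau n}dW_{\tau n}$ as continuous-in-the-limit functionals of the tight $W_{\tau n}$), and all three are continuous maps of the jointly $\mathcal{C}$-stably convergent triple $\left(Y_{\tau n},\nu_{0},W_{\tau n}\right)$, so the asserted joint $\mathcal{C}$-stable convergence in the Skorohod topology on $D_{\mathbb{R}^{d}}\left[0,1\right]$ follows. The main obstacle is precisely the non-continuity of the stochastic-integral functional --- removed by the squaring identity and the law of large numbers for $\tau^{-1}\sum\eta_{t}^{2}$ --- while the recurring point requiring care is that stable rather than merely weak convergence is retained throughout, which holds because continuous mappings preserve $\mathcal{C}$-stable convergence and $\nu_{0}\in\mathcal{C}$.
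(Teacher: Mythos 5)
Your proposal is correct, and for the heart of the theorem --- the stochastic-integral component --- it takes exactly the paper's route: the squared-and-telescoped identity obtained from (\ref{Unit Root Gen Mech}) (the paper's (\ref{FCLT_SI_6})), the martingale law of large numbers for $\tau^{-1}\sum\eta_{t}^{2}$ via the $2+\delta$ moment bound, uniform integrability and Hall and Heyde's Theorem 2.23, identification of the limit through It\^{o}'s formula for $V_{\gamma,V\left(0\right)}^{2}$, and joint $\mathcal{C}$-stable convergence propagated by the continuous mapping theorem with the $\mathcal{C}$-measurable $\nu_{0}$ carried along. Where you genuinely differ is in how the level process is extracted from Theorem \ref{FCLT}: the paper applies its functional CLT to the heteroskedastic triangular array $\psi_{\tau,s}=e^{-s\gamma/\tau}\eta_{s}$, verifying Condition \ref{Omega_z} with $\Omega_{\nu}\left(r\right)=\sigma^{2}\left(1-e^{-2r\gamma}\right)/2\gamma$ through the weighted variance convergence (\ref{CW2.3}) proved in the supplemental appendix, and then recovers $V_{\tau n}$ by multiplying by $e^{\gamma\left[\tau r\right]/\tau}$; you apply Theorem \ref{FCLT} to the unweighted $\eta_{t}$ (so $\Omega_{\nu}\left(r\right)=\sigma^{2}r$) and rebuild the near-unit-root limit via the Phillips (1987)-style summation-by-parts functional $h_{\gamma}\left(w\right)\left(r\right)=w\left(r\right)+\gamma\int_{0}^{r}e^{\gamma\left(r-u\right)}w\left(u\right)du$, continuous at continuous limit points. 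Both are valid: your route needs only the homoskedastic case of the CLT and dispenses with (\ref{CW2.3}) altogether, at the cost of the (standard, but in your sketch compressed) uniform summation-by-parts approximation of $V_{\tau n}$ by $h_{\gamma}\left(W_{\tau n}\right)$, whereas the paper keeps the exponential weights inside its CLT, exploiting the $\dot{\Omega}_{\nu}$ feature, so that every approximation sits in results it has already established. A minor point worth one extra line in your write-up: upgrading the stochastic-integral convergence from fixed $s$ to the functional statement requires uniformity in $s$ of $\tau^{-1}\sum_{t\leq\left[\tau s\right]}\eta_{t}^{2}\rightarrow_{p}\sigma^{2}s$ (immediate by monotonicity and continuity of the limit), which your tightness remark gestures at but does not spell out; the paper itself only treats the endpoint explicitly.
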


\begin{proof}
In Appendix \ref{proof-section-jointCLT}.
\end{proof}

We now employ Theorem \ref{FCLT_SI} to analyze the limiting behavior of
$\hat{\theta}$ when the common factors are generated from a linear unit root
process. To derive a limiting distribution for $\hat{\phi}$ we impose the
following additional assumption.

\begin{condition}
\label{Unit Root Consistency}Let $\hat{\theta}=\arg\max\sum_{t=1}^{T}%
\sum_{i=1}^{n}f\left(  y_{it}|\theta,\hat{\rho}\right)  $. Assume that
$\left(  \hat{\theta}-\theta_{0}\right)  =O_{p}\left(  n^{-1/2}\right)  $.
\end{condition}

\begin{condition}
\label{Unit Root Uniform Hessian Y}Let $\kappa=\lim n/\tau^{2}$. Let
$A_{y,\theta}\left(  \phi\right)  =\sum_{t=1}^{T}E\left[  \partial
f_{\theta,it}/\partial\theta^{\prime}\right]  $, $A_{y,\rho}\left(
\phi\right)  =\sum_{t=1}^{T}E\left[  \partial f_{\theta,it}/\partial
\rho\right]  $, and define $A^{y}\left(  \phi\right)  =\left[
\begin{array}
[c]{cc}%
A_{y,\theta}\left(  \phi\right)  & \sqrt{\kappa}A_{y,\rho}\left(  \phi\right)
\end{array}
\right]  $ where $A\left(  \phi\right)  $ is a $k_{\theta}\times k_{\phi}$
dimensional matrix of non-random functions $\phi\rightarrow\mathbb{R}$. Assume
that $A_{y,\theta}\left(  \phi_{0}\right)  $ is full rank almost surely.
Assume that for some $\varepsilon>0,$%
\[
\sup_{\phi:\left\Vert \phi-\phi_{0}\right\Vert \leq\varepsilon}\left\Vert
\frac{\partial\tilde{s}^{y}\left(  \phi\right)  }{\partial\phi^{\prime}%
}D_{n\tau}-A^{y}\left(  \phi\right)  \right\Vert =o_{p}\left(  1\right)  .
\]

\end{condition}

We make the possibly simplifying assumption that $A\left(  \phi\right)  $ only
depends on the factors through the parameter $\theta.$

\begin{theorem}
\label{CLT_MLE_Unit Root}Assume that Conditions \ref{Diag_Unit Root CLT_Cond},
\ref{Unit Root Consistency} and \ref{Unit Root Uniform Hessian Y} hold. It
follows that
\[
\sqrt{n}\left(  \hat{\theta}-\theta_{0}\right)  \overset{d}{\rightarrow
}-A_{y,\theta}^{-1}\Omega_{y}^{1/2}W_{y}\left(  1\right)  -\sqrt{\kappa
}A_{y,\theta}^{-1}A_{y,\rho}\left(  \int_{0}^{1}V_{\gamma,V\left(  0\right)
}^{2}\left(  r\right)  dr\right)  ^{-1}\left(  \int_{0}^{1}\sigma
V_{\gamma,V\left(  0\right)  }\left(  r\right)  dW_{\nu}\left(  r\right)
\right)  \text{ (}\mathcal{C}\text{-stably).}%
\]

\end{theorem}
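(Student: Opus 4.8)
The plan is to mimic the Taylor-expansion argument of Theorem \ref{CLT_MLE}, but with the block-diagonal normalization $D_{n\tau}=\operatorname{diag}(n^{-1/2}I_y,\tau^{-1})$ appropriate to the unit root case, and then to substitute the joint $\mathcal{C}$-stable limit from Theorem \ref{FCLT_SI} in place of the Gaussian limit of Corollary \ref{Diag_CLT}. First I would write the stacked moment conditions: $\hat\theta$ solves $\sum_{t}\sum_i f_{\theta,it}(\hat\theta,\hat\rho)=0$ and $\hat\rho$ is the OLS estimator solving $\sum_t g_{\rho,t}(\hat\theta,\hat\rho)=0$. A first order mean-value expansion of the stacked system $s(\hat\phi)=0$ around $\phi_0=(\theta_0',\rho_{\tau 0})'$ gives
\[
0=s(\phi_0)+\frac{\partial s(\bar\phi)}{\partial\phi'}(\hat\phi-\phi_0),
\]
so that $D_{n\tau}^{-1}(\hat\phi-\phi_0)=-\bigl(\tfrac{\partial s(\bar\phi)}{\partial\phi'}D_{n\tau}\bigr)^{-1}s(\phi_0)$ for $\bar\phi$ on the segment between $\hat\phi$ and $\phi_0$. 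Consistency of $\hat\phi$ (Condition \ref{Unit Root Consistency} for $\hat\theta$, standard near-unit-root consistency of $\hat\rho$) together with the uniform Hessian convergence in Condition \ref{Unit Root Uniform Hessian Y} lets me replace $\bar\phi$ by $\phi_0$ and the scaled cross-sectional Hessian block by $A^y(\phi_0)=[A_{y,\theta}\ \ \sqrt{\kappa}A_{y,\rho}]$. The remaining lower-right block, the time series Hessian $\tau^{-2}\sum_t \nu_{\tau,t-1}^2$, converges by Theorem \ref{FCLT_SI} (continuous mapping on $V_{\tau n}$) to $\int_0^1 V_{\gamma,V(0)}^2\,dr$, which is $\mathcal{C}$-measurable and a.s.\ positive; the off-diagonal time-series-on-$\theta$ block vanishes since $g_{\rho,t}$ does not depend on $\theta$ here.

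Next I would identify the limit of the appropriately scaled score vector $D_{n\tau}^{(s)} s(\phi_0)$, where the cross-sectional block is $n^{-1/2}\sum_t\sum_i f_{\theta,it}=Y_{\tau n}$ and the time series block is $\tau^{-1}\sum_t \nu_{\tau,t-1}\eta_t=\int_0^1 V_{\tau n}\,dW_{\tau n}$. Theorem \ref{FCLT_SI} delivers exactly the joint $\mathcal{C}$-stable convergence
\[
\Bigl(Y_{\tau n},\ \int_0^1 V_{\tau n}\,dW_{\tau n}\Bigr)\Rightarrow\Bigl(\Omega_y^{1/2}W_y(1),\ \int_0^1\sigma V_{\gamma,V(0)}\,dW_\nu\Bigr)\quad(\mathcal{C}\text{-stably}),
\]
jointly with $V_{\tau n}(r)\Rightarrow V_{\gamma,V(0)}(r)$, which controls the Hessian block. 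Because $\mathcal{C}$-stable convergence is preserved under joint convergence with $\mathcal{C}$-measurable random variables and the continuous mapping theorem applies to the map that inverts the (block triangular) limiting Hessian and multiplies it into the limiting score, I can assemble the limit of $D_{n\tau}^{-1}(\hat\phi-\phi_0)$. Reading off the first ($\theta$) block, inverting the block-triangular limiting Hessian gives
\[
\sqrt{n}(\hat\theta-\theta_0)\ \overset{d}{\to}\ -A_{y,\theta}^{-1}\Omega_y^{1/2}W_y(1)-\sqrt{\kappa}\,A_{y,\theta}^{-1}A_{y,\rho}\Bigl(\int_0^1 V_{\gamma,V(0)}^2\,dr\Bigr)^{-1}\Bigl(\int_0^1\sigma V_{\gamma,V(0)}\,dW_\nu\Bigr)\quad(\mathcal{C}\text{-stably}),
\]
which is the claimed expression; the factor $\sqrt\kappa=\lim\sqrt{n}/\tau$ arises from rescaling $\sqrt n(\hat\rho-\rho_{\tau 0})=(\sqrt n/\tau)\cdot\tau(\hat\rho-\rho_{\tau 0})$ and absorbing the $\tau$ normalization into the block structure of $A^y(\phi)$.

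The main obstacle I anticipate is controlling the Taylor remainder and the Hessian uniformly \emph{jointly across the two blocks with heterogeneous normalizations}: the $\theta$-derivative of the cross-sectional score is $O_p(n)$, the $\rho$-derivative of the time series score is $O_p(\tau^2)$ (a non-standard rate), and the cross terms are $O_p(n)$ in one direction and negligible in the other, so the scaled Hessian matrix $\tfrac{\partial s}{\partial\phi'}D_{n\tau}$ is genuinely block-triangular only in the limit. Establishing that the inverse of this random, $\mathcal{C}$-measurable, block-triangular limit is continuous on the relevant event (where $A_{y,\theta}$ is full rank and $\int_0^1 V_{\gamma,V(0)}^2\,dr>0$, both a.s.) — and that this event has probability one — requires care, as does verifying that the mean-value point $\bar\phi$ can be replaced by $\phi_0$ given the different convergence rates of $\hat\theta$ and $\hat\rho$. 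A secondary point is ensuring joint $\mathcal{C}$-stable convergence (not just marginal) of $(Y_{\tau n},\int_0^1 V_{\tau n}dW_{\tau n},V_{\tau n}(\cdot))$ together with the $\mathcal{C}$-measurable $\Omega_y^{1/2}$ and $A(\phi)$, so that the continuous mapping theorem for $\mathcal{C}$-stable convergence (as used in Theorem \ref{CLT_MLE}) legitimately applies; this is exactly what Theorem \ref{FCLT_SI} is designed to supply, so the burden is mostly bookkeeping once that theorem is invoked.
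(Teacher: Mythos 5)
Your proposal is correct and follows essentially the same route as the paper's own proof: a mean-value expansion of the stacked score with the normalization $D_{n\tau}=\operatorname{diag}(n^{-1/2}I_{y},\tau^{-1})$, the block-triangular scaled Hessian whose lower-right entry is $\tau^{-2}\sum_{t}\nu_{\tau,t}^{2}=\int_{0}^{1}V_{\tau n}(r)^{2}dr$, joint $\mathcal{C}$-stable convergence of $\bigl(V_{\tau n},Y_{\tau n},\int_{0}^{1}V_{\tau n}dW_{\tau n}\bigr)$ from Theorem \ref{FCLT_SI}, and the continuous mapping theorem with the partitioned inverse to extract the $\theta$-block. The points you flag as delicate (replacing $\bar\phi$ by $\phi_{0}$ under heterogeneous rates, a.s.\ invertibility of the $\mathcal{C}$-measurable limit, and joint rather than marginal stable convergence) are exactly the ones the paper handles via Conditions \ref{Unit Root Consistency} and \ref{Unit Root Uniform Hessian Y} and the joint convergence statement in Theorem \ref{FCLT_SI}.
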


\begin{proof}
In Appendix \ref{proof-section-jointCLT}.
\end{proof}

Note that the term $\left(  \int_{0}^{1}V_{\gamma,V\left(  0\right)  }%
^{2}\left(  r\right)  dr\right)  ^{-1}$ corresponds to $A_{\nu,\rho}^{-1}$ in
the stationary case when the time series model does not depend on
cross-sectional parameters. The result in Theorem \ref{CLT_MLE_Unit Root} is
an example that shows how common factors affecting both time series and
cross-section data can lead to non-standard limiting distributions. In this
case, the initial condition of the unit root process in the time series
dimension causes dependence between the components of the asymptotic
distribution of $\hat{\theta}$ because both $\Omega_{y}$ and $V_{\gamma
,V\left(  0\right)  }$ in general depend on $V\left(  0\right)  $. Thus, the
situation encountered here is generally more difficult than the one considered
in Campbell and Yogo (2006) and Phillips (2014). In addition, because the
limiting distribution of $\hat{\theta}$ is not mixed asymptotically normal,
simple pivotal test statistics as in Andrews (2005) are not readily available
contrary to the stationary case.

\section{Summary}

We develop a new limit theory for combined cross-sectional and time-series
data sets. We focus on situations where the two data sets are interdependent
because of common factors that affect both. The concept of stable convergence
is used to handle this dependence when proving a joint Central Limit Theorem.
Our analysis is cast in a generic framework of cross-section and time-series
based criterion functions that jointly, but not individually, identify the
parameters. Within this framework, we show how our limit theory can be used to
derive asymptotic approximations to the sampling distribution of estimators
that are based on data from both samples. We explicitly consider the unit root
case as an example where particularly difficult to handle limiting expressions
arise. Our results are expected to be helpful for the econometric analysis of
rational expectation models involving individual decision making as well as
general equilibrium settings. We investigate these topics, and related
implementation issues, in a companion paper HKM20. The question of efficient
inference in the context of our model is an interesting topic for future
research, but beyond the scope of the current paper.

\newpage

\appendix{}

\begin{center}
{\Large Appendix}
\end{center}

\section{Details for Section \ref{OP}\label{OP_detail}}

We consider a simplified version of Olley and Pakes' (1996) model without any
aggregate shock. Assume that $\omega_{j,t}$ follows an AR(1)
process\footnote{Olley and Pakes (1996) adopted a non-parametric specification
for the dynamics of $\omega_{j,t}$, but Ackerberg, Caves and Frazer (2015)
adopted a parametric specification. The parametric specification makes it
easier to recognize the source of complication in the presence of aggregate
shocks.}%
\[
\omega_{j,t}=\alpha\omega_{j,t-1}+e_{j,t},
\]
where we assume that the intercept term is zero for notational simplicity, and
that our parameters of interests are $\left(  \beta_{k},\alpha\right)  $. This
means that we can write the conditional expectation of $\mathfrak{y}%
_{j,t+1}^{\ast}$ given the information $I_{j,t}$ available at time $t$ as%
\begin{align*}
E\left[  \left.  \mathfrak{y}_{j,t+1}^{\ast}\right\vert I_{j,t}\right]   &
=\beta_{k}k_{j,t+1}+\alpha\omega_{j,t}\\
&  =\beta_{k}k_{j,t+1}+\alpha\left(  \phi_{t}\left(  i_{j,t},k_{j,t}\right)
-\beta_{k}k_{j,t}\right) \\
&  =\beta_{k}k_{j,t+1}+\alpha\left(  \phi_{t}\left(  i_{j,t},k_{j,t}\right)
-\beta_{k}k_{j,t}\right)  ,
\end{align*}
and that $\left(  \beta_{k},\alpha\right)  $ can be identified by the
conditional moment restriction
\[
0=E\left[  \left.  \mathfrak{y}_{j,t+1}^{\ast}-\left(  \beta_{k}%
k_{j,t+1}+\alpha\left(  \phi_{t}\left(  i_{j,t},k_{j,t}\right)  -\beta
_{k}k_{j,t}\right)  \right)  \right\vert I_{j,t}\right]
\]
using cross sectional variation. This gives the basic intuition underlying
(\ref{C-GMM}).

From
\[
\mathfrak{y}_{j,t}^{\ast}-\beta_{k}k_{j,t}=\omega_{j,t}+\eta_{j,t}=\nu
_{t}+\varepsilon_{j,t}+\eta_{j,t},
\]
we can infer that%
\begin{align}
\operatorname*{plim}_{n\rightarrow\infty}n^{-1}\sum_{j=1}^{n}\left(
\mathfrak{y}_{j,t}^{\ast}-\nu_{t}-\beta_{k}k_{j,t}\right)  \left(
\mathfrak{y}_{j,t-1}^{\ast}-\nu_{t-1}-\beta_{k}k_{j,t-1}\right)   &
=\alpha^{\left(  C\right)  }\sigma_{\varepsilon}^{2}%
\label{Cross-Section_Moments}\\
\operatorname*{plim}_{n\rightarrow\infty}n^{-1}\sum_{j=1}^{n}\left(
\mathfrak{y}_{j,t}^{\ast}-\nu_{t}-\beta_{k}k_{j,t}\right)  \left(
\mathfrak{y}_{j,t-2}^{\ast}-\nu_{t-2}-\beta_{k}k_{j,t-2}\right)   &  =\left(
\alpha^{\left(  C\right)  }\right)  ^{2}\sigma_{\varepsilon}^{2}\nonumber\\
&  \vdots\nonumber
\end{align}
If the panel consists of $T$ observations, we can identify $T+3$ parameters,
including $\nu_{1},\ldots,\nu_{T},$ $\beta_{k},$ $\alpha^{\left(  C\right)
},$ $\sigma_{\varepsilon}^{2}$, using the $T\left(  T-1\right)  /2$ moments
based on all available pairs of time periods.\footnote{It may be tempting to
use the Yule-Walker equation
\begin{align*}
E\left[  \omega_{j,t}\omega_{j,t}\right]   &  =\sigma_{\nu}^{2}+\sigma
_{\varepsilon}^{2}\\
E\left[  \omega_{j,t}\omega_{j,t-1}\right]   &  =\rho^{\left(  A\right)
}\sigma_{\nu}^{2}+\rho^{\left(  C\right)  }\sigma_{\varepsilon}^{2}%
\end{align*}
but the expectation operator on the LHS refers to the joint distribution
involving both time series and cross sectional variations, and as such, is not
implementable in cross sectional data.}

While the moment conditions in (\ref{Cross-Section_Moments}) demonstrate
identification in the cross-section of certain parameters, these moments are
not suitable for our limit theory which requires estimating functions to have
a martingale difference sequence property. To address this issue we propose
the following moment conditions, which is similar to Ackerberg, Caves and
Frazer's (2015) parametric rendition of Olley and Pakes' (1996) moment
condition. Recall our assumption that $\beta_{l}$ and $\phi_{t}\left(
i_{j,t},k_{j,t}\right)  $ are known. From
\[
\mathfrak{y}_{j,t}=\beta_{l}l_{j,t}+\phi_{t}\left(  i_{j,t},k_{j,t}\right)
+\eta_{j,t},
\]
we obtain%
\[
\mathfrak{y}_{j,t}^{\ast}=\mathfrak{y}_{j,t}-\beta_{l}l_{j,t}=\phi_{t}\left(
i_{j,t},k_{j,t}\right)  +\eta_{j,t}%
\]
from which we further obtain
\begin{equation}
\eta_{j,t}=\mathfrak{y}_{j,t}^{\ast}-\phi_{t}\left(  i_{j,t},k_{j,t}\right)  .
\label{eta-alt}%
\end{equation}
We also have%
\[
\mathfrak{y}_{j,t}^{\ast}-\beta_{k}k_{j,t}=\nu_{t}+\varepsilon_{j,t}%
+\eta_{j,t}%
\]
so%
\begin{equation}
\varepsilon_{j,t}=\mathfrak{y}_{j,t}^{\ast}-\beta_{k}k_{j,t}-\nu_{t}%
-\eta_{j,t}. \label{epsilon-alt}%
\end{equation}
Combining (\ref{eta-alt}) and (\ref{epsilon-alt}), we obtain%
\begin{align*}
\varepsilon_{j,t}  &  =\mathfrak{y}_{j,t}^{\ast}-\beta_{k}k_{j,t}-\nu
_{t}-\left(  \mathfrak{y}_{j,t}^{\ast}-\phi_{t}\left(  i_{j,t},k_{j,t}\right)
\right) \\
&  =\phi_{t}\left(  i_{j,t},k_{j,t}\right)  -\nu_{t}-\beta_{k}k_{j,t},
\end{align*}
which can be combined with
\begin{align*}
\mathfrak{y}_{j,t+1}^{\ast}-\beta_{k}k_{j,t+1}  &  =\nu_{t+1}+\varepsilon
_{j,t+1}+\eta_{j,t+1}\\
&  =\nu_{t+1}+\left(  \alpha^{\left(  C\right)  }\varepsilon_{j,t}%
+e_{j,t+1}^{\left(  C\right)  }\right)  +\eta_{j,t+1}%
\end{align*}
to yield%
\[
\mathfrak{y}_{j,t+1}^{\ast}-\beta_{k}k_{j,t+1}=\nu_{t+1}+\alpha^{\left(
C\right)  }\left(  \phi_{t}\left(  i_{j,t},k_{j,t}\right)  -\nu_{t}-\beta
_{k}k_{j,t}\right)  +e_{j,t+1}^{\left(  C\right)  }+\eta_{j,t+1}.
\]
After some straightforward algebra, we obtain%
\[
\mathfrak{y}_{j,t+1}^{\ast}=\beta_{0,t+1}^{\ast}+\beta_{k}k_{j,t+1}%
+\alpha^{\left(  C\right)  }\left(  \phi_{t}\left(  i_{j,t},k_{j,t}\right)
-\beta_{k}k_{j,t}\right)  +\left(  e_{j,t+1}^{\left(  C\right)  }+\eta
_{j,t+1}\right)  ,
\]
where
\[
\beta_{0,t+1}^{\ast}\equiv\nu_{t+1}-\alpha^{\left(  C\right)  }\nu_{t}%
\]
and the error $e_{j,t+1}^{\left(  C\right)  }+\eta_{j,t+1}$ is orthogonal to
the past variables such as $k_{j,t},\mathfrak{y}_{j,t}^{\ast}$. Therefore, if
the $z_{j,t}$ is an instrument uncorrelated with the error $e_{j,t+1}^{\left(
C\right)  }+\eta_{j,t+1}$, we can use the moment (\ref{C-GMM}) as a basis of
estimating the parameters $\left(  \beta_{0,t+1}^{\ast},\beta_{k}%
,\alpha^{\left(  C\right)  }\right)  $.

\section{Detailed Calculations for Example \ref{Example_AR_Mixing}%
\label{Example_AR_Mxing_Calc}}

First note that it follows that
\[
E\left[  z_{s}u_{s+1}|\mathcal{G}_{\tau n,\left(  s-\min\left(  1,\tau
_{0}\right)  \right)  n+i}\right]  =z_{s}E\left[  u_{s+1}|\mathcal{C}\right]
,
\]
where the equality uses the fact that $z_{s}$ is measurable with respect to
$\mathcal{G}_{\tau n,\left(  s-\min\left(  1,\tau_{0}\right)  \right)  n+i}$
and that $u_{s+1}$ is independent of $\sigma\left(  \left\{  z_{s}%
,z_{s-1},\ldots,z_{\tau_{0}}\right\}  \right)  .$ To evaluate $E\left[
u_{s+1}|\mathcal{C}\right]  $ consider the joint distribution of $u_{s+1}$ for
$s<0$ and $\nu_{1}=z_{1},$ which is Gaussian,
\[
N\left(  \left[
\begin{array}
[c]{c}%
0\\
0
\end{array}
\right]  ,\left[
\begin{array}
[c]{cc}%
1 & \rho^{\left\vert s\right\vert }\\
\rho^{\left\vert s\right\vert } & \frac{1}{1-\rho^{2}}%
\end{array}
\right]  \right)  .
\]
This implies that $E\left[  u_{s+1}|\mathcal{C}\right]  =\rho^{\left\vert
s\right\vert }\left(  1-\rho^{2}\right)  z_{1}$. Evaluating the $L_{2}$ norm
of Condition \ref{Diag_CLT_Cond}(vii)\ leads to
\begin{align*}
&  \left\Vert E\left[  \left.  \psi_{\tau,s}^{\nu}\right\vert \mathcal{G}%
_{\tau n,\left(  s-\min\left(  1,\tau_{0}\right)  -1\right)  n+i}\right]
\right\Vert _{2}^{2}\\
&  \leq\left\vert \rho\right\vert ^{\left\vert s\right\vert }\left(
1-\rho^{2}\right)  \left(  E\left[  \left(  z_{s}z_{1}-E\left[  z_{s}%
z_{1}\right]  \right)  ^{2}\right]  +\left(  E\left[  z_{s}z_{1}\right]
\right)  ^{2}\right) \\
&  =\left\vert \rho\right\vert ^{\left\vert s\right\vert }\left(  1-\rho
^{2}\right)  E\left[  \left(  z_{s}z_{1}-E\left[  z_{s}z_{1}\right]  \right)
^{2}\right]  +\left\vert \rho\right\vert ^{\left\vert s\right\vert }\left(
1-\rho^{2}\right)  \left(  \frac{\rho^{1-s}}{1-\rho^{2}}\right)  ^{2}\\
&  =\left\vert \rho\right\vert ^{\left\vert s\right\vert }\left(  1-\rho
^{2}\right)  \frac{\rho^{2-2s}+1}{\left(  1-\rho^{2}\right)  ^{2}}+\left\vert
\rho\right\vert ^{\left\vert s\right\vert }\frac{\rho^{2-2s}}{1-\rho^{2}}\\
&  =\left\vert \rho\right\vert ^{\left\vert s\right\vert }\frac{\rho
^{2+2\left\vert s\right\vert }+1}{1-\rho^{2}}+\left\vert \rho\right\vert
^{\left\vert s\right\vert }\frac{\rho^{2+2\left\vert s\right\vert }}%
{1-\rho^{2}}\\
&  =O\left(  \left\vert \rho\right\vert ^{\left\vert s\right\vert }\right)
=o\left(  \left\vert s\right\vert ^{-\left(  1+\delta\right)  }\right)
\end{align*}
such that \textbf{ }%
\[
\left\Vert E\left[  \left.  \psi_{\tau,s}^{\nu}\right\vert \mathcal{G}_{\tau
n,\left(  s-\min\left(  1,\tau_{0}\right)  -1\right)  n+i}\right]  \right\Vert
_{2}=O\left(  \left\vert \rho\right\vert ^{\left\vert s\right\vert /2}\right)
=o\left(  \left\vert s\right\vert ^{-\left(  1+\delta\right)  /2}\right)  .
\]

\section{Proofs for Section \ref{Section_JointCLT}%
\label{proof-section-jointCLT}}

The proof of the functional central limit theorem is given in Section
\ref{Main_CLT_Proof} below. Our proof is self-contained but follows the
strategy of Billingsley (1968) for the case of conventional weak convergence
adapted to our setting of stable convergence. The proof consists of three
steps: (a) establishing finite dimensional convergence through a stable
central limit theorem, (b) establishing tightness of the empirical process and
(c) providing a stochastic process representation of the limiting
distribution. Tightness is established by extending techniques developed by
Billingsley (1968)\ to our setting. Finally, the characterization of the
limiting distribution is based on a proof strategy in Rootzen (1983), which we
again adapt to our setting.

\subsection{Auxiliary Results\label{Aux_Results}}

For ease of reference we present two results that are used in the proof of
Theorem \ref{FCLT}. The first result is Theorem 1 of Kuersteiner and Prucha (2013).

\begin{theorem}
[Theorem 1, Kuersteiner and Prucha (2013)]Let $\left\{  S_{nq,}\mathcal{F}%
_{nq},1\leq q\leq k_{n},n\geq1\right\}  $ be a zero mean, square integrable
martingale array with differences $X_{ni}.$ Let $\mathcal{F}_{0}=\cap
_{n=1}^{\infty}\mathcal{F}_{n0}$ with $\mathcal{F}_{n0}\subseteq
\mathcal{F}_{n1}$ for each $n$ and $E\left[  X_{n1}|\mathcal{F}_{n0}\right]
=0$ and let $\eta^{2}$ be an a.s. finite random variable measurable w.r.t.
$\mathcal{F}_{0}$. If $\max_{q}\left\vert X_{nq}\right\vert
\overset{p}{\rightarrow}0,$ $\sum_{q=1}^{k_{n}}X_{nq}^{2}%
\overset{p}{\rightarrow}\eta^{2}$ and $E\left(  \max_{q}X_{nq}^{2}\right)  $is
bounded in $n$, then
\[
S_{nk_{n}}=\sum_{v=1}^{k_{n}}X_{nv}\overset{d}{\rightarrow}Z\text{
(}\mathcal{F}_{0}\text{-stably)}%
\]
where the random variable $Z$ has characteristic function $E\left[
\exp\left(  -\frac{1}{2}\eta^{2}t^{2}\right)  \right]  $. In particular,
$S_{nk_{n}}\overset{d}{\rightarrow}\eta\xi$ ($\mathcal{F}_{0}$-stably) where
$\xi$ $\sim N(0,1)$ is independent of $\eta$ (possibly after redefining all
variables on an extended probability space).
\end{theorem}

The second result is Theorem 8.3 of Billingsley (1968). To state the theorem
the following notation is needed, see Billingsley (1968, p.19-20 and p.55).
Let $C$ be the space of continuous functions on $\left[  0,1\right]  $ with
the uniform metric. Let $\mathcal{C}$ be the class of Borel sets in $C$. Let
$P_{n}$ be a sequence of probability measures on $\left(  C,\mathcal{C}%
\right)  $.

\begin{theorem}
[Theorem 8.3, Billingsley (1968)]The sequence $\left\{  P_{n}\right\}  $ is
tight if these two conditions are satisfied:\newline(i) For each positive $c$
there exists an $a$ such that $P_{n}\left(  x:\left\vert x\left(  0\right)
\right\vert >a\right)  \leq c,$ $n\geq1$.\newline(ii) For each positive $c$
and $\varepsilon$ there is a $\delta,$with $0<\delta<1$ and an integer $n_{0}$
such that
\[
\frac{1}{\delta}P_{n}\left(  x:\sup_{t\leq s\leq t+\delta}\left\vert x\left(
s\right)  -x\left(  t\right)  \right\vert \geq c\right)  \leq\varepsilon
,\text{ }n\geq n_{0}%
\]
for all $t$.
\end{theorem}

\subsection{Proof of Theorem \label{Main_CLT_Proof}\ref{FCLT}}

We first establish that the stable functional central limit theorem follows
from establishing finite dimensional convergence and tightness. To see this
note that JS (p.512, Definition 5.28) define stable convergence for sequences
$Z^{n}$ defined on a Polish space as in
(\ref{Def Functional C Stable Convergence}). We adopt the definition in JS to
our setting, noting that by JS\ (p.328, Theorem 1.14), $D_{\mathbb{R}%
^{k_{\theta}}\times\mathbb{R}^{k_{\rho}}}\left[  0,1\right]  $\textbf{
}equipped with the Skorohod topology is a Polish space.\textbf{ }Following
Billingsley (1968, p. 120) let $r_{1}<r_{2}<\cdots<r_{k}$ be an arbitrary
finite partition of $\left[  0,1\right]  $ \textbf{ }and \textbf{ }$\pi
_{r_{1},...,r_{k}}Z^{n}=\left(  Z_{r_{1}}^{n},...,Z_{r_{k}}^{n}\right)  $ be
the coordinate projections of $Z^{n}.$ By JS Proposition VIII5.33(iv)\textbf{
}$\mathcal{C}$-stable convergence of $Z^{n}$ to $Z$ is equivalent to the
following two statements: (1) $Z^{n}$ is tight\ and (2) for any bounded
continuous function $H$ with domain $D_{\mathbb{R}^{k_{\theta}}\times
\mathbb{R}^{k_{\rho}}}\left[  0,1\right]  $ and for all $A\mathfrak{\in
}\mathcal{C}$ and letting $1_{A}$ be the indicator function of the set $A$ it
follows that \textbf{ }$E\left[  1_{A}H\left(  Z^{n}\right)  \right]
\ $converges. Part (1) is established by noting that by Billingsley (1968),
Theorem 15.5, convergence under the uniform metric implies tightness for
partial sum processes in $D_{\mathbb{R}^{k_{\theta}}\times\mathbb{R}^{k_{\rho
}}}\left[  0,1\right]  $ that have stochastically bounded initial conditions -
see Billingsley (1968, Condition 15.17) which is satisfied in our case. Part
(2) is established as follows. First, prove stable convergence of the finite
dimensional vector of random variables $Z_{r_{1}}^{n},...,Z_{r_{k}}^{n}$
defined on $\mathbb{R}^{k}$ using a multivariate stable central limit theorem.
This is shown in Step (a) of the proof below. Second, use an argument based on
on the proof of JS Theorem VIII5.7 and VIII5.14, JS (p.509) as follows: For
$P$ defined in (\ref{P-Measure}), define the new probability measure
$\tilde{P}\left(  d\omega\right)  =P\left(  d\omega\right)  1_{A}\left(
\omega\right)  $ such that $E\left[  1_{A}H\left(  Z^{n}\right)  \right]
=\tilde{E}\left[  H\left(  Z^{n}\right)  \right]  $ where $\tilde{E}$ is the
expectation with respect to $\tilde{P}.$ Then, by the stable coordinate-wise
CLT, the finite dimensional distributions of $Z^{n}$ converge under $\tilde
{P}.$ Further, for any compact subset $K$ of $D_{\mathbb{R}^{k_{\theta}}%
\times\mathbb{R}^{k_{\rho}}}\left[  0,1\right]  $ it follows that $\tilde
{P}\left(  Z^{n}\in K\right)  \leq P\left(  Z^{n}\in K\right)  $ such that it
is sufficient to show that $Z^{n}$ is tight under the measure $P.$ This shows
that a tightness argument following Billingsley (1968, Theorem 8.3) is
sufficient to establish Part (2) above. This is done in Step (b) of the proof
below, where we show that Billingsley (1968, Theorem 8.3ii) holds.

We now proceed by first establishing finite dimensional stable convergence in
Step (a) and tightness in Step (b).\textbf{ }Finally, in Step (c) we give a
stochastic process representation for the limiting distribution.\textbf{ }

\subparagraph{(a) Finite Dimensional Convergence}

For finite dimensional convergence fix $r_{1}<r_{2}<\cdots<r_{k}\in\left[
0,1\right]  $. We use the notational convention $r_{0}=0$ below. Define the
increment
\begin{equation}
\Delta X_{n\tau}\left(  r_{i}\right)  =X_{n\tau}\left(  r_{i}\right)
-X_{n\tau}\left(  r_{i-1}\right)  . \label{D_FCLT1}%
\end{equation}
Since there is a one to one mapping between $X_{n\tau}\left(  r_{1}\right)
,...,X_{n\tau}\left(  r_{k}\right)  $ and $X_{n\tau}\left(  r_{1}\right)
,\Delta X_{n\tau}\left(  r_{2}\right)  ,...,\Delta X_{n\tau}\left(
r_{k}\right)  $ we establish joint convergence of the latter. The proof
proceeds by checking that the conditions of Theorem 1 in Kuersteiner and
Prucha (2013) hold. For the convenience of the reader Kuersteiner and Prucha
(2013, Theorem 1)\ is stated in Appendix \ref{Aux_Results}. Let $k_{n}%
=\max(T,\tau)n$ where both $n\rightarrow\infty$ and $\tau\rightarrow\infty$
such that clearly $k_{n}\rightarrow\infty$ (this is a diagonal limit in the
terminology of Phillips and Moon, 1999). Let $d=k_{\phi}=k_{\theta}+k_{\rho}%
$.\textbf{ }To handle the fact that $X_{n\tau}\in\mathbb{R}^{d}$ we use Lemmas
A.1 - A.3 in Phillips and Durlauf (1986). Define $\lambda_{j}=\left(
\lambda_{j,y}^{\prime},\lambda_{j,\nu}^{\prime}\right)  ^{\prime}$and let
$\lambda=\left(  \lambda_{1},\ldots,\lambda_{k}\right)  \in\mathbb{R}^{dk}$
with $\left\Vert \lambda\right\Vert =1.$ Define $t^{\ast}=t-\min\left(
1,\tau_{0}\right)  $.

For each $n$ and $\tau_{0}$ define the mapping $q\left(  t,i\right)
:\mathbb{N}_{+}^{2}\rightarrow\mathbb{N}_{+}$ as $q\left(  i,t\right)  \equiv
t^{\ast}n+i$ \textbf{ }and note that $q\left(  i,t\right)  $ is invertible, in
particular for each $q\in\left\{  1,...,k_{n}\right\}  $ there is a unique
pair $t,i$ such that $q\left(  i,t\right)  =q.$ We often use shorthand
notation $q$ for $q\left(  i,t\right)  $. Let
\begin{equation}
\ddot{\psi}_{q\left(  i,t\right)  }\equiv\sum_{j=1}^{k}\lambda_{j}^{\prime
}\left(  \Delta\tilde{\psi}_{it}\left(  r_{j}\right)  -E\left[  \Delta
\tilde{\psi}_{it}\left(  r_{j}\right)  |\mathcal{G}_{\tau n,t^{\ast}%
n+i-1}\right]  \right)  \label{D_FCLT2}%
\end{equation}
where
\begin{equation}
\Delta\tilde{\psi}_{it}\left(  r_{j}\right)  =\tilde{\psi}_{it}\left(
r_{j}\right)  -\tilde{\psi}_{it}\left(  r_{j-1}\right)  ;\text{ }\Delta
\tilde{\psi}_{it}\left(  r_{1}\right)  =\tilde{\psi}_{it}\left(  r_{1}\right)
. \label{D_FCLT3}%
\end{equation}
Note that $\Delta\tilde{\psi}_{it}\left(  r_{j}\right)  =\left(  \Delta
\tilde{\psi}_{it}^{y}\left(  r_{j}\right)  ,\Delta\tilde{\psi}_{t}^{\nu
}\left(  r_{j}\right)  \right)  ^{\prime}$with
\begin{equation}
\Delta\tilde{\psi}_{it}^{y}\left(  r_{j}\right)  =\left\{
\begin{array}
[c]{cc}%
\tilde{\psi}_{it}^{y} & \text{for }j=1\\
0 & \text{otherwise}%
\end{array}
\right.  \label{D_FCLT4}%
\end{equation}
and
\begin{equation}
\Delta\tilde{\psi}_{it}^{\nu}\left(  r_{j}\right)  =\left\{
\begin{array}
[c]{cc}%
\tilde{\psi}_{\tau,t}^{\nu}\left(  r_{j}\right)  & \text{if }\left[  \tau
r_{j-1}\right]  <t-\tau_{0}\leq\left[  \tau r_{j}\right]  \text{ and }i=1\\
0 & \text{otherwise}%
\end{array}
\right.  . \label{D_FCLT5}%
\end{equation}
Using this notation and noting that $\sum_{t=\min(1,\tau_{0}+1)}^{\max
(T,\tau_{0}+\tau)}\sum_{i=1}^{n}\ddot{\psi}_{q\left(  i,t\right)  }=\sum
_{q=1}^{k_{n}}\ddot{\psi}_{q}$, we write%
\begin{align}
&  \lambda_{1}^{\prime}X_{n\tau}\left(  r_{1}\right)  +\sum_{j=2}^{k}%
\lambda_{j}^{\prime}\Delta X_{n\tau}\left(  r_{j}\right) \nonumber\\
&  =\sum_{q=1}^{k_{n}}\ddot{\psi}_{q}+\sum_{t=\min(1,\tau_{0}+1)}^{\max
(T,\tau_{0}+\tau)}\sum_{i=1}^{n}\sum_{j=1}^{k}\lambda_{j}^{\prime}E\left[
\left.  \Delta\tilde{\psi}_{it}\left(  r_{j}\right)  \right\vert
\mathcal{G}_{\tau n,t^{\ast}n+i-1}\right]  \label{D_FCLT5_1}%
\end{align}
First analyze the term $\sum_{q=1}^{k_{n}}\ddot{\psi}_{q}$. Note that
$\psi_{n,it}^{y}$ is measurable with respect to $\mathcal{G}_{\tau n,t^{\ast
}n+i}$ by construction. Note that by (\ref{D_FCLT2}), (\ref{D_FCLT4}) and
(\ref{D_FCLT5}) the individual components of $\ddot{\psi}_{q}$ are either $0$
or equal to $\tilde{\psi}_{it}\left(  r_{j}\right)  -E\left[  \tilde{\psi
}_{it}\left(  r_{j}\right)  |\mathcal{G}_{\tau n,t^{\ast}n+i-1}\right]  $
respectively. This implies that $\ddot{\psi}_{q}$ is measurable with respect
to $\mathcal{G}_{\tau n,q},$ noting in particular that $E\left[  \tilde{\psi
}_{it}\left(  r_{j}\right)  |\mathcal{G}_{\tau n,t^{\ast}n+i-1}\right]  $ is
measurable w.r.t $\mathcal{G}_{\tau n,t^{\ast}n+i-1}$ by the properties of
conditional expectations and $\mathcal{G}_{\tau n,t^{\ast}n+i-1}%
\subset\mathcal{G}_{\tau n,q}$. By construction, $E\left[  \ddot{\psi}%
_{q}|\mathcal{G}_{\tau n,q-1}\right]  =0$. This establishes that for
$S_{nq}=\sum_{s=1}^{q}\ddot{\psi}_{s}$,
\[
\left\{  S_{nq},\mathcal{G}_{\tau n,q},1\leq q\leq k_{n},n\geq1\right\}
\]
is a mean zero martingale array with differences $\ddot{\psi}_{q}$.

To establish finite dimensional convergence we follow Kuersteiner and Prucha
(2013) in the proof of their Theorem 2. To establish the limiting distribution
of $\sum_{q=1}^{k_{n}}\ddot{\psi}_{q}$ we check that%
\begin{equation}
\sum_{q=1}^{k_{n}}E\left[  \left\vert \ddot{\psi}_{q}\right\vert ^{2+\delta
}\right]  \rightarrow0, \label{KP1}%
\end{equation}%
\begin{equation}
\sum_{q=1}^{k_{n}}\ddot{\psi}_{q}^{2}\overset{p}{\rightarrow}\sum
_{t\in\left\{  1,..,T\right\}  }\lambda_{1,y}^{\prime}\Omega_{yt}\lambda
_{1,y}+\sum_{j=1}^{k}\lambda_{j,\nu}^{\prime}\Omega_{\nu}\left(  r_{j}%
-r_{j-1}\right)  \lambda_{j,\nu}, \label{KP2}%
\end{equation}
and
\begin{equation}
\sup_{n}E\left[  \left(  \sum_{q=1}^{k_{n}}E\left[  \left.  \ddot{\psi}%
_{q}^{2}\right\vert \mathcal{G}_{\tau n,q-1}\right]  \right)  ^{1+\delta
/2}\right]  <\infty, \label{KP3}%
\end{equation}
which are adapted to the current setting from Conditions (A.26), (A.27) and
(A.28) in Kuersteiner and Prucha (2013). (These conditions in turn are related
to conditions of Hall and Heyde (1980) and are shown by Kuersteiner and Prucha
(2013) to be sufficient for their Theorem 1.) We check these conditions in
Sections \ref{sec-proof-of-KP1}, \ref{sec-proof-of-KP2}, and
\ref{sec-proof-of-KP3} later, which establishes that (\ref{KP1}), (\ref{KP2})
and (\ref{KP3}) hold and thus establishes the CLT for $\sum_{q=1}^{k_{n}}%
\ddot{\psi}_{q}$. In Section \ref{D_FCLT5_1-second}, we also show that the
second term in (\ref{D_FCLT5_1}) can be neglected. We therefore have
\begin{equation}
\lambda_{1}^{\prime}X_{n\tau}\left(  r_{1}\right)  +\sum_{j=2}^{k}\lambda
_{j}^{\prime}\Delta X_{n\tau}\left(  r_{j}\right)  =\sum_{q=1}^{k_{n}}%
\ddot{\psi}_{q}+o_{p}\left(  1\right)  . \label{DCLT_D25}%
\end{equation}

We have shown that the conditions of Theorem 1 of Kuersteiner and Prucha
(2013) hold by establishing (\ref{KP1}), (\ref{KP2}), (\ref{KP3}) and
(\ref{DCLT_D25}). Applying the Cramer-Wold theorem to the vector
\[
Y_{nt}=\left(  X_{n\tau}\left(  r_{1}\right)  ^{\prime},\Delta X_{n\tau
}\left(  r_{2}\right)  ^{\prime}...,\Delta X_{n\tau}\left(  r_{k}\right)
^{\prime}\right)  ^{\prime},
\]
and Theorem 1 in Kuersteiner and Prucha (2013), we obtain that for all fixed
$r_{1},..,r_{k}$ and using the convention that $r_{0}=0,$
\begin{equation}
E\left[  \exp\left(  i\lambda^{\prime}Y_{nt}\right)  \right]  \rightarrow
E\left[  \exp\left(  -\frac{1}{2}\left(  \sum_{t\in\left\{  1,..,T\right\}
}\lambda_{1,y}^{\prime}\Omega_{yt}\lambda_{1,y}+\sum_{j=1}^{k}\lambda_{j,\nu
}^{\prime}\left(  \Omega_{\nu}\left(  r_{j}\right)  -\Omega_{\nu}\left(
r_{j-1}\right)  \right)  \lambda_{j,\nu}\right)  \right)  \right]  .
\label{Fidi_CharFun_Convergence}%
\end{equation}
When $\Omega_{\nu}\left(  r\right)  =r\Omega_{\nu}$ for all $r\in\left[
0,1\right]  $ and some $\Omega_{\nu}$ positive definite and measurable w.r.t
$\mathcal{C}$ this result simplifies to
\[
\sum_{j=1}^{k}\lambda_{j,\nu}^{\prime}\left(  \Omega_{\nu}\left(
r_{j}\right)  -\Omega_{\nu}\left(  r_{j-1}\right)  \right)  \lambda_{j,\nu
}=\sum_{j=1}^{k}\lambda_{j,\nu}^{\prime}\Omega_{\nu}\lambda_{j,\nu}\left(
r_{j}-r_{j-1}\right)  .
\]

\subparagraph{(b) Tightness}

The second step in establishing the functional CLT involves proving tightness
of the sequence $\lambda^{\prime}X_{n\tau}\left(  r\right)  .$ By Lemma A.3 of
Phillips and Durlauf (1986) and Proposition 4.1 of Wooldridge and White
(1988), see also Billingsley (1968, p.41), it is enough to establish tightness
componentwise. This is implied by establishing tightness for $\lambda^{\prime
}X_{n\tau}\left(  r\right)  $ for all $\lambda\in\mathbb{R}^{d}$ such that
$\lambda^{\prime}\lambda=1$. In the following we make use of Theorem 8.3 in
Billingsley (1968), see Appendix \ref{Aux_Results} for a statement of Theorem
8.3. The fact that tightness in our case can be established using Criterion
(8.5) in Billingsley (1968, Theorem 8.3) follows from Billingsley (1968,
Theorem 15.5) and the proof of Billingsley (1968, Theorem 8.3).

Recall the definition\textbf{ }%
\[
X_{n\tau,y}\left(  r\right)  =\frac{1}{\sqrt{n}}\sum_{t=1}^{T}\sum_{i=1}%
^{n}\psi_{n,it}^{y},\text{ }X_{n\tau,\nu}\left(  r\right)  =\frac{1}%
{\sqrt{\tau}}\sum_{t=\tau_{0}+1}^{\tau_{0}+\left[  \tau r\right]  }\psi
_{\tau,t}^{\nu}.
\]
and note that\textbf{ }%
\[
\left\vert \lambda^{\prime}\left(  X_{n\tau}\left(  s\right)  -X_{n\tau
}\left(  t\right)  \right)  \right\vert \leq\left\vert \lambda_{y}^{\prime
}\left(  X_{n\tau,y}\left(  s\right)  -X_{n\tau,y}\left(  t\right)  \right)
\right\vert +\left\vert \lambda_{\nu}^{\prime}\left(  X_{n\tau,\nu}\left(
s\right)  -X_{n\tau,\nu}\left(  t\right)  \right)  \right\vert
\]
where $\left\vert \lambda_{y}^{\prime}\left(  X_{n\tau,y}\left(  s\right)
-X_{n\tau,y}\left(  t\right)  \right)  \right\vert =0$ uniformly in
$t,s\in\left[  0,1\right]  $ because of the initial condition $X_{n\tau
}\left(  0\right)  $ given in (\ref{CLT_Process}) and the fact that
$X_{n\tau,y}\left(  t\right)  $ is constant as a function of $t.$ Thus, to
show tightness, we only need to consider $\lambda_{\nu}^{\prime}X_{n\tau,\nu
}\left(  t\right)  $.

Billingsley (1968, p. 58-59) constructs a continuous approximation to
$\lambda^{\prime}X_{n\tau}\left(  r\right)  .$ We denote it as $\lambda_{\nu
}^{\prime}X_{n\tau,\nu}^{c}\left(  r\right)  $ and define it analogously to
Billingsley (1968, eq 8.15) as\textbf{ }%
\[
\lambda_{\nu}^{\prime}X_{n\tau,\nu}^{c}\left(  r\right)  =\frac{1}{\sqrt{\tau
}}\sum_{t=\tau_{0}+1}^{\tau_{0}+\left[  \tau r\right]  }\lambda_{\nu}^{\prime
}\psi_{\tau,t}^{\nu}+\frac{\left(  \tau r-\left[  \tau r\right]  \right)
}{\sqrt{\tau}}\lambda_{\nu}^{\prime}\psi_{\tau,\left[  \tau r\right]  +1}%
^{\nu}.
\]
First, note that for $\varepsilon>0,$ $\sup_{\tau,r}\left\vert \tau r-\left[
\tau r\right]  \right\vert \leq1$ such that%
\[
P\left(  \max_{r\in\left[  0,1\right]  }\left\vert \frac{\left(  \tau
r-\left[  \tau r\right]  \right)  }{\sqrt{\tau}}\lambda_{\nu}^{\prime}%
\psi_{\tau,\left[  \tau r\right]  +1}^{\nu}\right\vert >\varepsilon\right)
\leq\frac{\sup_{t}E\left[  \left\Vert \psi_{\tau,t}^{\nu}\right\Vert \right]
}{\varepsilon\sqrt{\tau}}\rightarrow0
\]
and consequently, $X_{n\tau,\nu}^{c}\left(  r\right)  =X_{n\tau,\nu}\left(
r\right)  +o_{p}\left(  1\right)  $ uniformly in $r\in\left[  0,1\right]  $.
To establish tightness we need to establish that the `modulus of
continuity'\textbf{ }%
\begin{equation}
\omega\left(  X_{n\tau}^{c},\delta\right)  =\sup_{\left\vert t-s\right\vert
<\delta}\left\vert \lambda^{\prime}\left(  X_{n\tau}^{c}\left(  s\right)
-X_{n\tau}^{c}\left(  t\right)  \right)  \right\vert \label{DCLT30}%
\end{equation}
where $t,s\in\left[  0,1\right]  $ satisfies
\[
\lim_{\delta\rightarrow0}\underset{n,\tau}{\lim\sup}P\left(  \omega\left(
X_{n\tau}^{c},\delta\right)  \geq\varepsilon\right)  =0.
\]
Let $S_{\tau,k}=\frac{1}{\sqrt{\tau}}\sum_{t=\tau_{0}+1}^{\tau_{0}+k}%
\lambda_{\nu}^{\prime}\psi_{\tau,t}^{\nu}.$ By the inequalities in Billingsley
(1968, p.59)\ it follows that for $k$ such that $k/\tau<t<\left(  k+1\right)
/\tau$\textbf{ }%
\[
\sup_{t\leq s\leq t+\delta/2}\left\vert \lambda^{\prime}\left(  X_{n\tau}%
^{c}\left(  s\right)  -X_{n\tau}^{c}\left(  t\right)  \right)  \right\vert
\leq2\max_{0\leq i\leq\tau}\left\vert S_{\tau,k+i}-S_{\tau,k}\right\vert .
\]
By Billingsley (1968, Theorem 8.4) and the comments in Billingsley (1968, p.
59) to establish tightness it is enough to show that, translated to our
notation, for each positive $\epsilon$ there exists a positive $c>1$ and an
integer $\tau_{0}$ such that for $\tau\geq\tau_{0}$ and for all $k$ it follows
that%
\begin{equation}
P\left(  \max_{s\leq\tau}\left\vert S_{\tau,k+s}-S_{\tau,k}\right\vert
>c\right)  \leq\frac{\epsilon}{c^{2}}. \label{DCLT40-a}%
\end{equation}
We note that we normalized the scaling factor $\sigma=1$ relative to the
expression in Billingsley (1968), see also Billingsley (1968, p.58). Using
properties of $\lim\sup$ and $\lim,$ Condition \ref{DCLT40-a} is implied by
Condition \ref{DCLT40} which states that \textbf{ }%
\begin{equation}
\lim_{c\rightarrow\infty}\limsup_{\tau\rightarrow\infty}c^{2}P\left(
\max_{s\leq\tau}\left\vert S_{\tau,k+s}-S_{\tau,k}\right\vert >c\right)  =0
\label{DCLT40}%
\end{equation}
holds for all $k\in\mathbb{N}$. A proof of (\ref{DCLT40}) is given in Section
\ref{DCLT40-proof}.

\subparagraph{(c) Characterization of the limit distribution}

We now identify the limiting distribution using the technique of Rootzen
(1983). Tightness together with finite dimensional convergence in distribution
in (\ref{Fidi_CharFun_Convergence}), Condition \ref{Omega_z} and the fact that
the partition $r_{1},...,r_{k}$ is arbitrary implies that for $\lambda
\in\mathbb{R}^{d}$ with $\lambda=\left(  \lambda_{y}^{\prime},\lambda_{\nu
}^{\prime}\right)  ^{\prime}$
\begin{equation}
E\left[  \exp\left(  i\lambda^{\prime}X_{n\tau}\left(  r\right)  \right)
\right]  \rightarrow E\left[  \exp\left(  -\frac{1}{2}\left(  \lambda
_{y}^{\prime}\Omega_{y}\lambda_{y}+\lambda_{\nu}^{\prime}\Omega_{\nu}\left(
r\right)  \lambda_{\nu}\right)  \right)  \right]  \label{DCLT41}%
\end{equation}
with $\Omega_{y}=\sum_{t\in\left\{  1,..,T\right\}  }\Omega_{yt}.$ The final
step of the argument consists in representing the limiting process in terms of
stochastic integrals over isonormal Gaussian processes.\footnote{We are
grateful to an anonymous referee for suggesting a simplified method of proof
for this step.} By the law of iterated expectations and the fact that by
Assumptions (\ref{Omega_z}) and (\ref{Omega_y}) the matrices $\Omega_{y}$ and
$\Omega_{\nu}\left(  r\right)  $ are $\mathcal{C}$-measurable it follows that
\begin{align}
&  E\left[  \exp\left(  -\frac{1}{2}\left(  \lambda_{y}^{\prime}\Omega
_{y}\lambda_{y}+\lambda_{\nu}^{\prime}\Omega_{\nu}\left(  r\right)
\lambda_{\nu}\right)  \right)  \right] \label{DCLT41a}\\
&  =E\left[  E\left[  \left.  \exp\left(  -\frac{1}{2}\lambda_{y}^{\prime
}\Omega_{y}\lambda_{y}\right)  \right\vert \mathcal{C}\right]  E\left[
\left.  \exp\left(  -\frac{1}{2}\lambda_{\nu}^{\prime}\Omega_{\nu}\left(
r\right)  \lambda_{\nu}\right)  \right\vert \mathcal{C}\right]  \right]
.\nonumber
\end{align}
Let $W\left(  r\right)  =\left(  W_{y}\left(  r\right)  ,W_{\nu}\left(
r\right)  \right)  $ be a vector of mutually independent standard Brownian
motion processes in $\mathbb{R}^{d},$ independent of any $\mathcal{C}%
$-measurable random variable. We note that the first term on the RHS of
(\ref{DCLT41a}) satisfies%
\begin{equation}
E\left[  \left.  \exp\left(  -\frac{1}{2}\lambda_{y}^{\prime}\Omega_{y}%
\lambda_{y}\right)  \right\vert \mathcal{C}\right]  =E\left[  \left.
\exp\left(  i\lambda_{y}^{\prime}\Omega_{y}^{1/2}W_{y}\left(  1\right)
\right)  \right\vert \mathcal{C}\right]  \label{DCLT41b}%
\end{equation}
by the properties of the standard Gaussian characteristic function. To analyze
the second conditional expectation $E\left[  \exp\left(  -\frac{1}{2}%
\lambda_{\nu}^{\prime}\Omega_{\nu}\left(  r\right)  \lambda_{\nu}\right)
|\mathcal{C}\right]  $ note that by Kallenberg (1997, p.210) it follows from
the isometry of the stochastic integral that there exists a standard Brownian
process $W_{\nu}\left(  r\right)  $ such that
\[
E\left[  \left.  \left(  \int_{0}^{r}\lambda_{\nu}^{\prime}\left(  \dot
{\Omega}_{\nu}\left(  t\right)  \right)  ^{1/2}dW_{\nu}\left(  t\right)
\right)  ^{2}\right\vert \mathcal{C}\right]  =\int_{0}^{r}\lambda_{\nu
}^{\prime}\left(  \dot{\Omega}_{\nu}\left(  t\right)  \right)  \lambda_{\nu
}dt=\lambda_{\nu}^{\prime}\Omega_{\nu}\left(  r\right)  \lambda_{\nu}.
\]
By linearity of the stochastic integral, conditional on $\mathcal{C}$,
$\int_{0}^{r}\lambda_{\nu}^{\prime}\left(  \dot{\Omega}_{\nu}\left(  t\right)
\right)  ^{1/2}dW_{\nu}\left(  t\right)  $ is a centered Gaussian process with
conditional (on $\mathcal{C}$) characteristic function
\begin{equation}
E\left[  \left.  \exp\left(  -\frac{1}{2}\lambda_{\nu}^{\prime}\Omega_{\nu
}\left(  r\right)  \lambda_{\nu}\right)  \right\vert \mathcal{C}\right]
=E\left[  \left.  \exp\left(  i\int_{0}^{r}\lambda_{\nu}^{\prime}\left(
\dot{\Omega}_{\nu}\left(  t\right)  \right)  ^{1/2}dW_{\nu}\left(  t\right)
\right)  \right\vert \mathcal{C}\right]  . \label{DCLT41c}%
\end{equation}
\textbf{ }Combining (\ref{DCLT41a}), (\ref{DCLT41b}) and (\ref{DCLT41c})
gives
\begin{align}
&  E\left[  \exp\left(  -\frac{1}{2}\left(  \lambda_{y}^{\prime}\Omega
_{y}\lambda_{y}+\lambda_{\nu}^{\prime}\Omega_{\nu}\left(  r\right)
\lambda_{\nu}\right)  \right)  \right] \label{DCLT41d}\\
&  =E\left[  E\left[  \left.  \exp\left(  i\lambda_{y}^{\prime}\Omega
_{y}^{1/2}W_{y}\left(  1\right)  \right)  \right\vert \mathcal{C}\right]
E\left[  \left.  \exp\left(  i\int_{0}^{r}\lambda_{\nu}^{\prime}\left(
\dot{\Omega}_{\nu}\left(  t\right)  \right)  ^{1/2}dW_{\nu}\left(  t\right)
\right)  \right\vert \mathcal{C}\right]  \right] \nonumber
\end{align}
Since by construction, $\left(  W_{y}\left(  r\right)  ,W_{\nu}\left(
r\right)  \right)  $ are mutually independent conditional on $\mathcal{C}$ it
follows that the RHS of (\ref{DCLT41d}) can be written as
\begin{align*}
&  E\left[  E\left[  \left.  \exp\left(  i\lambda_{y}^{\prime}\Omega_{y}%
^{1/2}W_{y}\left(  1\right)  \right)  \right\vert \mathcal{C}\right]  E\left[
\left.  \exp\left(  i\int_{0}^{r}\lambda_{\nu}^{\prime}\left(  \dot{\Omega
}_{\nu}\left(  t\right)  \right)  ^{1/2}dW_{\nu}\left(  t\right)  \right)
\right\vert \mathcal{C}\right]  \right] \\
&  =E\left[  E\left[  \left.  \exp\left(  i\lambda_{y}^{\prime}\Omega
_{y}^{1/2}W_{y}\left(  1\right)  +i\int_{0}^{r}\lambda_{\nu}^{\prime}\left(
\dot{\Omega}_{\nu}\left(  t\right)  \right)  ^{1/2}dW_{\nu}\left(  t\right)
\right)  \right\vert \mathcal{C}\right]  \right] \\
&  =E\left[  \exp\left(  i\lambda_{y}^{\prime}\Omega_{y}^{1/2}W_{y}\left(
1\right)  +i\int_{0}^{r}\lambda_{\nu}^{\prime}\left(  \dot{\Omega}_{\nu
}\left(  t\right)  \right)  ^{1/2}dW_{\nu}\left(  t\right)  \right)  \right]
,
\end{align*}
where the last equality follows from the law of iterated expectations.

\subsection{Proof of (\ref{KP1})\label{sec-proof-of-KP1}}

In this section we show that $\sum_{q=1}^{k_{n}}E\left[  \left\vert \ddot
{\psi}_{q}\right\vert ^{2+\delta}\right]  \rightarrow0$ where $\ddot{\psi}%
_{q}$ is defined in (\ref{D_FCLT2}). Note that, for any fixed $n$ and given
$q$, and thus for a corresponding unique vector $\left(  t,i\right)  $, there
exists a unique $j\in\left\{  1,\ldots,k\right\}  $ such that $\tau
_{0}+\left[  \tau r_{j-1}\right]  <t\leq\tau_{0}+\left[  \tau r_{j}\right]  .$
Then,
\begin{align*}
\ddot{\psi}_{q\left(  i,t\right)  }  &  =\sum_{l=1}^{k}\lambda_{l}^{\prime
}\left(  \Delta\tilde{\psi}_{it}\left(  r_{l}\right)  -E\left[  \left.
\Delta\tilde{\psi}_{it}\left(  r_{l}\right)  \right\vert \mathcal{G}_{\tau
n,t^{\ast}n+i-1}\right]  \right) \\
&  =\lambda_{1,y}^{\prime}\left(  \tilde{\psi}_{it}^{y}-E\left[  \left.
\tilde{\psi}_{it}^{y}\right\vert \mathcal{G}_{\tau n,t^{\ast}n+i-1}\right]
\right)  1\left\{  j=1\right\} \\
&  +\lambda_{j,\nu}^{\prime}\left(  \tilde{\psi}_{\tau,t}^{\nu}\left(
r_{j}\right)  -E\left[  \left.  \tilde{\psi}_{\tau,t}^{\nu}\left(
r_{j}\right)  \right\vert \mathcal{G}_{\tau n,t^{\ast}n+i-1}\right]  \right)
1\left\{  \left[  \tau r_{j-1}\right]  <t-\tau_{0}\leq\left[  \tau
r_{j}\right]  \right\}  1\left\{  \text{ }i=1\right\}  ,
\end{align*}
where all remaining terms in the sum are zero because of by (\ref{D_FCLT2}),
(\ref{D_FCLT4}) and (\ref{D_FCLT5}). For the subsequent inequalities, fix
$q\in\left\{  1,...,k_{n}\right\}  $ (and the corresponding $\left(
t,i\right)  $ and $j$) arbitrarily. Introduce the shorthand notation
$1_{j}=1\left\{  j=1\right\}  $ and $1_{ij}=1\left\{  \left[  \tau
r_{j-1}\right]  <t-\tau_{0}\leq\left[  \tau r_{j}\right]  \right\}  1\left\{
\text{ }i=1\right\}  $.

First, note that for $\delta\geq0$, and by Jensen's inequality applied to the
empirical measure $\frac{1}{4}\sum_{i=1}^{4}x_{i}$ we have that%
\begin{align*}
&  \left\vert \ddot{\psi}_{q}\right\vert ^{2+\delta}\\
&  =4^{2+\delta}\left\vert \frac{1}{4}\lambda_{1,y}^{\prime}\left(
\tilde{\psi}_{it}^{y}-E\left[  \tilde{\psi}_{it}^{y}|\mathcal{G}_{\tau
n,t^{\ast}n+i-1}\right]  \right)  1_{j}+\frac{1}{4}\lambda_{j,\nu}^{\prime
}\left(  \tilde{\psi}_{\tau,t}^{\nu}\left(  r_{j}\right)  -E\left[
\tilde{\psi}_{\tau,t}^{\nu}\left(  r_{j}\right)  |\mathcal{G}_{\tau n,t^{\ast
}n+i-1}\right]  \right)  1_{ij}\right\vert ^{2+\delta}\\
&  \leq4^{2+\delta}\left(  \frac{1}{4}\left\Vert \lambda_{1,y}\right\Vert
^{2+\delta}\left\Vert \tilde{\psi}_{it}^{y}\right\Vert ^{2+\delta}+\frac{1}%
{4}\left\Vert \lambda_{1,y}\right\Vert ^{2+\delta}\left\Vert E\left[
\tilde{\psi}_{it}^{y}|\mathcal{G}_{\tau n,t^{\ast}n+i-1}\right]  \right\Vert
^{2+\delta}\right)  1_{j}\\
&  +4^{2+\delta}\left(  \frac{1}{4}\left\Vert \lambda_{j,\nu}\right\Vert
^{2+\delta}\left\Vert \tilde{\psi}_{\tau,t}^{\nu}\left(  r_{j}\right)
\right\Vert ^{2+\delta}+\frac{1}{4}\left\Vert \lambda_{j,\nu}\right\Vert
^{2+\delta}\left\Vert E\left[  \tilde{\psi}_{\tau,t}^{\nu}\left(
r_{j}\right)  |\mathcal{G}_{\tau n,t^{\ast}n+i-1}\right]  \right\Vert
^{2+\delta}\right)  1_{ij}\\
&  =2^{2+2\delta}\left(  \left\Vert \lambda_{1,y}\right\Vert ^{2+\delta
}\left\Vert \tilde{\psi}_{it}^{y}\right\Vert ^{2+\delta}+\left\Vert
\lambda_{1,y}\right\Vert ^{2+\delta}\left\Vert E\left[  \tilde{\psi}_{it}%
^{y}|\mathcal{G}_{\tau n,t^{\ast}n+i-1}\right]  \right\Vert ^{2+\delta
}\right)  1_{j}\\
&  +2^{2+2\delta}\left(  \left\Vert \lambda_{j,\nu}\right\Vert ^{2+\delta
}\left\Vert \tilde{\psi}_{\tau,t}^{\nu}\left(  r_{j}\right)  \right\Vert
^{2+\delta}+\left\Vert \lambda_{j,\nu}\right\Vert ^{2+\delta}\left\Vert
E\left[  \tilde{\psi}_{\tau,t}^{\nu}\left(  r_{j}\right)  |\mathcal{G}_{\tau
n,t^{\ast}n+i-1}\right]  \right\Vert ^{2+\delta}\right)  1_{ij}.
\end{align*}
We further use the definitions in (\ref{psi_tilde_z}) such that by Jensen's
inequality and for $i=1$ and $t\in\left[  \tau_{0}+1,\tau_{0}+\tau\right]  $
\begin{align*}
&  \left\Vert \tilde{\psi}_{\tau,t}^{\nu}\left(  r_{j}\right)  \right\Vert
^{2+\delta}+\left\Vert E\left[  \tilde{\psi}_{\tau,t}^{\nu}\left(
r_{j}\right)  |\mathcal{G}_{\tau n,t^{\ast}n+i-1}\right]  \right\Vert
^{2+\delta}\\
&  \leq\frac{1}{\tau^{1+\delta/2}}\left(  \left\Vert \psi_{\tau,t}^{\nu
}\right\Vert ^{2+\delta}+\left(  E\left[  \left\Vert \psi_{\tau,t}^{\nu
}\right\Vert |\mathcal{G}_{\tau n,t^{\ast}n+i-1}\right]  \right)  ^{2+\delta
}\right) \\
&  \leq\frac{1}{\tau^{1+\delta/2}}\left(  \left\Vert \psi_{\tau,t}^{\nu
}\right\Vert ^{2+\delta}+E\left[  \left\Vert \psi_{\tau,t}^{\nu}\right\Vert
^{2+\delta}|\mathcal{G}_{\tau n,t^{\ast}n+i-1}\right]  \right)
\end{align*}
while for $i>1$ or $t\notin\left[  \tau_{0}+1,\tau_{0}+\tau\right]  ,$
\[
\left\Vert \tilde{\psi}_{it}^{\nu}\right\Vert =0.
\]
Similarly, for $t\in\left[  1,...,T\right]  $
\begin{align*}
&  \left\Vert \tilde{\psi}_{it}^{y}\right\Vert ^{2+\delta}+\left\Vert E\left[
\tilde{\psi}_{it}^{y}|\mathcal{G}_{\tau n,t^{\ast}n+i-1}\right]  \right\Vert
^{2+\delta}\\
&  \leq\frac{1}{n^{1+\delta/2}}\left(  \left\Vert \psi_{it}^{y}\right\Vert
^{2+\delta}+E\left[  \left\Vert \psi_{it}^{y}\right\Vert ^{2+\delta
}|\mathcal{G}_{\tau n,t^{\ast}n+i-1}\right]  \right)
\end{align*}
while for $t\notin\left[  1,...,T\right]  $
\[
\left\Vert \tilde{\psi}_{it}^{y}\right\Vert =0.
\]
Noting that $\left\Vert \lambda_{j,y}\right\Vert \leq1$ and $\left\Vert
\lambda_{j,\nu}\right\Vert <1$,%
\begin{align}
E\left[  \left.  \left\vert \ddot{\psi}_{q}\right\vert ^{2+\delta}\right\vert
\mathcal{G}_{\tau n,q-1}\right]   &  \leq\frac{2^{3+2\delta}1\left\{
i=1,t\in\left[  \tau_{0}+1,\tau_{0}+\tau\right]  \right\}  }{\tau^{1+\delta
/2}}E\left[  \left.  \left\Vert \psi_{\tau,t}^{\nu}\right\Vert ^{2+\delta
}\right\vert \mathcal{G}_{\tau n,t^{\ast}n+i-1}\right] \nonumber\\
&  +\frac{2^{3+2\delta}1\left\{  t\in\left[  1,...,T\right]  \right\}
}{n^{1+\delta/2}}E\left[  \left.  \left\Vert \psi_{it}^{y}\right\Vert
^{2+\delta}\right\vert \mathcal{G}_{\tau n,t^{\ast}n+i-1}\right]  ,
\label{DCLT_D11}%
\end{align}
where the inequality in (\ref{DCLT_D11}) holds for $\delta\geq0$.

To show that (\ref{KP1}) holds note that from (\ref{DCLT_D11}), the law of
iterated expectations and Condition \ref{Diag_CLT_Cond} it follows that for
some constant $C<\infty,$
\begin{align*}
\sum_{q=1}^{k_{n}}E\left[  \left\vert \ddot{\psi}_{q}\right\vert ^{2+\delta
}\right]   &  =\sum_{q=1}^{k_{n}}E\left[  E\left[  \left\vert \ddot{\psi}%
_{q}\right\vert ^{2+\delta}|\mathcal{G}_{\tau n,q-1}\right]  \right] \\
&  \leq\frac{2^{3+2\delta}}{\tau^{1+\delta/2}}\sum_{t=\tau_{0}+1}^{\tau
_{0}+\tau}E\left[  \left\Vert \psi_{\tau,t}^{\nu}\right\Vert ^{2+\delta
}\right] \\
&  +\frac{2^{3+2\delta}}{n^{1+\delta/2}}\sum_{t=1}^{T}\sum_{i=1}^{n}E\left[
\left\Vert \psi_{it}^{y}\right\Vert ^{2+\delta}\right] \\
&  \leq\frac{2^{3+2\delta}\tau C}{\tau^{1+\delta/2}}+\frac{2^{3+2\delta}%
nTC}{n^{1+\delta/2}}=\frac{2^{3+2\delta}C}{\tau^{\delta/2}}+\frac
{2^{3+2\delta}TC}{n^{\delta/2}}\rightarrow0
\end{align*}
because $2^{3+2\delta}C$ and $T$ are fixed as $\tau,n\rightarrow\infty$.

\subsection{Proof of (\ref{KP2})\label{sec-proof-of-KP2}}

Consider the probability limit of $\sum_{q=1}^{k_{n}}\ddot{\psi}_{q}^{2}.$ We
have
\begin{align}
&  \sum_{q=1}^{k_{n}}\ddot{\psi}_{q}^{2}\nonumber\\
&  =\frac{1}{\tau}\sum_{j=1}^{k}\sum_{t=\tau_{0}+1}^{\tau_{0}+\tau}\left(
\lambda_{j,\nu}^{\prime}\left(  \psi_{\tau,t}^{\nu}-E\left[  \psi_{\tau
,t}^{\nu}|\mathcal{G}_{\tau n,t^{\ast}n}\right]  \right)  \right)  ^{2}%
1_{ij}\label{DCLT_D13a}\\
&  +\frac{2}{\sqrt{\tau n}}\sum_{\substack{t\in\left(  \tau_{0}+1,...,\tau
_{0}+\tau\right)  \\\cap\left\{  1,..,T\right\}  }}\sum_{j=1}^{k}%
\lambda_{j,\nu}^{\prime}\left(  \psi_{\tau,t}^{\nu}-E\left[  \psi_{\tau
,t}^{\nu}|\mathcal{G}_{\tau n,s^{\ast}n}\right]  \right)  \left(  \psi
_{1t}^{y}-E\left[  \psi_{1t}^{y}|\mathcal{G}_{\tau n,t^{\ast}n}\right]
\right)  ^{\prime}\lambda_{j,y}1_{1j}1_{j}\label{DCLT_D13b}\\
&  +\frac{1}{n}\left(  \sum_{t\in\left\{  1,..,T\right\}  }\sum_{i=1}%
^{n}\left(  \lambda_{1,y}^{\prime}\left(  \psi_{n,it}^{y}-E\left[  \psi
_{n,it}^{y}|\mathcal{G}_{\tau n,t^{\ast}n+i-1}\right]  \right)  \right)
\right)  ^{2}. \label{DCLT_D13c}%
\end{align}
Note that\textbf{ }%
\begin{align*}
\left(  \lambda_{j,\nu}^{\prime}\left(  \psi_{\tau,t}^{\nu}-E\left[
\psi_{\tau,t}^{\nu}|\mathcal{G}_{\tau n,t^{\ast}n}\right]  \right)  \right)
^{2}  &  \leq\left(  \lambda_{j,\nu}^{\prime}\psi_{\tau,t}^{\nu}\right)
^{2}+2\left\Vert \psi_{\tau,t}^{\nu}\right\Vert \left\Vert \lambda_{j,\nu
}\right\Vert ^{2}\left\Vert E\left[  \psi_{\tau,t}^{\nu}|\mathcal{G}_{\tau
n,t^{\ast}n}\right]  \right\Vert \\
&  +\left\Vert \lambda_{j,\nu}\right\Vert ^{2}\left\Vert E\left[  \psi
_{\tau,t}^{\nu}|\mathcal{G}_{\tau n,t^{\ast}n}\right]  \right\Vert .
\end{align*}
Also note that $E\left[  \psi_{\tau,t}^{\nu}|\mathcal{G}_{\tau n,t^{\ast}%
n}\right]  =0$ when $t>T$ implies that
\begin{align}
&  \frac{1}{\tau}\sum_{j=1}^{k}\sum_{t=\tau_{0}+1}^{\tau_{0}+\tau}\left\vert
\left(  \left(  \lambda_{j,\nu}^{\prime}\left(  \psi_{\tau,t}^{\nu}-E\left[
\psi_{\tau,t}^{\nu}|\mathcal{G}_{\tau n,t^{\ast}n}\right]  \right)  \right)
^{2}-\left(  \lambda_{j,\nu}^{\prime}\psi_{\tau,t}^{\nu}\right)  ^{2}\right)
1_{ij}\right\vert \label{TS_Variance}\\
&  \leq\frac{1}{\tau}\sum_{j=1}^{k}\sum_{t\in\left\{  \tau_{0}+1,...,\tau
_{0}+\tau\right\}  \cap\left\{  1,..,T\right\}  }\left(  2\left\Vert
\psi_{\tau,t}^{\nu}\right\Vert \left\Vert \lambda_{j,\nu}\right\Vert
^{2}\left\Vert E\left[  \psi_{\tau,t}^{\nu}|\mathcal{G}_{\tau n,t^{\ast}%
n}\right]  \right\Vert +\left\Vert \lambda_{j,\nu}\right\Vert ^{2}\left\Vert
E\left[  \psi_{\tau,t}^{\nu}|\mathcal{G}_{\tau n,t^{\ast}n}\right]
\right\Vert ^{2}\right)  .\nonumber
\end{align}
Because Condition \ref{Omega_z} implies that%
\[
\frac{1}{\tau}\sum_{j=1}^{k}\sum_{t=t=\tau_{0}+1}^{\tau_{0}+\tau}\left(
\lambda_{j,\nu}^{\prime}\psi_{\tau,t}^{\nu}\right)  ^{2}1\left\{  \tau
_{0}+\left[  \tau r_{j-1}\right]  <t\leq\tau_{0}+\left[  \tau r_{j}\right]
\right\}  \overset{p}{\rightarrow}\sum_{j=1}^{k}\lambda_{j,\nu}^{\prime
}\left(  \Omega_{\nu}\left(  r_{j}\right)  -\Omega_{\nu}\left(  r_{j-1}%
\right)  \right)  \lambda_{j,\nu},
\]
the term (\ref{DCLT_D13a}) is equal to
\[
\sum_{j=1}^{k}\lambda_{j,\nu}^{\prime}\left(  \Omega_{\nu}\left(
r_{j}\right)  -\Omega_{\nu}\left(  r_{j-1}\right)  \right)  \lambda_{j,\nu
}+o_{p}\left(  1\right)
\]
if the RHS of (\ref{TS_Variance}) is $o_{p}\left(  1\right)  $. To show that
it is indeed the case, note that by the Markov inequality and the
Cauchy-Schwarz inequality it is enough to show that
\begin{align}
&  \frac{1}{\tau}\sum_{j=1}^{k}\sum_{\substack{t\in\left\{  \tau
_{0}+1,...,\tau_{0}+\tau\right\}  \\\cap\left\{  1,..,T\right\}  }}\left\Vert
\lambda_{j,\nu}\right\Vert ^{2}\left(  2\sqrt{E\left[  \left\Vert \psi
_{\tau,t}^{\nu}\right\Vert ^{2}\right]  E\left[  \left\Vert E\left[
\psi_{\tau,t}^{\nu}|\mathcal{G}_{\tau n,t^{\ast}n}\right]  \right\Vert
^{2}\right]  }+E\left[  \left\Vert E\left[  \psi_{\tau,t}^{\nu}|\mathcal{G}%
_{\tau n,t^{\ast}n}\right]  \right\Vert ^{2}\right]  \right)  \label{DCLT_D14}%
\\
\leq &  \frac{1}{\tau}\sum_{j=1}^{k}\left\Vert \lambda_{j,\nu}\right\Vert
^{2}\left(  \sum_{t=0}^{T}3E\left[  \left\Vert \psi_{\tau,t}^{\nu}\right\Vert
^{2}\right]  +\sum_{t=\tau_{0}}^{-1}\left(  2\sqrt{E\left[  \left\Vert
\psi_{\tau,t}^{\nu}\right\Vert ^{2}\right]  }\vartheta_{t}+\vartheta_{t}%
^{2}\right)  \right) \nonumber\\
\leq &  O\left(  \tau^{-1}\right)  +\frac{C}{\tau}\sum_{t=\tau_{0}}%
^{-1}\left(  2\sqrt{E\left[  \left\Vert \psi_{\tau,t}^{\nu}\right\Vert
^{2}\right]  }\left(  \left\vert t\right\vert ^{1+\delta}\right)
^{-1/2}+C\left(  \left\vert t\right\vert ^{1+\delta}\right)  ^{-1}\right)
\nonumber\\
\leq &  \frac{2C\sup_{t}\sqrt{E\left[  \left\Vert \psi_{\tau,t}^{\nu
}\right\Vert ^{2}\right]  }}{\tau}\sum_{t=\tau_{0}}^{-1}\left(  \left\vert
t\right\vert ^{1+\delta}\right)  ^{-1/2}\nonumber\\
&  +o\left(  \tau^{-1}\sum_{t=\tau_{0}}^{-1}\left(  \left\vert t\right\vert
^{1+\delta}\right)  ^{-1/2}\right)  +O\left(  \tau^{-1}\right)  \rightarrow
0,\nonumber
\end{align}
where the first inequality follows from Condition \ref{Diag_CLT_Cond}(vii).
The second inequality follows from the fact that $T$ is fixed and bounded and
Condition \ref{Diag_CLT_Cond}(vii). The final result uses that $\sup
_{t}E\left[  \left\Vert \psi_{\tau,t}^{\nu}\right\Vert ^{2+\delta}\right]
\leq C<\infty$ by Condition \ref{Diag_CLT_Cond}(iv) and the fact that
$\left\vert \tau_{0}\right\vert \leq\tau,$ $t/\tau\leq1$ for\textbf{ }%
$t\in\left[  1,...,\tau\right]  $\textbf{,} such that
\begin{align*}
\tau^{-1}\sum_{t=\tau_{0}}^{-1}\left(  \left\vert t\right\vert ^{1+\delta
}\right)  ^{-1/2}  &  =\tau^{-1}\sum_{t=1}^{\left\vert \tau_{0}\right\vert
}\left(  t^{1+\delta}\right)  ^{-1/2}\\
&  \leq\tau^{-1}\sum_{t=1}^{\left\vert \tau_{0}\right\vert }\left(  \frac
{t}{\tau}\right)  ^{-1/2}t^{-\left(  1/2+\delta/2\right)  }\\
&  \leq\tau^{-1/2}\sum_{t=1}^{\infty}t^{-\left(  1+\delta/2\right)  }=O\left(
\tau^{-1/2}\right)  .
\end{align*}
The last equality above uses the fact $\sum_{t=1}^{\infty}t^{-\left(
1+\delta/2\right)  }<\infty$ for any $\delta>0$.

Next we show that (\ref{DCLT_D13b}) is $o_{p}\left(  1\right)  $. For this
purpose, we note
\begin{align}
&  E\left[  \left\vert \frac{2}{\sqrt{\tau n}}\sum_{\substack{t\in\left(
\tau_{0}+1,...,\tau_{0}+\tau\right)  \\\cap\left\{  1,..,T\right\}  }%
}\sum_{j=1}^{k}\lambda_{j,\nu}^{\prime}\left(  \psi_{\tau,t}^{\nu}-E\left[
\psi_{\tau,t}^{\nu}|\mathcal{G}_{\tau n,s^{\ast}n}\right]  \right)  \left(
\psi_{1t}^{y}-E\left[  \psi_{1t}^{y}|\mathcal{G}_{\tau n,t^{\ast}n}\right]
\right)  ^{\prime}\lambda_{1,y}1_{1j}1_{1}\right\vert \right] \nonumber\\
&  \leq\frac{2}{\sqrt{\tau n}}\sum_{\substack{t\in\left\{  \tau_{0}%
+1,...,\tau_{0}+\tau\right\}  \\\cap\left\{  1,..,T\right\}  }}\left\{
\left(  E\left[  \left\vert \sum_{j=1}^{k}\lambda_{j\nu}^{\prime}\left(
\psi_{\tau,t}^{\nu}-E\left[  \psi_{\tau,t}^{\nu}|\mathcal{G}_{\tau n,t^{\ast
}n}\right]  \right)  \right\vert ^{2}\right]  \right)  ^{1/2}\right.
\nonumber\\
&  \left.  \times\left(  E\left[  \left\vert \left(  \psi_{1t}^{y}-E\left[
\psi_{1t}^{y}|\mathcal{G}_{\tau n,t^{\ast}n}\right]  \right)  ^{\prime}%
\lambda_{1y}\right\vert ^{2}\right]  \right)  ^{1/2}\right\}
\label{DCLT_D16a}\\
&  \leq\frac{2^{2}\sqrt{k}}{\sqrt{\tau n}}\sup_{t}\left(  E\left[  \left\Vert
\psi_{\tau,t}^{\nu}\right\Vert \right]  \right)  ^{1/2}\sum_{\substack{t\in
\left\{  \tau_{0},...,\tau_{0}+\tau\right\}  \\\cap\left\{  1,..,T\right\}
}}\left(  E\left[  \left\vert \left(  \psi_{1t}^{y}-E\left[  \psi_{1t}%
^{y}|\mathcal{G}_{\tau n,t^{\ast}n}\right]  \right)  ^{\prime}\lambda
_{1,y}\right\vert ^{2}\right]  \right)  ^{1/2}\label{DCLT_D16b}\\
&  \leq\frac{2^{3}\sqrt{k}T}{\sqrt{\tau n}}\sup_{t}\left(  E\left[  \left\Vert
\psi_{\tau,t}^{\nu}\right\Vert \right]  \right)  ^{1/2}\left(  \sup
_{i,t}E\left[  \left\Vert \psi_{n,it}^{y}\right\Vert ^{2}\right]  \right)
^{1/2}\rightarrow0 \label{DCLT_D16d}%
\end{align}
where the first inequality in (\ref{DCLT_D16a}) follows from the
Cauchy-Schwarz inequalities. Then we have in (\ref{DCLT_D16b}), by Condition
\ref{Diag_CLT_Cond}(iii) and the H\"{o}lder inequality that
\[
E\left[  \left\vert \left(  \psi_{1t}^{y}-E\left[  \psi_{1t}^{y}%
|\mathcal{G}_{\tau n,t^{\ast}n}\right]  \right)  ^{\prime}\lambda
_{y}\right\vert ^{2}\right]  \leq2E\left[  \left\Vert \psi_{1t}^{y}\right\Vert
^{2}\right]
\]
such that (\ref{DCLT_D16d}) follows. We note that (\ref{DCLT_D16d}) goes to
zero because of \textbf{ }Condition \ref{Diag_CLT_Cond}(iv)\textbf{ }as long
as $T/\sqrt{\tau n}\rightarrow0.$ Clearly, this condition holds as long as $T$
is held fixed, but holds under weaker conditions as well.

Next the limit of (\ref{DCLT_D13c}) is, by Condition \ref{Diag_CLT_Cond}(v)
and Condition \ref{Omega_y},%
\[
\frac{1}{n}\sum_{t\in\left\{  1,..,T\right\}  }\sum_{i=1}^{n}\left(
\lambda_{1,y}^{\prime}\left(  \psi_{n,it}^{y}-E\left[  \psi_{n,it}%
^{y}|\mathcal{G}_{\tau n,t^{\ast}n+i-1}\right]  \right)  \right)
^{2}\overset{p}{\rightarrow}\sum_{t\in\left\{  1,..,T\right\}  }\lambda
_{1,y}^{\prime}\Omega_{ty}\lambda_{1,y}.
\]
This verifies (\ref{KP2}).

\subsection{Proof of (\ref{KP3})\label{sec-proof-of-KP3}}

For (\ref{KP3}) we check that
\begin{equation}
\sup_{n}E\left[  \left(  \sum_{q=1}^{k_{n}}E\left[  \left.  \left\vert
\ddot{\psi}_{q}\right\vert ^{2}\right\vert \mathcal{G}_{\tau n,q-1}\right]
\right)  ^{1+\delta/2}\right]  <\infty. \label{DCLT_D20}%
\end{equation}
First, use (\ref{DCLT_D11}) with $\delta=0$ to obtain
\begin{align}
\sum_{q=1}^{k_{n}}E\left[  \left.  \left\vert \ddot{\psi}_{q}\right\vert
^{2}\right\vert \mathcal{G}_{\tau n,q-1}\right]   &  \leq\frac{2^{3}}{\tau
}\sum_{t=\tau_{0}}^{\tau_{0}+\tau}E\left[  \left.  \left\Vert \psi_{\tau
,t}^{\nu}\right\Vert ^{2}\right\vert \mathcal{G}_{\tau n,t^{\ast}n}\right]
\nonumber\\
&  +\frac{2^{3}}{n}\sum_{t\in\left\{  1,..,T\right\}  }\sum_{i=1}^{n}E\left[
\left.  \left\Vert \psi_{n,it}^{y}\right\Vert ^{2}\right\vert \mathcal{G}%
_{\tau n,t^{\ast}n+i-1}\right]  . \label{DCLT_D22}%
\end{align}
Applying (\ref{DCLT_D22}) to (\ref{DCLT_D20}) and using the H\"{o}lder
inequality\textbf{ }implies
\begin{align*}
&  E\left[  \left(  \sum_{q=1}^{k_{n}}E\left[  \left.  \left\vert \ddot{\psi
}_{q}\right\vert ^{2}\right\vert \mathcal{G}_{\tau n,q-1}\right]  \right)
^{1+\delta/2}\right] \\
&  \leq2^{\delta/2}E\left[  \left(  \frac{2^{3}}{\tau}\sum_{t=\tau_{0}%
+1}^{\tau_{0}+\tau}E\left[  \left.  \left\Vert \psi_{\tau,t}^{\nu}\right\Vert
^{2}\right\vert \mathcal{G}_{\tau n,t^{\ast}n+i-1}\right]  \right)
^{1+\delta/2}\right] \\
&  +2^{\delta/2}E\left[  \left(  \frac{2^{3}}{n}\sum_{t\in\left\{  \tau
_{0},...,\tau_{0}+\tau\right\}  \cap\left\{  1,..,T\right\}  }\sum_{i=1}%
^{n}E\left[  \left.  \left\Vert \psi_{n,it}^{y}\right\Vert ^{2}\right\vert
\mathcal{G}_{\tau n,t^{\ast}n+i-1}\right]  \right)  ^{1+\delta/2}\right]  .
\end{align*}
By Jensen's inequality, we have%
\[
\left(  \frac{1}{\tau}\sum_{t=\tau_{0}+1}^{\tau_{0}+\tau}E\left[  \left.
\left\Vert \psi_{\tau,t}^{\nu}\right\Vert ^{2}\right\vert \mathcal{G}_{\tau
n,t^{\ast}n+i-1}\right]  \right)  ^{1+\delta/2}\leq\frac{1}{\tau}\sum
_{t=\tau_{0}+1}^{\tau_{0}+\tau}E\left[  \left.  \left\Vert \psi_{\tau,t}^{\nu
}\right\Vert ^{2}\right\vert \mathcal{G}_{\tau n,t^{\ast}n+i-1}\right]
^{1+\delta/2}%
\]
and%
\[
E\left[  \left.  \left\Vert \psi_{\tau,t}^{\nu}\right\Vert ^{2}\right\vert
\mathcal{G}_{\tau n,t^{\ast}n+i-1}\right]  ^{1+\delta/2}\leq E\left[  \left.
\left\Vert \psi_{\tau,t}^{\nu}\right\Vert ^{2+\delta}\right\vert
\mathcal{G}_{\tau n,t^{\ast}n+i-1}\right]
\]
so that%
\begin{align}
E\left[  \left(  \frac{2^{3}}{\tau}\sum_{t=\tau_{0}+1}^{\tau_{0}+\tau}E\left[
\left.  \left\Vert \psi_{\tau,t}^{\nu}\right\Vert ^{2}\right\vert
\mathcal{G}_{\tau n,t^{\ast}n+i-1}\right]  \right)  ^{1+\delta/2}\right]   &
\leq\frac{2^{3+3\delta/2}}{\tau}\sum_{t=\tau_{0}+1}^{\tau_{0}+\tau}E\left[
E\left[  \left.  \left\Vert \psi_{\tau,t}^{\nu}\right\Vert ^{2+\delta
}\right\vert \mathcal{G}_{\tau n,t^{\ast}n+i-1}\right]  \right] \nonumber\\
&  \leq2^{3+3\delta/2}\sup_{t}E\left[  \left\Vert \psi_{\tau,t}^{\nu
}\right\Vert ^{2+\delta}\right]  <\infty. \label{DCLT_D23}%
\end{align}
and similarly, \textbf{ }
\begin{align}
&  E\left[  \left(  \frac{2^{3}}{n}\sum_{t\in\left\{  1,..,T\right\}  }%
\sum_{i=1}^{n}E\left[  \left\Vert \psi_{n,it}^{y}\right\Vert ^{2}%
|\mathcal{G}_{\tau n,t^{\ast}n+i-1}\right]  \right)  ^{1+\delta/2}\right]
\label{DCLT_D24}\\
&  \leq\frac{2^{3+3\delta/2}\left(  Tn\right)  ^{\delta/2}}{n^{1+\delta/2}%
}\sum_{t=1}^{T}\sum_{i=1}^{n}E\left[  \left\Vert \psi_{n,it}^{y}\right\Vert
^{2+\delta}\right] \nonumber\\
&  \leq2^{3+3\delta/2}T^{1+\delta/2}\sup_{i,t}E\left[  \left\Vert \psi
_{n,it}^{y}\right\Vert ^{2+\delta}\right]  <\infty\nonumber
\end{align}
By combining (\ref{DCLT_D23}) and (\ref{DCLT_D24}) we obtain the following
bound for (\ref{DCLT_D20}),%
\begin{align*}
&  E\left[  \left(  \sum_{q=1}^{k_{n}}E\left[  \left\vert \ddot{\psi}%
_{q}\right\vert ^{2+\delta}|\mathcal{G}_{\tau n,q-1}\right]  \right)
^{1+\delta/2}\right] \\
&  \leq2^{3+3\delta/2}\sup_{t}E\left[  \left\Vert \psi_{\tau,t}^{\nu
}\right\Vert ^{2+\delta}\right]  +2^{3+3\delta/2}T^{1+\delta/2}\sup
_{i,t}E\left[  \left\Vert \psi_{n,it}^{y}\right\Vert ^{2+\delta}\right]
<\infty.
\end{align*}

\subsection{Second Term in (\ref{D_FCLT5_1})\label{D_FCLT5_1-second}}

Consider
\begin{align*}
&  \sum_{t=\min(1,\tau_{0}+1)}^{\max(T,\tau_{0}+\tau)}\sum_{i=1}^{n}\sum
_{j=1}^{k}\lambda_{j}^{\prime}E\left[  \Delta\tilde{\psi}_{it}\left(
r_{j}\right)  |\mathcal{G}_{\tau n,t^{\ast}n+i-1}\right] \\
&  =\tau^{-1/2}\sum_{t=\tau_{0}+1}^{\tau_{0}+\tau}\sum_{j=1}^{k}\lambda
_{j,\nu}^{\prime}E\left[  \psi_{\tau,t}^{\nu}|\mathcal{G}_{\tau n,t^{\ast
}n+i-1}\right]  1\left\{  \tau_{0}+\left[  \tau r_{j-1}\right]  <t\leq\tau
_{0}+\left[  \tau r_{j}\right]  \right\} \\
&  +n^{-1/2}\sum_{t=1}^{T}\sum_{i=1}^{n}\lambda_{1,y}^{\prime}E\left[
\psi_{n,it}^{y}|\mathcal{G}_{\tau n,t^{\ast}n+i-1}\right]  ,
\end{align*}
where we defined\textbf{ }$r_{0}=0$. Note that
\[
E\left[  \psi_{\tau,t}^{\nu}|\mathcal{G}_{\tau n,t^{\ast}n+i-1}\right]
=0\text{ for }t>T
\]
and
\[
E\left[  \psi_{n,it}^{y}|\mathcal{G}_{\tau n,t^{\ast}n+i-1}\right]  =0.
\]
This implies, using the convention that a term is zero if it is a sum over
indices from $a$ to $b$ with $a>b$, as well as the fact that\textbf{ }%
$T\leq\tau_{0}+\tau,$ that
\begin{align*}
&  \tau^{-1/2}\sum_{t=\tau_{0}+1}^{\tau_{0}+\tau}\lambda_{\nu}^{\prime
}E\left[  \psi_{\tau,t}^{\nu}|\mathcal{G}_{\tau n,t^{\ast}n+i-1}\right] \\
&  =\tau^{-1/2}\sum_{t=\tau_{0}+1}^{T}\sum_{j=1}^{k}\lambda_{j,\nu}^{\prime
}E\left[  \psi_{\tau,t}^{\nu}|\mathcal{G}_{\tau n,t^{\ast}n+i-1}\right]
1\left\{  \tau_{0}+\left[  \tau r_{j-1}\right]  <t\leq\tau_{0}+\left[  \tau
r_{j}\right]  \right\}  .
\end{align*}
By a similar argument used to show that (\ref{DCLT_D14}) vanishes, and noting
that $T$ is fixed while $\tau\rightarrow\infty,$ it follows that, as long as
$\tau_{0}\leq T,$
\begin{align*}
&  E\left[  \left\vert \tau^{-1/2}\sum_{t=\tau_{0}+1}^{T}\lambda_{j,\nu
}^{\prime}E\left[  \psi_{\tau,t}^{\nu}|\mathcal{G}_{\tau n,t^{\ast}n}\right]
\right\vert ^{1+\delta/2}\right] \\
&  \leq\left(  \frac{1}{\tau}\right)  ^{1/2+\delta/4}\left(  T-\tau
_{0}\right)  ^{\delta/2}E\left[  \sum_{t=\tau_{0}+1}^{T}\left\vert
\lambda_{j,\nu}^{\prime}E\left[  \psi_{\tau,t}^{\nu}|\mathcal{G}_{\tau
n,t^{\ast}n}\right]  \right\vert ^{1+\delta/2}\right] \\
&  \leq\left(  \frac{1}{\tau}\right)  ^{1/2+\delta/4}\left(  T-\tau
_{0}\right)  ^{\delta/2}\sum_{t=\tau_{0}+1}^{T}E\left[  \left\Vert E\left[
\psi_{\tau,t}^{\nu}|\mathcal{G}_{\tau n,t^{\ast}n}\right]  \right\Vert
^{1+\delta/2}\right] \\
&  \leq\left(  \frac{1}{\tau}\right)  ^{1/2+\delta/4}\left(  T-\tau
_{0}\right)  ^{\delta/2}\sum_{t=\tau_{0}+1}^{T}\left(  E\left[  \left\Vert
E\left[  \psi_{\tau,t}^{\nu}|\mathcal{G}_{\tau n,t^{\ast}n}\right]
\right\Vert ^{2}\right]  \right)  ^{1/2+\delta/4},
\end{align*}
where the first inequality uses the triangular inequality and the second and
third inequalities are based on versions of Jensen's inequality. We continue
to use the convention that a term is zero if it is a sum over indices from $a$
to $b$ with $a>b$. Now consider two cases. When $\tau_{0}\rightarrow-\infty$
use Condition \ref{Diag_CLT_Cond}(vii) and the fact that $\left(  T-\tau
_{0}\right)  /\tau\leq1$ as well as $\left(  T-\tau_{0}\right)  ^{-\delta
/2}\leq\left\vert \tau_{0}\right\vert ^{-\delta/2}$
\begin{align*}
&  \left(  \frac{1}{\tau}\right)  ^{1/2+\delta/4}\left(  T-\tau_{0}\right)
^{\delta/2}\sum_{t=\tau_{0}}^{T}\left(  E\left[  \left\Vert E\left[
\psi_{\tau,t}^{\nu}|\mathcal{G}_{\tau n,t^{\ast}n}\right]  \right\Vert
^{2}\right]  \right)  ^{1/2+\delta/4}\\
&  \leq\tau^{-\left(  1/2+\delta/4\right)  }\sum_{t=0}^{T}\left(  E\left[
\left\Vert \psi_{\tau,t}^{\nu}\right\Vert ^{2}\right]  \right)  ^{1/2+\delta
/4}+\tau^{-\left(  1/2+\delta/4\right)  }C\sum_{t=\tau_{0}}^{-1}\left\vert
t\right\vert ^{-\left(  \frac{1}{2}+\frac{3\delta+\delta^{2}}{4}\right)  }\\
&  \leq O\left(  \tau^{-\left(  1/2+\delta/4\right)  }\right)  +\tau^{-\left(
1/2+\delta/4\right)  }\left\vert \tau_{0}\right\vert ^{1/2-\left(
\delta+\delta^{2}\right)  /4}C\sum_{t=\tau_{0}}^{-1}\left\vert t\right\vert
^{-\left(  1+\delta/2\right)  }\\
&  \leq O\left(  \tau^{-\left(  1/2+\delta/4\right)  }\right)  +\tau
^{-\delta/2}\left(  \frac{\left\vert \tau_{0}\right\vert }{\tau}\right)
^{1/2-\delta/4}C\sum_{t=1}^{\infty}t^{-\left(  1+\delta/2\right)  }\\
&  =O\left(  \tau^{-\delta/2}\right)  \rightarrow0
\end{align*}
since $\frac{\left\vert \tau_{0}\right\vert }{\tau}\rightarrow\upsilon$ as
$\tau\rightarrow\infty$ and $\sum_{t=1}^{\infty}t^{-1}(\log\left(  t+1\right)
)^{-\left(  1+\delta\right)  }<\infty$.\textbf{ }The second case arises when
$\tau_{0}$ is fixed. Then,
\begin{align*}
&  \left(  \frac{1}{\tau}\right)  ^{1/2+\delta/4}\left(  T-\tau_{0}\right)
^{\delta/2}\sum_{t=\tau_{0}+1}^{T}\left(  E\left[  \left\Vert E\left[
\psi_{\tau,t}^{\nu}|\mathcal{G}_{\tau n,t^{\ast}n}\right]  \right\Vert
^{2}\right]  \right)  ^{1/2+\delta/4}\\
&  \leq\tau^{-1/2-\delta/4}\sup_{t}\left(  E\left[  \left\Vert \psi_{\tau
,t}^{\nu}\right\Vert ^{2}\right]  \right)  ^{1/2+\delta/4}\left(  T+\left\vert
\tau_{0}\right\vert \right)  ^{1+\delta/2}\rightarrow0
\end{align*}
as $\tau\rightarrow\infty.$ In both cases the Markov inequality then implies
that
\[
\tau^{-1/2}\sum_{t=\tau_{0}+1}^{\tau_{0}+\tau}\sum_{i=1}^{n}\sum_{j=1}%
^{k}\lambda_{j}^{\prime}E\left[  \Delta\tilde{\psi}_{it}\left(  r_{j}\right)
|\mathcal{G}_{\tau n,t^{\ast}n+i-1}\right]  =o_{p}\left(  1\right)  ,
\]
and the conclusion follows consequently.

\subsection{Proof of (\ref{DCLT40})\label{DCLT40-proof}}

Define
\[
\ddot{\psi}_{\tau,t}^{\nu}=\frac{1}{\sqrt{\tau}}\left(  \psi_{\tau,t}^{\nu
}-E\left[  \psi_{\tau,t}^{\nu}|\mathcal{G}_{\tau n,\left(  t-\min\left(
1,\tau_{0}\right)  n+1\right)  }\right]  \right)
\]
and%
\[
\ddot{X}_{n\tau,\nu}\left(  r\right)  =\sum_{t=\tau_{0}+1}^{\tau_{0}+\left[
\tau r\right]  }\ddot{\psi}_{\tau,t}^{\nu}%
\]
such that\textbf{ }%
\[
X_{n\tau,\nu}\left(  r\right)  =\ddot{X}_{n\tau,\nu}\left(  r\right)
+\frac{1}{\sqrt{\tau}}\sum_{t=\tau_{0}+1}^{\tau_{0}+\left[  \tau r\right]
}E\left[  \psi_{\tau,t}^{\nu}|\mathcal{G}_{\tau n,\left(  t-\min\left(
1,\tau_{0}\right)  n+1\right)  }\right]  .
\]
Let $\ddot{S}_{\tau,s}=%
{\textstyle\sum\nolimits_{t=\tau_{0}+1}^{\tau_{0}+s}}
\lambda_{\nu}^{\prime}\ddot{\psi}_{\tau,t}^{\nu}$ and
\[
S_{\tau,s}=%
{\textstyle\sum\nolimits_{t=\tau_{0}+1}^{\tau_{0}+s}}
\lambda_{\nu}^{\prime}\left(  \ddot{\psi}_{\tau,t}^{\nu}+\frac{1}{\sqrt{\tau}%
}E\left[  \left.  \psi_{\tau,t}^{\nu}\right\vert \mathcal{G}_{\tau n,\left(
t-\min\left(  1,\tau_{0}\right)  n+1\right)  }\right]  \right)
\]
as before. Since
\begin{align}
&  P\left(  \max_{s\leq\tau}\left\vert S_{\tau,k+s}-S_{\tau,k}\right\vert
>c\right) \nonumber\\
&  \leq P\left(  \max_{s\leq\tau}\left\vert \ddot{S}_{\tau,k+s}-\ddot{S}%
_{\tau,k}\right\vert +\max_{s\leq\tau}\left\vert \frac{1}{\sqrt{\tau}}%
{\textstyle\sum\nolimits_{t=\tau_{0}+k+1}^{\tau_{0}+s}}
\lambda_{\nu}^{\prime}E\left[  \psi_{\tau,t}^{\nu}|\mathcal{G}_{\tau n,\left(
t-\min\left(  1,\tau_{0}\right)  n+1\right)  }\right]  \right\vert >c\right)
\nonumber\\
&  \leq P\left(  \max_{s\leq\tau}\left\vert \ddot{S}_{\tau,k+s}-\ddot{S}%
_{\tau,k}\right\vert >\frac{c}{2}\right) \label{DCLT_D26}\\
&  +P\left(  \max_{s\leq\tau}\left\vert \frac{1}{\sqrt{\tau}}%
{\textstyle\sum\nolimits_{t=\tau_{0}+k+1}^{\tau_{0}+s}}
\lambda_{\nu}^{\prime}E\left[  \psi_{\tau,t}^{\nu}|\mathcal{G}_{\tau n,\left(
t-\min\left(  1,\tau_{0}\right)  n+1\right)  }\right]  \right\vert >\frac
{c}{2}\right)  \label{DCLT_D27}%
\end{align}
Note that for each $k$ and $\tau$ fixed, $M_{s}=\ddot{S}_{\tau,s+k}-\ddot
{S}_{\tau,k}$ and $\mathcal{F}_{\tau,s}=\sigma\left(  z_{\tau_{0}}%
,...,z_{\tau_{0}+s+k}\right)  ,$ $\left\{  M_{\tau,s},\mathcal{F}_{\tau
,s}\right\}  $ is a martingale. Note that the filtration $\mathcal{F}_{\tau
,s}$ does not depend on $\tau$ when $\tau_{0}$ is held fixed. We prove an
extension\textbf{ }of Hall and Heyde (1980, Theorems 2.1 and 2.2) to
triangular martingale arrays in Lemma \ref{HH_Th2.1} and Corollary
\ref{HH_T2.2} in Appendix \ref{Max_Ineq_Triang}.\footnote{There is only a
limited literature on laws of large numbers for triangular arrays of
martingales. Andrews (1988) or Kanaya (2017) prove weak laws, de Jong (1996)
proves a strong law but without proving a maximal inequality. Atchade (2009)
and Hill (2010) allow for trinagular arrays but only with respect to a fixed
filteration that does not depend on samples size.}

We first consider the term (\ref{DCLT_D27}) and show that%
\begin{equation}
\lim_{c\rightarrow\infty}\limsup_{\tau\rightarrow\infty}c^{2}P\left(
\max_{s\leq\tau}\left\vert \frac{1}{\sqrt{\tau}}%
{\textstyle\sum\nolimits_{t=\tau_{0}+k+1}^{\tau_{0}+s}}
\lambda_{\nu}^{\prime}E\left[  \psi_{\tau,t}^{\nu}|\mathcal{G}_{\tau n,\left(
t-\min\left(  1,\tau_{0}\right)  n+1\right)  }\right]  \right\vert >\frac
{c}{2}\right)  =0. \label{DCLT_D27-negligible}%
\end{equation}
Using the convention that a term is zero if it is a sum over indices from $a$
to $b$ with $a>b$, note that (\ref{DCLT_D27}) is bounded by Markov inequality
by\textbf{ }%
\begin{align*}
&  P\left(  \max_{s\leq\tau}\left\vert
{\textstyle\sum\nolimits_{t=\tau_{0}+k+1}^{\tau_{0}+s}}
\lambda_{\nu}^{\prime}E\left[  \psi_{\tau,t}^{\nu}|\mathcal{G}_{\tau n,\left(
t-\min\left(  1,\tau_{0}\right)  n+1\right)  }\right]  \right\vert >\frac
{c}{2}\sqrt{\tau}\right) \\
&  \leq\frac{4}{c^{2}\tau}E\left[  \max_{s\leq\tau}\left\vert
{\textstyle\sum\nolimits_{t=\tau_{0}+k+1}^{\tau_{0}+s}}
\lambda_{\nu}^{\prime}E\left[  \psi_{\tau,t}^{\nu}|\mathcal{G}_{\tau n,\left(
t-\min\left(  1,\tau_{0}\right)  n+1\right)  }\right]  \right\vert ^{2}\right]
\\
&  \leq\frac{4}{c^{2}\tau}%
{\textstyle\sum\nolimits_{t=\tau_{0}+k+1}^{\tau_{0}+\tau}}
E\left[  \left\Vert E\left[  \psi_{\tau,t}^{\nu}|\mathcal{G}_{\tau n,\left(
t-\min\left(  1,\tau_{0}\right)  n+1\right)  }\right]  \right\Vert
^{2}\right]  ,
\end{align*}
so%
\begin{align}
&  c^{2}P\left(  \max_{s\leq\tau}\left\vert
{\textstyle\sum\nolimits_{t=\tau_{0}+k+1}^{\tau_{0}+s}}
\lambda_{\nu}^{\prime}E\left[  \psi_{\tau,t}^{\nu}|\mathcal{G}_{\tau n,\left(
t-\min\left(  1,\tau_{0}\right)  n+1\right)  }\right]  \right\vert >\frac
{c}{2}\sqrt{\tau}\right) \nonumber\\
&  \leq\frac{4}{\tau}%
{\textstyle\sum\nolimits_{t=T+1}^{\tau_{0}+\tau}}
E\left[  \left\Vert E\left[  \psi_{\tau,t}^{\nu}|\mathcal{G}_{\tau n,\left(
t-\min\left(  1,\tau_{0}\right)  n+1\right)  }\right]  \right\Vert ^{2}\right]
\label{DCLT_D28}\\
&  +\frac{4}{\tau}%
{\textstyle\sum\nolimits_{t=0}^{T}}
E\left[  \left\Vert \psi_{\tau,t}^{\nu}\right\Vert ^{2}\right]  +\frac{4}%
{\tau}%
{\textstyle\sum\nolimits_{t=\tau_{0}+k+1}^{-1}}
\vartheta_{t},\nonumber
\end{align}
For the first term in (\ref{DCLT_D28}), we have
\[
\frac{4}{\tau}%
{\textstyle\sum\nolimits_{t=T+1}^{\tau_{0}+\tau}}
E\left[  \left\Vert E\left[  \psi_{\tau,t}^{\nu}|\mathcal{G}_{\tau n,\left(
t-\min\left(  1,\tau_{0}\right)  n+1\right)  }\right]  \right\Vert
^{2}\right]  =0
\]
because $E\left[  \psi_{\tau,t}^{\nu}|\mathcal{G}_{\tau n,\left(
t-\min\left(  1,\tau_{0}\right)  n+1\right)  }\right]  =0$ for $t>T$. The
second term $\frac{4}{\varepsilon\tau}%
{\textstyle\sum\nolimits_{t=0}^{T}}
E\left[  \left\Vert \psi_{\tau,t}^{\nu}\right\Vert ^{2}\right]  \rightarrow0$
as $\tau\rightarrow\infty$ because $T$ is finite. For the third term in
(\ref{DCLT_D28}) note that
\begin{align*}
\frac{4}{\sqrt{\tau}}%
{\textstyle\sum\nolimits_{t=\tau_{0}+k+1}^{-1}}
\vartheta_{t}  &  \leq\frac{4K}{\sqrt{\tau}}%
{\textstyle\sum\nolimits_{t=\tau_{0}+k+1}^{-1}}
\left(  \left\vert t\right\vert ^{1+\delta}\right)  ^{-1/2}\\
&  \leq\frac{4K\tau^{1/2-\delta/4}}{\sqrt{\tau}}%
{\textstyle\sum\nolimits_{t=1}^{\infty}}
t^{-\left(  1+\delta/4\right)  }=O\left(  \tau^{-\delta/4}\right)
\rightarrow0.
\end{align*}
These considerations imply the desired result (\ref{DCLT_D27-negligible}).

With (\ref{DCLT_D27-negligible}), in order to establish (\ref{DCLT40}), it
suffices to consider (\ref{DCLT_D26}) and show that
\[
\lim_{c\rightarrow\infty}\limsup_{\tau\rightarrow\infty}c^{2}P\left(
\max_{s\leq\tau}\left\vert \ddot{S}_{\tau,k+s}-\ddot{S}_{\tau,k}\right\vert
>\frac{c}{2}\right)  =0
\]
Because%
\begin{align*}
P\left(  \max_{s\leq\tau}\left\vert \ddot{S}_{\tau,k+s}-\ddot{S}_{\tau
,k}\right\vert >\frac{c}{2}\right)   &  \leq\frac{4}{c^{2}\varepsilon}E\left[
\max_{s\leq\tau}\left\vert \ddot{S}_{\tau,k+s}-\ddot{S}_{\tau,k}\right\vert
^{2}\cdot1\left(  \max_{s\leq\tau}\left\vert \ddot{S}_{\tau,k+s}-\ddot
{S}_{\tau,k}\right\vert \geq\frac{c}{2}\right)  \right] \\
&  =\frac{4}{c^{2}\varepsilon}E\left[  \max_{s\leq\tau}\left\vert
{\textstyle\sum\nolimits_{t=k+1}^{k+s}}
\lambda_{\nu}^{\prime}\ddot{\psi}_{\tau,t}^{\nu}\right\vert ^{2}\cdot1\left(
\max_{s\leq\tau}\left\vert
{\textstyle\sum\nolimits_{t=k+1}^{k+s}}
\lambda_{\nu}^{\prime}\ddot{\psi}_{\tau,t}^{\nu}\right\vert \geq\frac{c}%
{2}\right)  \right]
\end{align*}
it suffices to prove that%
\begin{equation}
\lim_{c\rightarrow\infty}\limsup_{\tau\rightarrow\infty}E\left[  \max
_{s\leq\tau}\left\vert
{\textstyle\sum\nolimits_{t=k+1}^{k+s}}
\lambda_{\nu}^{\prime}\ddot{\psi}_{\tau,t}^{\nu}\right\vert ^{2}\cdot1\left(
\max_{s\leq\tau}\left\vert
{\textstyle\sum\nolimits_{t=k+1}^{k+s}}
\lambda_{\nu}^{\prime}\ddot{\psi}_{\tau,t}^{\nu}\right\vert \geq c\right)
\right]  =0. \label{tightness-objective}%
\end{equation}
We show in Appendix \ref{Proof_tight} that (\ref{tightness-objective}) holds
as long as $\sup_{t}E\left[  \left\vert \sqrt{\tau}\lambda_{\nu}^{\prime}%
\ddot{\psi}_{\tau,t}^{\nu}\right\vert ^{2+\delta}\right]  <\infty$.\textbf{
}The latter is satisfied by Condition \ref{Diag_CLT_Cond}(iv).

\subsection{Proof of Corollary \ref{Diag_CLT}}

We note that finite dimensional convergence established in the proof of
Theorem \ref{FCLT} implies that
\[
E\left[  \exp\left(  i\lambda^{\prime}X_{n\tau}\left(  1\right)  \right)
\right]  \rightarrow E\left[  \exp\left(  -\frac{1}{2}\left(  \lambda
_{y}^{\prime}\Omega_{y}\lambda_{y}+\lambda_{\nu}^{\prime}\Omega_{\nu}\left(
1\right)  \lambda_{\nu}\right)  \right)  \right]  .
\]
We also note that because of (\ref{DCLT41c}) it follows that
\[
E\left[  \exp\left(  i\int_{0}^{1}\lambda_{\nu}^{\prime}\left(  \dot{\Omega
}_{\nu}\left(  t\right)  \right)  ^{1/2}dW_{\nu}\left(  t\right)  \right)
\right]  =E\left[  \exp\left(  -\frac{1}{2}\lambda_{\nu}^{\prime}\Omega_{\nu
}\left(  1\right)  \lambda_{\nu}\right)  \right]
\]
which shows that $\int_{0}^{1}\left(  \dot{\Omega}_{\nu}\left(  t\right)
\right)  ^{1/2}dW_{\nu}\left(  t\right)  $ has the same distribution as
$\Omega_{\nu}\left(  1\right)  ^{1/2}W_{\nu}\left(  1\right)  $.

\subsection{Proof of Theorem \ref{CLT_MLE}}

Let $s_{it}^{y}\left(  \theta,\rho\right)  =f_{\theta,it}\left(  \theta
,\rho\right)  $ and $s_{t}^{\nu}\left(  \rho,\beta\right)  =g_{\rho,t}\left(
\rho,\beta\right)  $ in the case of maximum likelihood estimation and
$s_{it}^{y}\left(  \theta,\rho\right)  =f_{it}\left(  \theta,\rho\right)  $
and $s_{t}^{\nu}\left(  \rho,\beta\right)  =g_{t}\left(  \rho,\beta\right)  $
in the case of moment based estimation. Using the notation developed before we
define
\[
\tilde{s}_{it}^{y}\left(  \theta,\rho\right)  =\left\{
\begin{array}
[c]{cc}%
\frac{s_{it}^{y}\left(  \theta,\rho\right)  }{\sqrt{n}} & \text{if }%
t\in\left\{  1,...,T\right\} \\
0 & \text{otherwise}%
\end{array}
\right.
\]
analogously to (\ref{psi_tilde_y}) and
\[
\tilde{s}_{it}^{\nu}\left(  \beta,\rho\right)  =\left\{
\begin{array}
[c]{cc}%
\frac{s_{t}^{\nu}\left(  \beta,\rho\right)  }{\sqrt{\tau}} & \text{if }%
t\in\left\{  \tau_{0}+1,...,\tau_{0}+\tau\right\}  \text{ and }i=1\\
0 & \text{otherwise}%
\end{array}
\right.
\]
analogously to (\ref{psi_tilde_z}). Stack the moment vectors in
\begin{equation}
\tilde{s}_{it}\left(  \phi\right)  \equiv\tilde{s}_{it}\left(  \theta
,\rho\right)  =\left(  \tilde{s}_{it}^{y}\left(  \theta,\rho\right)  ^{\prime
},\tilde{s}_{it}^{\nu}\left(  \beta,\rho\right)  ^{\prime}\right)  ^{\prime}
\label{s_tilde_phi}%
\end{equation}
and define the scaling matrix $D_{n\tau}=\operatorname*{diag}\left(
n^{-1/2}I_{y},\tau^{-1/2}I_{\nu}\right)  $ where $I_{y}$ is an identity matrix
of dimension $k_{\theta}$ and $I_{\nu}$ is an identity matrix of dimension
$k_{\rho}.$ For the maximum likelihood estimator, the moment conditions
(\ref{Moment Cond z}) and (\ref{Moment Cond y}) can be directly written as
\[
\sum_{t=\min(1,\tau_{0}+1)}^{\max(T,\tau_{0}+\tau)}\sum_{i=1}^{n}\tilde
{s}_{it}\left(  \hat{\theta},\hat{\rho}\right)  =0.
\]
For moment based estimators we have by Conditions \ref{Hessian}(i) and (ii)
that
\[
\sup_{\left\Vert \phi-\phi_{0}\right\Vert \leq\varepsilon}\left\Vert \left(
s_{M}^{y}\left(  \theta,\rho\right)  ^{\prime},s_{M}^{\nu}\left(  \beta
,\rho\right)  ^{\prime}\right)  ^{\prime}-\sum_{t=\min(1,\tau_{0}+1)}%
^{\max(T,\tau_{0}+\tau)}\sum_{i=1}^{n}\tilde{s}_{it}\left(  \theta
,\rho\right)  \right\Vert =o_{p}\left(  1\right)  .
\]
It then follows that for the moment based estimators
\[
0=s\left(  \hat{\phi}\right)  =\sum_{t=\min(1,\tau_{0}+1)}^{\max(T,\tau
_{0}+\tau)}\sum_{i=1}^{n}\tilde{s}_{it}\left(  \hat{\theta},\hat{\rho}\right)
+o_{p}\left(  1\right)  .
\]
A first order mean value expansion around $\phi_{0}$ where $\phi=\left(
\theta^{\prime},\rho^{\prime}\right)  ^{\prime}$ and $\hat{\phi}=\left(
\hat{\theta}^{\prime},\hat{\rho}^{\prime}\right)  ^{\prime}$ leads to
\[
o_{p}\left(  1\right)  =\sum_{t=\min(1,\tau_{0}+1)}^{\max(T,\tau_{0}+\tau
)}\sum_{i=1}^{n}\tilde{s}_{it}\left(  \phi_{0}\right)  +\left(  \sum
_{t=\min(1,\tau_{0}+1)}^{\max(T,\tau_{0}+\tau)}\sum_{i=1}^{n}\frac
{\partial\tilde{s}_{it}\left(  \bar{\phi}\right)  }{\partial\phi^{\prime}%
}D_{n\tau}\right)  D_{n\tau}^{-1}\left(  \hat{\phi}-\phi_{0}\right)
\]
or
\[
D_{n\tau}^{-1}\left(  \hat{\phi}-\phi_{0}\right)  =-\left(  \sum
_{t=\min(1,\tau_{0}+1)}^{\max(T,\tau_{0}+\tau)}\sum_{i=1}^{n}\frac
{\partial\tilde{s}_{it}\left(  \bar{\phi}\right)  }{\partial\phi^{\prime}%
}D_{n\tau}\right)  ^{-1}\sum_{t=\min(1,\tau_{0}+1)}^{\max(T,\tau_{0}+\tau
)}\sum_{i=1}^{n}\tilde{s}_{it}\left(  \phi_{0}\right)  +o_{p}\left(  1\right)
\]
where $\bar{\phi}$ satisfies $\left\Vert \bar{\phi}-\phi_{0}\right\Vert
\leq\left\Vert \hat{\phi}-\phi_{0}\right\Vert $ and we note that with some
abuse of notation we implicitly allow for $\bar{\phi}$ to differ across rows
of $\partial\tilde{s}_{it}\left(  \bar{\phi}\right)  /\partial\phi^{\prime}$.
Note that
\[
\frac{\partial\tilde{s}_{it}\left(  \bar{\phi}\right)  }{\partial\phi^{\prime
}}=\left[
\begin{array}
[c]{cc}%
\partial\tilde{s}_{it}^{y}\left(  \theta,\rho\right)  /\partial\theta^{\prime}
& \partial\tilde{s}_{it}^{y}\left(  \theta,\rho\right)  /\partial\rho^{\prime
}\\
\partial\tilde{s}_{it,\rho}^{\nu}\left(  \beta,\rho\right)  /\partial
\theta^{\prime} & \partial\tilde{s}_{it,\rho}^{\nu}\left(  \beta,\rho\right)
/\partial\rho^{\prime}%
\end{array}
\right]
\]
where $\tilde{s}_{it,\rho}^{\nu}$ denotes moment conditions associated with
$\rho.$ From Condition \ref{Hessian}(iii) and Theorem \ref{Diag_CLT} it
follows that (note that we make use of the continuous mapping theorem which is
applicable because Theorem \ref{Diag_CLT} establishes stable and thus joint
convergence)
\[
D_{n\tau}^{-1}\left(  \hat{\phi}-\phi_{0}\right)  =-A\left(  \phi_{0}\right)
^{-1}\sum_{t=\min(1,\tau_{0}+1)}^{\max(T,\tau_{0}+\tau)}\sum_{i=1}^{n}%
\tilde{s}_{it}\left(  \phi_{0}\right)  +o_{p}\left(  1\right)
\]
It now follows from the continuous mapping theorem and joint convergence in
Corollary \ref{Diag_CLT} that
\[
D_{n\tau}^{-1}\left(  \hat{\phi}-\phi_{0}\right)  \overset{d}{\rightarrow
}-A\left(  \phi_{0}\right)  ^{-1}\Omega^{1/2}W\text{ (}\mathcal{C}%
\text{-stably)}%
\]

\subsection{Proof of Corollary \ref{Corollary_CLT_MLE}}

Partition%
\[
A\left(  \phi_{0}\right)  =\left[
\begin{array}
[c]{cc}%
A_{y,\theta} & \sqrt{\kappa}A_{y,\rho}\\
\frac{1}{\sqrt{\kappa}}A_{\nu,\theta} & A_{\nu,\rho}%
\end{array}
\right]
\]
with inverse%
\begin{align*}
A\left(  \phi_{0}\right)  ^{-1}  &  =\left[
\begin{array}
[c]{cc}%
A_{y,\theta}^{-1}+A_{y,\theta}^{-1}A_{y,\rho}\left(  A_{\nu,\rho}%
-A_{\nu,\theta}A_{y,\theta}^{-1}A_{y,\rho}\right)  ^{-1}A_{\nu,\theta
}A_{y,\theta}^{-1} & -\sqrt{\kappa}A_{y,\theta}^{-1}A_{y,\rho}\left(
A_{\nu,\rho}-A_{\nu,\theta}A_{y,\theta}^{-1}A_{y,\rho}\right)  ^{-1}\\
-\frac{1}{\sqrt{\kappa}}\left(  A_{\nu,\rho}-A_{\nu,\theta}A_{y,\theta}%
^{-1}A_{y,\rho}\right)  ^{-1}A_{\nu,\theta}A_{y,\theta}^{-1} & \left(
A_{\nu,\rho}-A_{\nu,\theta}A_{y,\theta}^{-1}A_{y,\rho}\right)  ^{-1}%
\end{array}
\right] \\
&  =\left[
\begin{array}
[c]{cc}%
A^{y,\theta} & \sqrt{\kappa}A^{y,\rho}\\
\frac{1}{\sqrt{\kappa}}A^{\nu,\theta} & A^{\nu,\rho}%
\end{array}
\right]  .
\end{align*}
It now follows from the continuous mapping theorem and joint convergence in
Corollary \ref{Diag_CLT} that
\[
D_{n\tau}^{-1}\left(  \hat{\phi}-\phi_{0}\right)  \overset{d}{\rightarrow
}-A\left(  \phi_{0}\right)  ^{-1}\Omega^{1/2}W\text{ (}\mathcal{C}%
\text{-stably)}%
\]
where the right hand side has a mixed normal distribution,
\[
A\left(  \phi_{0}\right)  ^{-1}\Omega^{1/2}W\sim MN\left(  0,A\left(  \phi
_{0}\right)  ^{-1}\Omega A\left(  \phi_{0}\right)  ^{\prime-1}\right)
\]
and
\[
A\left(  \phi_{0}\right)  ^{-1}\Omega A\left(  \phi_{0}\right)  ^{\prime
-1}=\left[
\begin{array}
[c]{cc}%
A^{y,\theta}\Omega_{y}A^{y,\theta\prime}+\kappa A^{y,\rho}\Omega_{\nu}\left(
1\right)  A^{y,\rho\prime} & \frac{1}{\sqrt{\kappa}}A^{y,\theta}\Omega
_{y}A^{\nu,\theta\prime}+\sqrt{\kappa}A^{y,\rho}\Omega_{\nu}\left(  1\right)
A^{\nu,\rho\prime}\\
\frac{1}{\sqrt{\kappa}}A^{\nu,\theta}\Omega_{y}A^{y,\theta\prime}+\sqrt
{\kappa}A^{\nu,\rho}\Omega_{\nu}\left(  1\right)  A^{y,\rho\prime} & \frac
{1}{\kappa}A^{\nu,\theta}\Omega_{y}A^{\nu,\theta\prime}+A^{\nu,\rho}%
\Omega_{\nu}\left(  1\right)  A^{\nu,\rho\prime}%
\end{array}
\right]
\]
The form of the matrices $\Omega_{y}$ and $\Omega_{\nu}$ follow from Condition
\ref{s-var-ML} in the case of the maximum likelihood estimator. For the moment
based estimator, $\Omega_{y}$ and $\Omega_{\nu}$ follow from Condition
\ref{s-var-GMM}, the definition of $s_{M}^{y}\left(  \theta,\rho\right)  $ and
$s_{M}^{\nu}\left(  \beta,\rho\right)  $ and Conditions \ref{Hessian}(i) and (ii).

\subsection{Proof of Theorem \ref{FCLT_SI}}

We first establish the joint stable convergence of $\left(  V_{\tau n}\left(
r\right)  ,s_{ML}^{y}\right)  .$ Recall that\textbf{ }%
\[
\tau^{-1/2}\nu_{\tau,t}=\exp\left(  \left(  t-\min\left(  1,\tau_{0}\right)
\right)  \gamma/\tau\right)  V\left(  0\right)  +\tau^{-1/2}\sum
_{s=\min\left(  1,\tau_{0}\right)  +1}^{t}\exp\left(  \left(  t-s\right)
\gamma/\tau\right)  \eta_{s}%
\]
and $V_{\tau n}\left(  r\right)  =\tau^{-1/2}\nu_{\tau,\tau_{0}+\left[  \tau
r\right]  }$. Define $\tilde{V}_{\tau n}\left(  r\right)  =\tau^{-1/2}%
\sum_{s=\min\left(  1,\tau_{0}\right)  +1}^{\tau_{0}+\left[  \tau r\right]
}\exp\left(  -s\gamma/\tau\right)  \eta_{s}.$ It follows that
\[
\tau^{-1/2}\nu_{\tau,\tau_{0}+\left[  \tau r\right]  }=\exp\left(  \left(
t-\min\left(  1,\tau_{0}\right)  \right)  \gamma/\tau\right)  V\left(
0\right)  +\exp\left(  \left[  \tau r\right]  \gamma/\tau\right)  \tilde
{V}_{\tau n}\left(  r\right)  .
\]
We establish joint stable convergence of $\left(  \tilde{V}_{\tau n}\left(
r\right)  ,s_{ML}^{y}\right)  $ and use the continuous mapping theorem to deal
with the first term in $\tau^{-1/2}\nu_{\tau\left[  \tau r\right]  }.$ By the
continuous mapping theorem (see Billingsley (1968, p.30)), the
characterization of stable convergence on $D\left[  0,1\right]  $ (as given in
JS, Theorem VIII 5.33(ii)) and an argument used in Kuersteiner and Prucha
(2013, p.119), stable convergence of $\left(  \tilde{V}_{\tau n}\left(
r\right)  ,s_{ML}^{y}\right)  $ implies that%
\[
\left(  \exp\left(  \left[  \tau r\right]  \gamma/\tau\right)  \tilde{V}_{\tau
n}\left(  r\right)  ,s_{ML}^{y}\right)
\]
also converges jointly and $\mathcal{C}$-stably. Subsequently, this argument
will simply be referred to as the `continuous mapping theorem'. In addition
$\exp\left(  \left(  \left[  \tau r\right]  -\min\left(  1,\tau_{0}\right)
\right)  \gamma/\tau\right)  V\left(  0\right)  \rightarrow^{p}\exp\left(
r\gamma\right)  V\left(  0\right)  $ which is measurable with respect to
$\mathcal{C}$. Together these results imply joint stable convergence of
$\left(  V_{\tau n}\left(  r\right)  ,s_{ML}^{y}\right)  $. We thus turn to
$\left(  \tilde{V}_{\tau n}\left(  r\right)  ,s_{ML}^{y}\right)  $. To apply
Theorem \ref{FCLT} we need to show that $\psi_{\tau,s}=\exp\left(
-s\gamma/\tau\right)  \eta_{s}$ satisfies Conditions \ref{Diag_CLT_Cond} iv)
and \ref{Omega_z}. Since
\begin{equation}
\left\vert \exp\left(  -s\gamma/\tau\right)  \eta_{s}\right\vert ^{2+\delta
}=\left\vert \exp\left(  -s/\tau\right)  \right\vert ^{\gamma\left(
2+\delta\right)  }\left\vert \eta_{s}\right\vert ^{2+\delta}=e^{-\gamma
s\left(  2+\delta\right)  /\tau}\left\vert \eta_{s}\right\vert ^{2+\delta}%
\leq\left\vert \eta_{s}\right\vert ^{2+\delta} \label{UI}%
\end{equation}
such that
\[
E\left[  \left\vert \exp\left(  -s\gamma/\tau\right)  \eta_{s}\right\vert
^{2+\delta}\right]  \leq C
\]
and Condition \ref{Diag_CLT_Cond} iv) holds. Note that $E\left[  \left\vert
\eta_{t}\right\vert ^{2+\delta}\right]  \leq C$ holds since we impose
Condition \ref{Diag_Unit Root CLT_Cond}. Next, note that $E\left[  \exp\left(
-2s\gamma/\tau\right)  \eta_{s}^{2}\right]  =\sigma^{2}\exp\left(
-2s\gamma/\tau\right)  $. Then, it follows from the proof of Chan and Wei
(1987, Equation 2.3)\footnote{See Appendix \ref{supplemental-appendix} for
details.} that
\begin{equation}
\tau^{-1}\sum_{t=\tau_{0}+\left[  \tau s\right]  +1}^{\tau_{0}+\left[  \tau
r\right]  }\left(  \psi_{\tau,s}\right)  ^{2}=\tau^{-1}\sum_{t=\tau
_{0}+\left[  \tau s\right]  +1}^{\tau_{0}+\left[  \tau r\right]  }\exp\left(
-2\gamma t/\tau\right)  \eta_{t}^{2}\rightarrow^{p}\sigma^{2}\int_{s}^{r}%
\exp\left(  -2\gamma t\right)  dt. \label{CW2.3}%
\end{equation}
In this case, $\Omega_{\nu}\left(  r\right)  =\sigma^{2}\left(  1-\exp\left(
-2r\gamma\right)  \right)  /2\gamma$ and $\left(  \dot{\Omega}_{\nu}\left(
r\right)  \right)  ^{1/2}=\sigma\exp\left(  -\gamma r\right)  $. By the
relationship in (\ref{DCLT41d}) and Theorem \ref{FCLT} we have that
\[
\left(  \tilde{V}_{\tau n}\left(  r\right)  ,s_{ML}^{y}\right)  \Rightarrow
\left(  \sigma\int_{0}^{r}e^{-s\gamma}dW_{\nu}\left(  s\right)  ,\Omega
_{y}^{1/2}W_{y}\left(  1\right)  \right)  \text{ }\mathcal{C}\text{-stably}%
\]
which implies, by the continuous mapping theorem and $\mathcal{C}$-stable
convergence, that
\begin{equation}
\left(  V_{\tau n}\left(  r\right)  ,s_{ML}^{y}\right)  \Rightarrow\left(
\exp\left(  r\gamma\right)  V\left(  0\right)  +\sigma\int_{0}^{r}e^{\left(
r-s\right)  \gamma}dW_{\nu}\left(  s\right)  ,\Omega_{y}^{1/2}W_{y}\left(
1\right)  \right)  \text{ }\mathcal{C}\text{-stably.} \label{FCLT_SI_5}%
\end{equation}
Note that $\sigma\int_{0}^{r}e^{\left(  r-s\right)  \gamma}dW_{\nu}\left(
s\right)  $ is the same term as in Phillips (1987) while the limit given in
(\ref{FCLT_SI_5}) is the same as in Kurtz and Protter (1991,p.1043).

We now square (\ref{Unit Root Gen Mech}) and sum both sides as in Chan and Wei
(1987, Equation (2.8) or Phillips, (1987) to write
\begin{equation}
\tau^{-1}\sum_{s=\tau_{0}+1}^{\tau+\tau_{0}}\nu_{\tau s-1}\eta_{s}%
=\frac{e^{-\gamma/\tau}}{2}\tau^{-1}\left(  \nu_{\tau,\tau+\tau_{0}}^{2}%
-\nu_{\tau,\tau_{0}}^{2}\right)  +\frac{\tau e^{-\gamma/\tau}}{2}\left(
1-e^{2\gamma/\tau}\right)  \tau^{-2}\sum_{s=\tau_{0}+1}^{\tau+\tau_{0}}%
\nu_{\tau s-1}^{2}-\frac{e^{-\gamma/\tau}}{2}\tau^{-1}\sum_{s=\tau_{0}%
+1}^{\tau+\tau_{0}}\eta_{s}^{2}. \label{FCLT_SI_6}%
\end{equation}
We note that $e^{-\gamma/\tau}\rightarrow1,$ $\tau e^{-\gamma/\tau}\left(
1-e^{2\gamma/\tau}\right)  \rightarrow-2\gamma$. Furthermore, note that for
all $\alpha,\varepsilon>0$ it follows by the Markov and triangular
inequalities and Condition \ref{Diag_Unit Root CLT_Cond}iv) that
\begin{align*}
&  P\left(  \left\vert \tau^{-1}\sum_{t=\tau_{0}+1}^{\tau+\tau_{0}}E\left[
\eta_{s}^{2}1\left\{  \left\vert \eta_{t}\right\vert >\tau^{1/2}%
\alpha\right\}  |\mathcal{G}_{n,t^{\ast}n}\right]  \right\vert >\varepsilon
\right) \\
&  \leq\frac{1}{\tau\varepsilon}\sum_{t=\tau_{0}+1}^{\tau+\tau_{0}}E\left[
\eta_{s}^{2}1\left\{  \left\vert \eta_{t}\right\vert >\tau^{1/2}%
\alpha\right\}  \right]  \leq\frac{\sup_{t}E\left[  \left\vert \eta
_{t}\right\vert ^{2+\delta}\right]  }{\alpha^{\delta}\tau^{\delta/2}%
}\rightarrow0\text{ as }\tau\rightarrow\infty.
\end{align*}
such that Condition 1.3 of Chan and Wei (1987) holds. Let $U_{\tau,k}^{2}%
=\tau^{-1}\sum_{t=\tau_{0}+1}^{k+\tau_{0}}E\left[  \eta_{s}^{2}|\mathcal{G}%
_{n,t^{\ast}n}\right]  .$ Then, by Holder's and Jensen's inequality
\begin{equation}
E\left[  \left\vert U_{\tau,\tau}\right\vert ^{2+\delta}\right]  \leq\tau
^{-1}\sum_{t=\tau_{0}+1}^{\tau+\tau_{0}}E\left[  \left\vert E\left[  \eta
_{s}^{2}|\mathcal{G}_{n,t^{\ast}n}\right]  \right\vert ^{1+\delta/2}\right]
\leq\sup_{t}E\left[  \left\vert \eta_{t}\right\vert ^{2+\delta}\right]
<\infty\label{FCLT_SI_6_1}%
\end{equation}
such that $U_{\tau,\tau}^{2}$ is uniformly integrable. The bound in
(\ref{FCLT_SI_6_1}) also means that by Theorem 2.23 of Hall and Heyde it
follows that $E\left[  \left\vert U_{\tau,\tau}^{2}-\tau^{-1}\sum_{s=\tau
_{0}+1}^{\tau+\tau_{0}}\eta_{t}^{2}\right\vert \right]  \rightarrow0$ and thus
by Condition \ref{Diag_Unit Root CLT_Cond} vii) and by Markov's inequality
\[
\tau^{-1}\sum_{s=\tau_{0}+1}^{\tau+\tau_{0}}\eta_{t}^{2}%
\overset{p}{\rightarrow}\sigma^{2}.
\]
We also have
\begin{equation}
\tau^{-1}\nu_{\tau,\tau+\tau_{0}}^{2}=V_{\tau n}\left(  1\right)  ^{2},
\label{FCLT_SI_7}%
\end{equation}%
\[
\tau^{-1}\nu_{\tau,\tau_{0}}^{2}\overset{p}{\rightarrow}V\left(  0\right)
^{2}%
\]
and%
\[
\tau^{-2}\sum_{s=\tau_{0}+1}^{\tau+\tau_{0}}\nu_{\tau s-1}^{2}=\tau^{-1}%
\sum_{s=1}^{\tau}V_{\tau n}^{2}\left(  \frac{s}{\tau}\right)  =\int_{0}%
^{1}V_{\tau n}^{2}\left(  r\right)  dr
\]
such that by the continuous mapping theorem and (\ref{FCLT_SI_5}) it follows
that
\begin{equation}
\tau^{-1}\sum_{s=\tau_{0}+1}^{\tau+\tau_{0}}\nu_{\tau s-1}\eta_{s}%
\Rightarrow\frac{1}{2}\left(  V_{\gamma,V\left(  0\right)  }\left(  1\right)
^{2}-V\left(  0\right)  ^{2}\right)  -\gamma\int_{0}^{1}V_{\gamma,V\left(
0\right)  }\left(  r\right)  ^{2}dr-\frac{\sigma^{2}}{2}. \label{FCLT_SI_9}%
\end{equation}
An application of Ito's calculus to $V_{\gamma,V\left(  0\right)  }\left(
r\right)  ^{2}/2$ shows that the RHS of (\ref{FCLT_SI_9}) is equal to
$\sigma\int_{0}^{1}V_{\gamma,V\left(  0\right)  }dW_{\nu}$ which also appears
in Kurtz and Protter (1991, Equation 3.10). However, note that the results in
Kurtz and Protter (1991) do not establish stable convergence and thus don't
directly apply here. When $V\left(  0\right)  =0$ these expressions are the
same as in Phillips (1987, Equation 8). It then is a further consequence of
the continuous mapping theorem that
\[
\left(  V_{\tau n}\left(  r\right)  ,s_{ML}^{y},\tau^{-1}\sum_{s=\tau_{0}%
+1}^{\tau+\tau_{0}}\nu_{\tau s-1}\eta_{s}\right)  \Rightarrow\left(
V_{\gamma,V\left(  0\right)  }\left(  r\right)  ,\Omega_{y}^{1/2}W_{y}\left(
1\right)  ,\sigma\int_{0}^{1}V_{\gamma,V\left(  0\right)  }\left(  r\right)
dW_{\nu}\left(  r\right)  \right)  \text{ (}\mathcal{C}\text{-stably).}%
\]

\subsection{Proof of Theorem \ref{CLT_MLE_Unit Root}}

For $\tilde{s}_{it}\left(  \phi\right)  =\left(  \tilde{s}_{it}^{y}\left(
\theta,\rho\right)  ^{\prime},\tilde{s}_{it,\rho}^{\nu}\left(  \rho\right)
\right)  ^{\prime}$ we note that in the case of the unit root model
\[
\frac{\partial\tilde{s}_{it}\left(  \phi\right)  }{\partial\phi^{\prime}%
}=\left[
\begin{array}
[c]{cc}%
\partial\tilde{s}_{it}^{y}\left(  \theta,\rho\right)  /\partial\theta^{\prime}
& \partial\tilde{s}_{it}^{y}\left(  \theta,\rho\right)  /\partial\rho^{\prime
}\\
0 & \partial\tilde{s}_{it,\rho}^{\nu}\left(  \rho\right)  /\partial
\rho^{\prime}%
\end{array}
\right]  .
\]
Defining
\[
A_{\tau n}^{y}\left(  \phi\right)  =\left(  \sum_{t=\min(1,\tau_{0}+1)}%
^{\max(T,\tau_{0}+\tau)}\sum_{i=1}^{n}\frac{\partial\tilde{s}_{it}^{y}\left(
\phi\right)  }{\partial\phi^{\prime}}D_{n\tau}\right)
\]
and partitioning $A_{\tau n}^{y}\left(  \phi\right)  =\left(  A_{\tau
n}^{y,\theta}\left(  \phi\right)  ,A_{\tau n}^{y,\rho}\left(  \phi\right)
\right)  $ where $A_{\tau n}^{y,\theta}\left(  \phi\right)  $ and $A_{\tau
n}^{y,\rho}\left(  \phi\right)  $ contain the partial derivatives with respect
to $\theta$ and $\rho,$ we have as before for some $\left\Vert \tilde{\phi
}-\phi\right\Vert \leq\left\Vert \hat{\phi}-\phi\right\Vert $ that for
\[
A_{\tau n}\left(  \phi\right)  =\left[
\begin{array}
[c]{cc}%
A_{\tau n}^{y,\theta}\left(  \phi\right)  & A_{\tau n}^{y,\rho}\left(
\phi\right) \\
0 & -\tau^{-2}\sum_{t=\tau_{0}}^{\tau_{0}+\tau}\nu_{\tau,t}^{2}%
\end{array}
\right]  ,
\]
we have%
\[
D_{n\tau}^{-1}\left(  \hat{\phi}-\phi_{0}\right)  =-A_{\tau n}\left(
\tilde{\phi}\right)  ^{-1}\sum_{t=\min(1,\tau_{0}+1)}^{\max(T,\tau_{0}+\tau
)}\sum_{i=1}^{n}\tilde{s}_{it}\left(  \phi_{0}\right)
\]
Using the representation
\[
\tau^{-2}\sum_{t=\tau_{0}}^{\tau_{0}+\tau}\nu_{\tau,t}^{2}=\int_{0}^{1}V_{\tau
n}\left(  r\right)  ^{2}dr,
\]
it follows from the continuous mapping theorem and Theorem \ref{FCLT_SI} that
\begin{align}
&  \left(  V_{\tau n}\left(  r\right)  ,s_{ML}^{y},A_{\tau n}^{y}\left(
\phi_{0}\right)  ,\int_{0}^{1}V_{\tau n}\left(  r\right)  ^{2}dr,\tau^{-1}%
\sum_{s=\tau_{0}+1}^{\tau+\tau_{0}}\nu_{\tau s-1}\eta\right)
\label{Joint Convergence}\\
&  \Rightarrow\left(  V\left(  r\right)  ,\Omega_{y}^{1/2}W_{y}\left(
1\right)  ,A^{y}\left(  \phi_{0}\right)  ,\int_{0}^{1}V_{\gamma,V\left(
0\right)  }\left(  r\right)  ^{2}dr,\int_{0}^{s}\sigma V_{\gamma,V\left(
0\right)  }dW_{\nu}\right)  \text{ (}\mathcal{C}\text{-stably).}\nonumber
\end{align}
The partitioned inverse formula implies that
\begin{equation}
A\left(  \phi_{0}\right)  ^{-1}=\left[
\begin{array}
[c]{cc}%
A_{y,\theta}^{-1} & A_{y,\theta}^{-1}A_{y,\rho}\left(  \int_{0}^{1}%
V_{\gamma,V\left(  0\right)  }\left(  r\right)  ^{2}dr\right)  ^{-1}\\
0 & -\left(  \int_{0}^{1}V_{\gamma,V\left(  0\right)  }\left(  r\right)
^{2}dr\right)  ^{-1}%
\end{array}
\right]  \label{Partitioned Inverse}%
\end{equation}
By Condition \ref{Unit Root Uniform Hessian Y}, (\ref{Joint Convergence}) and
the continuous mapping theorem it follows that
\begin{equation}
D_{n\tau}^{-1}\left(  \hat{\phi}-\phi_{0}\right)  \Rightarrow-A\left(
\phi_{0}\right)  ^{-1}\left[
\begin{array}
[c]{c}%
\Omega_{y}^{1/2}W_{y}\left(  1\right) \\
\int_{0}^{s}\sigma V_{\gamma,V\left(  0\right)  }dW_{\nu}%
\end{array}
\right]  . \label{Joint Limit}%
\end{equation}
The result now follows immediately from (\ref{Partitioned Inverse}) and
(\ref{Joint Limit}).

\subsection{Proof of (\ref{tightness-objective})\label{Proof_tight}}

We will follow Billingsley (1968, p.208). Let%
\begin{align*}
\xi_{\tau,t}  &  =\lambda_{\nu}^{\prime}\left(  \psi_{\tau,t}^{\nu}-E\left[
\psi_{\tau,t}^{\nu}|\mathcal{G}_{\tau n,\left(  t-\min\left(  1,\tau
_{0}\right)  n+1\right)  }\right]  \right)  =\sqrt{\tau}\lambda_{\nu}^{\prime
}\ddot{\psi}_{\tau,t}^{\nu},\\
\xi_{\tau,t}^{u}  &  =\xi_{\tau,t}1\left(  \left\vert \xi_{\tau,t}\right\vert
\leq u\right)  ,\\
\eta_{\tau,t}^{u}  &  =\xi_{\tau,t}^{u}-E\left[  \left.  \xi_{\tau,t}%
^{u}\right\vert \mathcal{G}_{\tau n,\left(  t-\min\left(  1,\tau_{0}\right)
n+1\right)  }\right]  ,\\
\delta_{\tau,t}^{u}  &  =\xi_{\tau,t}-\eta_{\tau,t}^{u}=\xi_{\tau,t}-\xi
_{\tau,t}^{u}-E\left[  \left.  \xi_{\tau,t}-\xi_{\tau,t}^{u}\right\vert
\mathcal{G}_{\tau n,\left(  t-\min\left(  1,\tau_{0}\right)  n+1\right)
}\right]
\end{align*}
and note that the expectation in (\ref{tightness-objective}) can be written as%
\begin{align}
&  E\left[  \max_{s\leq\tau}\left\vert
{\textstyle\sum\nolimits_{t=k+1}^{k+s}}
\lambda_{\nu}^{\prime}\ddot{\psi}_{\tau,t}^{\nu}\right\vert ^{2}\cdot1\left(
\max_{s\leq\tau}\left\vert
{\textstyle\sum\nolimits_{t=k+1}^{k+s}}
\lambda_{\nu}^{\prime}\ddot{\psi}_{\tau,t}^{\nu}\right\vert \geq c\right)
\right] \nonumber\\
&  =E\left[  \max_{s\leq\tau}\left\vert \frac{1}{\sqrt{\tau}}%
{\textstyle\sum\nolimits_{t=k+1}^{k+s}}
\xi_{\tau,t}\right\vert ^{2}\cdot1\left(  \max_{s\leq\tau}\left\vert \frac
{1}{\sqrt{\tau}}%
{\textstyle\sum\nolimits_{t=k+1}^{k+s}}
\xi_{\tau,t}\right\vert \geq c\right)  \right]  . \label{tight-ele-0}%
\end{align}
We will use the fact that%
\begin{equation}
\max_{s\leq\tau}\left\vert \frac{1}{\sqrt{\tau}}%
{\textstyle\sum\nolimits_{t=k+1}^{k+s}}
\xi_{\tau,t}\right\vert ^{2}\leq2\max_{s\leq\tau}\left\vert \frac{1}%
{\sqrt{\tau}}%
{\textstyle\sum\nolimits_{t=k+1}^{k+s}}
\eta_{\tau,t}^{u}\right\vert ^{2}+2\max_{s\leq\tau}\left\vert \frac{1}%
{\sqrt{\tau}}%
{\textstyle\sum\nolimits_{t=k+1}^{k+s}}
\delta_{\tau,t}^{u}\right\vert ^{2}, \label{tight-ele-1}%
\end{equation}
which also implies that
\begin{align}
&  1\left(  \max_{s\leq\tau}\left\vert \frac{1}{\sqrt{\tau}}%
{\textstyle\sum\nolimits_{t=k+1}^{k+s}}
\xi_{\tau,t}\right\vert \geq c\right) \nonumber\\
&  =1\left(  \max_{s\leq\tau}\left\vert \frac{1}{\sqrt{\tau}}%
{\textstyle\sum\nolimits_{t=k+1}^{k+s}}
\xi_{\tau,t}\right\vert ^{2}\geq c^{2}\right) \nonumber\\
&  \leq1\left(  2\max_{s\leq\tau}\left\vert \frac{1}{\sqrt{\tau}}%
{\textstyle\sum\nolimits_{t=k+1}^{k+s}}
\eta_{\tau,t}^{u}\right\vert ^{2}\geq\frac{c^{2}}{2}\right)  +1\left(
2\max_{s\leq\tau}\left\vert \frac{1}{\sqrt{\tau}}%
{\textstyle\sum\nolimits_{t=k+1}^{k+s}}
\delta_{\tau,t}^{u}\right\vert ^{2}\geq\frac{c^{2}}{2}\right) \nonumber\\
&  =1\left(  \max_{s\leq\tau}\left\vert \frac{1}{\sqrt{\tau}}%
{\textstyle\sum\nolimits_{t=k+1}^{k+s}}
\eta_{\tau,t}^{u}\right\vert \geq\frac{c}{2}\right)  +1\left(  \max_{s\leq
\tau}\left\vert \frac{1}{\sqrt{\tau}}%
{\textstyle\sum\nolimits_{t=k+1}^{k+s}}
\delta_{\tau,t}^{u}\right\vert \geq\frac{c}{2}\right)  . \label{tight-ele-2}%
\end{align}
Combining (\ref{tight-ele-0}), (\ref{tight-ele-1}) and (\ref{tight-ele-2}), we
obtain%
\begin{align}
&  E\left[  \max_{s\leq\tau}\left\vert
{\textstyle\sum\nolimits_{t=k+1}^{k+s}}
\lambda_{\nu}^{\prime}\ddot{\psi}_{\tau,t}^{\nu}\right\vert ^{2}\cdot1\left(
\max_{s\leq\tau}\left\vert
{\textstyle\sum\nolimits_{t=k+1}^{k+s}}
\lambda_{\nu}^{\prime}\ddot{\psi}_{\tau,t}^{\nu}\right\vert \geq c\right)
\right] \nonumber\\
&  \leq2E\left[  \max_{s\leq\tau}\left\vert \frac{1}{\sqrt{\tau}}%
{\textstyle\sum\nolimits_{t=k+1}^{k+s}}
\eta_{\tau,t}^{u}\right\vert ^{2}\cdot1\left(  \max_{s\leq\tau}\left\vert
\frac{1}{\sqrt{\tau}}%
{\textstyle\sum\nolimits_{t=k+1}^{k+s}}
\eta_{\tau,t}^{u}\right\vert \geq\frac{c}{2}\right)  \right]
\label{tight-term-1}\\
&  +2E\left[  \max_{s\leq\tau}\left\vert \frac{1}{\sqrt{\tau}}%
{\textstyle\sum\nolimits_{t=k+1}^{k+s}}
\delta_{\tau,t}^{u}\right\vert ^{2}\cdot1\left(  \max_{s\leq\tau}\left\vert
\frac{1}{\sqrt{\tau}}%
{\textstyle\sum\nolimits_{t=k+1}^{k+s}}
\delta_{\tau,t}^{u}\right\vert \geq\frac{c}{2}\right)  \right]
\label{tight-term-2}\\
&  +2E\left[  \max_{s\leq\tau}\left\vert \frac{1}{\sqrt{\tau}}%
{\textstyle\sum\nolimits_{t=k+1}^{k+s}}
\eta_{\tau,t}^{u}\right\vert ^{2}\cdot1\left(  \max_{s\leq\tau}\left\vert
\frac{1}{\sqrt{\tau}}%
{\textstyle\sum\nolimits_{t=k+1}^{k+s}}
\delta_{\tau,t}^{u}\right\vert \geq\frac{c}{2}\right)  \right]
\label{tight-term-3}\\
&  +2E\left[  \max_{s\leq\tau}\left\vert \frac{1}{\sqrt{\tau}}%
{\textstyle\sum\nolimits_{t=k+1}^{k+s}}
\delta_{\tau,t}^{u}\right\vert ^{2}\cdot1\left(  \max_{s\leq\tau}\left\vert
\frac{1}{\sqrt{\tau}}%
{\textstyle\sum\nolimits_{t=k+1}^{k+s}}
\eta_{\tau,t}^{u}\right\vert \geq\frac{c}{2}\right)  \right]  .
\label{tight-term-4}%
\end{align}

We will bound each term (\ref{tight-term-1}) - (\ref{tight-term-4}). First, we
have
\begin{align}
\left(  \ref{tight-term-1}\right)   &  =2E\left[  \max_{s\leq\tau}\left\vert
\frac{1}{\sqrt{\tau}}%
{\textstyle\sum\nolimits_{t=k+1}^{k+s}}
\eta_{\tau,t}^{u}\right\vert ^{2}\cdot1\left(  \max_{s\leq\tau}\left\vert
\frac{1}{\sqrt{\tau}}%
{\textstyle\sum\nolimits_{t=k+1}^{k+s}}
\eta_{\tau,t}^{u}\right\vert \geq\frac{c}{2}\right)  \right] \nonumber\\
&  \leq2\cdot\frac{4}{c^{2}}E\left[  \max_{s\leq\tau}\left\vert \frac{1}%
{\sqrt{\tau}}%
{\textstyle\sum\nolimits_{t=k+1}^{k+s}}
\eta_{\tau,t}^{u}\right\vert ^{4}\right]  .\nonumber\\
&  \leq2\cdot\frac{4}{c^{2}}\left(  \frac{4}{3}\right)  ^{4}\frac{1}{\tau^{2}%
}E\left[  \left\vert
{\textstyle\sum\nolimits_{t=k+1}^{k+\tau}}
\eta_{\tau,t}^{u}\right\vert ^{4}\right]  . \label{tight-term-1-1}%
\end{align}
We can note that
\[
E\left[  \max_{s\leq\tau}\left\vert \frac{1}{\sqrt{\tau}}%
{\textstyle\sum\nolimits_{t=k+1}^{k+s}}
\eta_{\tau,t}^{u}\right\vert ^{4}\right]  \leq\left(  \frac{4}{3}\right)
^{4}\frac{1}{\tau^{2}}E\left[  \left\vert
{\textstyle\sum\nolimits_{t=k+1}^{k+\tau}}
\eta_{\tau,t}^{u}\right\vert ^{4}\right]
\]
by using Corollary \ref{HH_Th2.1}. We can then note the fact that by
construction, (i) $\eta_{\tau,t}^{u}$ is a martingale difference; and (ii) it
is bounded by $2u$, which allows us to follow Billingsley's (1968, p.207)
argument, leading to $\frac{1}{\tau^{2}}E\left[  \left\vert
{\textstyle\sum\nolimits_{t=k+1}^{k+s}}
\eta_{\tau,t}^{u}\right\vert ^{4}\right]  \leq6\left(  2u\right)  ^{4}$.
Therefore, we have
\begin{equation}
E\left[  \max_{s\leq\tau}\left\vert \frac{1}{\sqrt{\tau}}%
{\textstyle\sum\nolimits_{t=k+1}^{k+s}}
\eta_{\tau,t}^{u}\right\vert ^{4}\right]  \leq\left(  \frac{4}{3}\right)
^{4}\cdot6\left(  2u\right)  ^{4}. \label{tight-term-1-2}%
\end{equation}
Combining (\ref{tight-term-1-1}) and (\ref{tight-term-1-2}), we obtain
\begin{equation}
\left(  \ref{tight-term-1}\right)  \leq2\frac{4}{c^{2}}\left(  \frac{4}%
{3}\right)  ^{4}6\left(  2u\right)  ^{4}=\frac{Cu^{4}}{c^{2}},
\label{tight-term-1-3}%
\end{equation}
where $C$ denotes a generic finite constant.

Second, we have%
\begin{align}
\left(  \ref{tight-term-2}\right)   &  =2E\left[  \max_{s\leq\tau}\left\vert
\frac{1}{\sqrt{\tau}}%
{\textstyle\sum\nolimits_{t=k+1}^{k+s}}
\delta_{\tau,t}^{u}\right\vert ^{2}\cdot1\left(  \max_{s\leq\tau}\left\vert
\frac{1}{\sqrt{\tau}}%
{\textstyle\sum\nolimits_{t=k+1}^{k+s}}
\delta_{\tau,t}^{u}\right\vert \geq\frac{c}{2}\right)  \right] \nonumber\\
&  \leq2E\left[  \max_{s\leq\tau}\left\vert \frac{1}{\sqrt{\tau}}%
{\textstyle\sum\nolimits_{t=k+1}^{k+s}}
\delta_{\tau,t}^{u}\right\vert ^{2}\right]  . \label{tight-term-2-1}%
\end{align}
Using Lemma \ref{HH_Th2.1} in Appendix \ref{Max_Ineq_Triang}, and the fact
that by construction, $\delta_{\tau,t}^{u}$ is a martingale difference, we can
conclude that%
\begin{align*}
E\left[  \max_{s\leq\tau}\left\vert \frac{1}{\sqrt{\tau}}%
{\textstyle\sum\nolimits_{t=k+1}^{k+s}}
\delta_{\tau,t}^{u}\right\vert ^{2}\right]   &  =\frac{1}{\tau}E\left[
\max_{s\leq\tau}\left\vert
{\textstyle\sum\nolimits_{t=k+1}^{k+s}}
\delta_{\tau,t}^{u}\right\vert ^{2}\right] \\
&  \leq\frac{1}{\tau}\cdot4E\left[  \left(
{\textstyle\sum\nolimits_{t=k+1}^{k+\tau}}
\delta_{\tau,t}^{u}\right)  ^{2}\right] \\
&  =\frac{4}{\tau}%
{\textstyle\sum\nolimits_{t=k+1}^{k+\tau}}
E\left[  \left(  \delta_{\tau,t}^{u}\right)  ^{2}\right]  .
\end{align*}
Now, note that%
\begin{align*}
E\left[  \left(  \delta_{\tau,t}^{u}\right)  ^{2}\right]   &  =E\left[
\left(  \xi_{\tau,t}-\xi_{\tau,t}^{u}-E\left[  \left.  \xi_{\tau,t}-\xi
_{\tau,t}^{u}\right\vert \mathcal{G}_{\tau n,\left(  t-\min\left(  1,\tau
_{0}\right)  n+1\right)  }\right]  \right)  ^{2}\right] \\
&  \leq E\left[  \left(  \xi_{\tau,t}-\xi_{\tau,t}^{u}-E\left[  \xi_{\tau
,t}-\xi_{\tau,t}^{u}\right]  \right)  ^{2}\right] \\
&  \leq E\left[  \left(  \xi_{\tau,t}-\xi_{\tau,t}^{u}\right)  ^{2}\right] \\
&  =E\left[  \left(  \xi_{\tau,t}-\xi_{\tau,t}1\left(  \left\vert \xi_{\tau
,t}\right\vert \leq u\right)  \right)  ^{2}\right] \\
&  =E\left[  \left(  \xi_{\tau,t}1\left(  \left\vert \xi_{\tau,t}\right\vert
>u\right)  \right)  ^{2}\right] \\
&  =E\left[  \xi_{\tau,t}^{2}1\left(  \left\vert \xi_{\tau,t}\right\vert
>u\right)  \right] \\
&  \leq\frac{1}{u^{\delta}}E\left[  \xi_{\tau,t}^{2+\delta}\right]  \leq
\frac{1}{u^{\delta}}\sup_{t}E\left[  \xi_{\tau,t}^{2+\delta}\right]  ,
\end{align*}
where the first inequality is by Billingsley's (1968, p. 184) Lemma 1. It
follows that
\begin{equation}
E\left[  \max_{s\leq\tau}\left\vert \frac{1}{\sqrt{\tau}}%
{\textstyle\sum\nolimits_{t=k+1}^{k+s}}
\delta_{\tau,t}^{u}\right\vert ^{2}\right]  \leq\frac{C}{u^{\delta}}\sup
_{t}E\left[  \xi_{\tau,t}^{2+\delta}\right]  . \label{tight-term-2-2}%
\end{equation}
Combining (\ref{tight-term-2-1}) and (\ref{tight-term-2-2}), we obtain
\begin{equation}
\left(  \ref{tight-term-2}\right)  \leq\frac{C}{u^{\delta}}.\sup_{t}E\left[
\xi_{\tau,t}^{2+\delta}\right]  \label{tight-term-2-3}%
\end{equation}

Third, we have%
\begin{align}
\left(  \ref{tight-term-3}\right)   &  =2E\left[  \max_{s\leq\tau}\left\vert
\frac{1}{\sqrt{\tau}}%
{\textstyle\sum\nolimits_{t=k+1}^{k+s}}
\eta_{\tau,t}^{u}\right\vert ^{2}\cdot1\left(  \max_{s\leq\tau}\left\vert
\frac{1}{\sqrt{\tau}}%
{\textstyle\sum\nolimits_{t=k+1}^{k+s}}
\delta_{\tau,t}^{u}\right\vert \geq\frac{c}{2}\right)  \right] \nonumber\\
&  \leq2\left(  E\left[  \max_{s\leq\tau}\left\vert \frac{1}{\sqrt{\tau}}%
{\textstyle\sum\nolimits_{t=k+1}^{k+s}}
\eta_{\tau,t}^{u}\right\vert ^{4}\right]  \right)  ^{\frac{1}{2}}\left(
E\left[  1\left(  \max_{s\leq\tau}\left\vert \frac{1}{\sqrt{\tau}}%
{\textstyle\sum\nolimits_{t=k+1}^{k+s}}
\delta_{\tau,t}^{u}\right\vert \geq\frac{c}{2}\right)  ^{2}\right]  \right)
^{\frac{1}{2}}\nonumber\\
&  =2\cdot\left(  \left(  \frac{4}{3}\right)  ^{4}\cdot6\left(  2u\right)
^{4}\right)  ^{\frac{1}{2}}\cdot P\left[  \max_{s\leq\tau}\left\vert \frac
{1}{\sqrt{\tau}}%
{\textstyle\sum\nolimits_{t=k+1}^{k+s}}
\delta_{\tau,t}^{u}\right\vert \geq\frac{c}{2}\right]  ,
\label{tight-term-3-1}%
\end{align}
where the first inequality is by Cauchy-Schwarz and the last equality is by
(\ref{tight-term-1-2}). We further have%
\begin{align}
P\left[  \max_{s\leq\tau}\left\vert \frac{1}{\sqrt{\tau}}%
{\textstyle\sum\nolimits_{t=k+1}^{k+s}}
\delta_{\tau,t}^{u}\right\vert \geq\frac{c}{2}\right]   &  \leq\frac{4}{c^{2}%
}E\left[  \left(  \max_{s\leq\tau}\left\vert \frac{1}{\sqrt{\tau}}%
{\textstyle\sum\nolimits_{t=k+1}^{k+s}}
\delta_{\tau,t}^{u}\right\vert \right)  ^{2}\right] \nonumber\\
&  \leq\frac{4}{c^{2}}\frac{C}{u^{\delta}}\sup_{t}E\left[  \xi_{\tau
,t}^{2+\delta}\right] \nonumber\\
&  =\frac{C}{c^{2}u^{\delta}}\sup_{t}E\left[  \xi_{\tau,t}^{2+\delta}\right]
, \label{tight-term-3-2}%
\end{align}
where the first inequality is by Markov, and the second inequality is by
(\ref{tight-term-2-2}). Combining (\ref{tight-term-3-1}) and
(\ref{tight-term-3-2}), we obtain%
\begin{equation}
\left(  \ref{tight-term-3}\right)  \leq\frac{Cu^{2-\delta}}{c^{2}}\sup
_{t}E\left[  \xi_{\tau,t}^{2+\delta}\right]  . \label{tight-term-3-3}%
\end{equation}

Fourth, we have%
\begin{align}
\left(  \ref{tight-term-4}\right)   &  =2E\left[  \max_{s\leq\tau}\left\vert
\frac{1}{\sqrt{\tau}}%
{\textstyle\sum\nolimits_{t=k+1}^{k+s}}
\delta_{\tau,t}^{u}\right\vert ^{2}\cdot1\left(  \max_{s\leq\tau}\left\vert
\frac{1}{\sqrt{\tau}}%
{\textstyle\sum\nolimits_{t=k+1}^{k+s}}
\eta_{\tau,t}^{u}\right\vert \geq\frac{c}{2}\right)  \right] \nonumber\\
&  \leq2E\left[  \max_{s\leq\tau}\left\vert \frac{1}{\sqrt{\tau}}%
{\textstyle\sum\nolimits_{t=k+1}^{k+s}}
\delta_{\tau,t}^{u}\right\vert ^{2}\right] \nonumber\\
&  \leq\frac{C}{u^{\delta}}\sup_{t}E\left[  \xi_{\tau,t}^{2+\delta}\right]  ,
\label{tight-term-4-3}%
\end{align}
where the second inequality is by (\ref{tight-term-2-2}).

Combining (\ref{tight-term-1-3}), (\ref{tight-term-2-3}),
(\ref{tight-term-3-3}), (\ref{tight-term-4}), we obtain that%
\begin{align*}
&  E\left[  \max_{s\leq\tau}\left\vert
{\textstyle\sum\nolimits_{t=k+1}^{k+s}}
\lambda_{\nu}^{\prime}\ddot{\psi}_{\tau,t}^{\nu}\right\vert ^{2}\cdot1\left(
\max_{s\leq\tau}\left\vert
{\textstyle\sum\nolimits_{t=k+1}^{k+s}}
\lambda_{\nu}^{\prime}\ddot{\psi}_{\tau,t}^{\nu}\right\vert \geq c\right)
\right] \\
&  \leq\frac{Cu^{4}}{c^{2}}+\frac{C}{u^{\delta}}.\sup_{t}E\left[  \xi_{\tau
,t}^{2+\delta}\right]  +\frac{Cu^{2-\delta}}{c^{2}}\sup_{t}E\left[  \xi
_{\tau,t}^{2+\delta}\right]  +\frac{C}{u^{\delta}}\sup_{t}E\left[  \xi
_{\tau,t}^{2+\delta}\right]  .
\end{align*}
In view of (\ref{tightness-objective}), it suffices to prove that we can
choose $u\rightarrow\infty$ as a function of $c$ such that the terms above all
converge to zero as $c\rightarrow\infty$. This we can do by choosing
$u=c^{1/3}$, for example.

\section{A Maximal Inequality for Triangular Arrays\label{Max_Ineq_Triang}}

In this section we extend Hall and Heyde (1980) Theorem 2.1 and Corollary 2.1
to the case of triangular arrays of martingales. Let $\mathcal{F}_{\tau,s}$ be
a increasing filtration such that for each $\tau,$ $\mathcal{F}_{\tau
,s}\subset\mathcal{F}_{\tau,s+1}.$ Let $S_{\tau,s}$ be adapted to
$\mathcal{F}_{\tau,s}$ and assume that for all $\tau$ and $k>0,$ $E\left[
S_{\tau,s+k}|\mathcal{F}_{\tau,s}\right]  =S_{\tau,s}.$ Since for $p\geq1,$
$\left\vert .\right\vert ^{p}$ is convex, it follows by Jensen's inequality
for conditional expectations that $E\left[  \left\vert S_{\tau,s+k}\right\vert
^{p}|\mathcal{F}_{\tau,s}\right]  \geq\left\vert E\left[  S_{\tau
,s+k}|\mathcal{F}_{\tau,s}\right]  \right\vert ^{p}=\left\vert S_{\tau
,s}\right\vert ^{p}.$ Thus, for each $\tau,$ $\left\{  \left\vert S_{\tau
,s}\right\vert ^{p},\mathcal{F}_{\tau,s}\right\}  $ is a submartingale. We say
that $\left\{  \left\vert S_{\tau,s}\right\vert ^{p},\mathcal{F}_{\tau
,s}\right\}  $ is a triangular array of submartingales. If $\left\{
S_{\tau,s},\mathcal{F}_{\tau,s}\right\}  $ is a triangular array
submartingales then the same holds for $\left\{  \left\vert S_{\tau
,s}\right\vert ^{p},\mathcal{F}_{\tau,s}\right\}  .$ The following Lemma
extends Theorem 2.1 of Hall and Heyde (1980) to triangular arrays of submartingales.

\begin{lemma}
\label{HH_Th2.1}For each $\tau,$ let $\left\{  S_{\tau,s},\mathcal{F}_{\tau
,s}\right\}  $ be a submartingale $S_{\tau,s}$ with respect to an increasing
filtration $\mathcal{F}_{\tau,s}.$ Then for each real $\lambda,$ and each
$\tau$ it follows that
\[
\lambda P\left(  \max_{s\leq\tau}S_{\tau,s}>\lambda\right)  \leq E\left[
S_{\tau,\tau}1\left\{  \max_{s\leq\tau}S_{\tau,s}>\lambda\right\}  \right]  .
\]

\end{lemma}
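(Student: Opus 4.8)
The plan is to observe that, once $\tau$ is held fixed, the statement is exactly Doob's maximal inequality for the finite submartingale $\{S_{\tau,s},\mathcal{F}_{\tau,s}\}_{s=1}^{\tau}$; the triangular-array indexing plays no role, since for that fixed $\tau$ the collection $\{\mathcal{F}_{\tau,s}\}_{s}$ is an increasing filtration and the submartingale property $E[S_{\tau,s+k}\mid\mathcal{F}_{\tau,s}]\geq S_{\tau,s}$ holds by hypothesis. Accordingly the whole argument is carried out with $\tau$ fixed throughout, and the conclusion then holds for each $\tau$ separately.

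First I would introduce the first-exceedance decomposition of the maximal event. Put $A=\{\max_{s\leq\tau}S_{\tau,s}>\lambda\}$ and, for $s=1,\dots,\tau$, let $A_{s}=\{S_{\tau,1}\leq\lambda,\dots,S_{\tau,s-1}\leq\lambda,\ S_{\tau,s}>\lambda\}$ be the event that $s$ is the first index at which the path exceeds $\lambda$. The $A_{s}$ are pairwise disjoint, $A=\bigcup_{s=1}^{\tau}A_{s}$, and, crucially, $A_{s}\in\mathcal{F}_{\tau,s}$ because $A_{s}$ is a Borel function of $S_{\tau,1},\dots,S_{\tau,s}$, each of which is $\mathcal{F}_{\tau,s}$-measurable by adaptedness (recall $\mathcal{F}_{\tau,s-1}\subset\mathcal{F}_{\tau,s}$).

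Next I would bound each piece and sum. On $A_{s}$ one has $S_{\tau,s}>\lambda$, hence the pointwise bound $\lambda\,1_{A_{s}}\leq S_{\tau,s}\,1_{A_{s}}$ holds (off $A_{s}$ both sides vanish), so $\lambda P(A_{s})\leq E[S_{\tau,s}\,1_{A_{s}}]$. Since $A_{s}\in\mathcal{F}_{\tau,s}$, $1_{A_{s}}\geq 0$, and $\tau\geq s$, the submartingale property gives $E[S_{\tau,s}\,1_{A_{s}}]\leq E[\,E[S_{\tau,\tau}\mid\mathcal{F}_{\tau,s}]\,1_{A_{s}}\,]=E[S_{\tau,\tau}\,1_{A_{s}}]$, whence $\lambda P(A_{s})\leq E[S_{\tau,\tau}\,1_{A_{s}}]$. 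Summing over $s=1,\dots,\tau$ and using disjointness together with $\sum_{s=1}^{\tau}1_{A_{s}}=1_{A}$ yields $\lambda P(A)=\sum_{s=1}^{\tau}\lambda P(A_{s})\leq\sum_{s=1}^{\tau}E[S_{\tau,\tau}\,1_{A_{s}}]=E[S_{\tau,\tau}\,1_{A}]$, which is the asserted inequality. Note the argument is valid for every real $\lambda$, since the pointwise step uses only $S_{\tau,s}>\lambda$ on $A_{s}$ and the conditional step uses only $1_{A_{s}}\geq 0$.

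There is no genuinely hard step here; the only points requiring care are that the first-exceedance sets $A_{s}$ be $\mathcal{F}_{\tau,s}$-measurable (handled by the decomposition) and that the submartingale inequality be invoked with the correct time ordering ($\tau\geq s$). The $L^{p}$ version used repeatedly in the tightness argument then follows from this lemma in the standard way, by applying it to the nonnegative submartingale $\{|S_{\tau,s}|^{p},\mathcal{F}_{\tau,s}\}$ and integrating the resulting tail bound, exactly as in Hall and Heyde (1980, Corollary 2.1).
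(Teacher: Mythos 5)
Your proposal is correct and follows essentially the same route as the paper's proof: the same first-exceedance decomposition of $\left\{ \max_{s\leq\tau}S_{\tau,s}>\lambda\right\}$ into disjoint $\mathcal{F}_{\tau,s}$-measurable events, the same pointwise bound $\lambda 1_{A_{s}}\leq S_{\tau,s}1_{A_{s}}$, and the same application of the submartingale property to replace $S_{\tau,s}$ by $E\left[ S_{\tau,\tau}\mid\mathcal{F}_{\tau,s}\right]$ before summing. The observation that the triangular-array indexing is inert once $\tau$ is fixed is exactly the point of the paper's extension, so nothing is missing.
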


\begin{proof}
The proof closely follows Hall and Heyde (1980, p.14), with the necessary
modifications. Define the event
\[
E_{\tau}=\left\{  \max_{s\leq\tau}S_{\tau,s}>\lambda\right\}  =\cup
_{i=1}^{\tau}\left\{  S_{\tau,i}>\lambda;\max_{1\leq j<i}S_{\tau,j}\leq
\lambda\right\}  =\cup_{i=1}^{\tau}E_{\tau,i}.
\]
These events are $\mathcal{F}_{\tau,i}$ measurable and disjoint. Then,
\begin{align*}
\lambda P\left(  E_{\tau}\right)   &  \leq\sum_{i=1}^{\tau}E\left[  S_{\tau
,i}1\left\{  E_{\tau,i}\right\}  \right] \\
&  \leq\sum_{i=1}^{\tau}E\left[  E\left[  S_{\tau,\tau}|\mathcal{F}_{\tau
,i}\right]  1\left\{  E_{\tau,i}\right\}  \right] \\
&  =\sum_{i=1}^{\tau}E\left[  E\left[  S_{\tau,\tau}1\left\{  E_{\tau
,i}\right\}  |\mathcal{F}_{\tau,i}\right]  \right] \\
&  =\sum_{i=1}^{\tau}E\left[  S_{\tau,\tau}1\left\{  E_{\tau,i}\right\}
\right]  =E\left[  S_{\tau,\tau}1\left\{  E_{\tau}\right\}  \right]  .
\end{align*}

\end{proof}

Corollary 2.1 in Hall and Heyde (1980) now follows in the same way. If
$\left\{  S_{\tau,s},\mathcal{F}_{\tau,s}\right\}  $ is a martingale
triangular array then $\left\{  \left\vert S_{\tau,s}\right\vert
^{p},\mathcal{F}_{\tau,s}\right\}  $ is submartingale for $p\geq1.$

\begin{corollary}
\label{HH_C2.1}For each $\tau,$ let $\left\{  S_{\tau,s},\mathcal{F}_{\tau
,s}\right\}  $ be a triangular array of a martingale $S_{\tau,s}$ with respect
to an increasing filtration $\mathcal{F}_{\tau,s}.$ Then for each real
$\lambda$ and for each $p\geq1$ and each $\tau$ it follows that
\[
\lambda^{p}P\left(  \max_{s\leq\tau}\left\vert S_{\tau,s}\right\vert
>\lambda\right)  \leq E\left[  \left\vert S_{\tau,\tau}\right\vert
^{p}\right]  .
\]

\end{corollary}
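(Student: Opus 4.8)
The plan is to deduce the corollary directly from Lemma \ref{HH_Th2.1} by passing from the martingale array $\{S_{\tau,s},\mathcal{F}_{\tau,s}\}$ to the submartingale array $\{|S_{\tau,s}|^{p},\mathcal{F}_{\tau,s}\}$. As already observed in the paragraph preceding the statement, for each fixed $\tau$ the convexity of $x\mapsto|x|^{p}$ together with Jensen's inequality for conditional expectations shows that $\{|S_{\tau,s}|^{p},\mathcal{F}_{\tau,s}\}$ is a submartingale; no joint-in-$\tau$ structure is required, since Lemma \ref{HH_Th2.1} is a statement about a single index $\tau$ and a genuine increasing filtration $\mathcal{F}_{\tau,\cdot}$.

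First I would fix $\tau$, fix $\lambda>0$, fix $p\geq1$, and apply Lemma \ref{HH_Th2.1} to the submartingale $\{|S_{\tau,s}|^{p},\mathcal{F}_{\tau,s}\}$ with the real number $\lambda^{p}$ in place of the threshold there. This yields
\[
\lambda^{p}P\left(\max_{s\leq\tau}|S_{\tau,s}|^{p}>\lambda^{p}\right)\leq E\left[|S_{\tau,\tau}|^{p}\,1\left\{\max_{s\leq\tau}|S_{\tau,s}|^{p}>\lambda^{p}\right\}\right].
\]
Next I would note that since $x\mapsto x^{p}$ is strictly increasing on $[0,\infty)$, the events $\{\max_{s\leq\tau}|S_{\tau,s}|^{p}>\lambda^{p}\}$ and $\{\max_{s\leq\tau}|S_{\tau,s}|>\lambda\}$ coincide, and then bound the indicator on the right-hand side by $1$, giving $\lambda^{p}P(\max_{s\leq\tau}|S_{\tau,s}|>\lambda)\leq E[|S_{\tau,\tau}|^{p}]$, which is the asserted inequality. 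The case $\lambda\leq0$ is either immediate (for those $p$ for which $\lambda^{p}$ is defined, the right-hand side is nonnegative and the left-hand side is at most that) or vacuous, so nothing further is needed.

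There is essentially no obstacle here: the only point that requires attention is verifying that Lemma \ref{HH_Th2.1} is genuinely applicable, i.e. that for each $\tau$ the process $\{|S_{\tau,s}|^{p}\}_{s}$ is adapted to, and a submartingale for, the increasing filtration $\mathcal{F}_{\tau,\cdot}$, which is exactly what the preceding discussion establishes. Because the whole argument is carried out separately for each $\tau$, the fact that the filtration $\mathcal{F}_{\tau,\cdot}$ (and hence the array) varies with $\tau$ plays no role, and no uniformity in $\tau$ or limiting argument is required; the multiplicative constant in the inequality is $1$, uniformly in $\tau$.
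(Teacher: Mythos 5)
Your proposal is correct and follows essentially the same route as the paper: rewrite the event $\left\{ \max_{s\leq\tau}\left\vert S_{\tau,s}\right\vert >\lambda\right\}$ as $\left\{ \max_{s\leq\tau}\left\vert S_{\tau,s}\right\vert ^{p}>\lambda^{p}\right\}$, apply Lemma \ref{HH_Th2.1} to the submartingale array $\left\{ \left\vert S_{\tau,s}\right\vert ^{p},\mathcal{F}_{\tau,s}\right\}$, and drop the indicator. The only difference is that you spell out the indicator-bounding step and the trivial $\lambda\leq0$ case, which the paper leaves implicit.
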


\begin{proof}
Note that $P\left(  \max_{s\leq\tau}\left\vert S_{\tau,s}\right\vert
>\lambda\right)  =P\left(  \max_{s\leq\tau}\left\vert S_{\tau,s}\right\vert
^{p}>\lambda^{p}\right)  .$ Then, using the fact that $\left\{  \left\vert
S_{\tau,s}\right\vert ^{p},\mathcal{F}_{\tau,s}\right\}  $ is submartingale
for $p\geq1,$ apply Lemma \ref{HH_Th2.1}.
\end{proof}

Below is a triangular array counterpart of Hall and Heyde's (1980) Theorem 2.2.

\begin{corollary}
\label{HH_T2.2}For each $\tau,$ let $\left\{  S_{\tau,s},\mathcal{F}_{\tau
,s}\right\}  $ be a triangular array of a martingale $S_{\tau,s}$ with respect
to an increasing filtration $\mathcal{F}_{\tau,s}.$ Then for each real
$\lambda$ and for each $p>1$ and each $\tau$ it follows that
\[
E\left[  \max_{s\leq\tau}\left\vert S_{\tau,s}\right\vert ^{p}\right]
\leq\left(  \frac{p}{p-1}\right)  ^{p}E\left[  \left\vert S_{\tau,\tau
}\right\vert ^{p}\right]
\]

\end{corollary}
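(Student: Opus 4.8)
The plan is to adapt the classical proof of Doob's $L^p$ maximal inequality, using the weak-type bound of Lemma \ref{HH_Th2.1} as the input that already absorbs the triangular-array structure, so that the argument of Hall and Heyde (1980, Theorem 2.2) goes through with the index $\tau$ carried along inertly. Fix $\tau$ and set $M_\tau=\max_{s\le\tau}|S_{\tau,s}|$. Since the maximum runs over the finite index set $\{1,\dots,\tau\}$ and $\{|S_{\tau,s}|^p,\mathcal{F}_{\tau,s}\}$ is a submartingale (by the convexity remark preceding Lemma \ref{HH_Th2.1}), the submartingale property gives $E[|S_{\tau,s}|^p]\le E[|S_{\tau,\tau}|^p]$ for every $s\le\tau$, hence $E[M_\tau^p]\le\sum_{s=1}^{\tau}E[|S_{\tau,s}|^p]<\infty$ whenever the right-hand side of the asserted inequality is finite; if $E[|S_{\tau,\tau}|^p]=\infty$ there is nothing to prove. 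This finiteness is precisely what will allow the division at the end.

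First I would invoke Lemma \ref{HH_Th2.1} applied to the submartingale $\{|S_{\tau,s}|,\mathcal{F}_{\tau,s}\}$ to obtain, for every $\lambda>0$, the bound $\lambda\,P(M_\tau>\lambda)\le E[|S_{\tau,\tau}|\,1\{M_\tau>\lambda\}]$. Then I would use the layer-cake identity $E[M_\tau^p]=\int_0^{\infty}p\lambda^{p-1}P(M_\tau>\lambda)\,d\lambda$, substitute the weak-type bound, and apply Tonelli's theorem to interchange the $\lambda$-integral with the expectation. This yields $E[M_\tau^p]\le p\,E\!\left[|S_{\tau,\tau}|\int_0^{M_\tau}\lambda^{p-2}\,d\lambda\right]=\tfrac{p}{p-1}\,E[|S_{\tau,\tau}|\,M_\tau^{p-1}]$, the inner integral being finite because $p>1$.

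Next I would apply Hölder's inequality with conjugate exponents $p$ and $p/(p-1)$ to the product $|S_{\tau,\tau}|\cdot M_\tau^{p-1}$, giving $E[|S_{\tau,\tau}|\,M_\tau^{p-1}]\le(E[|S_{\tau,\tau}|^p])^{1/p}(E[M_\tau^p])^{(p-1)/p}$. Combining with the previous display, $E[M_\tau^p]\le\tfrac{p}{p-1}(E[|S_{\tau,\tau}|^p])^{1/p}(E[M_\tau^p])^{(p-1)/p}$; dividing through by the finite quantity $(E[M_\tau^p])^{(p-1)/p}$ and raising both sides to the power $p$ produces the constant $(p/(p-1))^p$. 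The only step requiring care — which I would flag as the \textbf{main obstacle}, though it is minor in this setting — is justifying $E[M_\tau^p]<\infty$ before dividing; this is exactly where working at fixed $\tau$, so that the maximum is over the finite range $s\le\tau$, is essential, and it makes any truncation argument unnecessary. Everything else is a verbatim transcription of the classical proof.
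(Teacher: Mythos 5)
Your proof is correct and follows essentially the same route as the paper's: the layer-cake representation of $E\left[\max_{s\leq\tau}\left\vert S_{\tau,s}\right\vert^{p}\right]$, the weak-type bound of Lemma \ref{HH_Th2.1} applied to the submartingale $\left\vert S_{\tau,s}\right\vert$, Tonelli, H\"{o}lder with exponents $p$ and $p/(p-1)$, and division. Your explicit check that $E\left[\max_{s\leq\tau}\left\vert S_{\tau,s}\right\vert^{p}\right]<\infty$ at fixed $\tau$ before dividing is a small point of extra care that the paper leaves implicit, but it does not change the argument.
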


\begin{proof}
This proof is based on Hall and Heyde (1980, proof of Theorem 2.2). Note that
by the layer-cake representation of an integral, we have%
\[
E\left[  \max_{s\leq\tau}\left\vert S_{\tau,s}\right\vert ^{p}\right]
=\int_{0}^{\infty}P\left(  \max_{s\leq\tau}\left\vert S_{\tau,s}\right\vert
^{p}>t\right)  dt=\int_{0}^{\infty}P\left(  \max_{s\leq\tau}\left\vert
S_{\tau,s}\right\vert >t^{\frac{1}{p}}\right)  dt
\]
With the change of variable $x=t^{\frac{1}{p}}$, we get
\[
E\left[  \max_{s\leq\tau}\left\vert S_{\tau,s}\right\vert ^{p}\right]
=p\int_{0}^{\infty}x^{p-1}P\left(  \max_{s\leq\tau}\left\vert S_{\tau
,s}\right\vert >x\right)  dt
\]
Because $\left\vert S_{\tau,s}\right\vert $ is a submartingale, we can apply
Lemma \ref{HH_Th2.1} and obtain%
\begin{align*}
E\left[  \max_{s\leq\tau}\left\vert S_{\tau,s}\right\vert ^{p}\right]   &
\leq p\int_{0}^{\infty}x^{p-2}E\left[  \left\vert S_{\tau,\tau}\right\vert
1\left\{  \max_{s\leq\tau}\left\vert S_{\tau,s}\right\vert >x\right\}  \right]
\\
&  =pE\left[  \left\vert S_{\tau,\tau}\right\vert \int_{0}^{\infty}%
x^{p-2}1\left\{  \max_{s\leq\tau}\left\vert S_{\tau,s}\right\vert >x\right\}
dx\right] \\
&  =pE\left[  \left\vert S_{\tau,\tau}\right\vert \int_{0}^{\max_{s\leq\tau
}\left\vert S_{\tau,s}\right\vert }x^{p-2}dx\right] \\
&  =\frac{p}{p-1}E\left[  \left\vert S_{\tau,\tau}\right\vert \max_{s\leq\tau
}\left\vert S_{\tau,s}\right\vert ^{p-1}\right] \\
&  \leq\frac{p}{p-1}\left(  E\left[  \left\vert S_{\tau,\tau}\right\vert
^{p}\right]  \right)  ^{\frac{1}{p}}\left(  E\left[  \left(  \max_{s\leq\tau
}\left\vert S_{\tau,s}\right\vert ^{p-1}\right)  ^{q}\right]  \right)
^{\frac{1}{q}},
\end{align*}
where the last inequality is an application of H\"{o}lder's inequality for
$q=\frac{1}{1-\frac{1}{p}}=\frac{p}{p-1}$. Dividing both sides by $\left(
E\left[  \left(  \max_{s\leq\tau}\left\vert S_{\tau,s}\right\vert
^{p-1}\right)  ^{q}\right]  \right)  ^{\frac{1}{q}}=\left(  E\left[
\max_{s\leq\tau}\left\vert S_{\tau,s}\right\vert ^{p}\right]  \right)
^{\frac{1}{q}}$, we get%
\[
\left(  E\left[  \max_{s\leq\tau}\left\vert S_{\tau,s}\right\vert ^{p}\right]
\right)  ^{1-\frac{1}{q}}\leq\frac{p}{p-1}\left(  E\left[  \left\vert
S_{\tau,\tau}\right\vert ^{p}\right]  \right)  ^{\frac{1}{p}}%
\]
or%
\[
E\left[  \max_{s\leq\tau}\left\vert S_{\tau,s}\right\vert ^{p}\right]
\leq\left(  \frac{p}{p-1}\right)  ^{p}E\left[  \left\vert S_{\tau,\tau
}\right\vert ^{p}\right]  .
\]

\end{proof}

\section{Proof of (\ref{CW2.3})\label{supplemental-appendix}}

\begin{lemma}
\label{HH_Th2.23}Assume that Conditions \ref{Diag_Unit Root CLT_Cond},
\ref{Unit Root Consistency} and \ref{Unit Root Uniform Hessian Y} hold. For
$r,s\in\left[  0,1\right]  $ fixed and as $\tau\rightarrow\infty$ it follows
that%
\[
\left\vert \tau^{-1}\sum_{t=\tau_{0}+\left[  \tau s\right]  +1}^{\tau
_{0}+\left[  \tau r\right]  }\left(  \left(  \psi_{\tau,s}\right)
^{2}-e^{\left(  -2\gamma t/\tau\right)  }E\left[  \eta_{t}^{2}|\mathcal{G}%
_{\tau n,\left(  t-\min\left(  1,\tau_{0}\right)  -1\right)  n}\right]
\right)  \right\vert \overset{p}{\rightarrow}0
\]

\end{lemma}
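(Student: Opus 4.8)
The plan is to read the displayed average as a normalized sum of a martingale-difference-type triangular array and to prove a weak law of large numbers for it. First I would note that $\psi_{\tau,t}^{2}=e^{-2\gamma t/\tau}\eta_{t}^{2}$, so that, setting $w_{\tau,t}:=e^{-2\gamma t/\tau}$ and $D_{t}:=\eta_{t}^{2}-E\big[\eta_{t}^{2}\,\big|\,\mathcal{G}_{\tau n,(t-\min(1,\tau_{0})-1)n}\big]$, the quantity inside the absolute value is exactly $\tau^{-1}\sum_{t=\tau_{0}+[\tau s]+1}^{\tau_{0}+[\tau r]}w_{\tau,t}D_{t}$. Since $\tau_{0}=0$ and $0\le t\le\tau$ over the range in question, the deterministic weights are uniformly bounded, $|w_{\tau,t}|\le e^{2|\gamma|}$. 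By Condition \ref{Diag_Unit Root CLT_Cond}(ii), $\eta_{t}^{2}$ is measurable with respect to $\mathcal{G}_{\tau n,(t-\min(1,\tau_{0}))n+i}$ for every $i\ge1$, so $E\big[D_{t}\,\big|\,\mathcal{G}_{\tau n,(t-\min(1,\tau_{0})-1)n}\big]=0$ by construction, and by Condition \ref{Diag_Unit Root CLT_Cond}(iv) together with Jensen's inequality $\sup_{t}E[|D_{t}|^{1+\delta/2}]\le 2^{1+\delta/2}\sup_{t}E[|\eta_{t}|^{2+\delta}]<\infty$. Because $T$ and $\kappa$ are fixed and $\tau=\kappa n$, $\{w_{\tau,t}D_{t}\}$ is a triangular array of this type; only parts (ii) and (iv) of Condition \ref{Diag_Unit Root CLT_Cond} enter.

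Because $r$ and $s$ are held fixed, it suffices to show $E\big|\tau^{-1}\sum_{t=\tau_{0}+[\tau s]+1}^{\tau_{0}+[\tau r]}w_{\tau,t}D_{t}\big|\to0$ and then apply Markov's inequality. Carefully tracking the conventions in (\ref{Information Sets}), as adapted in Condition \ref{Diag_Unit Root CLT_Cond}, one finds that $\mathcal{G}_{\tau n,(t-\min(1,\tau_{0})-1)n}$ contains $D_{t'}$ for every $t'\le t-2$, so $E[D_{t}D_{t'}]=0$ whenever $|t-t'|\ge2$; splitting the sum into its even and odd parts therefore produces two genuine martingale difference triangular arrays relative to appropriately coarsened subfiltrations. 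To each I would apply a martingale version of the Marcinkiewicz--Zygmund / von Bahr--Esseen inequality, $E\big|\sum_{t}w_{\tau,t}D_{t}\big|^{p}\le C_{p}\sum_{t}E[|w_{\tau,t}D_{t}|^{p}]=O(\tau)$ with $p:=\min(2,1+\delta/2)\in(1,2]$ (equivalently, Burkholder's inequality combined with $(\sum_{t}x_{t}^{2})^{p/2}\le\sum_{t}|x_{t}|^{p}$ for $p\le2$), which can be paired with the triangular-array maximal inequality of Corollary \ref{HH_T2.2} if one also wants control of the running maximum. This gives $E\big|\tau^{-1}\sum_{t}w_{\tau,t}D_{t}\big|^{p}=O\big(\tau^{-(p-1)}\big)\to0$, and hence the claim. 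A route that stays within $L^{2}$ is a truncation argument: for $K>0$ write $\eta_{t}^{2}=\eta_{t}^{2}1\{\eta_{t}^{2}\le K\}+\eta_{t}^{2}1\{\eta_{t}^{2}>K\}$, so that $D_{t}=D_{t}^{(K)}+R_{t}^{(K)}$ with $D_{t}^{(K)}$ the centered truncated part and $R_{t}^{(K)}$ the centered tail; handle $D_{t}^{(K)}$ by orthogonality, $E\big[\big(\tau^{-1}\sum_{t}w_{\tau,t}D_{t}^{(K)}\big)^{2}\big]=O(K/\tau)\to0$ for each fixed $K$, handle $R_{t}^{(K)}$ by $L^{1}$ and the uniform integrability implied by Condition \ref{Diag_Unit Root CLT_Cond}(iv), $\big\|\tau^{-1}\sum_{t}w_{\tau,t}R_{t}^{(K)}\big\|_{1}\le 2e^{2|\gamma|}CK^{-\delta/2}$ uniformly in $\tau$, and combine the two by the usual two-stage $\varepsilon/2$ argument.

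The main obstacle I anticipate is organizational rather than analytic: pinning down, under the conventions in (\ref{Information Sets}) and Condition \ref{Diag_Unit Root CLT_Cond}, exactly at which block index $\eta_{t}^{2}$ becomes measurable relative to the conditioning $\sigma$-field $\mathcal{G}_{\tau n,(t-\min(1,\tau_{0})-1)n}$, so that orthogonality at lags $\ge2$ (and the even/odd decomposition into genuine martingale difference arrays) is rigorous; the lag-$1$ cross terms are uniformly bounded and drop out of the even/odd split. The secondary point is the regime $\delta<2$: since $\eta_{t}^{2}$ carries only a uniform $(1+\delta/2)$-th moment, the naive $L^{2}$ computation is unavailable and one must go through either the martingale von Bahr--Esseen bound or the truncation above---which is exactly why the triangular-array maximal inequality of the preceding section is provided. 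With the lemma established, (\ref{CW2.3}) then follows by combining it with Condition \ref{Diag_Unit Root CLT_Cond}(vii) and a summation-by-parts (Riemann-sum) argument that converts the unweighted average of the conditional second moments into $\sigma^{2}\int_{s}^{r}e^{-2\gamma u}\,du$.
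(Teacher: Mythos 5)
Your proof is correct, but it takes a genuinely different route from the paper's. The paper treats $\tau^{-1/2}e^{-\gamma t/\tau}\eta_{t}$ as a martingale difference array and invokes Hall and Heyde (1980, Theorem 2.23, Eq.\ 2.28): it verifies the conditional Lindeberg condition by a Markov bound using the uniform $2+\delta$ moments from Condition \ref{Diag_Unit Root CLT_Cond}(iv), checks uniform integrability of $\tau^{-1}\sum_{t}e^{-2\gamma t/\tau}E\left[\eta_{t}^{2}|\cdot\right]$ via (\ref{UI}) and (\ref{FCLT_SI_6_1}), and then cites that theorem for $L^{1}$ convergence of the sum of squares minus the sum of conditional variances, finishing with Markov's inequality. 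You instead center directly, $D_{t}=\eta_{t}^{2}-E\left[\eta_{t}^{2}|\mathcal{G}_{\tau n,\left(t-\min\left(1,\tau_{0}\right)-1\right)n}\right]$, and prove the weak law for $\tau^{-1}\sum_{t}e^{-2\gamma t/\tau}D_{t}$ from scratch, either through the even/odd split into genuine martingale-difference arrays followed by a von Bahr--Esseen/Burkholder bound with $p=\min\left(2,1+\delta/2\right)$, or through truncation with an $L^{2}$ bound for the truncated part and a uniform-integrability $L^{1}$ bound for the tail. Both arguments use exactly the same inputs (measurability and the $2+\delta$ moment bound), so the trade-off is length versus self-containedness: the paper's proof is shorter but leans on a Hall--Heyde result stated for ordinary martingale arrays rather than for triangular arrays with sample-size-dependent filtrations (the very issue that forces the paper to re-derive maximal inequalities in its appendix), whereas your argument is fully elementary and, moreover, confronts explicitly the indexing question of whether $\eta_{t-1}$ belongs to $\mathcal{G}_{\tau n,\left(t-\min\left(1,\tau_{0}\right)-1\right)n}$ under the paper's conventions; your lag-one-robust devices (the even/odd split, the bounded lag-one cross terms) make the conclusion immune to either reading. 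One small caution: your standalone assertion that $E\left[D_{t}D_{t^{\prime}}\right]=0$ for $\left\vert t-t^{\prime}\right\vert\geq2$ presupposes integrability of the product, which is not guaranteed when $\delta<2$; this is immaterial, since neither of your two actual routes uses unconditional orthogonality of the untruncated $D_{t}$.
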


\begin{proof}
By Hall and Heyde (1980, Theorem 2.23) we need to show that for all
$\varepsilon>0$
\begin{equation}
\tau^{-1}\sum_{t=\tau_{0}+\left[  \tau s\right]  +1}^{\tau_{0}+\left[  \tau
r\right]  }e^{-2\gamma t/\tau}E\left[  \left.  \eta_{t}^{2}1\left\{
\left\vert \tau^{-1/2}e^{-\gamma t/\tau}\eta_{t}\right\vert >\varepsilon
\right\}  \right\vert \mathcal{G}_{\tau n,\left(  t-\min\left(  1,\tau
_{0}\right)  -1\right)  n}\right]  \overset{p}{\rightarrow}0.
\label{HH_Th2.23_D1}%
\end{equation}
By Condition \ref{Diag_Unit Root CLT_Cond}iv) it follows that for some
$\delta>0$
\begin{align*}
&  E\left[  \tau^{-1}\sum_{t=\tau_{0}+\left[  \tau s\right]  +1}^{\tau
_{0}+\left[  \tau r\right]  }e^{-2\gamma t/\tau}E\left[  \eta_{t}^{2}1\left\{
\left\vert \tau^{-1/2}e^{-\gamma t/\tau}\eta_{t}\right\vert >\varepsilon
\right\}  |\mathcal{G}_{\tau n,\left(  t-\min\left(  1,\tau_{0}\right)
-1\right)  n}\right]  \right] \\
&  \leq\tau^{-\left(  1+\delta/2\right)  }\sum_{t=\tau_{0}+\left[  \tau
s\right]  +1}^{\tau_{0}+\left[  \tau r\right]  }\frac{\left(  e^{-\gamma
t/\tau}\right)  ^{2+\delta}}{\varepsilon^{\delta}}E\left[  \left\vert \eta
_{t}\right\vert ^{2+\delta}\right] \\
&  \leq\sup_{t}E\left[  \left\vert \eta_{t}\right\vert ^{2+\delta}\right]
\frac{\left[  \tau r\right]  -\left[  \tau s\right]  }{\tau^{1+\delta
/2}\varepsilon^{\delta}}e^{\left(  2+\delta\right)  \left\vert \gamma
\right\vert }\rightarrow0.
\end{align*}
This establishes (\ref{HH_Th2.23_D1}) by the Markov inequality. Since
$\tau^{-1}\sum_{t=\tau_{0}+\left[  \tau s\right]  +1}^{\tau_{0}+\left[  \tau
r\right]  }e^{\left(  -2\gamma t/\tau\right)  }E\left[  \eta_{t}%
^{2}|\mathcal{G}_{\tau n,\left(  t-\min\left(  1,\tau_{0}\right)  -1\right)
n}\right]  $ is uniformly integrable by (\ref{UI}) and (\ref{FCLT_SI_6_1}) it
follows from Hall and Heyde (1980, Theorem 2.23, Eq 2.28) that
\[
E\left[  \left\vert \tau^{-1}\sum_{t=\tau_{0}+\left[  \tau s\right]  +1}%
^{\tau_{0}+\left[  \tau r\right]  }\left(  \left(  \psi_{\tau,s}\right)
^{2}-e^{\left(  -2\gamma t/\tau\right)  }E\left[  \eta_{t}^{2}|\mathcal{G}%
_{\tau n,\left(  t-\min\left(  1,\tau_{0}\right)  -1\right)  n}\right]
\right)  \right\vert \right]  \rightarrow0.
\]
The result now follows from the Markov inequality.
\end{proof}

\begin{lemma}
\label{Lemma CW2.5}Assume that Conditions \ref{Diag_Unit Root CLT_Cond},
\ref{Unit Root Consistency} and \ref{Unit Root Uniform Hessian Y} hold. For
$r,s\in\left[  0,1\right]  $ fixed and as $\tau\rightarrow\infty$ it follows
that
\[
\tau^{-1}\sum_{t=\tau_{0}+\left[  \tau s\right]  +1}^{\tau_{0}+\left[  \tau
r\right]  }e^{\left(  -2\gamma t/\tau\right)  }E\left[  \eta_{t}%
^{2}|\mathcal{G}_{\tau n,\left(  t-\min\left(  1,\tau_{0}\right)  -1\right)
n}\right]  \rightarrow^{p}\sigma^{2}\int_{s}^{r}\exp\left(  -2\gamma t\right)
dt.
\]

\end{lemma}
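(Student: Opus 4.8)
The plan is to read the left-hand side as a deterministically weighted partial sum of one-step-ahead conditional second moments and to show, by a sandwiching (upper/lower Darboux sum) argument, that it converges to the stated integral. The two inputs are Condition~\ref{Diag_Unit Root CLT_Cond}(vii), which controls partial sums of the conditional second moments over any fixed block of time fractions, and the elementary fact that the weight $u\mapsto e^{-2\gamma u}$ is continuous and monotone on $[0,1]$, so it is nearly constant over short blocks; the two are combined by partitioning $[s,r]$, applying Condition~\ref{Diag_Unit Root CLT_Cond}(vii) block by block, and then refining the partition.

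Concretely, first note that in this section $\tau_0=0$ is fixed, so the sum runs over $t\in\{[\tau s]+1,\dots,[\tau r]\}$. Fix a partition $s=c_0<c_1<\dots<c_K=r$ of mesh at most $\rho$ and split into blocks $B_k=\{t:[\tau c_{k-1}]<t\le[\tau c_k]\}$; since $c_{k-1}\le t/\tau\le c_k+\tau^{-1}$ for $t\in B_k$, writing $\underline{w}_k=\min_{u\in[c_{k-1},c_k]}e^{-2\gamma u}$, $\overline{w}_k=\max_{u\in[c_{k-1},c_k]}e^{-2\gamma u}$ and $P_{\tau,k}=\tau^{-1}\sum_{t\in B_k}E[\eta_t^2\mid\mathcal{G}_{\tau n,(t-1)n}]$, nonnegativity of the conditional second moments yields
\[
\underline{w}_k\,P_{\tau,k}\ \le\ \tau^{-1}\sum_{t\in B_k}e^{-2\gamma t/\tau}E\!\left[\eta_t^2\mid\mathcal{G}_{\tau n,(t-1)n}\right]\ \le\ \overline{w}_k\,P_{\tau,k}.
\]
Next I would show $P_{\tau,k}\overset{p}{\to}(c_k-c_{k-1})\sigma^2$ for each fixed $k$: Condition~\ref{Diag_Unit Root CLT_Cond}(vii), applied over $[c_{k-1},c_k]$, gives this for the analogous sum of $E[\eta_t^2\mid\mathcal{G}_{\tau n,(t-1)n+n}]$, and the difference $E[\eta_t^2\mid\mathcal{G}_{\tau n,(t-1)n+n}]-E[\eta_t^2\mid\mathcal{G}_{\tau n,(t-1)n}]$ is a martingale difference (it is conditionally mean zero given $\mathcal{G}_{\tau n,(t-1)n}$) whose $(1+\delta/2)$-th moments are uniformly bounded by $\sup_tE[|\eta_t|^{2+\delta}]<\infty$ (Condition~\ref{Diag_Unit Root CLT_Cond}(iv) and Jensen), so its block average is $o_p(1)$ by the Hall and Heyde (1980, Theorem~2.23) argument already used in the proof of Lemma~\ref{HH_Th2.23}. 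Summing the finitely many blocks, the quantity in the lemma lies, with probability tending to one, between $\sum_k\underline{w}_k(c_k-c_{k-1})\sigma^2$ and $\sum_k\overline{w}_k(c_k-c_{k-1})\sigma^2$, i.e.\ between the lower and upper Darboux sums of $\sigma^2\int_s^r e^{-2\gamma u}\,du$; both converge to that integral as $\rho\to0$. (The statement writes the integrand with a dummy variable $t$, i.e.\ $\exp(-2\gamma t)$, which is the same quantity; for $r=1$ a routine truncation of $[s,1]$ to $[s,1-\varepsilon]$ handles that Condition~\ref{Diag_Unit Root CLT_Cond}(vii) is stated for $r<1$.)

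The only delicate point is the order of limits: the refinement $\rho\to0$ must sit outside the probabilistic limit $\tau\to\infty$. This is harmless because the refinement error is purely deterministic --- it is the gap between the Darboux sums of the continuous function $u\mapsto\sigma^2 e^{-2\gamma u}$ and its integral --- while for each fixed partition the limit $\tau\to\infty$ is a finite combination of the in-probability convergences from Condition~\ref{Diag_Unit Root CLT_Cond}(vii) and the martingale step, so a standard three-$\varepsilon$ estimate closes the argument. An essentially equivalent route is summation by parts: with $T_\tau(u)=\tau^{-1}\sum_{t=[\tau s]+1}^{[\tau u]}E[\eta_t^2\mid\mathcal{G}_{\tau n,(t-1)n}]$, which is nondecreasing, satisfies $T_\tau(s)=0$ and $T_\tau(u)\overset{p}{\to}(u-s)\sigma^2$, one has $\int_s^r e^{-2\gamma u}\,dT_\tau(u)=e^{-2\gamma r}T_\tau(r)+2\gamma\int_s^r e^{-2\gamma u}T_\tau(u)\,du$, and the identity $e^{-2\gamma r}(r-s)+2\gamma\int_s^r e^{-2\gamma u}(u-s)\,du=\int_s^r e^{-2\gamma u}\,du$ together with a dominated-convergence argument in probability (using $0\le T_\tau(u)\le T_\tau(r)=O_p(1)$) gives the same conclusion.
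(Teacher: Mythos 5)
Your argument is correct and follows essentially the same route as the paper's own proof (which is adapted from Chan and Wei, 1987): partition $[s,r]$, apply Condition \ref{Diag_Unit Root CLT_Cond}(vii) block by block, exploit that $u\mapsto e^{-2\gamma u}$ is nearly constant on short blocks, and then refine the partition, with your Darboux-sum sandwich playing the role of the paper's explicit $I_{n}+II_{n}+III_{n}$ decomposition. The additional points you handle --- reconciling the conditioning index $\left(t-\min\left(1,\tau_{0}\right)-1\right)n$ in the lemma with the $+n$ index appearing in Condition \ref{Diag_Unit Root CLT_Cond}(vii) via a martingale-difference weak law, and truncating near $r=1$ since the condition is stated for $r<1$ --- are details the paper's proof passes over silently, but they do not change the approach.
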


\begin{proof}
The proof closely follows Chan and Wei (1987, p. 1060-1062) with a few
necessary adjustments. Fix $\delta>0$ and choose $s=t_{0}\leq t_{1}\leq...\leq
t_{k}=r$ such that
\[
\max_{i\leq k}\left\vert e^{-2\gamma t_{i}}-e^{-2\gamma t_{i-1}}\right\vert
<\delta.
\]
This implies
\begin{equation}
\left\vert \int_{s}^{r}e^{-2\gamma t}dt-\sum_{i=1}^{k}e^{-2\gamma t_{i}%
}\left(  t_{i}-t_{i-1}\right)  \right\vert \leq\sum_{i=1}^{k}\int_{t_{i-1}%
}^{t_{i}}\left\vert e^{-2\gamma t}-e^{-2\gamma t_{i}}\right\vert dt\leq\delta.
\label{CW2.5_D1}%
\end{equation}
Let $I_{i}=\left\{  l:\left[  \tau t_{i-1}\right]  <l\leq\left[  \tau
t_{i}\right]  \right\}  .$ Then,
\begin{align*}
&  \tau^{-1}\sum_{t=\tau_{0}+\left[  \tau s\right]  +1}^{\tau_{0}+\left[  \tau
r\right]  }e^{-2\gamma t/\tau}E\left[  \eta_{t}^{2}|\mathcal{G}_{\tau
n,\left(  t-\min\left(  1,\tau_{0}\right)  -1\right)  n}\right]  -\sigma
^{2}\int_{s}^{r}e^{-2\gamma t}dt\\
&  =\tau^{-1}\sum_{i=1}^{k}\sum_{l\in I_{i}}e^{-2\gamma l/\tau}E\left[
\eta_{l}^{2}|\mathcal{G}_{\tau n,\left(  l-\min\left(  1,\tau_{0}\right)
-1\right)  n}\right]  -\sigma^{2}\int_{s}^{r}e^{-2\gamma t}dt\\
&  =\tau^{-1}\sum_{i=1}^{k}\sum_{l\in I_{i}}\left(  e^{-2\gamma l/\tau
}-e^{-2\gamma\left[  \tau t_{i-1}\right]  /\tau}\right)  E\left[  \eta_{l}%
^{2}|\mathcal{G}_{\tau n,\left(  l-\min\left(  1,\tau_{0}\right)  -1\right)
n}\right] \\
&  +\sum_{i=1}^{k}e^{-2\gamma\left[  \tau t_{i-1}\right]  /\tau}\left(
\tau^{-1}\sum_{l\in I_{i}}E\left[  \eta_{l}^{2}|\mathcal{G}_{\tau n,\left(
l-\min\left(  1,\tau_{0}\right)  -1\right)  n}\right]  -\sigma^{2}\left(
t_{i}-t_{i-1}\right)  \right) \\
&  +\sum_{i=1}^{k}e^{-2\gamma\left[  \tau t_{i-1}\right]  /\tau}\sigma
^{2}\left(  t_{i}-t_{i-1}\right)  -\sigma^{2}\int_{s}^{r}e^{-2\gamma t}dt\\
&  =I_{n}+II_{n}+III_{n}.
\end{align*}
For $III_{n}$ we have that $e^{-2\gamma\left[  \tau t_{i-1}\right]  /\tau
}\rightarrow e^{-2\gamma t_{i-1}}$ as $\tau\rightarrow\infty.$ In other words,
there exists a $\tau^{\prime}$ such that for all $\tau\geq\tau^{\prime},$
$\left\vert e^{-2\gamma\left[  \tau t_{i-1}\right]  /\tau}-e^{-2\gamma
t_{i-1}}\right\vert \leq\delta$ and by (\ref{CW2.5_D1})%
\[
\left\vert III_{n}\right\vert \leq2\delta.
\]
We also have by Condition \ref{Diag_Unit Root CLT_Cond}vii) that
\[
\tau^{-1}\sum_{l\in I_{i}}E\left[  \eta_{l}^{2}|\mathcal{G}_{\tau n,\left(
l-\min\left(  1,\tau_{0}\right)  -1\right)  n}\right]  \rightarrow\sigma
^{2}\left(  t_{i}-t_{i-1}\right)
\]
as $\tau\rightarrow\infty$ such that by $\max_{i\leq k}\left\vert
e^{2\gamma\left[  \tau t_{i-1}\right]  /\tau}\right\vert \leq e^{2\left\vert
\gamma\right\vert }$%
\[
\left\vert II_{n}\right\vert \leq e^{2\left\vert \gamma\right\vert }\left\vert
\tau^{-1}\sum_{l\in I_{i}}E\left[  \eta_{l}^{2}|\mathcal{G}_{\tau n,\left(
l-\min\left(  1,\tau_{0}\right)  -1\right)  n}\right]  -\sigma^{2}\left(
t_{i}-t_{i-1}\right)  \right\vert =o_{p}\left(  1\right)  .
\]
Finally, there exists a $\tau^{\prime}$ such that for all $\tau\geq
\tau^{\prime}$ it follows that
\begin{align*}
\max_{i\leq k}\max_{l\in I_{i}}\left\vert e^{-2\gamma l/\tau}-e^{-2\gamma
\left[  \tau t_{i-1}\right]  /\tau}\right\vert  &  \leq\max_{i\leq
k}\left\vert e^{-2\gamma\left[  \tau t_{i}\right]  /\tau}-e^{-2\gamma\left[
\tau t_{i-1}\right]  /\tau}\right\vert \\
&  \leq2\max_{i\leq k}\left\vert e^{-2\gamma\left[  \tau t_{i}\right]  /\tau
}-e^{-2\gamma t_{i}}\right\vert \\
&  +\max_{i\leq k}\left\vert e^{-2\gamma t_{i}}-e^{-2\gamma t_{i-1}%
}\right\vert \\
&  \leq2\delta+\delta=3\delta.
\end{align*}
We conclude that
\[
\left\vert I_{n}\right\vert \leq3\delta\left\vert \tau^{-1}\sum_{i=1}^{k}%
\sum_{l\in I_{i}}E\left[  \eta_{l}^{2}|\mathcal{G}_{\tau n,\left(
l-\min\left(  1,\tau_{0}\right)  -1\right)  n}\right]  \right\vert
=3\delta\sigma^{2}\left(  1+o_{p}\left(  1\right)  \right)  .
\]
The remainder of the proof is identical to Chan and Wei (1987, p. 1062).
\end{proof}

\section{Standard Error for Section \ref{OP}\label{WE}}

We state precise sufficient conditions for our example and establish that they
imply the regularity conditions of our general results in Section
\ref{Section_JointCLT}.

\begin{condition}
[EX-1]Assume that \newline i) $f\left(  y_{j,t}|\theta\right)  $ is measurable
with respect to $\mathcal{G}_{\tau n,\left(  t-\min\left(  1,\tau_{0}\right)
\right)  n+i}$ and $E\left[  \left.  f\left(  y_{j,t}|\theta_{0}\right)
\right\vert \mathcal{G}_{\tau n,\left(  t-\min\left(  1,\tau_{0}\right)
\right)  n+j-1}\right]  =0.$\newline ii) $g\left(  \nu_{s}\left(
\beta\right)  |\nu_{s-1}\left(  \beta\right)  ,\beta,\rho\right)  $ is
measurable with respect to $\mathcal{G}_{\tau n,\left(  t-\min\left(
1,\tau_{0}\right)  \right)  n+i}$ for all $i=1,...,n$ and $E\left[  \left.
g\left(  \nu_{s}|\nu_{s-1},\beta_{0},\rho_{0}\right)  \right\vert
\mathcal{G}_{\tau n,\left(  t-\min\left(  1,\tau_{0}\right)  -1\right)
n+i}\right]  =0$ for $t>T$ and all $i=1,...,n.$\newline iii) Let $y_{j,t}^{k}$
be the $k$-th element of $y_{j,t}.$ Then, for some $\delta>0$ and $C<\infty
,$\ $\sup_{it}E\left[  \left\vert y_{j,t}^{k}\right\vert ^{2+\delta}\right]
\leq C$ for all $n\geq1.$\newline iv) For some $\delta>0$ and $C<\infty
,$\ $\sup_{s\leq\tau_{0}+\tau}E\left[  \left\vert Y_{s}^{\ast}\right\vert
^{2+\delta}\right]  \leq C$ and $\sup_{s\leq\tau_{0}+\tau}E\left[  \left\vert
K_{s}^{\ast}\right\vert ^{2+\delta}\right]  \leq C$ and for all $\tau\geq
1.$\newline v) $\left\Vert E\left[  \left.  g\left(  \nu_{t}|\nu_{t-1}%
,\beta_{0},\rho_{0}\right)  \right\vert \mathcal{G}_{\tau n,\left(
t-\min\left(  1,\tau_{0}\right)  -1\right)  n+i}\right]  \right\Vert _{2}%
\leq\vartheta_{t}$\textbf{ }for $t<0$ and all $i=1,...,n$ where
\[
\vartheta_{t}\leq C\left(  \left\vert t\right\vert ^{1+\delta}\right)
^{-1/2}.
\]

\end{condition}

Conditions EX-1(i) and (ii) impose that the estimating functions are
martingale differences relative to the filtrations defined in
(\ref{Information Sets}). These filtrations accumulate information about the
time series and cross-section samples up to a common point in time $t,$ as
well as information about common shocks of the cross-section sample. The
conditions can be interpreted as imposing correct specification of the time
series and cross-section models in terms of the conditional mean. Note that
for the time series moments $E\left[  \left.  g\left(  \nu_{t}|\nu_{t-1}%
,\beta_{0},\rho_{0}\right)  \right\vert \mathcal{G}_{\tau n,\left(
t-\min\left(  1,\tau_{0}\right)  -1\right)  n+i}\right]  \neq0$ for $t\leq T$
because of possible mean dependence of $g\left(  .\right)  $ with the
aggregate shocks generating $\mathcal{C}$. The violation of the moment
conditions for $s\leq T$ leads to possible estimator bias that is being
controlled by imposing Condition (v). It is important to stress that we are
not assuming that the cross-section and time series samples are independent of
each other or the common shocks, or that conditioning on the common shocks
leads to conditionally independent samples. Nor do we assume that the
cross-section is sampled randomly. Such additional assumptions can be invoked
to ensure that laws of large numbers for sample averages hold, but are likely
much stronger than needed. In our theory we impose these laws of large numbers
as high level regularity conditions. Conditions EX-1(iii) and (iv) impose mild
regularity conditions in terms of moments of the marginal distributions of all
variables in the cross-section and time series samples. Finally, Condition (v)
imposes a mixingale condition on the common shock process. We show later that
it holds for a stationary Gaussian AR(1) model for $\nu_{s}$, although the
condition is expected to hold for much more general processes. Condition EX-1
parallels Footnote 32 of HKM20 for a different example where the focus is on
the cross-sectional parameters, while here we use an example that focuses on
the time series parameters as the main object of interest.

Under Condition EX-1 it follows that for $u_{j,t}=e_{j,t+1}^{\left(  C\right)
}+\eta_{j,t+1}$ and $f\left(  y_{j,t}|\theta_{0}\right)  =u_{j,t}z_{j,t}$ the
cross-sectional moment vector satisfies a martingale difference property such
that
\[
E\left[  u_{j,t+1}z_{j,t}|\mathcal{G}_{\tau n,\left(  t-\min\left(  1,\tau
_{0}\right)  \right)  n+j-1}\right]  =0.
\]
In our example, $z_{j,t}=\left(  1,k_{j,t-1},i_{j,t-1}\right)  ^{\prime}$
consists of lagged values that are measurable with respect to $\mathcal{G}%
_{\tau n,\left(  t-\min\left(  1,\tau_{0}\right)  \right)  n+j-1}.$ The
condition then is equivalent to imposing the martingale difference assumption
on the cross-sectional innovation $u_{j,t}.$ The construction of
$\mathcal{G}_{\tau n,\left(  t-\min\left(  1,\tau_{0}\right)  \right)  n+j-1}$
in (\ref{Information Sets}) guarantees that $u_{j,t}$ are uncorrelated both
cross-sectionally and temporally. This implies that when evaluated at the true
parameter $\theta_{0}$,
\[
\operatorname*{Var}\left(  \frac{1}{\sqrt{n}}\sum_{t=1}^{T}\sum_{j=1}%
^{n}f\left(  \left.  y_{j,t}\right\vert \theta_{0}\right)  \right)  =\frac
{1}{n}\sum_{t=1}^{T}\sum_{j=1}^{n}E\left[  u_{j,t}^{2}z_{j,t}z_{j,t}^{\prime
}\right]  .
\]
The next condition postulates that a law of large numbers holds.

\begin{condition}
[EX-2]There exist non-singular constant matrices $\Omega_{f}$ and $\Omega_{g}$
such that
\begin{align}
\Omega_{f}  &  =\operatorname*{plim}_{n\rightarrow\infty}\frac{1}{n}\sum
_{t=1}^{T}\sum_{j=1}^{n}u_{j,t}^{2}z_{j,t}z_{j,t}^{\prime},\label{Def_Omega_f}%
\\
\Omega_{g}  &  =\operatorname*{plim}_{\tau\rightarrow\infty}\frac{1}{\tau}%
\sum_{s=\tau_{0}+1}^{\tau_{0}+\tau}\left(  e_{s}^{\left(  A\right)  }\right)
^{2}\nu_{s-1}^{2}. \label{Def_Omega_g}%
\end{align}

\end{condition}

Also, let $\tilde{\Omega}_{f}=\frac{1}{n}\sum_{t=1}^{T}\sum_{j=1}^{n}\tilde
{u}_{j,t}^{2}z_{j,t}z_{j,t}^{\prime}$ with
\[
\tilde{u}_{j,t}=\mathfrak{y}_{j,t}^{\ast}-\left(  \tilde{\beta}_{0,t}^{\ast
}+\tilde{\beta}_{k}k_{j,t}+\tilde{\alpha}^{\left(  C\right)  }\left(  \phi
_{t}\left(  i_{j,t-1},k_{j,t-1}\right)  -\tilde{\beta}_{k}k_{j,t-1}\right)
\right)  .
\]
and set $W_{n}^{C}=\tilde{\Omega}_{f}^{-1},$ $W^{C}=\Omega_{f}^{-1}$ and
$F_{n}\left(  \beta,\nu\right)  =-h_{n}\left(  \beta,\nu\right)  ^{\prime
}W_{n}^{C}h_{n}\left(  \beta,\nu\right)  .$

At the true parameter values $\beta_{0}$ and $\rho_{0}$ it follows that
$g\left(  \left.  \nu_{s}\left(  \beta\right)  \right\vert \nu_{s-1}\left(
\beta\right)  ,\beta,\rho\right)  =g\left(  \left.  \nu_{s}\right\vert
\nu_{s-1},\beta,\rho\right)  =e_{s}^{\left(  A\right)  }\nu_{s-1}.$ As for the
cross-sectional error, Condition EX-1 implies that $E\left[  e_{s}^{\left(
A\right)  }\nu_{s-1}|\mathcal{G}_{\tau n,\left(  t-\min\left(  1,\tau
_{0}\right)  -1\right)  n+j}\right]  =0$ for all $j=1,...,n.$ By the same
logic as before, $\nu_{s-1}$ is measurable with respect to $\mathcal{G}_{\tau
n,\left(  t-\min\left(  1,\tau_{0}\right)  -1\right)  n+j}$ such that a
martingale difference assumption may be directly imposed on the aggregate time
series shock $e_{s}^{\left(  A\right)  }.$ The martingale difference sequences
(mds) property then implies that
\[
\operatorname*{Var}\left(  \frac{1}{\sqrt{\tau}}\sum_{s=\tau_{0}+1}^{\tau
_{0}+\tau}g\left(  \left.  z_{s}\right\vert \beta_{0},\rho_{0}\right)
\right)  =\frac{1}{\tau}\sum_{s=\tau_{0}+1}^{\tau_{0}+\tau}E\left[  \left(
e_{s}^{\left(  A\right)  }\right)  ^{2}\nu_{s-1}^{2}\right]  .
\]
Let $\tilde{\Omega}_{g}=\frac{1}{\tau}\sum_{s=\tau_{0}+1}^{\tau_{0}+\tau
}\left(  \tilde{e}_{s}^{\left(  A\right)  }\right)  ^{2}\tilde{\nu}_{s-1}^{2}$
where $\tilde{\nu}_{s}=Y_{s}^{\ast}-\tilde{\beta}_{k}K_{s}^{\ast}$ and
$\tilde{e}_{s}^{\left(  A\right)  }=\tilde{\nu}_{s}-\hat{\alpha}^{\left(
A\right)  }\tilde{\nu}_{s-1}.$ Then, set $W_{\tau}^{\tau}=\tilde{\Omega}%
_{g}^{-1},$ $W^{\tau}=\Omega_{g}^{-1}$ and let $G_{\tau}\left(  \beta
,\rho\right)  =-k_{\tau}\left(  \beta,\rho\right)  ^{\prime}W_{\tau}^{\tau
}k_{\tau}\left(  \beta,\rho\right)  $.

We now demonstrate how to obtain the distributional approximations analogous
to (\ref{rho-influence}) and (\ref{expansion-theta-hat-reflecting-tilde-rho}).
Similar arguments for a more complicated and cross-sectionally oriented
example can also be found in Section 6, Eq (35) of HKM20. To obtain explicit
formulas we turn to the derivatives of the criterion functions. We have%
\[
\frac{\partial f\left(  y_{j,t}|\theta\right)  ^{\prime}}{\partial\theta
}=\left[
\begin{array}
[c]{c}%
1\\
-\left(  k_{j,t}-\alpha^{\left(  C\right)  }k_{j,t-1}\right) \\
-\left(  \phi_{t}\left(  i_{j,t-1},k_{j,t-1}\right)  -\beta_{k}k_{j,t-1}%
\right)
\end{array}
\right]  z_{j,t}^{\prime}%
\]
and let $h\left(  \theta\right)  =\operatorname*{plim}_{n\rightarrow\infty
}h_{n}\left(  \beta,\nu\right)  $ and $k\left(  \beta,\rho\right)
=\operatorname*{plim}_{\tau\rightarrow\infty}k_{\tau}\left(  \beta
,\rho\right)  .$ It follows from standard GMM\ large sample theory that
\[
\varphi_{j,t}=\left(  \frac{\partial h\left(  \theta\right)  }{\partial
\theta^{\prime}}W^{C}\frac{\partial h\left(  \theta\right)  }{\partial\theta
}\right)  ^{-1}\frac{\partial h\left(  \theta\right)  }{\partial\theta
^{\prime}}W^{C}f\left(  \left.  y_{j,t}\right\vert \theta_{0}\right)
\]
such that the asymptotic variance covariance matrix of the cross-sectional GMM
estimator is $\left(  \frac{\partial h\left(  \theta_{0}\right)  }%
{\partial\theta^{\prime}}\Omega_{f}^{-1}\frac{\partial h\left(  \theta
_{0}\right)  }{\partial\theta}\right)  ^{-1}$. Note that $\partial h\left(
\theta\right)  /\partial\theta=\operatorname*{plim}_{n\rightarrow\infty}%
n^{-1}\sum_{t=1}^{T}\sum_{i=1}^{n}\partial f\left(  y_{j,t}|\theta\right)
/\partial\theta^{\prime}.$ Similarly, considering the time series estimator
one obtains
\[
\left[
\begin{array}
[c]{c}%
\frac{\partial g\left(  \left.  z_{s}\right\vert \beta,\rho\right)  }%
{\partial\theta}\\
\frac{\partial g\left(  \left.  z_{s}\right\vert \beta,\rho\right)  }%
{\partial\rho}%
\end{array}
\right]  =\left[
\begin{array}
[c]{c}%
0\\
-\left(  K_{s}^{\ast}-\alpha^{\left(  A\right)  }K_{s-1}^{\ast}\right)
\nu_{s-1}\left(  \beta\right)  -\left(  \nu_{s}\left(  \beta\right)
-\alpha^{\left(  A\right)  }\nu_{s-1}\left(  \beta\right)  \right)
K_{s-1}^{\ast}\\
0\\
-\alpha^{\left(  A\right)  }\left(  \nu_{s-1}\left(  \beta\right)  \right)
^{2}%
\end{array}
\right]  .
\]
It then follows again from standard theory that hypothetical estimates for the
parameter $\rho$ obtained from the time series data and using the moment
function $g\left(  .\right)  $ at the true parameter value for $\beta_{k}$
have an asymptotic variance covariance matrix equal to $\left(  \frac{\partial
k\left(  \beta_{0},\rho_{0}\right)  }{\partial\rho^{\prime}}\Omega_{g}%
^{-1}\frac{\partial k\left(  \beta_{0},\rho_{0}\right)  }{\partial\rho
}\right)  ^{-1}$.\textbf{ }Our theory formally allows to handle the case where
$\beta_{k}$ is estimated from the cross-section sample. While the expressions
for the asymptotic distribution of $\rho$ are similar to standard formulas for
two step estimators, a rigorous derivation of the asymptotic approximations is
much more involved because there are two generally dependent samples involved
in the estimation.

With these expressions it is now possible to obtain standard errors, mimicking
the procedure laid out in Section 6 of Hahn, Kuersteiner and Mazzocco (2020).
For this, we sketch how to obtain the joint limiting distribution of the
vector $\phi=\left(  \theta^{\prime},\rho^{\prime}\right)  ^{\prime}$.

The joint (with $\mathcal{C}$ measurable random variables) limiting
distribution of $D_{n\tau}^{-1}J_{n\tau}\left(  \phi_{0}\right)  $ is
established in the following Lemma.

\begin{lemma}
\label{Lemma_EX1}Assume that Conditions EX-1 and EX-2 hold, and that
(\ref{Def_Omega_f}) and (\ref{Def_Omega_g}) are well defined. Then,
\[
D_{n\tau}^{-1}J_{n\tau}\left(  \phi_{0}\right)  \rightarrow_{d}N\left(
0,\Omega\right)  \text{ }\mathcal{C}\text{-stably}%
\]
where $\Omega=\operatorname*{diag}\left(  \Omega_{y},\Omega_{\nu}\right)  $ is
the asymptotic variance covariance matrix of the moment functions defined in
(\ref{Empirical_Joint_GMM}).
\end{lemma}

Lemma \ref{Lemma_EX1} is a direct consequence of Corollary \ref{Diag_CLT} in
Section \ref{Section_JointCLT} and the fact that Condition EX-1 combined with
(\ref{Def_Omega_f}) and (\ref{Def_Omega_g}) imply that Conditions
\ref{Diag_CLT_Cond}, \ref{Omega_z} and \ref{Omega_y} hold for $r=1,$ where $r$
is defined in Condition \ref{Omega_z}.\textbf{ }

\end{document}